\keywords{regular expressions, 
          process theory,
          bisimilarity,
          coinduction,
          interpretational proof theory,
          proof transformations,
          derivable and admissible inference rules}
\let\savebigtimes\bigtimes
\let\bigtimes\relax
\let\bigtimes\savebigtimes
\tikzset{
  funbisim/.style={
    decoration={funbisim, amplitude=0.25ex},
    decorate,
    funbisim options/.style={#1}    
  }}
\tikzset{
  funbisimright/.style={
    decoration={funbisimright, amplitude=0.25ex},
    decorate,
    funbisimright options/.style={#1}    
  }}
\tikzset{
  funbisimleft/.style={
    decoration={funbisimleft, amplitude=0.25ex},
    decorate,
    funbisimleft options/.style={#1}    
  }}
\tikzset{
  bisim/.style={
    decoration={bisim, amplitude=0.25ex},
    decorate,
    bisim options/.style={#1}    
  }}
\tikzset{
  funonebisim/.style={
    decoration={funonebisim, amplitude=0.25ex},
    decorate,
    funonebisim options/.style={#1}    
  }}
\tikzset{
  funonebisimdashed/.style={
    decoration={funonebisimdashed, amplitude=0.25ex},
    decorate,
    funonebisimdashed options/.style={#1}    
  }}
\tikzset{
  funonebisimright/.style={
    decoration={funonebisimright, amplitude=0.25ex},
    decorate,
    funonebisimright options/.style={#1}    
  }}
\tikzset{
  funonebisimleft/.style={
    decoration={funonebisimleft, amplitude=0.25ex},
    decorate,
    funonebisimleft options/.style={#1}    
  }}
\tikzset{
  onebisim/.style={
    decoration={onebisim, amplitude=0.25ex},
    decorate,
    onebisim options/.style={#1}    
  }}
\def\calcLength(#1,#2)#3{%
\pgfpointdiff{\pgfpointanchor{#1}{center}}%
             {\pgfpointanchor{#2}{center}}%
\pgf@xa=\pgf@x%
\pgf@ya=\pgf@y%
\FPeval\@temp@a{\pgfmath@tonumber{\pgf@xa}}%
\FPeval\@temp@b{\pgfmath@tonumber{\pgf@ya}}%
\FPeval\@temp@sum{(\@temp@a*\@temp@a+\@temp@b*\@temp@b)}%
\FProot{\FPMathLen}{\@temp@sum}{2}%
\FPround\FPMathLen\FPMathLen5\relax
\global\expandafter\edef\csname #3\endcsname{\FPMathLen}
}
\DeclareRobustCommand*{\mycirc}[1]{%
  \tikz[baseline=(C.base)]
    \node[draw,circle,inner sep=1pt](C){#1};}
\tikzset{
  my dash/.style={dash pattern=on 5pt off 2pt}
         }% end of tikzset
\theoremstyle{definition}
  \newtheorem{nonexa}[thm]{Non-Example}
\DeclareFontFamily{U}{mathx}{\hyphenchar\font45}   
\DeclareFontShape{U}{mathx}{m}{n}{
      <5> <6> <7> <8> <9> <10>
      <10.95> <12> <14.4> <17.28> <20.74> <24.88>
      mathx10
      }{}
\DeclareSymbolFont{mathx}{U}{mathx}{m}{n}
\DeclareMathAccent{\widecheck}{0}{mathx}{"71}
\DeclareMathAccent{\wideparen}{0}{mathx}{"75}
\definecolor{azure}{rgb}{0.94,1.00,1.00}
\definecolor{brown}{rgb}{.75,.25,.25}
\definecolor{cyan}{rgb}{0.25,0.88,0.82}
\definecolor{chocolate}{rgb}{0.82,0.41,0.12}
\definecolor{darkcyan}{rgb}{0.5,0,1}
\definecolor{darkgreen}{rgb}{0,0.39,0}
\definecolor{darkmagenta}{rgb}{0.5,0,0.5}
\definecolor{darkgoldenrod}{RGB}{184,134,11}
\definecolor{firebrick}{RGB}{175,25,25}
\definecolor{forestgreen}{rgb}{0.13,0.55,0.13}
\definecolor{goldenrod}{RGB}{218,165,32}
\definecolor{lightcyan}{rgb}{0.88,1.00,1.00}
\definecolor{lightpink}{rgb}{1.00,0.71,0.76}
\definecolor{myyellow}{RGB}{235,235,0}
\definecolor{lightyellow}{rgb}{1.00,1.00,0.88}
\definecolor{lightgoldenrod}{rgb}{0.83,0.97,0.51}
\definecolor{lightgoldenrodyellow}{rgb}{0.98,0.98,0.82}
\definecolor{lightskyblue}{rgb}{0.53,0.81,0.98}
\definecolor{moccasin}{rgb}{1.00,0.89,0.71}
\definecolor{magenta}{rgb}{1,0,1}
\definecolor{navyblue}{rgb}{0,0,0.5}
\definecolor{orange}{rgb}{1.0,0.65,0.0}
\definecolor{orangered}{rgb}{1.0,0.27,0.0}
\definecolor{palegreen}{rgb}{0.60,0.98,0.60}
\definecolor{paleblue}{rgb}{0.678,0.847,0.902}
\definecolor{powderblue}{rgb}{0.69,0.88,0.90}
\definecolor{purple}{rgb}{1,0.5,1}
\definecolor{royalblue}{RGB}{65,105,225}
\definecolor{mediumblue}{RGB}{0,0,205}
\definecolor{cornflowerblue}{RGB}{100,149,237}
\definecolor{springgreen}{rgb}{0.0,1.0,0.5}
\definecolor{turquoise}{rgb}{0.25,0.88,0.82}
\definecolor{snow}{rgb}{1.00,0.98,0.98}
\definecolor{tan}{rgb}{0.82,0.71,0.55}
\definecolor{red}{rgb}{1,0,0}
\definecolor{violetred}{RGB}{208,32,144}
\definecolor{white}{RGB}{255,255,255}
\newcommand{\colorin}[1]{\textcolor{#1}}
\newcommand{\black}{\colorin{black}}
\newcommand{\chocolate}{\colorin{chocolate}}
\newcommand{\colorred}{\colorin{red}}
\newcommand{\alert}{\colorred}
\newcommand{\darkcyan}{\colorin{darkcyan}}
\newcommand{\forestgreen}{\colorin{forestgreen}}
\newcommand{\firebrick}{\colorin{firebrick}}
\newcommand{\mediumblue}{\colorin{mediumblue}}
\newcommand{\royalblue}{\colorin{royalblue}}
\newcommand{\nb}{\nobreakdash}
\newcommand{\punc}[1]{\ensuremath{\hspace*{1.5pt}{#1}}}
\newcommand{\nf}{\normalfont}
\newcommand{\textnf}[1]{\text{\nf{#1}}}
\newcommand{\bs}[1]{\boldsymbol{#1}}
\newenvironment{new}{\color{chocolate}}{\color{black}}
\newenvironment{newer}{\color{firebrick}}{\color{black}}
\newenvironment{newest}{\color{red}}{\color{black}}
\newenvironment{revised}{\color{violetred}}{\color{black}}
  \newenvironment{changed}{\color{mediumblue}}{\color{black}}
  \newenvironment{changedalert}{\color{red}}{\color{black}}
\newcommand{\isnewer}[1]{\begin{newer}{#1}\ed{nd{newer}}}
\newcommand{\ischanged}[1]{\begin{changed}{#1}\end{changed}}
\newcommand{\funin}{\mathrel{:}}
\newcommand{\fap}[2]{{#1}(\hspace*{-0.5pt}{#2}\hspace*{-0.5pt})}
\newcommand{\bfap}[3]{{#1}({#2},\hspace*{0.5pt}{#3})}
\newcommand{\iap}[2]{#1_{#2}}
\newcommand{\bap}[2]{#1_{#2}}
\newcommand{\pap}[2]{#1^{#2}}
\newcommand{\bpap}[3]{#1_{#2}^{#3}}
\newcommand{\pbap}[3]{#1_{#3}^{#2}}
\newcommand{\sidfun}{\textrm{\nf id}}
\newcommand{\sidfunon}{\iap{\sidfun}}
\newcommand{\sproj}{\pi}
\newcommand{\sprojwrt}{\iap{\pi}}
\newcommand{\proj}{\fap{\sproj}}
\newcommand{\sdefdby}{{:=}}
\newcommand{\defdby}{\mathrel{\sdefdby}}
\newcommand\tuple[1]{\langle #1 \rangle}
\newcommand\tuplespace{\hspace*{0.5pt}}
\newcommand\pair[2]{\tuple{#1, \tuplespace #2}}
\newcommand\triple[3]{\tuple{#1, \tuplespace #2, \tuplespace #3}}
\newcommand{\nat}{\mathbb{N}}
\newcommand{\natplus}{\pap{\nat}{+}} %{{>}0}}
\newcommand{\sgraphof}{\mathit{graph}}
\newcommand{\graphof}{\fap{\sgraphof}}
\newcommand{\BNFor}{\:\mid\:}
\newcommand{\BNFdefdby}{\:{::=}\:}
\newcommand{\family}[2]{\setexp{#1}_{#2}}
\newcommand{\slexspo}{{<}_{\text{\nf lex}}}
\newcommand{\lexspo}{\mathrel{\slexspo}}
\newcommand{\eqcl}[1]{\iap{\left[{#1}\right]}}
\newcommand{\ssynteq}{{\equiv}} %{{\isnewest{\iap{=}{\eqlogic}}}}%{{\equiv}}
\newcommand{\synteq}{\mathrel{\ssynteq}} %{\mathrel{\ssynteq}}    
\newcommand{\snotsynteq}{{\not\equiv}}
\newcommand{\notsynteq}{\mathrel{\snotsynteq}}
\newcommand{\sformeq}{\royalblue{=}}%{{\approx}}
\newcommand{\formeq}{\mathrel{\sformeq}}
\newcommand{\sred}{{\to}}
\newcommand{\sredi}[1]{{\iap{\sred}{#1}}}
\newcommand{\redi}[1]{\mathrel{\sredi{#1}}}
\newcommand{\sconvred}{{\leftarrow}}
\newcommand{\sconvredi}[1]{{\iap{\leftarrow}{#1}}}
\newcommand{\convredi}[1]{\mathrel{\sconvredi{#1}}}
\newcommand{\sconvredtci}[1]{{\pbap{\leftarrow}{+}{#1}}}
\newcommand{\convredtci}[1]{\mathrel{\sconvredtci{#1}}}
\newcommand{\sredrtc}{\sred^{*}}
\newcommand{\sredrtci}[1]{{\iap{\sredrtc}{#1}}}
\newcommand{\redrtci}[1]{\mathrel{\sredrtci{#1}}}
\newcommand{\sredb}[1]{{\bap{\sred}{#1}}}
\newcommand{\sredbp}[2]{{\bpap{\sred}{#1}{#2}}}
\newcommand{\sconvredb}[1]{{\bap{\sconvred}{#1}}}
  \newcommand{\subotr}{\hspace*{-1pt}{\scriptscriptstyle (\sone)}}
\newcommand{\sotelimred}{\sredb{\subotr}}
\newcommand{\otelimred}{\mathrel{\sotelimred}}
\newcommand{\sotelimredtc}{\sredbp{\subotr}{+}}
\newcommand{\sotelimconvred}{\sconvredb{\subotr}}
\newcommand{\bodylabcol}{\text{\nf\darkcyan{bo}}}
\newcommand{\bodylab}{\bodylabcol}%{\text{\nf bo}}
\newcommand{\bodytransition}{body tran\-si\-tion}
\newcommand{\bodytransitions}{\bodytransition{s}}
\newcommand{\descsetexpmid}{\mathrel{\vert}}
\newcommand{\descsetexpbigmid}{\mathrel{\big\vert}}
\newcommand{\descsetexpBigmid}{\mathrel{\Big\vert}}
\newcommand{\descsetexp}[2]{\left\{{#1}\descsetexpmid{#2}\right\}}
\newcommand{\descsetexpbig}[2]{\bigl\{{#1}\descsetexpbigmid{#2}\bigr\}}
\newcommand{\descsetexpBig}[2]{\Bigl\{{#1}\descsetexpBigmid{#2}\Bigr\}}
\newcommand{\saeqrel}{{\simeq}}
\newcommand{\aeqrel}{\mathrel{\saeqrel}}
\newcommand{\sacongrel}{{\simeq}}
\newcommand{\acongrel}{\mathrel{\sacongrel}}
\newcommand{\sbinrelcomp}[2]{{#1}\cdot{#2}}
\newcommand{\sphifun}{\phi}
\newcommand{\phifun}{\fap{\sphifun}}
\newcommand{\scompfuns}[2]{{#1}\circ{#2}}
\newcommand{\compfuns}[2]{\fap{\scompfuns{#1}{#2}}}
\newcommand{\sabinrel}{{R}}
\newcommand{\abinrel}{\mathrel{\sabinrel}}
\newcommand{\setexp}[1]{\left\{{#1}\right\}}
\newcommand{\setexpbig}[1]{\bigl\{{#1}\bigr\}}
\newcommand{\factorset}[2]{{{#1}/_{#2}}}
\newcommand{\factor}{\factorset}
\renewcommand{\emptyset}{\varnothing}
\newcommand{\slogand}{\wedge}
\newcommand{\slogor}{\vee}
\newcommand{\logand}{\mathrel{\slogand}}
\newcommand{\logor}{\mathrel{\slogor}} 
\newcommand{\slognot}{\neg}
\newcommand{\lognot}[1]{\slognot{#1}}
\newcommand{\length}[1]{\left|{#1}\right|}
\newcommand{\depth}{\length}
\newcommand{\tightfbox}[1]{{\fboxsep=1.5pt\fbox{#1}}}
\newcommand{\actiontarget}{ac\-tion-tar\-get}
\newcommand{\backlink}{backlink}
\newcommand{\backlinks}{\backlink{s}}
\newcommand{\equationallogic}{equa\-ti\-onal-log\-ic}
\newcommand{\eqlogicbased}{$\eqlogic$\nb-based}
\newcommand{\eqlogicbasedover}[1]{$\eqlogicover{#1}$\nb-based}
\newcommand{\equationbasedover}[1]{$\StExpEqover{#1}$-based}
\newcommand{\equivalenceclasseswrt}[1]{${#1}$\nb-equivalence classes}
\newcommand{\finitestate}{fi\-nite-state}
\newcommand{\immediatetermination}{im\-me\-di\-ate-ter\-mi\-na\-tion}
\newcommand{\loopentry}{loop-en\-try}
\newcommand{\loopbody}{loop-body}
\newcommand{\onederivative}{1\nb-de\-riv\-a\-tive}
\newcommand{\onederivatives}{\onederivative{s}}
\newcommand{\entrybodylabeling}{en\-try\discretionary{/}{}{/}body-la\-be\-ling}
\newcommand{\entrybodylabelings}{\entrybodylabeling{s}}
\newcommand{\LEEshape}{\LEE\nb-shape}
\newcommand{\LEEshaped}{\LEE\nb-shaped}
\newcommand{\LEEwitness}{$\LEE$\nb-wit\-ness}
\newcommand{\LEEwitnesses}{$\LEE$\hspace*{1.25pt}\nb-wit\-nes\-ses}
\newcommand{\LLEEwitness}{{\nf LLEE}\nb-wit\-ness}
\newcommand{\LLEEwitnesses}{{\nf LLEE}\nb-wit\-nesses}
\newcommand{\LLEEwitnessed}{{\nf LLEE}\nb-wit\-nessed}
\newcommand{\LLEEcharts}{{\nf LLEE}\nb-chart{s}}
\newcommand{\LLEEonechart}{{\nf LLEE}\hspace*{1pt}-\onechart}
\newcommand{\LLEEonecharts}{{LLEE}\hspace*{1pt}-\onechart{s}}
\newcommand{\nontrivial}{non-triv\-i\-al}
\newcommand{\onetransition}{$\sone$\nb-tran\-si\-tion}
\newcommand{\onetransitions}{\onetransition{s}}
\newcommand{\onetransitionfree}{\onetransition-free}
\newcommand{\onefree}{$1$\nb-free}
\newcommand{\sonefree}{$\sone$\nb-free}
\newcommand{\onechart}{$\sone$\nb-chart}
\newcommand{\onecharts}{\onechart{s}}
\newcommand{\onebisimulation}{$\sone$\nb-bi\-si\-mu\-la\-tion}
\newcommand{\onebisimulations}{\onebisimulation{s}}
\newcommand{\onebisimilar}{$\sone$\nb-bi\-si\-mi\-lar}
\newcommand{\onebisimilarity}{$\sone$\nb-bi\-si\-mi\-la\-ri\-ty}
\newcommand{\perpetual}{per\-pet\-u\-al}
\newcommand{\perpetualloop}{\perpetual-loop}
\newcommand{\processsemantics}{pro\-cess-se\-man\-tics}
\newcommand{\provablein}[1]{{$#1$}\nb-pro\-vable}
\newcommand{\Provablein}[1]{{$#1$}\nb-Pro\-vable}
\newcommand{\provablyin}[1]{{$#1$}\nb-pro\-vably}
\newcommand{\provabilityin}[1]{{$#1$}\nb-pro\-va\-bi\-li\-ty}
\newcommand{\sidecondition}{side-con\-di\-tion}
\newcommand{\sideconditions}{\sidecondition{s}}
\newcommand{\subonechart}{sub-$\sone$\nb-chart}
\newcommand{\subonecharts}{\subonechart{s}}
\newcommand{\starexpression}{star-ex\-pres\-sion}
\newcommand{\theoremsubsumed}{the\-o\-rem-sub\-sumed}
\newcommand{\theoremequivalent}{the\-o\-rem-equiv\-a\-lent}
\newcommand{\theoremsubsumption}{the\-o\-rem-sub\-sump\-tion}
\newcommand{\theoremequivalence}{the\-o\-rem-equiv\-a\-lence}
\newcommand{\transitionact}[1]{{${#1}$}\nb-tran\-si\-tion}
\newcommand{\transitionsact}[1]{\transitionact{#1}s}
\newcommand{\weaklyguarded}{weak\-ly guard\-ed}
\newcommand{\welldefined}{well-de\-fined}
\newcommand{\wellfounded}{well-found\-ed}
\newcommand{\overlinebar}[1]{\mathbf{\overline{\text{$#1$}}}}
\newcommand{\aspec}{\mathcal{S}}
\newcommand{\astexp}{e}
\newcommand{\bstexp}{f}
\newcommand{\cstexp}{g}
\newcommand{\dstexp}{h}
\newcommand{\astexpi}{\iap{\astexp}}
\newcommand{\bstexpi}{\iap{\bstexp}}
\newcommand{\cstexpi}{\iap{\cstexp}}
\newcommand{\dstexpi}{\iap{\dstexp}}
\newcommand{\asstexp}{E}
\newcommand{\csstexp}{G}
\newcommand{\asstexpi}{\iap{\asstexp}}
\newcommand{\asstexpacc}{\asstexp'}
\newcommand{\asstexpacci}{\iap{\asstexpacc}}
\newcommand{\snormedplus}{{\textit{nd}^+}}
\newcommand{\normedplus}{\fap{\snormedplus}}
\newcommand{\astexpacc}{\astexp'}
\newcommand{\astexpacci}{\iap{\astexpacc}}
\newcommand{\slanguage}{L}
\newcommand{\languageof}{\fap{\slanguage}}
\newcommand{\aDFA}{M}
\newcommand{\StExp}{\textit{StExp}}
\newcommand{\StExpover}{\fap{\StExp}}
\newcommand{\StExpEq}{\textit{Eq}}
\newcommand{\StExpEqover}{\fap{\StExpEq}}
\newcommand{\stackStExp}{\mathit{StExp}^{{\scriptscriptstyle(}\sstexpstackprod{\scriptscriptstyle )}}}
\newcommand{\stackStExpover}{\fap{\stackStExp\hspace*{-1.5pt}}}
\newcommand{\slangsemeq}{{=_{\textit{L}}}}
\newcommand{\procsemscriptscript}[1]{\scriptscriptstyle \llbracket{#1}\rrbracket_{\hspace*{-0.5pt}\sprocsem}}
\newcommand{\sprocsemeq}{{\iap{=}{\procsemscriptscript{\cdot}}}} %{{=_{\textit{P}}}}
\newcommand{\procsemeq}{\mathrel{\sprocsemeq}}
\newcommand{\acxt}{C}
\newcommand{\cxtap}[2]{{#1}[{#2}]}
\newcommand{\acxtap}{\cxtap{\acxt}}
\newcommand{\iets}{\hspace*{0.5pt}}
\newcommand{\stexpzero}{0}
\newcommand{\stexpone}{1}
\newcommand{\sstexpit}{\sstar}
\newcommand{\stexpit}[1]{{#1^{\sstexpit}}}
\newcommand{\sstexpprod}{{\cdot}}%{.}
\newcommand{\stexpprod}[2]{{#1}\mathrel{\sstexpprod}{#2}}
\newcommand{\sstexpsum}{+}
\newcommand{\stexpsum}[2]{{#1}\sstexpsum{#2}}
\newcommand{\sstexpstackprod}{\chocolate{\varstar}} %{{\sstar}}%{.}
\newcommand{\stexpstackprod}[2]{{#1}\mathrel{\sstackprod}{#2}}
\newcommand{\ssprod}{{\cdot}}
\renewcommand{\prod}{\,{\ssprod}\,}%
\newcommand{\sstackprod}{\chocolate{\varstar}}
\newcommand{\stackprod}{\,{\sstackprod}\,}%
\newcommand{\sth}[1]{|{#1}|_{\scalebox{0.8}{$\sstar$}}}
\newcommand{\sdescrelstexpit}{\sredi{\scriptscriptstyle(\sstar)}}
\newcommand{\descrelstexpit}[1]{\mathrel{\sdescrelstexpit}}
\newcommand{\sconvdescrelstexpit}{\sconvredi{\scriptscriptstyle(\sstar)}}
\newcommand{\convdescrelstexpit}[1]{\mathrel{\sconvdescrelstexpit}}
\newcommand{\sspartderivs}{\partial}
\newcommand{\spartonederivs}{{\underline{\partial}}}
\newcommand{\partonederivs}{\fap{\spartonederivs}}
\newcommand{\sitpartonederivs}{{\underline{\partial}^+}}
\newcommand{\itpartonederivs}{\fap{\sitpartonederivs}}
\newcommand{\soneactderivs}{\textit{\underline{A\hspace*{-0.25pt}$\sspartderivs$}}}
\newcommand{\oneactderivs}{\fap{\soneactderivs}}
\newcommand{\sprocsem}{P}
\newcommand{\procsem}[1]{\llbracket{#1}\rrbracket_{\hspace*{-1.5pt}\sprocsem}}
\newcommand{\slangsem}{L}
\newcommand{\langsem}[1]{\llbracket{#1}\rrbracket_{\slangsem}}
\newcommand{\slt}[1]{{\xrightarrow{#1}}}
  \newcommand{\sltfact}[2]{{\factor{\xrightarrow{#1}}{#2}}}
\newcommand{\slti}[2]{{\xrightarrow{#1}}{_{#2}}}
\newcommand{\lt}[1]{\mathrel{\slt{#1}}}
  \newcommand{\ltfact}[2]{\mathrel{\sltfact{#1}{#2}}}
\newcommand{\lti}[2]{\mathrel{\slti{#1}{#2}}}
\newcommand{\sone}{\protect\firebrick{1}}
\newcommand{\sstar}{*}
\newcommand{\looplab}[1]{{\darkcyan{[#1]}}}  %{\text{$\ell$}}
\newcommand{\loopsteplab}{\looplab}
\newcommand{\loopnsteplab}[1]{\darkcyan{[{#1}]}}%{\looplab[{#1}]}
\newcommand{\sloopnstepto}[1]{{\iap{\rightarrow}{\loopnsteplab{#1}}}}
\newcommand{\loopnstepto}[1]{\mathrel{\sloopnstepto{#1}}}
\newcommand{\loopentrytransition}{loop-entry tran\-si\-tion}
\newcommand{\loopentrytransitions}{\loopentrytransition{s}}
\newcommand{\aLname}{n}
\newcommand{\sterminates}{{\downarrow}}
\newcommand{\terminates}[1]{{#1}{\sterminates}}
\newcommand{\sterminatesbig}{{\big\downarrow}}
\newcommand{\terminatesbig}[1]{{#1}{\sterminatesbig}}
\newcommand{\snotterminates}{\ndownarrow}  %{\!\not\,\downarrow} 
\newcommand{\notterminates}[1]{{#1}{\snotterminates}}
\newcommand{\sterminatesfact}[1]{{\factor{\downarrow}{#1}}}
\newcommand{\terminatesfact}[2]{{#1}{\sterminatesfact{#2}}}
\newcommand{\onescriptbs}{\scalebox{0.75}{$\scriptstyle\bs{1}$}}
\newcommand{\onebrackscript}{\scalebox{0.75}{$\scriptstyle (1)$}}
\newcommand{\soneterminates}{{\pap{\downarrow}{\hspace*{-1.5pt}\firebrick{\onebrackscript}}}}
\newcommand{\oneterminates}[1]{{#1}{\soneterminates}}
\newcommand{\soneterminatesi}[1]{{\pbap{\downarrow}{\hspace*{-1.25pt}\firebrick{\onebrackscript}}{#1}}}
\newcommand{\oneterminatesi}[2]{{#2}{\soneterminatesi{#1}}}
\newcommand{\achart}{\mathcal{C}}
\newcommand{\acharti}{\iap{\achart}}
\newcommand{\bchart}{\mathcal{D}}
\newcommand{\bcharti}{\iap{\bchart}}
\newcommand{\acharthat}{\hspace*{0.75pt}\Hat{\hspace*{-0.75pt}\achart}\hspace*{-0pt}} %{\widehat{\achart}}   
\newcommand{\aonechart}{\underline{\mathcal{C}}}
\newcommand{\aonecharti}{\iap{\aonechart}}
\newcommand{\bonechart}{\underline{\mathcal{D}}}
\newcommand{\bonecharti}{\iap{\bonechart}}
\newcommand{\conechart}{\underline{\mathcal{E}}}
\newcommand{\conecharti}{\iap{\conechart}}
\newcommand{\aonechartacc}{\underline{\mathcal{C}}\hspace*{-0.4pt}'}
\newcommand{\bonechartacc}{\underline{\mathcal{D}}\hspace*{0pt}'}
\newcommand{\aonechartacci}{\iap{\aonechartacc}}
\newcommand{\bonechartacci}{\iap{\bonechartacc}}
\newcommand{\aonechartdacc}{\underline{\mathcal{C}}''}
\newcommand{\aonechartdacci}{\iap{\aonechartdacc}}
\newcommand{\aonecharthat}{\hspace*{0.2pt}\Hat{\hspace*{-0.75pt}\aonechart}\hspace*{-0.75pt}} %{\hspace*{0.75pt}} 
\newcommand{\aonecharthati}[1]{\hspace*{0.2pt}\iap{\Hat{\hspace*{-0.75pt}\aonechart}}{#1}\hspace*{-0.75pt}}
\newcommand{\aonecharthatjp}[1]{\hspace*{0.2pt}\pap{\widehat{\hspace*{-0.75pt}\aonechart}}{\scriptscriptstyle (#1)}\hspace*{-0.75pt}} 
\newcommand{\bonecharthati}[1]{\hspace*{0.2pt}\iap{\Hat{\hspace*{-0.75pt}\bonechart}}{#1}\hspace*{-0.75pt}} 
\newcommand{\conecharthatjp}[1]{\hspace*{0.2pt}\pap{\widehat{\hspace*{-0.75pt}\conechart}}{\scriptscriptstyle (#1)}\hspace*{-0.75pt}} 
\newcommand{\conecharthatijp}[2]{\hspace*{0.2pt}\bpap{\widehat{\hspace*{-0.75pt}\conechart}}{#1}{\scriptscriptstyle (#2)}\hspace*{-0.75pt}} 
\newcommand{\aonecharthatacci}[1]{\hspace*{0.2pt}\aonecharthat\hspace*{0.4pt}'_{#1}\hspace*{-0.75pt}}
\newcommand{\aonecharthatdacci}[1]{\hspace*{0.2pt}\aonecharthat\hspace*{0.4pt}''_{#1}\hspace*{-0.75pt}}
\newcommand{\bonecharthatsubscript}{\widehat{\smash{\bonechart}\rule{0pt}{4.75pt}}} %{\widehat{\aLTS}}%
\newcommand{\bonecharthataccsubscript}{\widehat{\smash{\bonechartacc}\rule{0pt}{4.75pt}}} %{\widehat{\aLTS}}%
\newcommand{\bonecharthatsubscripti}[1]{\widehat{\smash{\bonechart}\rule{0pt}{4.75pt}}_{#1}} %{\widehat{\aLTS}}%
\newcommand{\bonecharthataccsubscripti}[1]{\widehat{\smash{\bonechart}\rule{0pt}{4.75pt}}{}'_{#1}} %{\widehat{\aLTS}}%
\newcommand{\chartof}{\fap{\achart}}
\newcommand{\onechartof}{\fap{\aonechart}}
\newcommand{\onecharthatof}{\fap{\aonecharthat}}
\newcommand{\indscchartof}[1]{\iap{#1}{\subscriptindtrans}} %{\fap{\sindscchartof}}
\newcommand{\indchartof}{\indscchartof}
\newcommand{\indchart}{\indscchartof}
\newcommand{\LTS}{LTS}
\newcommand{\oneLTS}{$\sone$\nb-\LTS}
\newcommand{\saTSS}{{\mathcal{T}}}
\newcommand{\StExpTSS}{\text{$\saTSS$}}
\newcommand{\StExpTSSover}[1]{\text{$\fap{\StExpTSS}{#1}$}}
  \newcommand{\subscriptindtrans}{\mediumblue{\scriptscriptstyle\pmb{\otind{\cdot}}}}
\newcommand{\aoneloop}{\underline{\mathcal{L}}}
\newcommand{\aoneloopi}{\iap{\aoneloop}}
\newcommand{\indsubonechartinat}[1]{\fap{\aonecharti{#1}}}
\newcommand{\otind}[1]{({#1}]}
\newcommand{\iact}[1]{\mediumblue{{\scriptscriptstyle\pmb (}\hspace*{-0pt}{\black{#1}}\hspace*{0.4pt}{\scriptscriptstyle\pmb ]}}}  
\newcommand{\silt}[1]{\slt{\iact{#1}}}
\newcommand{\silti}[2]{{\xrightarrow{\iact{#1}}}{_{#2}}}
\newcommand{\ilt}[1]{\mathrel{\silt{#1}}}
\newcommand{\ilti}[2]{\mathrel{\silti{#1}{#2}}}
\newcommand{\sasol}{s}
\newcommand{\sasoli}{\iap{\sasol}}
\newcommand{\asol}{\fap{\sasol}}
\newcommand{\asoli}[1]{\fap{\iap{\sasol}{#1}}}
\newcommand{\sbsol}{t}
\newcommand{\sextrsol}{\sasol}
\newcommand{\sextrsolof}{\iap{\sextrsol}}
\newcommand{\sextrsoluntil}{\sbsol}
\newcommand{\sextrsoluntilof}{\iap{\sextrsoluntil}}
\newcommand{\extrsol}{\fap{\sextrsol}}
\newcommand{\extrsolof}[1]{\fap{\sextrsolof{#1}}}
\newcommand{\extrsoluntilof}[1]{\bfap{\sextrsoluntilof{#1}}}
\newcommand{\sterminatesconst}{\tau}
\newcommand{\sterminatesconstof}{\iap{\sterminatesconst}}
\newcommand{\terminatesconstof}[1]{\fap{\sterminatesconstof{#1}}}
\newcommand{\actions}{\mathit{A}}
\newcommand{\oneactions}{\firebrick{\underline{\actions}}} %{\fap{\actions}{\sone}}
\newcommand{\aact}{a}
\newcommand{\bact}{b}
\newcommand{\cact}{c}
\newcommand{\dact}{d}
\newcommand{\aacti}{\iap{\aact}}
\newcommand{\cacti}{\iap{\cact}}
\newcommand{\aoneact}{\firebrick{\underline{a}}}
\newcommand{\boneact}{\firebrick{\underline{b}}}
\newcommand{\aoneacti}[1]{\firebrick{\iap{\aoneact}{#1}}}
\newcommand{\boneacti}[1]{\firebrick{\iap{\boneact}{#1}}}
\newcommand{\verts}{V}
\newcommand{\start}{\averti{\hspace*{-0.5pt}\text{\nf s}}}
\newcommand{\transs}{{\sred}}%{T}
\newcommand{\exts}{{\sterminates}}%{E}
\newcommand{\termexts}{{\sterminates}}%{F}
\newcommand{\onetranss}{\silt{\cdot}}
\newcommand{\onetermexts}{\soneterminates}
\newcommand{\ats}{{AT}}
\newcommand{\alab}{\darkcyan{l}}
\newcommand{\alabi}[1]{\iap{\alab}{\darkcyan{#1}}}
\newcommand{\vertsof}{\fap{\verts\hspace*{-1pt}}}
\newcommand{\atsiof}[1]{\fap{\atsi{#1}}}
\newcommand{\vertsi}[1]{\iap{\verts}{\hspace*{-0.25pt}{#1}}}
\newcommand{\starti}[1]{\averti{\text{\nf s},#1}}
\newcommand{\termextsi}[1]{\iap{\termexts}{{#1}}}
\newcommand{\atsi}[1]{\iap{\ats}{\hspace*{-1.5pt}{#1}}}
\newcommand{\stransitions}{T}
\newcommand{\transitionsinfrom}[1]{\fap{\iap{\stransitions}{#1}}}
\newcommand{\transshat}{\hat{\transs}}
\newcommand{\ainst}{\iota}
\newcommand{\arule}{R}
\newcommand{\asettranss}{U}
\newcommand{\asettranssi}{\iap{\asettranss}}
\newcommand{\avert}{v}
\newcommand{\bvert}{w}
\newcommand{\cvert}{u}
\newcommand{\averti}{\iap{\avert}}
\newcommand{\bverti}{\iap{\bvert}}
\newcommand{\cverti}{\iap{\cvert}}
\newcommand{\avertacc}{\avert'}
\newcommand{\avertacci}{\iap{\avertacc}}
\newcommand{\averttilde}{\tilde{\avert}}
\newcommand{\averttildei}{\iap{\averttilde}}
\newcommand{\bvertbar}{\overlinebar{\bvert}}
\newcommand{\bvertbari}{\iap{\bvertbar}}
\newcommand{\atrans}{\tau}
\newcommand{\sloopelim}{\textnf{elim}}
\newcommand{\sloopelimred}{\iap{\Rightarrow}{\sloopelim}}
\newcommand{\loopelimred}{\mathrel{\sloopelimred}}
\newcommand{\sloopelimredrtc}{\pbap{\Rightarrow}{*}{\sloopelim}}
\newcommand{\loopelimredrtc}{\mathrel{\sloopelimredrtc}}
\newcommand{\sloopelimredtc}{\pbap{\Rightarrow}{+}{\sloopelim}}
\newcommand{\loopelimredtc}{\mathrel{\sloopelimredtc}}
\newcommand{\sloopelimllarrow}{\mbox{$\Longrightarrow$\hspace*{-0.45cm}\raisebox{-0.275pt}{\tikz{\draw(0,0)circle[radius=0.9mm];}}\hspace*{0.26cm}}}
\newcommand{\sloopelimllred}{\bap{\sloopelimllarrow\hspace*{-1pt}}{\sloopelim}}
\newcommand{\loopelimllred}{\mathrel{\sloopelimllred}}
\newcommand{\sloopelimllredrtc}{\bpap{\sloopelimllarrow\hspace*{-1pt}}{\sloopelim}{*}}
\newcommand{\loopelimllredrtc}{\mathrel{\sloopelimllredrtc}}
\newcommand{\gensubchartofby}[2]{{#1}{\downarrow}^{\hspace*{-0.75pt}#2}_{\hspace*{-0.5pt}*}}
\newcommand{\sfunbisim}{%
    \setbox0=\hbox{\kern-.1ex{$\rightarrow$}\kern-.1ex}
    \setbox1=\vbox{\hbox{\raise .1ex \box0}\hrule}%
    {\hbox{\kern.05ex\box1\kern.1ex}}
  }
\newcommand{\funbisim}{\hspace*{-0pt}\mathrel{\sfunbisim}}
\newcommand{\sshaftfunbisim}{
    \setbox0=\hbox{\kern-.1ex{{---}}\kern-.1ex}
    \setbox1=\vbox{\hbox{\raise .1ex \box0}\hrule}%
    {\hbox{\kern.05ex\box1\kern.1ex}}
  }
\newcommand{\sconvfunbisim}{%
    \setbox0=\hbox{\kern-.1ex{$\leftarrow$}\kern-.1ex}
    \setbox1=\vbox{\hbox{\raise .1ex \box0}\hrule}%
    {\hbox{\kern.05ex\box1\kern.1ex}}
  }
\newcommand{\sbisim}{%
    \setbox0=\hbox{\kern-.1ex{$\leftrightarrow$}\kern-.1ex}
    \setbox1=\vbox{\hbox{\raise .1ex \box0}\hrule}%
    \hbox{\kern.1ex\box1\kern.1ex}
  }
\newcommand{\bisim}{\mathrel{\sbisim\hspace*{1pt}}}
\newcommand{\sfunonebisim}{%
    \setbox0=\hbox{\kern-.1ex{\firebrick{$\rightarrow$}}\kern-.1ex}
    \setbox1=\vbox{\hbox{\raise .1ex \box0}\hrule}%
    \ensuremath{\firebrick{\hbox{\kern.05ex\box1\kern.1ex}}}
  }
\newcommand{\funonebisim}{\mathrel{\sfunonebisim}}
\newcommand{\sconvfunonebisim}{%
    \setbox0=\hbox{\kern-.1ex{$\leftarrow$}\kern-.1ex}
    \setbox1=\vbox{\hbox{\raise .1ex \box0}\hrule}%
    \ensuremath{\firebrick{\hbox{\kern.05ex\box1\kern.1ex}}}
  }
\newcommand{\convfunonebisim}{\mathrel{\sconvfunonebisim}}
\newcommand{\sonebisim}{%
    \setbox0=\hbox{\kern-.1ex{$\leftrightarrow$}\kern-.1ex}
    \setbox1=\vbox{\hbox{\raise .1ex \box0}\hrule}%
    \ensuremath{\firebrick{\hbox{\kern.1ex\box1\kern.1ex}}}
  }
\newcommand{\onebisim}{\mathrel{\sonebisim}}
\newcommand{\sfunbisimvia}[1]{{\iap{\sfunbisim}{#1}}}
\newcommand{\funbisimvia}[1]{\hspace*{-0.5pt}\mathrel{\sfunbisimvia{#1}}}
\newcommand{\sfunonebisimvia}[1]{%
    \setbox0=\hbox{\kern-.1ex{$\rightarrow$}\kern-.1ex}
    \setbox1=\vbox{\hbox{\raise .1ex \box0}\hrule}%
    {\bap{\firebrick{\hbox{\kern.05ex\box1\kern.1ex}}}{#1}}
  }
\newcommand{\funonebisimvia}[1]{\hspace*{-1.5pt}\mathrel{\sfunonebisimvia{#1}}}
\newcommand{\sconvfunonebisimvia}[1]{%
    \setbox0=\hbox{\kern-.1ex{$\leftarrow$}\kern-.1ex}
    \setbox1=\vbox{\hbox{\raise .1ex \box0}\hrule}%
    {\bap{\firebrick{\hbox{\kern.05ex\box1\kern.1ex}}}{#1}}
  }
\newcommand{\sonebisimvia}[1]{%
    \setbox0=\hbox{\kern-.1ex{$\leftrightarrow$}\kern-.1ex}
    \setbox1=\vbox{\hbox{\raise .1ex \box0}\hrule}%
    {\bap{\firebrick{\hbox{\kern.05ex\box1\kern.1ex}}}{\hspace*{-1.5pt}#1}}
  }
\newcommand{\onebisimvia}[1]{\mathrel{\sonebisimvia{#1}}}  
\newcommand{\sbisimvia}[1]{\iap{\sbisim}{\hspace*{-1.5pt}#1}}
\newcommand{\bisimvia}[1]{\hspace*{-0.5pt}\mathrel{\sbisimvia{#1}}}
\newcommand{\abisim}{B}
\newcommand{\sbehinc}{{\sqsubseteq}}
\newcommand{\sbehinca}[1]{{\prescript{#1}{}{\sbehinc}}}
\newcommand{\behinca}[1]{\mathrel{\sbehinca}}
\newcommand{\sonebehinc}{{\pap{\sbehinc}{\subotr}}}
\newcommand{\sonebehinca}[1]{{{}_{#1}\sonebehinc}}
\newcommand{\onebehinca}[1]{\mathrel{\sonebehinca}}
\newcommand{\REFL}{\textrm{\nf Ref\/l}}
\newcommand{\SYMM}{\textrm{\nf Symm}}
\newcommand{\TRANS}{\textrm{\nf Trans}}
\newcommand{\CXT}{\textrm{\nf Cxt}}
\newcommand{\sRSP}{\textrm{\nf RSP}}
\newcommand{\RSPstar}{\text{$\sRSP^{*}\hspace*{-1pt}$}}
\newcommand{\snotRSP}{\textrm{\nf\sout{RSP}}}
\newcommand{\notRSPstar}{\text{$\snotRSP^{*}\hspace*{-1pt}$}}
\newcommand{\sUSP}{\textrm{\nf USP}}
\newcommand{\USP}{\sUSP}
\newcommand{\USPone}{\text{$\bap{\sUSP\hspace*{-0.5pt}}{1}$}}
\newcommand{\assocstexpsum}{\fap{\text{\nf assoc}}{\sstexpsum}}
\newcommand{\assocstexpprod}{\fap{\text{\nf assoc}}{\sstexpprod}}
\newcommand{\commstexpsum}{\fap{\text{\nf comm}}{\sstexpsum}}
\newcommand{\neutralstexpsum}{\fap{\text{\nf neutr}}{\sstexpsum}}  
\newcommand{\idempotstexpsum}{\fap{\text{\nf idempot}}{\sstexpsum}}
\newcommand{\rdistr}{\fap{\text{\nf r-distr}}{\sstexpsum,\sstexpprod}}
\newcommand{\stexpzerostexpprod}{\text{\nf deadlock}}
\newcommand{\rightidstexpprod}{\fap{\text{\nf id}_{\text{\nf r}}}{\sstexpprod}}
\newcommand{\leftidstexpprod}{\fap{\text{\nf id}_{\text{\nf l}}}{\sstexpprod}}
\newcommand{\recdefstexpit}{\fap{\text{\nf rec}}{{}^{*}}}
\newcommand{\termstexpit}{\fap{\text{\nf trm-body}}{{}^{*}}}
\newcommand{\ACI}{\text{\sf ACI}}
\newcommand{\ACIover}{\fap{\ACI}}
\newcommand{\seqin}[1]{{\iap{=}{#1}\hspace*{1pt}}}
\newcommand{\eqin}[1]{\mathrel{\seqin{#1}}}
\newcommand{\seqinsol}[1]{{\pbap{=}{\text{\scriptsize (sol)}}{#1}\hspace*{1pt}}}
\newcommand{\eqinsol}[1]{\mathrel{\seqinsol{#1}}}
\newcommand{\ACIeq}{\eqin{\ACI}}
\newcommand{\milnersyseq}{\eqin{\milnersys}}
\newcommand{\milnersysmineq}{\eqin{\milnersysmin}}
\newcommand{\sderivablein}[1]{\vdash_{#1}}
\newcommand{\derivablein}[1]{\sderivablein{#1}}
\newcommand{\sthmequiv}{{\sim}}
\newcommand{\thmequiv}{\mathrel{\sim}}
\newcommand{\sisthmsubsumedby}{{\precsim}}
\newcommand{\isthmsubsumedby}{\mathrel{\sisthmsubsumedby}}
\newcommand{\sthmsubsumes}{{\succsim}}
\newcommand{\thmsubsumes}{\mathrel{\sthmsubsumes}}
\newcommand{\ssubsystem}{{\sqsubseteq}}
\newcommand{\subsystem}{\mathrel{\ssubsystem}}
\newcommand{\ssupsystem}{{\sqsupseteq}}
\newcommand{\supsystem}{\mathrel{\ssupsystem}}
\newcommand{\CoindProofi}[1]{\text{\nf CoProof$_{#1}$}}
\newcommand{\LCoindProofi}[1]{\text{\nf LCoProof$_{#1}$}}
\newcommand{\saeqfun}{L}
\newcommand{\aeqfun}{\fap{\saeqfun}}
\newcommand{\saeqfuni}{\iap{\saeqfun}}
\newcommand{\aeqfuni}[1]{\fap{\iap{\saeqfun}{#1}}}
\newcommand{\scoindproofeqin}[1]{{\bap{\overset{\text{\nf coind}}{=\joinrel=\joinrel=}}{#1}}} %{\text{\scriptsize\nf (co)}}}}
\newcommand{\sLLEEcoindproofeqin}[1]{{\bap{\overset{\text{\nf\scriptsize LLEE}}{=\joinrel=\joinrel=}}{#1}}} %{\text{\scriptsize\nf (co)}}}}
\newcommand{\coindproofeqin}[1]{\mathrel{\scoindproofeqin{#1}}}
\newcommand{\LLEEcoindproofeqin}[1]{\mathrel{\sLLEEcoindproofeqin{#1}}}
\newcommand{\aLLEECoProof}{\mathcal{L\hspace*{-0.5pt}C\hspace*{-0.75pt}P}}
\newcommand{\aLLEECoProofoverof}[1]{\fap{\iap{\aLLEECoProof}{#1}}}
\newcommand{\aLLEECoProofacc}{\mathcal{L\hspace*{-0.5pt}C\hspace*{-0.75pt}P'}}
\newcommand{\aCoProof}{\mathcal{C\hspace*{-0.75pt}P}}
\newcommand{\sdescendsinloopto}{{\curvearrowright}}  %{{\rcurvearrowdown}}
\newcommand{\descendsinloopto}{\mathrel{\sdescendsinloopto}}
\newcommand{\sconvdescendsinloopto}{{\curvearrowleft}} 
\newcommand{\convdescendsinloopto}{\mathrel{\sconvdescendsinloopto}}
\newcommand{\sconvdescendsinlooptotc}{{\pap{\sconvdescendsinloopto}{\hspace*{-0.8pt}+}}}
\newcommand{\convdescendsinlooptotc}{\mathrel{\sconvdescendsinlooptotc}}
\newcommand{\sconvdescendsinlooptosc}{{\pap{\sconvdescendsinloopto}{\hspace*{-0.8pt}=}}}
\newcommand{\convdescendsinlooptosc}{\mathrel{\sconvdescendsinlooptosc}}
\newcommand{\muterm}{$\mu$\nb-term}
\newcommand{\txtdescendsinloopto}{de\-scends-in-loop-to}
\newcommand{\premisen}[1]{{$#1$}\nb-pre\-mise}
\newcommand{\prooftheoretic}{proof-the\-o\-re\-tic}
\newcommand{\sloopsbackto}{{\lefttorightarrow}} %{\raisebox{\depth}{\scalebox{1}[-1]{$\lefttorightarrow$}}} {{\lefttorightarrow}} %{{\lefttorightarrow}} %{{\downtouparrow}} %{{\rcurvearrowright}} %{{\nnearrow}}
\newcommand{\loopsbackto}{\mathrel{\sloopsbackto}}
\newcommand{\eqlogic}{\text{$\mathcal{E\hspace*{-1pt}L}$}}
\newcommand{\eqlogicover}[1]{\fap{\eqlogic}{#1}}
\newcommand{\Fone}{\text{$\bap{\textsf{F}}{\hspace*{-1pt}\textsf{1}}$}}
\newcommand{\ACeq}{$\textsf{AC}^{=}$}
\newcommand{\AKeq}{$\textsf{AK}^{=}$}
\newcommand{\BHeq}{$\textsf{BH}^{=}$}
\newcommand{\BHle}{$\textsf{BH}^{\le}$}
\newcommand{\Geq}{$\textsf{G}^{\slangsemeq}$}
\newcommand{\HS}{$\textsf{HS}$}
\newcommand{\milnersys}{\text{\nf\sf Mil}}
\newcommand{\milnersysof}[1]{\text{$\fap{\text{\nf\sf Mil}}{#1}$}}
\newcommand{\milnersysover}{\milnersysof}
\newcommand{\milnersysacc}{\text{\text{\nf\sf Mil}$^{\prime}$}}
\newcommand{\milnersysaccover}[1]{\text{$\fap{\milnersysacc}{#1}$}}
\newcommand{\milnersysaccbar}{\text{$\overline{\nf\sf Mil}{}^{\prime}$}}
\newcommand{\milnersysaccbarover}[1]{\text{$\fap{\milnersysaccbar}{#1}$}}
\newcommand{\milnersysmin}{\text{\text{\nf\sf Mil}$^{\pmb{-}}$}}
\newcommand{\milnersysminover}[1]{\text{$\fap{\milnersysmin}{#1}$}}
\newcommand{\coindmilnersys}{\text{\nf\sf cMil}}      
\newcommand{\coindmilnersysof}[1]{\text{$\fap{\coindmilnersys}{#1}$}}
\newcommand{\coindmilnersysover}{\coindmilnersysof}
\newcommand{\coindmilnersysbar}{\text{$\overline{\text{\sf cMil}}$}}
\newcommand{\coindmilnersysbarof}[1]{\text{$\fap{\coindmilnersysbar}{#1}$}}
\newcommand{\coindmilnersysbarover}{\coindmilnersysbarof}
\newcommand{\coindmilnersysone}{{\text{{\nf\sf cMil}$_1$}}}
\newcommand{\coindmilnersysoneof}[1]{\text{$\fap{\coindmilnersysone}{#1}$}}
\newcommand{\coindmilnersysoneover}{\coindmilnersysoneof}
\newcommand{\CC}{\text{\sf C\hspace*{-0.75pt}C}}
\newcommand{\CCover}[1]{\text{$\fap{\CC}{#1}$}}
\newcommand{\CLC}{\text{\sf C\hspace*{-0.5pt}L\hspace*{-0.5pt}C}}     
\newcommand{\CLCover}[1]{\text{$\fap{\CLC}{#1}$}}
\newcommand{\BBP}{\text{$\text{\sf BBP}$}}
\newcommand{\asys}{{\mathcal{S}}}
\newcommand{\asysi}{\iap{\asys}}
\newcommand{\aDeriv}{\mathcal{D}}
\newcommand{\aDerivi}{\iap{\mathcal{D}}}
\newcommand{\aDerivacc}{\aDeriv'}
\newcommand{\aDerivacci}{\iap{\aDerivacc}}
\newcommand{\aDerivtilde}{\tilde{\aDeriv}}
\newcommand{\sLEE}{\text{\nf LEE}}
\newcommand{\LEE}{\sLEE}
\newcommand{\sLLEE}{\text{\nf LLEE}}
\newcommand{\LLEE}{\sLLEE}
\newcommand{\thplus}[2]{{#1}{+}{#2}}
\newcommand{\aseteqs}{\Gamma}%{E}
\newcommand{\aseteqsacc}{\aseteqs'}
\newcommand{\bseteqs}{\Delta}%{F}
\newcommand{\picarrowstart}{\raisebox{2pt}{\begin{tikzpicture}%
                                             \draw[<-,very thick,>=latex,chocolate,shorten <=2pt](0,0) -- ++ (180:{12pt});%
                                           \end{tikzpicture}}}
\newcommand{\pictermvert}{\begin{tikzpicture}%
                           \node[draw,chocolate,very thick,circle,minimum width=2.5pt,fill,inner sep=0pt,outer sep=2pt](v){};%
                           \draw[thick,chocolate] (v) circle (0.12cm);%
                         \end{tikzpicture}}
\DeclareRobustCommand*{\mycirc}[1]{%
  \tikz[baseline=(C.base)]
    \node[draw,circle,inner sep=1pt](C){#1};}
\begin{document}

\title[A Coinductive Reformulation of Milner's Proof System]
      {A Coinductive Reformulation of Milner's Proof System\texorpdfstring{\\}{}
       for Regular Expressions Modulo Bisimilarity}
\titlecomment{{\lsuper*}This is a special-journal version of the paper ``A Coinductive Version of Milner's Proof System for Regular Expressions Modulo Bisimilarity''
                        \cite{grab:2021:calco} presented at CALCO~2021.}

\author[C.~Grabmayer]{Clemens Grabmayer\lmcsorcid{0000-0002-2414-1073}}  %required
\address{Dept.\ of Computer Science, Gran Sasso Science Institute, Viale F.\ Crispi,~7, 67100~L'Aquila~AQ,~Italy}  %required
\email{\texttt{clemens.grabmayer@gssi.it}}  %optional
%\thanks{thanks 1, optional.}  %optional

%% etc.

%% required for running head on odd and even pages, use suitable
%% abbreviations in case of long titles and many authors:

%%%%%%%%%%%%%%%%%%%%%%%%%%%%%%%%%%%%%%%%%%%%%%%%%%%%%%%%%%%%%%%%%%%%%%%%%%%

%% the abstract has to PRECEDE the command \maketitle:
%% be sure not to issue the \maketitle command twice!

\begin{abstract}\sloppy
  Milner %(1984)
         defined %a %\nonstandard\ 
                        an operational semantics for regular expressions as \finitestate\ processes.
  In order to axiomatize bisimilarity of regular expressions under this process semantics,  
    he adapted Salomaa's complete proof system for equality of regular expressions under the language semantics.  
  Apart from most equational axioms, Milner's system $\milnersys$ inherits from Salomaa's system
    a non-algebraic rule for solving fixed-point equations.   
  Recognizing distinctive properties of the process semantics 
    that render Salomaa's proof strategy inapplicable, Milner posed completeness of the system \milnersys\ as an~open~question. 
      
  As a proof-theoretic approach to this problem
    we characterize
      the derivational power that the fixed-point rule adds
        to the purely equational part \milnersysmin\ of \milnersys.
  We do so by means of a coinductive rule
    that permits cyclic derivations that consist 
      of a finite process graph (maybe with empty steps) that satisfies the layered loop existence and elimination property LLEE,
      and two of its \provablein{\milnersysmin} solutions.
  By adding this rule instead of the fixed-point rule in \milnersys, we define 
    the coinductive reformulation $\coindmilnersys$ as an extension of \milnersysmin.
  For showing that $\coindmilnersys$ and $\milnersys$ are theorem equivalent
    we develop effective proof transformations from \milnersys\ to \coindmilnersys, and vice versa.
  Since it is located half-way in between bisimulations and proofs in Milner's system \milnersys,
    $\coindmilnersys$ may become a beachhead for a completeness proof~of~\milnersys.
  
  This article extends our contribution to the CALCO~2021 proceedings.
  Here we refine the proof transformations by framing them as eliminations of derivable and admissible rules,
  and we link coinductive proofs to a coalgebraic formulation of solutions of process~graphs. 
\end{abstract}

\maketitle

%---------------------
\section{Introduction}
  \label{intro}  
%--------------------- 

Milner introduced  in \cite{miln:1984} a process semantics $\procsem{\cdot}$ for regular expressions $\astexp$ as \finitestate\ process graphs $\procsem{\astexp}$.
Informally the process interpretation is defined as follows,
  for regular expressions built from the constants $\stexpzero$ and $\stexpone$ by using the regular operators $\sstexpsum$, $\sstexpprod$, and $\stexpit{(\cdot)}$:
\begin{itemize}[itemsep=0.25ex]
  \item 
    $\stexpzero$ stands for deadlock, $\stexpone$ for successful termination, letters $a$ for atomic actions,
  \item  
    the operators $\sstexpsum$ and $\sstexpprod$ are interpreted as choice and concatenation of processes, respectively,
  \item    
    (unary) Kleene star $\stexpit{(\cdot)}$ denotes iteration with the option to terminate successfully 
    before each execution of the iteration body
    (then even infinitely many iterations are possible).
\end{itemize}  

Milner called regular expressions `star expressions' when they are interpreted as processes.
He formulated this semantics % process interpretation of regular expressions
  after developing a complete equational proof system for equality of `a class of regular behaviors'. %,
    By that he understood 
      the bisimilarity equivalence classes of \finitestate\ processes that are represented by \muterm{s}.
Then he defined the process semantics by interpreting regular expressions as \muterm\ representations of \finitestate\ processes. 
In doing so, he defined `star behaviors',  
  the bisimilarity equivalence classes of the interpretations of star expressions
  as a subclass of `regular behaviors'.
As an af\-ter\-thought to the complete proof system for regular behaviors,
  he was interested in
   an axiomatization of equality of `star behavior' directly on star expressions (instead of on \muterm\ representations).
For this purpose he appropriately adapted Salomaa's complete proof system \cite{salo:1966} for language equivalence on regular expressions
to a system $\milnersys$ that is sound for equality of denoted star behaviors.
But Milner noticed that completeness of %his system 
                                        \milnersys\ cannot be shown in analogy to Salomaa's completeness proof.
He formulated completeness of \milnersys\ as an open problem,
  because he realized a significant difficulty
    due to a peculiarity by which the process semantics contrasts starkly with the language semantics of regular expressions.
  
The process semantics of regular expressions is incomplete in the following sense. %, as Milner proved in \cite{miln:1984}. 
  While for every \finitestate\ automaton $\aDFA$ there is a regular expression $\astexp$
    whose language in\-ter\-pre\-ta\-tion $\langsem{\astexp}$ coincides with the language accepted by $\aDFA$ (formally $\languageof{\aDFA} = \langsem{\astexp}$),
  it is not the case that every \finitestate\ process is the process interpretation of some star expression,   
  % an analogous statement does not hold for all \finitestate\ processes in relation to interpretations of star expressions,
    not even modulo bisimilarity.  
Giving a counterexample that demonstrates this,        
  Milner proved in \cite{miln:1984} that the process graph $G_1$ below with linear recursive equational specification $\fap{\aspec}{G_1}$
    does not define a star behavior,
      and hence is not bisimilar to the process interpretation of a star expression. 
He conjectured that the same is true for the process graph $G_2$ below
  with specification $\fap{\aspec}{G_2}$.
That was confirmed later by~Bosscher~\cite{boss:1997}.    
  \begin{center}\vspace{-1ex}\label{fig:milner-bosscher-expressible}
    \scalebox{0.9}{\begin{tikzpicture}

%================================================
% not expressible chart 1 (double-exit iteration)
%================================================ 
%
\matrix[anchor=center,row sep=0cm,column sep=1.5cm,
        every node/.style={draw,very thick,circle,minimum width=2.5pt,fill,inner sep=0pt,outer sep=2pt}] at (6,0.75) {
  \node[color=chocolate](C-1-0){};  &   \node[color=chocolate](C-1-1){};
  \\
};
\draw[<-,very thick,>=latex,chocolate,shorten <=2pt](C-1-0) -- ++ (180:0.58cm);
\path(C-1-0) ++ (0cm,-0.4cm) node{$Y_1$};
\path(C-1-0) ++ (-0.325cm,0.5cm) node{\Large $\iap{G}{2}$};

% \draw[thick] (C-1-2) circle (0.12cm);
\path (C-1-1) ++ (0cm,-0.45cm) node{$Y_2$};

\draw[thick,chocolate] (C-1-1) circle (0.12cm);
\draw[thick,chocolate] (C-1-0) circle (0.12cm);
\draw[->,bend left,distance=0.65cm,shorten <=2pt,shorten >=2pt] (C-1-0) to node[above]{$a$} (C-1-1); 
\draw[->,bend left,distance=0.65cm,shorten <=2pt,shorten >=2pt] (C-1-1) to node[below]{$b$} (C-1-0); 

% \draw[->,bend right,distance=0.6cm,shorten >=2pt] (C-1-0) to node[left]{$b$} (C-1-2);
% \draw[->,bend left,distance=0.6cm,shorten >=2pt] (C-1-1) to node[right]{$c$} (C-1-2);

\path (C-1-0) ++ (0cm,-0.9cm) node[below]{$
  \fap{\aspec}{G_2} = 
  \left\{\,
  \begin{aligned}
    Y_1 & {} = 1 + a \prod Y_2
    \\ 
    Y_2 & {} = 1 + b \prod Y_1
  \end{aligned}
  \,\right.
  $};

%==========================================
% not expressible chart 2 (no-loop example)
%========================================== 
%
\matrix[anchor=center,row sep=0.924cm,column sep=0.75cm,
        every node/.style={draw,very thick,circle,minimum width=2.5pt,fill,inner sep=0pt,outer sep=2pt}] at (0,0) {
                   &                  &  \node(C-2-2){};
  \\
  \node(C-2-1){};  &                  &                  
  \\
                   &                  &  \node(C-2-3){};  
  \\
};
\draw[<-,very thick,>=latex,color=chocolate](C-2-1) -- ++ (180:0.5cm);  % ($(C-2-1)+(-0.125cm,0cm)$)

\path(C-2-1) ++ (-0.325cm,0.5cm) node{\Large $\iap{G}{1}$};
\path(C-2-1) ++ (-0.2cm,-0.4cm) node{$X_1$};
\draw[->,bend right,distance=0.65cm] (C-2-1) to node[above]{$\aacti{2}$} (C-2-2); 
\draw[->,bend right,distance=0.65cm] (C-2-1) to node[left]{$\aacti{3}$}  (C-2-3);

\path(C-2-2) ++ (0cm,0.4cm) node{$X_2$};
\draw[->,bend right,distance=0.65cm]  (C-2-2) to node[above]{$\aacti{1}$} (C-2-1); 
\draw[->,bend left,distance=0.65cm]  (C-2-2) to node[right,xshift=-1pt]{$\aacti{3}$} (C-2-3);

\path(C-2-3) ++ (0cm,-0.4cm) node{$X_3$};
\draw[->,bend right,distance=0.65cm] (C-2-3) to node[left]{$\aacti{1}$}  ($(C-2-1)+(+0.15cm,-0.05cm)$);
\draw[->,bend left,distance=0.65cm]  (C-2-3) to node[right,xshift=-1.5pt]{$\aacti{2}$} (C-2-2);

\path (C-2-1) ++ (-1cm,0cm) node[left]{$
  \fap{\aspec}{G_1} = 
  \left\{\,
  \begin{aligned}
    X_1 & {} = a_2 \prod X_2  +  a_3 \prod X_3
    \\ 
    X_2 & {} = a_1 \prod X_1  +  a_3 \prod X_3
    \\
    X_3 & {} = a_1 \prod X_1  +  a_2 \prod X_2
  \end{aligned}
  \,\right.
  $};

\end{tikzpicture}}
  \end{center}\vspace{-2ex}
(Here and later we highlight the start vertex of a process graph by a brown arrow~\picarrowstart,
and emphasize a vertex $\avert$ with immediate termination in brown as \pictermvert\ including a boldface ring.)
It follows that 
  the systems of $\fap{\aspec}{G_1}$ and $\fap{\aspec}{G_2}$ of guarded equations with star expressions cannot be solved by star expressions modulo bisimilarity.
Due to soundness of $\milnersys$, also the specifications $\fap{\aspec}{G_1}$ and $\fap{\aspec}{G_2}$ are unsolvable by star expressions
  when equality is interpreted as provability in $\milnersys$.
However, if all actions in the process graphs $G_1$ and $G_2$ are replaced by a single action $\aact$, obtaining graphs $G_1^{\scriptscriptstyle (a)}$ and $G_2^{\scriptscriptstyle (a)}$,  
  then the arising specifications $\fap{\aspec}{G_1^{\scriptscriptstyle (a)}}$ and $\fap{\aspec}{G_2^{\scriptscriptstyle (a)}}$ are solvable, modulo bisimilarity, 
  and also with respect to provability in $\milnersys$.
Indeed it is easy to verify that solutions are obtained
  by letting $X_1 \defdby X_2 \defdby X_3 \defdby \stexpprod{\stexpit{\aact}}{\stexpzero}$ in $\fap{\aspec}{G_1^{\scriptscriptstyle (a)}}$,
    and by letting $Y_1 \defdby Y_2 \defdby \stexpit{\aact}$ in $\fap{\aspec}{G_2^{\scriptscriptstyle (a)}}$.
    
The extraction procedure of solutions of specifications in Salomaa's proof completeness is able to solve every linear system of recursion equations,
  independently of the actions occurring.
It follows that an analogous procedure is not possible for solving systems of linear recursion equations in the process semantics. 
The extraction procedure for linear specifications with respect to the language semantics
  is possible because both laws for distributing $\sstexpprod$ over $\sstexpsum$ are available, and indeed are part of Salomaa's proof system.
But \milnersys\ does not contain   
  the \emph{left}-distributivity law  $x \cdot (y + z) = x \cdot y + x \cdot z$, because it famously is not sound under bisimilarity.
In the presence of only \emph{right}-distributivity $ (x + y) \cdot z = x \cdot z + y \cdot z$ in $\milnersys$
  the extraction procedure from Salomaa's proof does not work,
    because failure of \emph{left}-distributivity oftentimes prevents expressions to be rewritten
      in such a way that the fixed-point rule $\RSPstar$ in $\milnersys$ can be applied successfully.
%   Even if such a system is solvable, 
%     the absence from~$\milnersys$ of 
%     the \emph{left}-distributivity law  $x \cdot (y + z) = x \cdot y + x \cdot z$ in Salomaa's system (it is not sound under bisimilarity)
%     frequently prevents applications of the fixed-point rule $\RSPstar$ in $\milnersys$ like in an extraction procedure from Salomaa's proof.
%
But if $\RSPstar$ is replaced in $\milnersys$ by a general unique-solvability rule scheme for guarded systems of equations 
  (see Definition~\ref{def:milnersys}), 
then a complete system arises (noted in \cite{grab:2006}).
Therefore completeness of $\milnersys$ hinges on whether the fixed-point rule $\RSPstar$ enables to prove equal
  any two star-expression solutions of a given guarded system of equations, 
  on the basis of the purely equational part~$\milnersysmin$~of~$\milnersys$.
   
As a stepping stone for tackling this difficult question, we characterize 
  the derivational power that the fixed-point rule~$\RSPstar$ 
  adds to the subsystem $\milnersysmin$~of~$\milnersys$.
We do so by means of `coinductive proofs' whose shapes have  
the `loop existence and elimination property' \LEE\ from \cite{grab:fokk:2020:lics}. 
This property stems from the interpretation of (\onefree) star expressions,
  which is defined by induction on syntax trees, creating a hierarchy of `loop subgraphs'. %, yielding \LEE\ for process interpretations of \onefree\ star expressions.
Crucially for our purpose, linear guarded systems of equations that correspond to finite process graphs with \LEE\
  \emph{are} uniquely solvable modulo provability~in~$\milnersysmin$.
The reason is that process graphs with \LEE,
  % although they %may be outside of the image of the process interpretation,
  which need not be in the image of the process semantics,
are amenable to applying \emph{right}-distributivity and the %fixed-point 
                                                      rule \RSPstar\ for an extraction procedure like in Salomaa's proof (see Section~\ref{coindmilnersys:2:milnersys}).
These process graphs can be expressed modulo bisimilarity by some star expression,
  which can be used to show that any two solutions modulo $\milnersysmin$ of a specification of \LEEshape\ are \provablyin{\milnersys} equal. 
This is a crucial step %(step~{\bf (S)}) 
                       in the completeness proof by Fokkink and myself in \cite{grab:fokk:2020:lics} 
  for the tailored restriction \BBP\ of Milner's system $\milnersys$ to~`\onefree'~star~expressions.

\begin{figure}[t!]
\begin{tikzpicture}
 
\matrix[anchor=center,row sep=1.25cm,column sep=2.65cm,
        every node/.style={draw=none}
        % every node/.style={draw,very thick,circle,minimum width=2.5pt,fill,inner sep=0pt,outer sep=2pt}
        ] at (0,0) {
    \node(v1){}; &             & \node(v2){};
    \\
                 & \node(v){}; 
    \\
  };
  
\path (v1) node(v1-label){$ (1 \cdot g^*) \cdot 0
                              \formeq
                            ((1 \cdot (a + b)) \cdot h^*)  \cdot 0 $};       
\draw[->,shorten <=2.8cm,shorten >=1.95cm] (v1) to node[above,pos=0.5725]{$a, b$} (v2);

\path (v2) node(v2-label){$ (1 \cdot g^*) \cdot 0
                              \formeq
                             (1 \cdot h^*) \cdot 0 $}; 
\draw[->,thick,densely dotted,out=-45,in=0,distance=0.75cm,shorten <= 0.2cm] (v2) to node[right]{$\sone$} ($(v) + (3.2cm,0.3cm)$);

\path (v) node(v-label){$ (\underbrace{
                           a + b
                                        }_{g})^* 
                                       \,\cdot\, 0
                            \formeq
                          (\underbrace{
                           a \cdot (a + b) + b
                                        }_{h})^* 
                            \,\cdot\, 0 $}; 
\draw[->,thick,darkcyan,shorten <= 1.1cm,shorten >= 0.5cm] (v) to node[above,pos=0.45]{$\loopnsteplab{1}$}
                                                                  node[below,pos=0.7]{$\black{a}$} (v1);
\draw[->,thick,darkcyan,shorten <= 1.1cm,shorten >= 0.5cm] (v) to node[above,pos=0.45]{$\loopnsteplab{1}$}
                                                                  node[below,pos=0.7]{$\black{b}$} (v2);

\matrix[anchor=center,row sep=1.25cm,column sep=1.1cm,
        every node/.style={draw,very thick,circle,minimum width=2.5pt,fill,inner sep=0pt,outer sep=2pt}
        ] at (7,0) {
    \node(v1--2){}; &             & \node(v2--2){};
    \\
                 & \node(v--2){}; 
    \\
  };  
  
\draw[->,thick,darkcyan] (v--2) to node[right,pos=0.7]{$\loopnsteplab{1}$}
                                   node[left,pos=0.45]{$\black{a}$} (v1--2);
\draw[->,thick,darkcyan] (v--2) to node[left,pos=0.7]{$\loopnsteplab{1}$}
                                   node[right,pos=0.45]{$\black{b}$} (v2--2);  

\draw[->] (v1--2) to node[above]{$a, b$} (v2--2);

\draw[<-,very thick,>=latex,chocolate,shorten <= 2pt](v--2) -- ++ (180:0.55cm); 
\draw[->,thick,densely dotted,out=-45,in=0,distance=0.75cm] 
  (v2--2) to node[right]{$\sone$} (v--2);

\path (v--2) ++ (-1.4cm,0cm) node(v--2-label) {\Large $\mathcal{G},\,\Hat{\mathcal{G}}$};
% \path (v--2) ++ (-1.35cm,0.2cm) node(v--2-label){$\LARGE \onechartof{h^* \cdot 0}$};

\end{tikzpicture}
  \vspace*{-1.5ex}
  \caption{\label{fig:ex:1:coindproof}%
           A \protect\LLEEwitnessed\ coinductive proof of $(a + b)^* \cdot 0 \formeq (a \cdot (a + b) + b)^* \cdot 0\,$~%
           with\vspace*{-1.5pt} underlying \protect\onechart~$\mathcal{G}$ and \protect\LLEEwitness~$\Hat{\mathcal{G}}$ 
             (with \protect\darkcyan{colored, indexed \protect\loopentry\ transitions}). 
           The proof uses the abbreviations $\,g \defdby a + b\,$ and $\,h \defdby a \cdot (a + b) + b\,$.
           }
\end{figure}% 
Thus motivated, we define a `\LLEEwitnessed\ coinductive proof' as
  a process graph $\mathcal{G}$ with `layered' \LEE\ (\LLEE) whose vertices are labeled by equations between star expressions.
  The left- and the right-hand sides of the equations in the vertices of $\mathcal{G}$ 
    have to form a solution vector of a specification corresponding to the process graph $\mathcal{G}$.
  That specification, however, needs to be satisfied only up to provability in $\milnersysmin$ from sound assumptions. 
Such coinductive derivations are typically circular,  
  like the one depicted in Figure~\ref{fig:ex:1:coindproof}
    of the semantically valid equation $(a + b)^* \cdot 0 \formeq (a \cdot (a + b) + b)^* \cdot 0\,$.
That example is intended to give a first impression of the concepts involved, 
  despite of the fact that some details can only be appreciated later,
    when this example will be revisited in Example~\ref{ex:1:coindproof}. We describe these concepts below.

The process graph $\mathcal{G}$ in Figure~\ref{fig:ex:1:coindproof}, 
  which is given together with a labeling $\Hat{\mathcal{G}}$ that is a `\LLEEwitness'~of~$\mathcal{G}$.
  The \darkcyan{colored} transitions with marking labels $\loopsteplab{n}$, for $n\in\natplus$,
    indicate the LLEE-structure of $\mathcal{G}$, see Section~\ref{LEE}. 
    % (see Section~\ref{LEE} for an explanation of the \darkcyan{colored} transitions with marking labels $\loopsteplab{n}$, for $n\in\natplus$,
    %  which indicate LLEE-structure), 
The graph $\mathcal{G}$ underlies the coinductive proof on the left (see Example~\ref{ex:1:coindproof} for a justification). 
$\mathcal{G}$ is a `\onechart' that~is, 
  a process graph with \onetransitions\ that represent empty steps.
Here and later we depict \onetransitions\ as dotted arrows.
% The adequate concept of bisimulation for \onecharts\ is `\onebisimulation'.   
For \onecharts, `\onebisimulation' is the adequate concept of bisimulation~(Definition~\ref{def:onebisim}).

We showed in \cite{grab:2021:TERMGRAPH2020-postproceedings,grab:2020:TERMGRAPH2020-postproceedings:arxiv} 
  that the process (chart) interpretation $\chartof{\astexp}$ of a star expression $\astexp$ is
  the image of a \onechart\ $\onechartof{\astexp}$ with \LLEE\ under a functional \onebisimulation.
In this example, $\mathcal{G} = \onechartof{h^* \cdot 0}$ maps by a functional \onebisimulation\
  to interpretations of both expressions in the conclusion. % ${e^* \cdot 0}$ and ${f^* \cdot 0}$.
The correctness conditions for such coinductive proofs are formed by the requirement that the left-, and respectively, the right-hand sides of formal equations
form `\provablein{\milnersysmin} solutions' of the underlying process graph: an expression at a vertex~$\avert$ 
can be reconstructed, provably in~$\milnersysmin$, from the transitions to, and the expressions at, immediate successor vertices of $\avert$.
% We consider `\onecharts', which are process graphs with \onetransitions\ that represent empty step processes,
%   and for which `\onebisimulation' is the adequate concept of bisimulation. 
% As we showed in \cite{grab:2021:TERMGRAPH2020-postproceedings,grab:2020:TERMGRAPH2020-postproceedings:arxiv} 
%   the process interpretation of every star expression is \onebisimilar\ to a \onechart\ with \LLEE. 
Crucially we establish in Section~\ref{coindmilnersys:2:milnersys},
by a generalization of arguments in \cite{grab:fokk:2020:lics,grab:fokk:2020:lics:arxiv} using $\RSPstar$,
  that every \LLEEwitnessed\ coinductive proof over $\milnersysmin$ can be transformed into a derivation~in~$\milnersys$~with~the~same~conclusion.    
  
This raises the question of whether the fixed-point rule \RSPstar\ of $\milnersys$ adds any derivational power to $\milnersysmin$
  that goes beyond those of \LLEEwitnessed\ coinductive proofs over $\milnersysmin$, and if so, how far precisely.
In Section~\ref{milnersys:2:coindmilnersys} we show that every instance of the fixed-point rule \RSPstar\
  can be mimicked by a \LLEEwitnessed\ coinductive proof over $\milnersysmin$ in which %in addition to $\milnersysmin$  
  % in which in addition to the axioms of $\milnersysmin$ % (and $\milnersys$)
  also the premise of the rule may be used.%
  \begin{figure}[tb!]
  \begin{center}
    %\hspace*{-0.5ex}%
    \AxiomC{$ \overbrace{(a + b)^*}^{\chocolate{{e _0 ^*}}} \chocolate{\mathrel{\cdot} 0} 
                \:\formeq\:
              \overbrace{(a \cdot (a + b) + b)}^{\alert{f}} \cdot (\overbrace{(a + b)^*}^{\chocolate{{e _0 ^*}}} \chocolate{{} \cdot 0}) + \forestgreen{0} $}
    \RightLabel{$\ainst$, $\RSPstar$}
    \UnaryInfC{$ (a + b)^* \cdot 0
                   \:\formeq\:
                 (a \cdot (a + b) + b)^* \,\cdot\, \forestgreen{0} $}
    \DisplayProof
    %
    %\text{of form:}
    % \;\parbox[c]{\widthof{of the}}
    %             {of the\\[-0.5ex]form}\;
    %
    \AxiomC{$ \chocolate{{e _0 ^*} \cdot 0} \:\formeq\: \alert{f} \cdot (\chocolate{{e _0 ^*} \cdot 0}) + \forestgreen{0} \rule{0pt}{23.5pt}$}
    %\RightLabel{{$\RSPstar$}}
    \UnaryInfC{$ \chocolate{{e _0 ^*} \cdot 0} \:\formeq\: \alert{f^*} \,\cdot\, \forestgreen{0}$}
    \DisplayProof 
  \end{center}
  \vspace*{0ex}
  \begin{center}  
    \begin{tikzpicture}
      \matrix[anchor=center,row sep=1.25cm,column sep=3.5cm,
              every node/.style={draw=none}
              % every node/.style={draw,very thick,circle,minimum width=2.5pt,fill,inner sep=0pt,outer sep=2pt}
              ] at (0,0) {
          \node(v1){}; &[1.5cm]             &[-1.5cm] \node(v2){};
          \\
                       & \node(v){}; 
          \\
        };
        
      \path (v1) node(v1-label){$ (1 \cdot (a + b)) \cdot (\chocolate{{e _0 ^*} \cdot 0}) 
                                    \formeq
                                  ((1 \cdot (a + b)) \cdot \alert{f^*})  \cdot \forestgreen{0} $};       
      \draw[->,shorten <=3.75cm,shorten >=1.95cm] (v1) to node[above,pos=0.6]{$a, b$} (v2);

      \path (v2) node(v2-label){$ 1 \cdot (\chocolate{{e _0 ^*} \cdot 0})
                                    \formeq
                                   (1 \cdot \alert{f^*}) \cdot \forestgreen{0} $}; 
      \draw[->,thick,densely dotted,out=-45,in=0,distance=0.75cm,shorten <= 0.2cm] (v2) to node[right]{$\sone$} ($(v) + (2.55cm,0.125cm)$);

      \path (v) node[xshift=-3.5cm,yshift=-0.15cm](v-label)
                             {$ \underbrace{\chocolate{{e _0 ^*} \cdot 0}}
                                           _{\text{(by the premise of $\ainst$)} \;\;
                                               (a \cdot (a + b) + b) \cdot (\chocolate{{e _0 ^*} \cdot 0}) + \forestgreen{0}
                                                 \:\synteq\:
                                               \alert{f} \cdot (\chocolate{{e _0 ^*} \cdot 0}) + \forestgreen{0} 
                                                 \: = \:
                                             \phantom{
                                                 \: = \:
                                               \alert{f} \cdot (\chocolate{{e _0 ^*} \cdot 0}) + \forestgreen{0}   
                                                 \:\synteq\: 
                                               (a \cdot (a + b) + b) \cdot (\chocolate{{e _0 ^*} \cdot 0}) + \forestgreen{0} 
                                               \text{(by the premise of $\ainst$)} }}
                                    \hspace*{-45ex}
                                  \formeq
                                \underbrace{(a \cdot (a + b) + b)^*}_{\alert{f^*}} \,\cdot\, \forestgreen{0} $}; 
      \draw[->,thick,darkcyan,shorten <= 1.5cm,shorten >= 1cm] (v) to node[above,pos=0.45]{$\loopnsteplab{1}$}
                                                                        node[below,pos=0.525]{$\black{a}$} (v1);
      \draw[->,thick,darkcyan,shorten <= 0.6cm,shorten >= 0.5cm] (v) to node[above,pos=0.45]{$\loopnsteplab{1}$}
                                                                        node[below,pos=0.575]{$\black{b}$} (v2); 
                                                                        
      \path (v) ++ (-5.5cm,0.5cm) node{\Large $ {\onecharthatof{\rule{0pt}{8pt}\alert{f^*}\cdot \forestgreen{0}}} $};                                                                     
    \end{tikzpicture}
  \end{center}
  \vspace*{-1.5ex}
\caption{\protect\label{fig:ex:1:RSPstar:to:coindproof}%
         Mimicking an instance $\protect\ainst$ of the fixed-point rule \protect\RSPstar\ (above) in Milner's system $\protect\milnersys = \protect\thplus{\protect\milnersysmin}{\protect\RSPstar}$
           by a coinductive proof (below) over $\protect\thplus{\protect\milnersysmin}{\protect\setexp{\protect\text{premise of $\protect\ainst$}}}$ with \protect\LLEEwitness~$\protect\onecharthatof{\protect\alert{f^*}\protect\cdot \protect\forestgreen{0}}$.} 
\end{figure}% 
It follows that the derivational power that \RSPstar\ adds to $\milnersysmin$ within $\milnersys$ consists of iterating such \LLEEwitnessed\ coinductive proofs
along finite (meta-)prooftrees.    
The example in Figure~\ref{fig:ex:1:RSPstar:to:coindproof}
  is intended to give a first idea 
  of the construction that we will use (in the proof of Lemma~\ref{lem:lem:mimic:RSPstar})
  to mimic instances of \RSPstar.
  Here this construction results in a coinductive proof that only differs slightly from the one with the same underlying \LLEEonechart\ we saw earlier.
We will revisit this example in Example~\ref{ex:fig:ex:1:RSPstar:to:coindproof}. 

Based on the two transformations from coinductive proofs to derivations in $\milnersys$, and of applications of the fixed-point rule to coinductive proofs,  
 we reformulate Milner's system $\milnersys$
   as a \theoremequivalent\ proof system $\coindmilnersys$.  
For this, we replace the fixed-point rule $\RSPstar$ in $\milnersys$ 
  with a rule that permits to infer an equation $\astexp \formeq \bstexp$ from a finite set $\aseteqs$ of equations
    if there is a \LLEEwitnessed\ coinductive proof over $\milnersysmin$ plus the equations in $\aseteqs$ 
      that has conclusion $\astexp \formeq \bstexp$.  
We also define a \theoremequivalent\ system \CLC\ (`combining \LLEEwitnessed\ coinductive provability') with the equational coinductive proof rule alone. 
In the formalization of these systems
  we depart from the the exposition in \cite{grab:2021:calco:arxiv,grab:2021:calco}.
There, we used a hybrid concept of formulas that included entire coinductive proofs,
  which then could be used as specific rule premises.
Here, the proof systems are purely equational, and coinductive proofs occur only as \sideconditions\ of rules that formalize coinductive~provability. 
  
Additionally, we formulate proof systems $\coindmilnersysbar$ and \CC\
  that arise from $\coindmilnersys$ and \CLC\ by dropping `\LLEE-witnessed' as a requirement for coinductive proofs.
These systems are (obviously) complete for bisimilarity of process interpretations,
  because they can mimic the unique solvability rule scheme for guarded systems of specifications mentioned before.

\paragraph{Inspiration for cyclic proofs from related and previous work.}  
  Apart from their origin from a question in process theory,
    the results described here were inspired by co\-in\-duc\-tively motivated proof systems with derivations of cyclic form, 
      and by our previous work on their proof-theoretic links to traditional equational proof systems.
  This is a brief account of those direct influences, without drawing wider connections to work on cyclic proofs.
  
  About proofs by cyclic arguments that express bisimulations we learned
    from Rutten and Jacobs \cite{jaco:rutt:1997,rutt:1998},
      and about formalized cyclic derivations
        via the coinductively motivated proof systems by Brandt and Henglein~\cite{bran:heng:1998}:
          \BHeq\ for un\-win\-ding-equivalence, and \BHle\ for the subtyping relation between recursive types in $\mu$\nb-term notation. 
  Derivations in \BHeq\ roughly represent bisimulations up to transitivity and symmetry. 
  Via a connection of \BHeq\ to a tableaux-like system \AKeq\ by Ariola and Klop \cite{ario:klop:1996} with cyclic deductions
    we later recognized that a tableaux system~\HS\ with loop-detecting deductions of cyclic form similar to \AKeq\
      had already been used earlier by H\"{u}ttel and Stirling
        in \cite{huet:stir:1991,huet:stir:1998} 
          to show that bisimilarity of normed context-free processes is decidable.
  In \cite{grab:2005:calco} we developed a simple coinductively motivated proof system \Geq\ for language equivalence $\slangsemeq$ of regular expressions.
    That system was later refined substantially (using more flexible rules, similar to \BHeq), and generalized (similar to as \BHle\ generalized \BHeq) 
      to one for language containment of regular expressions by Henglein and Nielsen in \cite{heng:niel:2011}.  
      
  However, all of these proof systems use derivations in the form of proof-trees.
    Thus they permit cyclic derivations only of `palm-tree' form (ordered trees with \backlinks, \cite{tarj:1972}).
  In contrast, 
    we will permit cyclic derivations to have the form of general transition graphs.
      In this manner we `free' proof-graphs from the requirement to only exhibit `vertical sharing' \cite{blom:2001},
        and move close to informal reasoning as used in coalgebra like in \cite{jaco:rutt:1997,rutt:1998}.
  
  The transformations that we construct in Section~\ref{coindmilnersys:2:milnersys} and Section~\ref{milnersys:2:coindmilnersys}
    have been inspired by the  proof-theoretic interpretations that we developed in \cite{grab:2005:prfschr} between
  the proof system \ACeq\ for un\-win\-ding-equality of recursive types %in $\mu$\nb-term notation  
    by Amadio and Cardelli~\cite{amad:card:1993}
      (a Hilbert-style proof system with a fixed-point rule analogous to \RSPstar)
      and 
    the system \BHeq\ by Brandt and Henglein.
  The transformation from $\coindmilnersys$ to $\milnersys$ in Section~\ref{coindmilnersys:2:milnersys} is also similar to
    one we described in \cite{grab:2005:calco} that transforms
      derivations in \Geq\ %a coinductively motivated proof system for language equivalence between regular expressions
      into derivations in Salomaa's system \Fone\ \cite{salo:1966} for $\slangsemeq$ %language equivalence 
                                                                   (where \Fone\ contains a fixed-point rule just like $\RSPstar$).

\paragraph{Relation with the conference article.}
  This article provides significantly more details and explanations than the article \cite{grab:2021:calco} in the proceedings of CALCO~2021.
  Furthermore it contains the following additions of content:
  \begin{itemize}[label=$\triangleright$,itemsep=0.25ex]
    \item
      Detailed proofs for the proof transformations from $\coindmilnersys$ to $\milnersys$ (in Section~\ref{coindmilnersys:2:milnersys}),
                                                and from $\milnersys$ to $\coindmilnersys$ (in Section~\ref{milnersys:2:coindmilnersys}). 
    
    \item
      Proof-theoretic explanation of the transformations 
        as the elimination of rules that are derivable or admissible 
          (based on Definition~\ref{def:derivable:admissible:rules}, Lemma~\ref{lem:derivable:admissible:rules} in Section~\ref{prelims}).
      \begin{itemize}
        \item
          For the proof transformation from the coinductive reformulation 
                                            $\coindmilnersys$ to Milner's system $\milnersys$
            we show that circular coinductive proofs over the purely equational part $\milnersysmin$ of $\milnersys$
              are admissible in $\milnersys$ (see Lemma~\ref{lem:LCoindProof:admissible:milnersys}).
        \item   
          For the proof transformation from Milner's system $\milnersys$ to its coinductive reformulation $\coindmilnersys$ 
            we show that the fixed-point rule \RSPstar\ of $\milnersys$ is derivable in $\coindmilnersys$
            (see Lemma~\ref{lem:RSPstar:derivable:coindmilnersysone}).
              %by a rule that permits coinductive proofs over $\milnersys$ and additional premises.
      \end{itemize}
      
  \item
      A statement that illustrates that the transformation %of \LLEEwitnessed\ coinductive proofs into derivations in \milnersys\
                                                           from \coindmilnersys\ to \milnersys\
        can provide inroads for a completeness proof of Milner's system \milnersys\ (see Corollary~\ref{cor:milnersys:compl:expansion:minimization}).
      
  \item
    A diagram that gives an overview of all developed proof transformations
      (see Figure~\ref{fig:transformations}).
      
  \item
    An example that provides a sanity-check on the proof transformation from $\coindmilnersys$~to~$\milnersys$.
    It demonstrates that this transformation cannot work for mimicking instances of the fixed-point rule 
        without guardedness \sidecondition\ (see Non-Example~\ref{nonex:RSPstar:2:LLEEcoindproof}).
        
  \item
    An illustration of the difference between \LEEwitnesses\ (witnesses of the loop existence and elimination condition \LEE), and witnesses of `layered \LEE' (\LLEEwitnesses) 
      by an example that is based on different runs of the loop elimination procedure (see~Figure~\ref{fig:nonex:ex:LLEEw}). 
      
  \item
    A section in which we informally link the concept of `provable solution' (Definition~\ref{def:provable:solution})
       that is the basis for our concept of `coinductive proof' 
         to a coalgebraic formulation of this concept by Schmid, Rot, and Silva in \cite{schm:rot:silv:2021}
           (see Section~\ref{solutions}).
\end{itemize}

\paragraph{Relation with the completeness proof of Milner's system in \cite{grab:2022:lics}.}
  The com\-plete\-ness proof of Milner's system \milnersys\ summarized in \cite{grab:2022:lics} with report \cite{grab:2022:lics:arxiv}
    was finished and written only after the article \cite{grab:2021:calco} for CALCO~2021. 
  Indeed, we found the results in Section~6 of \cite{grab:2021:calco} and here in Section~\ref{milnersys:2:coindmilnersys}
      (that instances of the fixed-point rule \RSPstar\ can be mimicked by \LLEEwitnessed\ coinductive proofs) 
    in an effort to prepare for that completeness proof. % of \milnersys.
  In particular, we wanted to be able to argue for the expedience of the use of \LLEEonecharts\ (see Definition~\ref{def:LLEEwitness})
    despite of the fact that reasoning with \LLEEonecharts\ towards a completeness proof of \milnersys\
      encounters a crucial obstacle%
           \footnote{Namely the fact that \LLEEonecharts\ are not closed under `\onebisimulation\ collapse',
                     an observation that is central for the crystallization procedure sketched in \cite{grab:2022:lics}.}.
  Without any argumentation that links derivations in Milner's system closely to \LLEEonecharts,
    it could be conceivable that this obstacle does not have any wider significance.
      Namely, it could be entirely specific to the use of \LLEEonecharts,
        while a completeness proof might possibly be based on quite different concepts.   
  The situation changed, however, after we realized
    that instances of the fixed-point rule can always be modeled (see Lemma~\ref{lem:RSPstar:derivable:coindmilnersysone})  
      by cyclic proofs of the shape of guarded \LLEEonecharts\ (see Definition~\ref{def:coindproof}),
        and proofs in Milner's system can be transformed (see Theorem~\ref{thm:main}) into meta-prooftrees of such cyclic proofs 
          (derivations in the system \CLC, see Definition~\ref{def:CLC:CC}).
  On the basis of these results we could argue that in principle every completeness proof of Milner's system \milnersys\ can be routed through (see Section~\ref{conclusion}) 
    arguments in which \LLEEonecharts\ appear front and central.
           
  The completeness proof of \milnersys\ in \cite{grab:2022:lics,grab:2022:lics:arxiv}
    uses additional observations and concepts (above all, a `crystallization procedure' of \LLEEonecharts\ for minimization under \onebisimilarity),
      and is not formulated in terms of the cyclic proof systems that we introduce here.
  However, the results of Section~\ref{coindmilnersys:2:milnersys}, 
    the transformation of \LLEEwitnessed\ coinductive proofs into derivations in \milnersys\ (see Proposition~\ref{prop:LLEEcoindproofeq:impl:milnersyseq})
      are of central importance for formulating the completeness proof in \cite{grab:2022:lics,grab:2022:lics:arxiv}.
    Indeed, they prove the lemmas {\bf (E)} (\underline{e}xtraction of provable solutions from guarded \LLEEonecharts)
      and {\bf (SE)} (provable \underline{s}olution \underline{e}quality in guarded \LLEEonecharts)
        of the completeness proof as listed in Section~5 of \cite{grab:2022:lics,grab:2022:lics:arxiv}.
        
  Conversely, we here use another one of the lemmas in Section~5 of \cite{grab:2022:lics,grab:2022:lics:arxiv},
    the lemma {\bf (T)} (transfer of provable solutions conversely along functional \onebisimulations),
      for illustrating the results of Section~\ref{coindmilnersys:2:milnersys}:
        we prove two (specific) completeness properties of \LLEEwitnessed\ coinductive proofs in relation to Milner's system
          (see Corollary~\ref{cor:milnersys:compl:expansion:minimization}).
          
  The completeness proof for Milner's system \milnersys\ with respect to process semantics equality of star expressions
    implies that the coinductive versions \coindmilnersys\ and \CLC\ of \milnersys\ that we introduce here are complete (in the same sense) as well.
  This is because our main result (see Theorem~\ref{thm:main}) 
    states that \coindmilnersys\ and \CLC\ have the same derivational power as \milnersys.

\tableofcontents
%\vspace*{-5ex}%
  %
\paragraph{Overview.}
We start in Section~\ref{prelims}
  with introducing basic definitions concerning the process semantics of regular expressions, and concepts that we will need.
  We define star expressions, finite process graphs with \onetransitions,
    ($\sone$-)bisimulations and \onebisimilarity,
      and the process semantics of star expressions. % via a transition system specifications (TSS). 
  Then we introduce \equationallogic\ based, and equation-based proof systems,
    with Milner's system \milnersys\ and two variants as first examples. 
  Also, we define when inference rules are derivable or admissible in such a proof system,
    and establish easy interconnections.
  Finally we define the concept of solution for \onecharts\ with respect to an equational proof system.
In Section~\ref{solutions}
  we link to an insightful coalgebraic characterization of provable solutions of \onecharts\ that is due to Schmid, Rot, and Silva in \cite{schm:rot:silv:2021}.
  We reformulate it in our terminology, but do not prove it in detail, as our development
    does not depend~on~it.    
In Section~\ref{LEE}
  we explain concepts and definitions concerning the (layered) loop existence and elimination property (L)LEE 
    from \cite{grab:fokk:2020:lics:arxiv,grab:fokk:2020:lics},
      and recall the `\onechart\ interpretation' of star expressions from \cite{grab:2020:TERMGRAPH2020-postproceedings:arxiv,grab:2021:TERMGRAPH2020-postproceedings},
        which guarantees \LLEE.
        
In Section~\ref{coindmilnersys}
  we introduce `coinductive proofs' over equational proof systems. 
  We formulate proof systems \CC\ and \CLC\ with appropriate rule schemes 
    that permit to use and combine coinductive proofs, and respectively, \LLEEwitnessed\ coinductive proofs.
  Then we introduce the coinductive reformulation \coindmilnersys\ of \milnersys\
    as an extension of the equational part \milnersysmin\ of \milnersys. 
  We also establish basic \prooftheoretic\ connections between these new systems.

In Section~\ref{coindmilnersys:2:milnersys}
  we show that coinductive proofs over proof systems with derivational power not greater than Milner's system \milnersys\
    can be transformed into derivations in \milnersys.
  We use this fact to obtain a proof transformation from \coindmilnersys\ to \milnersys.  
  % We use this fact to show that every derivation in \coindmilnersys\ can be transformed into a derivation in \milnersys\ with the same conclusion. 
  
In Section~\ref{milnersys:2:coindmilnersys}
  we demonstrate that every instance $\ainst$ of the fixed-point rule \RSPstar\ of \milnersys\
    can be mimicked by a coinductive proof of the conclusion of $\ainst$
      where (correctness conditions of) that proof may use the equational part \milnersysmin\ of \milnersys\ plus the premise equation of $\ainst$.
  We apply this central observation for defining a proof transformation from \milnersys\ to \coindmilnersys.   
    % show that every derivation in \milnersys\ can be transformated into a derivation in \coindmilnersys\ with the same conclusion.  
  With this transformation and the one constructed in Section~\ref{coindmilnersys:2:milnersys}
    we prove that the proof systems \CLC\ and \coindmilnersys\ are \theoremequivalent\ with Milner's system \milnersys. 
       
In the final section, Section~\ref{conclusion},
  we recapitulate our motivation for introducing coinductive circular proofs,
    and summarize our results.
We argue that the coinductive proof systems \CLC\ and \coindmilnersys\
  can be viewed as being located roughly half-way 
    in between 
      derivations in \milnersys\
    and   
    bisimulations between process interpretations of star expressions.
We conclude with initial ideas about a proof strategy for a completeness proof of \CLC\ and \coindmilnersys,
  which would yield a completeness of \milnersys. %\ by using the proof transformations from \coindmilnersys\ to \milnersys\ in Section~\ref{coindmilnersys:2:milnersys}. 

%\newpage
%----------------------
\section{Process semantics for star expressions, and Milner's proof system}%
  \label{prelims}
%----------------------

Here we fix terminology concerning star expressions, \onecharts, \onebisimulations;
  we exhibit Milner's system (and a few variants), and recall the chart interpretation of star expressions.

\smallskip

\begin{defi}[star expressions]
  Let $\actions$ be a set %whose members we call 
                          of \emph{actions}.
  The set $\StExpover{\actions}$ of \emph{star expressions over actions in $\actions$} 
  are strings that are defined by the following grammar: 
  \begin{equation*}
    \astexp, \astexpi{1}, \astexpi{2}
      \:\BNFdefdby\:
    \stexpzero
      \BNFor
    \stexpone
      \BNFor  
    \aact
      \BNFor
    ( \stexpsum{\astexpi{1}}{\astexpi{2}} )
      \BNFor
    ( \stexpprod{\astexpi{1}}{\astexpi{2}} )
      \BNFor
      \stexpit{\astexp}  \qquad\quad \text{(where $\aact\in\actions$)}
  \end{equation*} 
  We will drop outermost brackets, and those that are expendable according to the precedence of star~$\stexpit{}$ over composition~$\sstexpprod$ and choice~$\sstexpsum$,
    and of composition~$\sstexpprod$ over choice~$\sstexpsum$.
  We use $\astexp,\bstexp,\cstexp,\dstexp$, possibly indexed and/or decorated, as identifiers (for reasoning on the meta-level 
                                                                                               like with `syntactical variables' \cite{shoe:1967})
                                                                                  % \metavariables\ 
                                                                                  for star expressions.
  We write $\ssynteq$ for syntactic equality between star expressions denoted by such identifiers,
                                                                                      % \metavariables,
    and values of star expression functions, in a given context,
  but we permit $\sformeq$ in formal equations between star expressions.
  We denote by $\StExpEqover{\actions}$ the set of formal equations $\astexp \formeq \bstexp$ 
  between two star expressions $\astexp,\bstexp\in\StExpover{\actions}$. 
  
  We define sum expressions $\sum_{i=1}^{n} \astexpi{i}$ inductively
    as $\stexpzero$ if $n=0$,
    as $\astexpi{1}$ if $n=1$,
    and as $\stexpsum{(\sum_{i=1}^{n-1} \astexpi{i})}{\astexpi{n}}$ if $n>0$,
    for $n\in\nat = \setexp{0,1,2,\ldots}$.
  The \emph{(syntactic) star height} $\sth{\astexp}$ of a star expression $\astexp\in\StExpover{\actions}$
  is the maximal nesting depth of stars in $\astexp$, defined inductively 
  by: $\sth{\stexpzero} \defdby \sth{\stexpone}  \defdby \sth{\aact} \defdby 0$, 
       $\sth{\stexpsum{\astexpi{1}}{\astexpi{2}}} \defdby \sth{\stexpprod{\astexpi{1}}{\astexpi{2}}}
                                                  \defdby \max\setexp{\sth{\astexpi{1}}, \sth{\astexpi{2}}}$, 
       and $\sth{\stexpit{\astexp}} \defdby 1 + \sth{\astexp}$.
\end{defi}

\smallskip

\begin{defi}[\protect\onecharts, and charts]
  A \emph{\onechart}
  is a 6\nb-tuple $\tuple{\verts,\actions,\sone,\start,\transs,\sterminates}$ 
    where $\verts$ is a \ul{finite} set of \emph{vertices},
    $\actions$ is a set of \emph{(proper)} $\emph{action labels}$,
    $\sone\notin\actions$ is the specified \emph{empty step label},
    $\start\in\verts$ is the \emph{start vertex} (hence $\verts \neq \emptyset$),
    $\transs \subseteq \verts\times\oneactions\times\verts$ is the \emph{labeled transition relation},
    where $\oneactions \defdby \actions \cup \setexp{\sone}$ is the set of action labels including $\sone$, 
    and $\sterminates \subseteq \verts$ is a set of \emph{vertices with immediate termination}. %,
    % for short, the \emph{terminating states}.
  In such a \onechart, 
  we call a transition in $\transs\cap(\verts\times\actions\times\verts)$ (labeled by a \emph{proper action} in $\actions$) 
          a \emph{proper transition},
  and a transition in $\transs\cap(\verts\times\setexp{\sone}\times\verts)$ (labeled by the \emph{empty-step symbol}~$\sone$)
          a \emph{\onetransition}.  
  Reserving non-underlined action labels like $\aact,\bact,\ldots$ for proper actions,
  we use underlined action label symbols like $\aoneact$ for actions labels in the set $\oneactions$;
    % that includes~the~label~$\sone$. 
  in doing so we highlight also in \firebrick{firebrick} transition labels that may~involve~$\sone$. 
  
  We say that a \onechart\ is \emph{weakly guarded} if it does not contain cycles of \onetransitions.
  
  By a \emph{chart} we mean a \onechart\ $\aonechart$ 
    that is \emph{\onetransition\ free} in the sense that all of its transitions are proper. 
  We will use the symbols $\aonechart$ and $\achart$
    (also with subscripts) as identifiers 
                              for \onecharts, and charts, respectively.
  We use the notations $\vertsof{\aonechart}$, and $\vertsof{\achart}$
    for quick reference to the set of vertices of a \onechart~$\aonechart$, and of a chart~$\achart$.
\end{defi}

\smallskip

Below we define the process semantics of star (regular) expressions as (\sonefree) charts, 
  and hence as finite, rooted labeled transition systems,
  which will be compared with ($\sone$-)bi\-si\-mi\-la\-rity. 
The charts that will be obtained in this way correspond to non-deterministic finite-state automata that are defined
  by iterating partial~derivatives \cite{anti:1996} (1996) of Antimirov (who did not aim at a process semantics). 
Indeed, Antimirov's result that every regular expression only has finitely many iterated partial derivatives (Corollary~3.5 in \cite{anti:1996})
  guarantees finiteness of chart interpretations as defined below.
We will use the notation $\chartof{\astexp}$ with as meaning `the chart induced by (the process interpretation of) the star expression~$\astexp$'.

\begin{defi}%[chart interpretation of star expressions]
    \label{def:StExpTSS}\label{def:chartof}%\nf
  The \emph{chart interpretation of} a star expression~$\astexp\in\StExpover{\actions}$
  is the $\sone$\nb-tran\-si\-tion free chart
    $\chartof{\astexp} = \tuple{\vertsof{\astexp},\actions,\sone,\astexp,\transs\cap(\vertsof{\astexp}{\times}{\actions}{\times}{\vertsof{\astexp}}),\termexts\cap\vertsof{\astexp}}$,
  where $\vertsof{\astexp}$ consists of all star expressions that are reachable from $\astexp$
  via the labeled transition relation $ \transs \subseteq \StExpover{\actions}\times\actions\times\StExpover{\actions}$
  that is defined, together with the \immediatetermination\ relation $\sterminates \subseteq \StExpover{\actions}$,
  via derivability in the transition system specification (TSS)~$\StExpTSSover{\actions}$,
  for $\aact\in\actions$, $\astexp,\astexpi{1},\astexpi{2},\astexpacc,\astexpacci{1},\astexpacci{2}\in\StExpover{\actions}$:
  \begin{center}
    $
    \begin{aligned}
      &
      \AxiomC{\phantom{$\terminates{\stexpone}$}}
      \UnaryInfC{$\terminates{\stexpone}$}
      \DisplayProof
      & \hspace*{0.5ex} &
      \AxiomC{$ \terminates{\astexpi{1}} $}
      \UnaryInfC{$ \terminates{(\stexpsum{\astexpi{1}}{\astexpi{2}})} $}
      & \hspace*{4ex} &
      \AxiomC{$ \terminates{\astexpi{i}} $}
      \UnaryInfC{$ \terminates{(\stexpsum{\astexpi{1}}{\astexpi{2}})} $}
      \DisplayProof
      & \hspace*{4ex} &
      \AxiomC{$\terminates{\astexpi{1}}$}
      \AxiomC{$\terminates{\astexpi{2}}$}
      \BinaryInfC{$\terminates{(\stexpprod{\astexpi{1}}{\astexpi{2}})}$}
      \DisplayProof
      & \hspace*{4ex} &
      \AxiomC{$\phantom{\terminates{\stexpit{\astexp}}}$}
      \UnaryInfC{$\terminates{(\stexpit{\astexp})}$}
      \DisplayProof
    \end{aligned} 
    $
    \\[1ex]
    $
    \begin{aligned}
      &  
      \AxiomC{$\phantom{a \:\lt{a}\: \stexpone \rule{0pt}{12.4pt}}$}
      \UnaryInfC{$a \:\lt{a}\: \stexpone$}
      \DisplayProof
      & & \:
      \AxiomC{$ \astexpi{i} \:\lt{a}\: \astexpacci{i} $}
      \UnaryInfC{$ \stexpsum{\astexpi{1}}{\astexpi{2}} \:\lt{a}\: \astexpacci{i} $}
      \DisplayProof 
      & & \:
      \AxiomC{$ \astexpi{1} \:\lt{a}\: \astexpacci{1} $}
      \UnaryInfC{$ \stexpprod{\astexpi{1}}{\astexpi{2}} \:\lt{a}\: \stexpprod{\astexpacci{1}}{\astexpi{2}} $}
      \DisplayProof
      & &
      \AxiomC{$\terminates{\astexpi{1}}$}
      \AxiomC{$ \astexpi{2} \:\lt{a}\: \astexpacci{2} $}
      \BinaryInfC{$ \stexpprod{\astexpi{1}}{\astexpi{2}} \:\lt{a}\: \astexpacci{2} $}
      \DisplayProof
      & & \:
      \AxiomC{$\astexp \:\lt{a}\: \astexpacc \rule{0pt}{11.5pt}$}
      \UnaryInfC{$\stexpit{\astexp} \:\lt{a}\: \stexpprod{\astexpacc}{\stexpit{\astexp}}$}
      \DisplayProof
    \end{aligned}
    $
  \end{center}
  If $\astexp \lt{\aact} \astexpacc$ is derivable in $\StExpTSSover{\actions}$, for $\astexp,\astexpacc\in\StExpover{\actions}$,
  $\aact\in\actions$, then we say that $\astexpacc$ is a \emph{derivative} of $\astexp$.
  If $\terminates{\astexp}$ is derivable in $\StExpTSSover{\actions}$, % for $\astexp\in\StExpover{\actions}$,
  then we say that $\astexp$ \emph{permits immediate termination}.
\end{defi}

In Section~\ref{LEE} we define a refinement of this interpretation from \cite{grab:2021:TERMGRAPH2020-postproceedings} into a \onechart\ interpretation.
In both versions, ($\sone$-)charts obtained will be compared with respect to \onebisimilarity\
that relates the behavior of `induced transitions' of \onecharts.
By an \emph{induced \transitionact{\aact}} $\avert \ilt{\aact} \bvert$, for a proper action $\aact\in\actions$, in a \onechart~$\aonechart$ we mean 
  a path $\avert \lt{\sone} \cdots \lt{\sone} \cdot \lt{\aact} \bvert$ in $\aonechart$
  that consists of a finite number of \onetransitions\ that ends with a proper \transitionact{\aact}.
By \emph{induced termination} $\oneterminates{\avert}$, for $\avert\in\verts$ we mean that there is a path
  $\avert \lt{\sone} \cdots \lt{\sone} \averttilde$ with $\terminates{\averttilde}$~in~$\aonechart$.      

\begin{defi}[($\protect\sone$-)bisimulation]\label{def:onebisim}
  Let $\aonecharti{i} = \tuple{\vertsi{i},\actions,\sone,\starti{i},\sredi{i},\termextsi{i}}$
      be \onecharts, for $i\in\setexp{1,2}$. 
  
  By a \emph{\onebisimulation\ between $\aonecharti{1}$ and $\aonecharti{2}$}
  we mean a binary relation $\abisim \subseteq \vertsi{1}\times\vertsi{2}$ 
  such that $\pair{\starti{1}}{\starti{2}}\in\abisim$ holds (that is, $\abisim$ relates the start vertices of $\acharti{1}$ and $\acharti{2}$),
  and for every $\pair{\averti{1}}{\averti{2}}\in\abisim$ the following three conditions hold:
  \begin{description}%[labelindent=5.5em,leftmargin=*,itemsep=0.25ex,labelsep=1em]
    \item[(forth)] \quad
      $ \forall \avertacci{1}\in\vertsi{1}
          \forall \aact\in\actions
              \bigl(\,
                \averti{1} \ilti{\aact}{1} \avertacci{1}
                  \;\;\Longrightarrow\;\;
                    \exists \avertacci{2}\in\vertsi{2}
                      \bigl(\, \averti{2} \ilti{\aact}{2} \avertacci{2} 
                                 \logand
                               \pair{\avertacci{1}}{\avertacci{2}}\in\abisim \,)
            \,\bigr) \punc{,} $
      
    \item[(back)] \quad
      $ \forall \avertacci{2}\in\vertsi{2}
          \forall \aact\in\actions
            \bigr(\,
              \exists \avertacci{1}\in\vertsi{1}
                \bigl(\, \averti{1} \ilti{\aact}{1} \avertacci{1} 
                           \logand
                         \pair{\avertacci{1}}{\avertacci{2}}\in\abisim \,)
                  \;\;\Longleftarrow\;\;
                \averti{2} \ilti{\aact}{2} \avertacci{2}
              \,\bigr) \punc{,} $
      
    \item[(termination)] \quad
      $ \oneterminatesi{1}{\averti{1}}
          \;\;\Longleftrightarrow\;\;
            \oneterminatesi{2}{\averti{2}} \punc{.}$
  \end{description}
  We write $\aonecharti{1} \onebisimvia{\abisim} \aonecharti{2}$ 
    if $\abisim$ is a \onebisimulation\ between $\aonecharti{1}$ and $\aonecharti{2}$.
  We denote by $\aonecharti{1} \onebisim \aonecharti{2}$ 
    and say that $\aonecharti{1}$ and $\aonecharti{2}$ are \emph{\onebisimilar},
      if there is a \onebisimulation\ between  $\aonecharti{1}$ and $\aonecharti{2}$. 
      
  By a \emph{functional} \onebisimulation\ \emph{from $\aonecharti{1}$ to $\aonecharti{2}$} 
    we mean a \onebisimulation\ $\abisim$ between $\aonecharti{1}$ and $\aonecharti{2}$ 
      that is defined by a function $\sphifun \funin \vertsi{1} \to \vertsi{2}$
      as its graph, that is, by $\abisim = \graphof{\sphifun} = \descsetexp{\pair{\avert}{\phifun{\avert}}}{\avert\in\verts}$;
  in this case we write $\aonecharti{1} \funonebisimvia{\sphifun} \aonecharti{2}$. 
  We write $\aonecharti{1} \funonebisim \aonecharti{2}$
    if there is a functional \onebisimulation\ from $\aonecharti{1}$ to $\aonecharti{2}$.
        
  We note that for \onetransitionfree\ \onecharts\
    the bisimulation conditions specialize to their\vspace*{-1.5pt} usual form:
    the induced transitions $\silt{\cdot}$ in %the conditions 
                                              (forth) and (back) specialize\vspace*{-1.5pt} 
      to proper transitions $\slt{\cdot}$, and induced termination $\soneterminates$ in %the condition 
                                                                                        (termination) specializes to immediate termination~$\sterminates$.
  Let $\acharti{1}$ and $\acharti{2}$ be charts (\onetransitionfree\ \onecharts).
    We write $\acharti{1} \bisimvia{\abisim} \acharti{2}$, and say that $\abisim$ is a \emph{bisimulation between} $\acharti{1}$ and $\acharti{2}$ 
      if $\abisim$ is a \onebisimulation\ between $\acharti{1}$ and $\acharti{2}$. 
  We write $\acharti{1} \bisim \acharti{2}$,
    and say that $\acharti{1}$ and $\acharti{2}$ are \emph{bisimilar}
      if there is a bisimulation\ between  $\acharti{1}$ and $\acharti{2}$. 
  We write $\acharti{1} \funbisim \acharti{2}$
      if there is \onebisimulation\ from chart $\acharti{1}$~to~chart~$\acharti{2}$). 
        
  Let $\aonechart$ be a \onechart, and $\abisim \subseteq\vertsof{\aonechart}\times\vertsof{\aonechart}$.
    We say that $\abisim$ is a \onebisimulation\ \emph{on}~$\aonechart$ 
    if $\abisim$ is a \onebisimulation\ between $\aonechart$ and $\aonechart$.
  Let $\achart$ be a chart, and $\abisim \subseteq\vertsof{\achart}\times\vertsof{\achart}$.
    We say that $\abisim$ is a bisimulation \emph{on}~$\achart$
     if $\abisim$ is a bisimulation between $\achart$ and $\achart$.
\end{defi}

We now define `process semantics equality' of two star expressions 
  as bisimilarity of their chart interpretations.
We do not %explicitly 
          introduce the process semantics of star expressions as `star behaviors'
  (bisimilarity equivalence classes of their chart interpretations) as Milner in \cite{miln:1984},
    but only the relation that two star expressions denote the same~star~behavior.
%We formally only define the binary relation of `having the same (star) behavior' on star expressions.}

\begin{defi}[process semantics equality]
  We define \emph{process semantics equality} 
    as the binary relation $\sprocsemeq \subseteq \StExpover{\actions}\times\StExpover{\actions}$ 
      by stipulating it, for all $\astexp,\bstexp\in\StExpover{\actions}$, 
        as bisimilarity of the (\onefree) chart interpretations of $\astexp$ and $\bstexp\,$ :
  \begin{equation*}
    \astexp
      \procsemeq
    \bstexp
        \;\; \funin \: \Longleftrightarrow \;\;
           \chartof{\astexp}
             \bisim  
           \chartof{\bstexp} \punc{.}  
  \end{equation*}
\end{defi}

\begin{defi}[proof system $\protect\eqlogic$, $\protect\StExpEqover{\cdot}$-based/\protect\eqlogicbased\ proof systems]\label{def:eqlogic:eqlogicbased}
  Let $\actions$ be a set.
  
  By an \emph{\equationbasedover{\actions} proof system} we will mean a Hilbert-style proof system
    whose \emph{formulas} are the equations in $\StExpEqover{\actions}$ between star expressions over $\actions$. 
  For an \equationbasedover{\actions} proof system $\asys$ 
    and a set $\aseteqs \subseteq \StExpEqover{\actions}$
      we denote by $\thplus{\asys}{\aseteqs}$ 
        the \equationbasedover{\actions} proof system whose \emph{rules} are those of $\asys$,
          and whose \emph{axioms} are those of $\asys$ plus the equations in $\aseteqs$. 
  
  The \emph{basic proof system~$\eqlogicover{\actions}$ of equational logic}
    for star expressions over $\actions$ is an \equationbasedover{\actions} proof system 
  that has the following \emph{rules}:  
  % has as \emph{formulas} the formal equations between star expressions in $\StExpEqover{\actions}$,
  % and the following \emph{rules}$\,$:%\pagebreak[4]%\enlargethispage{5ex}
  %
  %--eqlogic--------------------------------------------------------------
  % 
  % \begin{figure}[h]
  %   %
  \begin{gather*}\renewcommand{\fCenter}{\formeq}
  \begin{alignedat}{3}
    &
  \begin{aligned}[c]
    \AxiomC{$\phantom{\astexp   \fCenter   \astexp\rule{0pt}{7.5pt}}$}
    \RightLabel{\REFL}
    \UnaryInfC{$\astexp   \fCenter   \astexp\rule{0pt}{7.5pt}$} 
    \DisplayProof
  \end{aligned}  
    & \qquad & 
  \begin{aligned}[c]
    \Axiom$\astexp   \fCenter   \bstexp$
    \RightLabel{\SYMM}
    \UnaryInf$\bstexp   \fCenter   \astexp$
    \DisplayProof
  \end{aligned}  
    & \qquad & 
  \begin{aligned}[c]
    \AxiomC{$\astexp \formeq \bstexp$}
    \AxiomC{$\bstexp \formeq \cstexp$}
    \RightLabel{\TRANS}
    \BinaryInfC{$\astexp \formeq \cstexp\rule{0pt}{7.5pt} $}
    \DisplayProof
  \end{aligned}  
    & \qquad
  \begin{aligned}
    \Axiom$\astexp   \fCenter   \bstexp$
    \RightLabel{\CXT}
    \UnaryInf$\acxtap{\astexp}   \fCenter   \acxtap{\bstexp}$
    \DisplayProof
  \end{aligned}
  \end{alignedat}
  \end{gather*}% 
  that is, the rules~$\REFL$ (for reflexivity), and the rules~$\SYMM$ (for symmetry), $\TRANS$ (for transitivity), and $\CXT$ (for filling a context),
  where $\acxtap{\iets}$ is a 1-hole star expression context. 

  By an \emph{\eqlogicbasedover{\actions} system} %over $\StExpover{\actions}$} (and \emph{for star expressions over $\actions$}) 
  we mean an \equationbasedover{\actions} proof system  %Hilbert-style proof system whose formulas are the formal equations $\astexp \formeq \bstexp \;\in \StExpEqover{\actions}$,
  whose rules include the rules of the basic system $\eqlogicover{\actions}$ of equational logic
  (additionally, it may specify an arbitrary set of axioms). 
  We will use the letter $\asys$ as identifier
                         for \eqlogicbased\ proof systems. 
\end{defi}
  
\begin{defi}  
  Let $\asys$ be an \equationbasedover{\actions} proof system. Let $\astexp,\bstexp\in\StExpover{\actions}$.
  We say that $\astexp \formeq \bstexp$ is \emph{derivable in $\asys$}, 
    which we denote here by $\astexpi{1} \eqin{\asys} \astexpi{2}$ (instead of the more commonly used notation $\derivablein{\asys} \astexpi{1} \formeq \astexpi{2}$),
      if there is a derivation without assumptions in $\asys$ that has conclusion $\astexp \formeq \bstexp$.    
  If $\astexp \formeq \bstexp$ is derivable in $\asys$, we also say that $\astexp \formeq \bstexp$ is a \emph{theorem} of $\asys$.
\end{defi}

\begin{defi}[sub-system, theorem equivalence/subsumption of \protect\equationbasedover{\cdot} proof systems]\label{def:subsystem:isthmsubsumedby:thmequiv}
  Let $\asysi{1}$ and $\asysi{2}$ be \equationbasedover{\actions} proof systems. %\eqlogicbased\ proof systems over $\StExpover{\actions}$. 
  
  We say that $\asysi{1}$ is a \emph{sub-system} of $\asysi{2}$, denoted by $\asysi{1} \subsystem \asysi{2}$,
    if every axiom of $\asysi{1}$ is an axiom of $\asysi{2}$, and every rule of $\asysi{1}$ is also a rule of $\asysi{2}$. 
  We say that \emph{$\asysi{1}$ is \theoremsubsumed\ by $\asysi{2}$}, denoted by $\asysi{1} \isthmsubsumedby \asysi{2}$,
    if every theorem of $\asysi{1}$ is also a theorem of $\asysi{2}$,
       that is, if $\astexp \eqin{\asysi{1}} \bstexp$ implies $\astexp \eqin{\asysi{2}} \bstexp$, for all $\astexp,\bstexp\in\StExpover{\actions}$.
  We say that \emph{$\asysi{1}$ and $\asysi{2}$ are \theoremequivalent}, denoted by $\asysi{1} \thmequiv \asysi{2}$,
    if $\asysi{1}$ and $\asysi{2}$ have the same theorems (that is, if $\asysi{1} \isthmsubsumedby \asysi{2}$, and~$\asysi{2} \isthmsubsumedby \asysi{1}$).
\end{defi}

For the definitions of the concept of `derivable', `correct', and `admissible' rule in Definition~\ref{def:derivable:admissible:rules} below
  for an \equationbasedover{\actions} proof system we introduce an informal concept of derivation rule that will suffice for our purpose.
For abstract formulations of rules, and for the concepts of derivability, correctness, and admissibility of rules we refer: % to 
\begin{enumerate}[label={(\roman{*})}]
  \item
   to \cite{grab:2005:prfschr},
    where these concepts have been gathered and formally treated for Hilbert-style proof systems
      (as well as for natural-deduction style proof systems),
  \item 
    and to \cite{grab:2004:CollegiumLogicum},
      where for `abstract Hilbert systems', systematic connections between these concepts of rules have been studied,
      also with respect to how rules can be eliminated from derivations. 
\end{enumerate}    

Let $\asys$ be an \equationbasedover{\actions} proof system for star expressions over $\actions$. 
  Let $n\in\nat$. By a(n) \emph{(\premisen{n}) rule $\arule$ for $\asys$} we mean an inference scheme
    all of whose instances are of the form:
  \begin{equation*}
    \AxiomC{$\astexpi{1} \formeq \bstexpi{1}$}
    \AxiomC{$\astexpi{n} \formeq \bstexpi{n}$}
    \insertBetweenHyps{$\;\;\ldots\;\;$}
    \RightLabel{$\arule$}
    \BinaryInfC{$\astexp \formeq \bstexp$}
    \DisplayProof
  \end{equation*}
with star expressions $\astexp,\bstexp,\astexpi{1}, \ldots, \astexpi{n}, \bstexpi{1}, \ldots, \bstexpi{n}\in\StExpover{\actions}$. 
For such a rule $\arule$ for $\asys$ we denote by $\thplus{\asys}{\arule}$ the \equationbasedover{\actions} proof system
  that extends $\asys$ by adding $\arule$ as an additional rule.

\begin{defi}[derivable, correct, and admissible rules]\label{def:derivable:admissible:rules}
  Let $\asys$ be an \equationbasedover{\actions} proof system. % over $\StExpover{\actions}$.
    Let $\arule$ be a rule for $\asys$.
  
  We say that $R$ is \emph{derivable in $\asys$}
    if every instance $\ainst$ of $R$ can be mimicked by a derivation $\aDerivi{\ainst}$ in $\asys$
      by which we mean that the set of assumptions of $\aDerivi{\ainst}$ is contained in the set of premises of $\ainst$,
        and the conclusion of $\aDerivi{\ainst}$ is the conclusion of $\ainst$. 
    
  We say that $\arule$ is \emph{correct for $\asys$}
    if instances of $\arule$ can be eliminated from derivations in $\thplus{\asys}{\arule}$
      in the following limited sense: 
    for every derivation $\aDeriv$ in $\thplus{\asys}{\arule}$ without assumptions that terminates with an instance of $\arule$
      but all of whose immediate subderivations are derivations in $\asys$  
        there is a derivation $\aDerivacc$ in $\asys$ without assumptions, and with the same conclusion as $\aDeriv$.   
    
  We say that $\arule$ is \emph{admissible in $\asys$}
    if $\thplus{\asys}{\arule} \thmequiv \asys$ holds,
      that is, the addition of $\arule$ to $\asys$ does not extend the derivable formulas (the theorems) of $\asys$.
\end{defi}

The definition of `$\arule$ is admissible in $\asys$'
  is easily understood to be equivalent with the statement that instances of $\arule$ can be eliminated from derivations in $\thplus{\asys}{\arule}$ without assumptions
      in the \underline{\smash{un}}limited sense:
        for every derivation $\aDeriv$ in $\thplus{\asys}{\arule}$ without assumptions
          there is a derivation $\aDerivacc$ in $\asys$ without assumptions, and with the same conclusion as $\aDeriv$.  
Therefore rule admissibility implies rule correctness. This justifies the implication ``$\Rightarrow$'' 
  in item~\ref{it:1:lem:derivable:admissible:rules} of the lemma below
    that gathers basic relationships between the three properties of rules with respect to a proof system as defined above.

\begin{lem}\label{lem:derivable:admissible:rules}
  Let $\arule$ be a rule for an \eqlogicbased\ proof system~$\asys$ for star expressions over $\actions$.
  Then the following statements link %, for a rule $\arule$ for an \eqlogicbased\ proof system $\asys$,
    % derivability of $\arule$ in $\asys$, correctness of $\arule$ for $\asys$, and admissibility of $\arule$ in $\asys\,$:
    derivability, correctness, and admissibility of $\arule$ in/for $\asys\,$:
  \begin{enumerate}[label={(\roman{*})}]
    \item{}\label{it:1:lem:derivable:admissible:rules}
      $\arule$ is admissible in $\asys$
        if and only if  
      $\arule$ is correct for $\asys$.
    \item{}\label{it:2:lem:derivable:admissible:rules}
      If $\arule$ is derivable in $\asys$, then $\arule$ is also correct for $\asys$,
        and due to \ref{it:1:lem:derivable:admissible:rules} also admissible~in~$\asys$.
      However, rule admissibility and correctness does not imply derivability in general.  
  \end{enumerate}
\end{lem}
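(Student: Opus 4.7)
\smallskip

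The plan is to prove item~\ref{it:1:lem:derivable:admissible:rules} by separate treatment of the two implications, and to prove item~\ref{it:2:lem:derivable:admissible:rules} by a substitution argument plus pointing to a counterexample for the converse.

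For item~\ref{it:1:lem:derivable:admissible:rules}, the direction ``$\Rightarrow$'' (admissibility implies correctness) is immediate from the preceding remark: admissibility yields an unlimited elimination of $\arule$-instances from assumption-free derivations in $\thplus{\asys}{\arule}$, and correctness is exactly the restriction of this elimination to derivations in which only the last inference is an instance of $\arule$. The direction ``$\Leftarrow$'' (correctness implies admissibility) I would prove by an induction on the number $n$ of occurrences of $\arule$ in an assumption-free derivation $\aDeriv$ in $\thplus{\asys}{\arule}$. If $n=0$, then $\aDeriv$ already lies in $\asys$. If $n>0$, pick an uppermost $\arule$-instance $\ainst$ in $\aDeriv$; by ``uppermost'' the immediate subderivations of $\ainst$ are entirely in $\asys$, so by correctness the sub-derivation of $\aDeriv$ ending in $\ainst$ can be replaced by an $\asys$-derivation with the same conclusion, strictly decreasing the number of $\arule$-instances. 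Applying this step repeatedly produces an $\asys$-derivation with the same conclusion as $\aDeriv$, which yields admissibility.

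For the first part of item~\ref{it:2:lem:derivable:admissible:rules} (derivability implies correctness), I would argue directly: suppose $\arule$ is derivable in $\asys$, and let $\aDeriv$ be an assumption-free derivation in $\thplus{\asys}{\arule}$ that ends with an instance $\ainst$ of $\arule$ having premises $\astexpi{k} \formeq \bstexpi{k}$ ($1 \le k \le n$) and such that its immediate subderivations $\aDerivi{k}$ of $\astexpi{k} \formeq \bstexpi{k}$ all lie in $\asys$. By derivability there exists an $\asys$-derivation $\aDerivi{\ainst}$ whose conclusion is the conclusion of $\ainst$ and whose set of assumptions is contained in $\setexp{\astexpi{k} \formeq \bstexpi{k} \descsetexpmid 1\le k\le n}$. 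Substituting each $\aDerivi{k}$ in place of the corresponding assumption of $\aDerivi{\ainst}$ yields an assumption-free $\asys$-derivation with the same conclusion as $\aDeriv$, as required. Admissibility then follows from item~\ref{it:1:lem:derivable:admissible:rules}.

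The second part of item~\ref{it:2:lem:derivable:admissible:rules} (admissibility does not imply derivability in general) requires a counterexample. The main obstacle is to pin down a rule for a natural \eqlogicbased{} system in our setting for which admissibility is easy to see while derivability demonstrably fails; I would either give a pointer to standard such examples (e.g.\ in \cite{grab:2005:prfschr,grab:2004:CollegiumLogicum}), or exhibit a simple ad hoc example, such as the zero-premise rule $\astexp \formeq \astexp$ in a proof system $\asys$ that consists only of the axiom scheme and the rules $\SYMM$ and $\TRANS$ (without $\REFL$ and $\CXT$): there the rule is admissible (no assumption-free theorem of the extended system uses it in an essential way beyond trivial conclusions that are already accessible), while no bottom-up reconstruction yields a derivation of, say, $\stexpzero \formeq \stexpzero$ in $\asys$. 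I expect the careful verification of this non-derivability to be the only nontrivial computational step; the rest of the proof consists of the two conceptually transparent substitution/induction arguments described above.
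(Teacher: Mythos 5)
Your arguments for item~\ref{it:1:lem:derivable:admissible:rules} and for the positive part of item~\ref{it:2:lem:derivable:admissible:rules} are correct and essentially the same as the paper's: the ``$\Leftarrow$'' direction of \ref{it:1:lem:derivable:admissible:rules} is obtained by repeatedly eliminating uppermost instances of $\arule$ (the paper phrases this as elimination ``in top-down direction''), and derivability implies correctness by grafting the immediate subderivations $\aDerivi{1},\ldots,\aDerivi{n}$ onto the assumption classes of the mimicking derivation $\aDerivi{\ainst}$, exactly as in the paper's displayed transformation step.

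The genuine gap is in your counterexample for the final claim of item~\ref{it:2:lem:derivable:admissible:rules}. The system you propose --- ``the axiom scheme and the rules $\SYMM$ and $\TRANS$ (without $\REFL$ and $\CXT$)'' --- is not an \eqlogicbased\ proof system in the sense of the lemma's hypothesis, since by definition such a system must contain all four rules of $\eqlogicover{\actions}$, including $\REFL$ and $\CXT$; so it cannot witness the claim as stated. The example also fails on its own terms. If $\asys$ has no (usable) axioms, then the zero-premise rule $\astexp \formeq \astexp$ strictly enlarges the set of theorems --- you yourself observe that $\stexpzero \formeq \stexpzero$ is not derivable in $\asys$, yet it is a theorem of $\thplus{\asys}{\arule}$ --- which refutes \emph{admissibility} of the rule rather than its derivability. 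If instead $\asys$ contains, say, the axiom $\stexpprod{\astexp}{\stexpone} \formeq \astexp$, then every $\astexp \formeq \astexp$ becomes derivable from it via $\SYMM$ and $\TRANS$, and the rule is derivable after all. The paper instead takes $\asys = \eqlogicover{\actions}$ itself, in which only reflexivity instances are theorems, and uses the one-premise rule that repeats its premise: adding it changes no theorems, so it is admissible and correct, but it is not derivable because instances $\astexp \formeq \bstexp$ with $\astexp \notsynteq \bstexp$ cannot be mimicked by derivations in $\asys$. Your fallback of pointing to the literature is reasonable but does not by itself close this gap.
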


\begin{proof}
  Concerning statement \ref{it:1:lem:derivable:admissible:rules} of the lemma
    we have already argued for the direction ``$\Rightarrow$'' just above.
  The direction ``$\Leftarrow$'' can be established by showing 
    that, if $\arule$ is correct for $\asys$,
      then every given derivation $\aDeriv$ in $\thplus{\asys}{\arule}$ 
      can be transformed into a derivation $\aDerivacc$ in $\asys$ with the same conclusion
        by eliminating instances of $\arule$ in top-down direction,
          using derivation replacements as guaranteed by the defining statement of `$\arule$ is correct for $\asys$'.
      
  For showing the main part of statement \ref{it:2:lem:derivable:admissible:rules}, we consider an \premisen{n} rule $\arule$ for $\asys$
    that is a derivable rule of $\asys$.
  In order to show that $\arule$ is correct for $\asys$, we have to show that every derivation $\aDeriv$ in $\thplus{\asys}{\arule}$
    that terminates with an instance $\ainst$ of $\arule$ but has immediate subderivations in~$\asys$
      can be transformed into a derivation $\aDerivacc$ in $\asys$ with the same conclusion. 
  Let $\aDeriv$ be such a derivation in $\thplus{\asys}{\arule}$ with instance $\ainst$ of $\arule$ at the bottom,
    as illustrated on the right below. 
  Since $\arule$ is derivable in $\asys$ there is a derivation $\aDerivi{\ainst}$ in $\asys$ that derives
    the conclusion of $\ainst$ from its $n$ premises. 
  Then $\aDeriv$ can be transformed according to the~following~step:       
  \begin{equation}\label{eq:prf:it:1:lem:derivable:admissible:rules}
    \aDeriv\left\{\quad
    \begin{gathered}
      \AxiomC{$\aDerivi{1}$}
      \noLine
      \UnaryInfC{$\astexpi{1} \formeq \bstexpi{1}$}
      \AxiomC{$\aDerivi{n}$}
      \noLine
      \UnaryInfC{$\astexpi{n} \formeq \bstexpi{n}$}
      \insertBetweenHyps{$\;\;\ldots\;\;$}
      \RightLabel{$\arule$}
      \LeftLabel{$\ainst$}
      \BinaryInfC{$\astexp \formeq \bstexp$}
      \DisplayProof
    \end{gathered}
    \quad\right.
      \quad\Longmapsto\quad
    \left.\quad  
    \begin{gathered}
      \AxiomC{$\aDerivi{1}$}
      \noLine
      \UnaryInfC{$[\astexpi{1} \formeq \bstexpi{1}]$}
      \AxiomC{$\aDerivi{n}$}
      \noLine
      \UnaryInfC{$[\astexpi{n} \formeq \bstexpi{n}]$}
      \insertBetweenHyps{$\;\;\ldots\;\;$}
      \noLine
      \BinaryInfC{$\aDerivi{\ainst}$}
      \noLine
      \UnaryInfC{$\astexp \formeq \bstexp$}
      \DisplayProof
    \end{gathered}
    \quad\right\} \aDerivacc
  \end{equation}
  (where $[\astexpi{1} \formeq \bstexpi{1}]$, \ldots $[\astexpi{n} \formeq \bstexpi{n}]$
    denote the assumption classes of $\astexpi{1} \formeq \bstexpi{n}$, \ldots, $\astexpi{1} \formeq \bstexpi{n}$
      in leafs at the top of the prooftree $\aDerivi{\ainst}$).
  The result of this step is a derivation $\aDerivacc$ in $\asys$ with the same conclusion as $\aDeriv$.
  Since $\aDeriv$ was chosen arbitrary in this statement but with a bottommost instance of $\arule$ and immediate subderivations in $\asys$,
    we have shown the desired transformation statement, which guarantees that $\arule$ is correct for $\asys$.  
    
  If a rule $\arule$ is correct and admissible in an \eqlogicbased\ proof system $\asys$,
    then $\arule$ does not need to be derivable. 
  This is because correctness of $\arule$ in $\asys$
    cannot be used to mimic such instances of $\arule$ that do not have theorems of $\asys$ as conclusion by derivations in $\asys$.
  As a trivial example we take $\asys = \eqlogicover{\actions}$. In $\asys$ only reflexivity axioms are theorems.
    A \premisen{1} rule that leaves its premise unchanged is clearly admissible in $\asys$,
      but not derivable, because instances with formulas $\astexp \formeq \bstexp$ where $\astexp \notsynteq \bstexp$ cannot be mimicked
        by derivations in $\asys$. 
\end{proof}

Now we introduce Milner's proof system \milnersys, and two of its variants \milnersysacc\ and \milnersysaccbar.
 Afterwards we gather basic connections between these systems.

\begin{defi}[Milner's system \protect\milnersys, variants and subsystems]%[variant \milnersysacc, subsystems \milnersysmin\ and \ACI]
            \label{def:milnersys}
  Let $\actions$ be a set of actions.
  
  By the proof system $\milnersysminover{\actions}$
    we mean  
  the \eqlogicbasedover{\actions} proof system for star expressions over $\actions$
  with the following \emph{axiom schemes}:
  \begin{alignat*}{4}
    (\assocstexpsum) \quad & &
        \stexpsum{(\stexpsum{\astexp}{\bstexp})}{\cstexp}
          & \formeq
        \stexpsum{\astexp}{(\stexpsum{\bstexp}{\cstexp})}
    & \qquad
    (\leftidstexpprod) \quad & &
        \stexpprod{\stexpone}{\astexp}  
          & \formeq \astexp
    \displaybreak[0]\\ 
    (\neutralstexpsum) \quad & &
        \stexpsum{\astexp}{\stexpzero}
          & \formeq
        \astexp
    & \qquad
    (\rightidstexpprod) \quad & &   
        \stexpprod{\astexp}{\stexpone}
          & \formeq \astexp
    \displaybreak[0]\\ 
    (\commstexpsum) \quad & &
        \stexpsum{\astexp}{\bstexp} 
          & \formeq
        \stexpsum{\bstexp}{\astexp}
    & \qquad
    (\stexpzerostexpprod) \quad & &  
        \stexpprod{\stexpzero}{\astexp}
          & \formeq
        \stexpzero
    \displaybreak[0]\\ 
    (\idempotstexpsum) \quad & &
        \stexpsum{\astexp}{\astexp}
          & \formeq
        \astexp 
    & \qquad
    (\recdefstexpit) \quad & &
        \stexpit{\astexp}
          & \formeq \stexpsum{\stexpone}{\stexpprod{\astexp}{\stexpit{\astexp}}}
    \displaybreak[0]\\ 
    (\assocstexpprod) \quad & &
        \stexpprod{(\stexpprod{\astexp}{\bstexp})}{\cstexp}
          & \formeq
        \stexpprod{\astexp}{(\stexpprod{\bstexp}{\cstexp})}   
    & \qquad\quad
    (\termstexpit) \quad & & 
        \stexpit{\astexp}
          & \formeq \stexpit{(\stexpsum{\stexpone}{\astexp})}
    \displaybreak[0]\\ 
    (\rdistr) \quad & & 
        \stexpprod{(\stexpsum{\astexp}{\bstexp})}{\cstexp}
          & \formeq
        \stexpsum{\stexpprod{\astexp}{\cstexp}}{\stexpprod{\bstexp}{\cstexp}} 
  \end{alignat*}
  where $\astexp,\bstexp,\cstexp\in\StExpover{\actions}$, 
  and with the \emph{rules} of the system $\eqlogicover{\actions}$ of equational logic.
  
  The \emph{recursive specification principle for star iteration} $\RSPstar$,
  the \emph{unique solvability principle for star iteration} $\USPone$,
  and the \emph{general unique solvability principle} $\USP$ are the schematically defined rules with \sideconditions\ of the following forms:
  \begin{gather*}
    \begin{aligned}\renewcommand{\fCenter}{\formeq}
      \Axiom$ \astexp  \fCenter  \stexpsum{\stexpprod{\bstexp}{\astexp}}{\cstexp} $
      \RightLabel{\RSPstar\ $\,${\small (if $\notterminates{\bstexp}$)}}
      \UnaryInf$ \astexp   \fCenter   \stexpprod{\stexpit{\bstexp}}{\cstexp} $
      \DisplayProof
      & \quad & 
      \AxiomC{$ \astexpi{1}   \formeq   \stexpsum{\stexpprod{\bstexp}{\astexpi{1}}}
                                       {\cstexp} $}
      \AxiomC{$ \astexpi{2}   \formeq   \stexpsum{\stexpprod{\bstexp}{\astexpi{2}}}
                                       {\cstexp} $}
      \RightLabel{\USPone\ $\,${\small (if $\notterminates{\bstexp}$)}}
      \BinaryInfC{$ \astexpi{1}   \formeq   \astexpi{2} \rule{0pt}{7pt} $}                                 
      \DisplayProof
    \end{aligned}
    \displaybreak[0]\\[0.25ex]
    \begin{aligned}  
      \AxiomC{$ \Bigl\{\,
                  \astexpi{i,1}   \formeq    \bigl(\, \sum_{j=1}^{n_i} \bstexpi{i,j} \prod \astexpi{j,1} \,\bigr) + \cstexpi{i} \,\bigr)
                    \quad
                  \astexpi{i,2}   \formeq    \bigl(\, \sum_{j=1}^{n_i} \bstexpi{i,j} \prod \astexpi{j,2} \,\bigr) + \cstexpi{i} \,\bigr) 
                \,\Bigr\}_{i=1,\ldots,n} $}
      \RightLabel{\USP\ $\,$ \parbox{\widthof{for all $i,j$)}}
                                    {\small (if $\notterminates{\bstexpi{i,j}}$
                                     \\\phantom{(}%
                                     for all $i,j$)}}
      \UnaryInfC{$ \astexpi{1,1}   \formeq    \astexpi{1,2} \rule{0pt}{7.5pt} $}
      \DisplayProof
    \end{aligned}
  \end{gather*}
  
  \emph{Milner's proof system} $\milnersysover{\actions}$ is the extension of $\milnersysminover{\actions}$ %that arises
  by adding the rule $\RSPstar$. Its variant systems $\milnersysaccover{\actions}$, and $\milnersysaccbarover{\actions}$,
  arise from $\milnersysminover{\actions}$ by adding (instead of $\RSPstar$) the rule $\USPone$, and respectively, the rule $\USP$.
  $\ACIover{\actions}$ is the system with the axioms for \ul{a}ssociativity, \ul{c}ommutativity, and \ul{i}dempotency for $+$. 
  We will keep the action set $\actions$ implicit in the notation. 
\end{defi}

\begin{lem}\label{lem:milnersys:thmequiv:milnersysacc:isthmsubsumedby:milnersysaccbar}
  Milner's system $\milnersys$ and its variants $\milnersysacc$ and $\milnersysaccbar$ are related as follows:
  \begin{enumerate}[label={(\roman{*})},itemsep=0.25ex]
    \item{}\label{it:1:lem:milnersys:thmequiv:milnersysacc:isthmsubsumedby:milnersysaccbar}
      $\milnersysacc \isthmsubsumedby \milnersysaccbar$,
    \item{}\label{it:2:lem:milnersys:thmequiv:milnersysacc:isthmsubsumedby:milnersysaccbar}
      $\milnersys \thmequiv \milnersysacc$.
  \end{enumerate}
\end{lem}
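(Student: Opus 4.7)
For part~\ref{it:1:lem:milnersys:thmequiv:milnersysacc:isthmsubsumedby:milnersysaccbar}, the plan is to observe that $\USPone$ is literally the special case of $\USP$ obtained by taking $n=1$ and $n_1=1$: the premises $\astexpi{1,1} \formeq \bstexpi{1,1} \prod \astexpi{1,1} + \cstexpi{1}$ and $\astexpi{1,2} \formeq \bstexpi{1,1} \prod \astexpi{1,2} + \cstexpi{1}$ together with the side-condition $\notterminates{\bstexpi{1,1}}$ match the premises and side-condition of $\USPone$ after the obvious identification of meta-variables. Since $\milnersysacc$ and $\milnersysaccbar$ share the rules of $\eqlogic$ and the axiom schemes of $\milnersysmin$, every derivation in $\milnersysacc$ is already a derivation in $\milnersysaccbar$, which yields $\milnersysacc \isthmsubsumedby \milnersysaccbar$.

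For part~\ref{it:2:lem:milnersys:thmequiv:milnersysacc:isthmsubsumedby:milnersysaccbar}, the plan is to establish $\milnersys \isthmsubsumedby \milnersysacc$ and $\milnersysacc \isthmsubsumedby \milnersys$ separately by invoking Lemma~\ref{lem:derivable:admissible:rules}\ref{it:2:lem:derivable:admissible:rules}: it suffices to show that $\RSPstar$ is derivable in $\milnersysacc$ and that $\USPone$ is derivable in $\milnersys$. For the derivability of $\USPone$ in $\milnersys$, given an arbitrary instance with premises $\astexpi{1} \formeq \stexpsum{\stexpprod{\bstexp}{\astexpi{1}}}{\cstexp}$ and $\astexpi{2} \formeq \stexpsum{\stexpprod{\bstexp}{\astexpi{2}}}{\cstexp}$ under $\notterminates{\bstexp}$, I would apply $\RSPstar$ twice (once to each premise) to obtain $\astexpi{1} \formeq \stexpprod{\stexpit{\bstexp}}{\cstexp}$ and $\astexpi{2} \formeq \stexpprod{\stexpit{\bstexp}}{\cstexp}$, and then conclude $\astexpi{1} \formeq \astexpi{2}$ by $\SYMM$ and $\TRANS$.

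For the derivability of $\RSPstar$ in $\milnersysacc$, given an instance with premise $\astexp \formeq \stexpsum{\stexpprod{\bstexp}{\astexp}}{\cstexp}$ and side-condition $\notterminates{\bstexp}$, the key is to exhibit $\stexpprod{\stexpit{\bstexp}}{\cstexp}$ as a second \provablein{\milnersysmin} solution of the same fixed-point equation, so that $\USPone$ then yields the conclusion $\astexp \formeq \stexpprod{\stexpit{\bstexp}}{\cstexp}$. Concretely, I would derive in $\milnersysmin$ the equation
\begin{equation*}
  \stexpprod{\stexpit{\bstexp}}{\cstexp}
    \:\formeq\:
  \stexpsum{\stexpprod{\bstexp}{(\stexpprod{\stexpit{\bstexp}}{\cstexp})}}{\cstexp}
\end{equation*}
by starting from $(\recdefstexpit)$ in the form $\stexpit{\bstexp} \formeq \stexpsum{\stexpone}{\stexpprod{\bstexp}{\stexpit{\bstexp}}}$, postmultiplying by $\cstexp$ via $\CXT$, applying $(\rdistr)$, and then normalizing with $(\leftidstexpprod)$, $(\assocstexpprod)$, and $(\commstexpsum)$. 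Then $\USPone$ applied to the given premise and this derived equation yields the conclusion of $\RSPstar$. Neither step involves any real obstacle; the only bookkeeping point is to check that both directions genuinely need only the rules of $\eqlogic$ and the axioms of $\milnersysmin$ together with the relevant fixed-point rule, so that the derivability requirement of Lemma~\ref{lem:derivable:admissible:rules}\ref{it:2:lem:derivable:admissible:rules} is met.
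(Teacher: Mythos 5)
Your proposal is correct and follows essentially the same route as the paper's own proof: part~(i) by observing that $\USPone$-instances are $\USP$-instances, and part~(ii) by showing $\USPone$ derivable in $\milnersys$ via two applications of $\RSPstar$ plus $\SYMM$/$\TRANS$, and $\RSPstar$ derivable in $\milnersysacc$ by deriving $\stexpprod{\stexpit{\bstexp}}{\cstexp} \formeq \stexpsum{\stexpprod{\bstexp}{(\stexpprod{\stexpit{\bstexp}}{\cstexp})}}{\cstexp}$ in $\milnersysmin$ and then applying $\USPone$, with admissibility obtained from Lemma~\ref{lem:derivable:admissible:rules}. The concrete equational derivation you sketch (from $(\recdefstexpit)$ via $\CXT$, $(\rdistr)$, $(\leftidstexpprod)$, $(\assocstexpprod)$, $(\commstexpsum)$) matches the paper's chain of $\milnersysmin$-equalities step for step.
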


\begin{proof}
  Statement~\ref{it:1:lem:milnersys:thmequiv:milnersysacc:isthmsubsumedby:milnersysaccbar}
    of the lemma is due to the fact that instances the rule $\USPone$ are also instances of $\USP$,
    and therefore 
  $\milnersysacc 
     = 
   \thplus{\milnersysmin}{\USPone}
     \isthmsubsumedby
   \thplus{\milnersysmin}{\USP} = \milnersysaccbar$ follows. 
  
  \smallskip
  For establishing statement~\ref{it:2:lem:milnersys:thmequiv:milnersysacc:isthmsubsumedby:milnersysaccbar}
    we show that $\USPone$ is a derivable rule in $\milnersys$,
             and that $\RSPstar$ is a derivable rule in $\milnersysacc$.
  Then, by Lemma~\ref{lem:derivable:admissible:rules}, \ref{it:2:lem:derivable:admissible:rules},
    $\USPone$ is an admissible rule of $\milnersys$, 
      thus $\thplus{\milnersys}{\USPone} \thmequiv \milnersys$,
     and $\RSPstar$ is an admissible rule of $\milnersysacc$,
      hence $\thplus{\milnersysacc}{\USPone} \thmequiv \milnersysacc$.
  With this we can argue as follows: 
  \begin{equation*}
    \milnersys
      \thmequiv
    \thplus{\milnersys}{\USPone}
      =
    \thplus{(\thplus{\milnersysmin}{\RSPstar})}{\USPone}
      =
    \thplus{(\thplus{\milnersysmin}{\USPone})}{\RSPstar}
      =
    \thplus{\milnersysacc}{\RSPstar}
      \thmequiv
    \milnersysacc     
  \end{equation*} 
  Then we obtain $\milnersys \thmequiv \milnersysacc$ by transitivity of theorem equivalence $\sthmequiv$.
  
  For showing that $\RSPstar$ is derivable in $\milnersysacc$, 
    we consider an instance of $\RSPstar$ as in Definition~\ref{def:milnersys}, 
      for fixed star expressions $\astexp$, $\bstexp$ with $\notterminates{\bstexp}$, and $\cstexp$. 
  From its premise 
  $\astexp = \stexpsum{\stexpprod{\bstexp}{\astexp}}
                      {\cstexp}$
  we have to show that the conclusion  
  $\astexp = \stexpprod{\stexpit{\bstexp}}{\cstexp}$ of the $\RSPstar$ instance can be derived by inferences in $\milnersysacc = \thplus{\milnersysmin}{\USPone}$.
  By stepwise use of axioms of $\milnersysmin$ we obtain:
  \begin{align*}
    \stexpprod{\stexpit{\bstexp}}{\cstexp}
      & {} \milnersysmineq
    \stexpprod{(\stexpsum{\stexpone}
                         {\stexpprod{\bstexp}{\stexpit{\bstexp}}})}{\cstexp}
      \milnersysmineq
    \stexpsum{\stexpprod{\stexpone}{\cstexp}}
             {\stexpprod{(\stexpprod{\bstexp}{\stexpit{\bstexp}})}{\cstexp}}
    \\         
      & {} \milnersysmineq
    \stexpsum{\stexpprod{\stexpone}{\cstexp}}
             {\stexpprod{\bstexp}{(\stexpprod{\stexpit{\bstexp}}{\cstexp})}}
      \milnersysmineq
    \stexpsum{\cstexp}
             {\stexpprod{\bstexp}{(\stexpprod{\stexpit{\bstexp}}{\cstexp})}}
      \milnersysmineq
    \stexpsum{\stexpprod{\bstexp}{(\stexpprod{\stexpit{\bstexp}}{\cstexp})}}
             {\cstexp} 
  \end{align*}           
  Hence there is a derivation of
  $ \stexpprod{\stexpit{\bstexp}}{\cstexp}
      = 
    \stexpsum{\stexpprod{\bstexp}{(\stexpprod{\stexpit{\bstexp}}{\cstexp})}}
             {\cstexp} $
  in $\milnersysacc$.
  This derivation can be extended, due to $\notterminates{\bstexp}$,
    by an instance of $\USPone$ that is applied  
    to $\astexp = \stexpsum{\stexpprod{\bstexp}{\astexp}}
                           {\cstexp}$
  and 
  $ \stexpprod{\stexpit{\bstexp}}{\cstexp}
      = 
    \stexpsum{\stexpprod{\bstexp}{(\stexpprod{\stexpit{\bstexp}}{\cstexp})}}
             {\cstexp} $.
  We obtain a derivation of
  $\astexp = \stexpprod{\stexpit{\bstexp}}{\cstexp}$~in~$\milnersysacc$
    from the assumption 
  $\astexp = \stexpsum{\stexpprod{\bstexp}{\astexp}}
                      {\cstexp}$.
  
  For showing that $\USPone$ is derivable in $\milnersys$, 
    we consider an instance of $\USPone$ as in Definition~\ref{def:milnersys}, 
  with premises
  $\astexpi{1} = \stexpsum{\stexpprod{\bstexp}{\astexpi{1}}}
                          {\cstexp}$,
  and
  $\astexpi{2} = \stexpsum{\stexpprod{\bstexp}{\astexpi{2}}}
                          {\cstexp}$,
  for fixed star expressions $\astexpi{1}$, $\astexpi{2}$, $\bstexp$ with $\notterminates{\bstexp}$, and $\cstexp$.
  By two instances of $\RSPstar$ we get
  $\astexpi{1} = \stexpprod{\stexpit{\bstexp}}{\cstexp}$,
  and
  $\astexpi{2} = \stexpprod{\stexpit{\bstexp}}{\cstexp}$.
  By applying $\SYMM$ below $\astexpi{2} = \stexpprod{\stexpit{\bstexp}}{\cstexp}$, we obtain
  $\stexpprod{\stexpit{\bstexp}}{\cstexp} = \astexpi{2}$.
  Then by applying $\TRANS$ to $\astexpi{1} = \stexpprod{\stexpit{\bstexp}}{\cstexp}$ and $\stexpprod{\stexpit{\bstexp}}{\cstexp} = \astexpi{2}$
  we obtain the conclusion $\astexpi{1} = \astexpi{2}$ of the $\USPone$ instance. 
\end{proof}

Now we define soundness and completeness of equation-based proof systems for star expressions
  with respect to equivalence relations on star expressions.
Then we formulate soundness of Milner's system from \cite{miln:1984}, 
    and recall Milner's completeness question.

\begin{defi}
  Let $\asys$ be an \equationbasedover{\actions} proof system %over $\StExpover{\actions}$,
    and let $\aeqrel$ be an equivalence relation on $\StExpover{\actions}$. 
  We say that \emph{$\asys$ is sound for $\aeqrel$} 
    if, for all $\astexp,\bstexp\in\StExpover{\actions}$, $\astexp \eqin{\asys} \bstexp$ implies $\astexp \aeqrel \bstexp$.
  We say that \emph{$\asys$ is complete for $\aeqrel$} 
    if, for all $\astexp,\bstexp\in\StExpover{\actions}$, $\astexp \aeqrel \bstexp$ implies $\astexp \eqin{\asys} \bstexp$.
\end{defi}

\begin{propC}[\cite{miln:1984}] %[Milner, \cite{miln:1984}]
             \label{prop:milnersys:sound}
  $\milnersys$ is sound for process semantics equality $\sprocsemeq$ on regular expressions. 
  That is, for all $\astexp,\bstexp\in\StExpover{\actions}$ it holds: 
  $(\,
     \astexp \milnersyseq \bstexp
       \;\;\Longrightarrow\;\;
     \astexp \procsemeq \bstexp  
   \,) \punc{,}$
  and hence
  $(\,
     \astexp \milnersyseq \bstexp
       \;\;\Longrightarrow\;\;
     \astexp \procsemeq \bstexp  
   \,) \punc{.}$ 
\end{propC}

\theoremstyle{defC}
\newtheorem{questionC}[thm]{Question}

\begin{questionC}[{\cite{miln:1984}}]
  Is $\milnersys$ complete for bisimilarity of process interpretations?
  That is, does for all $\astexp,\bstexp\in\StExpover{\actions}$ the implication
  $(\,
     \astexp \milnersyseq \bstexp
       \;\;\Longleftarrow\;\;
     \chartof{\astexp} \bisim \chartof{\bstexp}
   \,)$ hold?
\end{questionC}

Finally we define the crucial concept of provable solution of a \onechart\ with respect to an \eqlogicbased\ proof system. 
Intuitively, a `provable solution' of a \onechart~$\aonechart$ is a provable solution of some recursive specification $\fap{\aspec}{\aonechart}$
  that is associated with $\aonechart$ in a natural way (see for example the two examples on page~\pageref{fig:milner-bosscher-expressible}).
Since associating specifications $\fap{\aspec}{\aonechart}$ to \onecharts~$\aonechart$ 
  presupposes the use of some list representation for the set $\transitionsinfrom{\aonechart}{\avert}$ of transitions from vertex~$\avert$,
    for every vertex $\avert$ of $\aonechart$,
    any such association map cannot be unique. 
The definition of provable solutions of \onecharts\ below
  uses such list representations implicitly, and assumes that associativity, commutativity, and reflexivity axioms are present in the underlying proof system.
In this way the concept of provable solution permits us to avoid defining associated specifications for \onecharts\ in some canonical (but still necessarily arbitrary) way.  
In the next section we explain an alternative characterization of provable solutions. 

\begin{defi}[provable solutions]\label{def:provable:solution}
  Let $\asys$ be an \eqlogicbased\ proof system for star expressions over $\actions$
    that extends \ACI. 
  Let $\aonechart = \tuple{\verts,\actions,\sone,\start,\transs,\exts}$ be a \onechart.
  
  By a \emph{star expression function on $\aonechart$} we mean a function $\sasol \funin \verts \to \StExpover{\actions}$
    on the vertices of $\aonechart$. 
  Let $\avert\in\verts$.   
  We say that such a star expression function $\sasol$ on $\aonechart$ %\pagebreak[4] 
    is an \emph{\provablein{\asys} solution of $\aonechart$ at $\avert$} if it holds that:
    \begin{equation}\label{eq:def:provable:solution}
      \asol{\avert}
        \,\eqin{\asys}\,
          \stexpsum{\terminatesconstof{\aonechart}{\avert}}
                   {\sum_{i=1}^{n} \stexpprod{\aoneacti{i}}{\asol{\averti{i}}}} 
            \punc{,}
    \end{equation}
    given the (possibly redundant) list representation
    $\transitionsinfrom{\aonechart}{\avert}
       =
     \descsetexpbig{ \avert \lt{\aoneacti{i}} \averti{i} }{ i \in\setexp{1,\ldots,n} }$,
    of transitions from $\avert$ in~$\aonechart$ 
    and where $\terminatesconstof{\aonechart}{\avert}$
    is the \emph{termination constant $\terminatesconstof{\aonechart}{\avert}$ of $\aonechart$ at $\avert$}
      defined as $\stexpzero$ if $\notterminates{\avert}$,
                      and as $\stexpone$ if $\terminates{\avert}$.
    This definition does not depend on the specifically chosen list representation of $\transitionsinfrom{\aonechart}{\avert}$, %of $\atsiof{\aonechart}{\avert}$,
    because $\asys$ extends \ACI, and therefore it contains the associativity, commutativity, and idempotency axioms~for~$\sstexpsum$.
  
  By an \emph{\provablein{\asys} solution of $\aonechart$} 
    (with \emph{principal value $\asol{\start}$} at the start vertex $\start$)
    we mean a star expression function $\sasol$ on $\aonechart$ that is an \provablein{\asys} solution of $\aonechart$ at every vertex~of~$\aonechart$.
\end{defi}

%-------
\section{Characterization of provable solutions of \protect\sone-charts}%
  \label{solutions}
%-------

This section is an intermezzo in which we link
  to an elegant coalgebraic formulation of the concept of provable solution
    by Schmid, Rot, and Silva in \cite{schm:rot:silv:2021}.
  Their observation is a crucial first part of a detailed and beautiful coalgebraic analysis
    of the completeness proof in \cite{grab:fokk:2020:lics:arxiv,grab:fokk:2020:lics} 
      by Fokkink and myself for a tailored restriction of Milner's system \milnersys\
        to `\onefree\ star expressions'. 
Here we reformulate their characterization of provable solution by means of the terminology that we are using here,
  and explain the connection, but do not prove the statements in detail.
This is because we will only use this characterization later as 
  an additional motivation for our concept of `coinductive proof',
    but not for developing the proof transformations to and from the coinductive reformulation $\coindmilnersys$ of Milner's system $\milnersys$.      
    
Schmid, Rot, and Silva construe the operational process semantics that a transition system like that in Definition~\ref{def:chartof}
  induces on the set $\StExp$ of all star expressions as a coalgebra (also denoted by) $\StExp$. 
Charts can also be represented as (finite) coalgebras due to their structure as transition graphs.
On this basis,
  they obtain the following characterization of provable solutions for proof systems $\asys$ like \milnersys\ and \milnersysmin.
  
\begin{lem}[$\protect\sim\,$Lemma~2.2 in \cite{schm:rot:silv:2021}]
  For every chart $\achart$, for every \starexpression\ function $\sasol \funin \vertsof{\achart} \to \StExp$,
    and for $\asys \in \setexp{ \milnersys, \milnersysmin }$,
    the following two statements are equivalent:
  \begin{enumerate}[label={(\roman{*})}]
    \item
      $\sasol$
        is an \provablein{\asys} solution of a chart $\achart$.
    \item   
      $\achart \overset{\eqcl{\sasol}{\seqin{\asys}}}{\longrightarrow} \factor{\StExp}{\seqin{\asys}}$ 
        is a coalgebra homomorphism,
        where  $\begin{aligned}[t]
                 \eqcl{\sasol}{\seqin{\asys}} \!\!\funin \vertsof{\achart} & {} \to \factor{\StExp}{\seqin{\asys}}
                 \\[-0.5ex]
                                                                   \avert  & {} \mapsto \eqcl{\asol{\avert}}{\seqin{\asys}} \punc{.}
                \end{aligned}$ 
  \end{enumerate}
\end{lem}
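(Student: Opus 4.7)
The plan is to proceed by unfolding both sides of the claimed equivalence with respect to the coalgebra structure that the TSS from Definition~\ref{def:StExpTSS} puts on the set of star expressions. I would view charts as $F$-coalgebras and $\StExp$ as an $F$-coalgebra, for a functor $F$ of the form $F(X) = 2 \times \powersetof(\actions \times X)$ that records immediate termination and the (finite) set of action-labeled transitions. A star-expression function $\sasol \funin \vertsof{\achart} \to \StExp$ then becomes a map between the carriers of $\achart$ and the coalgebra $\StExp$, and composing with the quotient map onto $\factor{\StExp}{\seqin{\asys}}$ produces $\eqcl{\sasol}{\seqin{\asys}}$.

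The first preparatory step is to check that $\factor{\StExp}{\seqin{\asys}}$ really does carry an inherited $F$-coalgebra structure. This is where the crucial ingredient enters, namely a \emph{fundamental theorem} for $\asys \in \setexp{\milnersysmin,\milnersys}$: every star expression $\astexp$ is \provablein{\asys} equal to $\terminatesconstof{\chartof{\astexp}}{\astexp} + \sum_{\astexp \lt{\aact} \astexpacc} \aact \prod \astexpacc$, with the sum ranging over a finite enumeration of derivatives given by the TSS. A straightforward induction on $\astexp$, using the axiom $\recdefstexpit$, $\rdistr$, and the axioms for $\sstexpsum$ and $\sstexpprod$ from $\milnersysmin$, delivers this. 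From this I would deduce that if $\astexp \eqin{\asys} \bstexp$ then $\astexp$ and $\bstexp$ have pointwise the same termination and the same multiset of derivatives modulo $\seqin{\asys}$, so the quotient coalgebra structure is well-defined.

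For the direction $(i) \Rightarrow (ii)$, I would pick a vertex $\avert$ of $\achart$ and use the defining equation \eqref{eq:def:provable:solution} for provable solutions together with the fundamental theorem to compute that the $F$-structure on $\eqcl{\asol{\avert}}{\seqin{\asys}}$ in $\factor{\StExp}{\seqin{\asys}}$ is precisely the image under $F(\eqcl{\sasol}{\seqin{\asys}})$ of the $F$-structure on $\avert$ in $\achart$: the termination bit matches by definition of $\terminatesconstof{\achart}{\avert}$, and each transition $\avert \lt{\aact_i} \averti{i}$ contributes the summand $\aact_i \prod \asol{\averti{i}}$, which under the fundamental theorem becomes the derivative $\eqcl{\asol{\averti{i}}}{\seqin{\asys}}$ of $\eqcl{\asol{\avert}}{\seqin{\asys}}$.

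For the converse $(ii) \Rightarrow (i)$, a homomorphism condition at~$\avert$ tells us that, modulo $\seqin{\asys}$, the derivatives of $\asol{\avert}$ in $\StExp$ are in bijective correspondence (preserving action labels) with the transitions of $\avert$ in $\achart$, each mapped via $\sasol$, and likewise for termination. Plugging this bijection into the right-hand side of the fundamental theorem applied to $\asol{\avert}$ recovers precisely the equation \eqref{eq:def:provable:solution} modulo $\seqin{\asys}$. The step I expect to require the most care is the well-definedness of the coalgebra structure on $\factor{\StExp}{\seqin{\asys}}$, i.e.\ showing that $\seqin{\asys}$ is sufficiently tight to be a congruence for $F$; all remaining work is then bookkeeping between the equational presentation of solutions in Definition~\ref{def:provable:solution} and the coalgebraic one.
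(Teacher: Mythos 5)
Your overall architecture matches the one the paper attributes to Schmid, Rot, and Silva and sketches through Lemmas~\ref{lem:provability:is:bisim}, \ref{lem:bisim:factor:onechart}, and~\ref{lem:FT:chart-int}: read charts and $\StExp$ as coalgebras, establish the fundamental theorem (your statement of it is exactly Lemma~\ref{lem:FT:chart-int}), form the quotient coalgebra $\factor{\StExp}{\seqin{\asys}}$, and reduce both directions to bookkeeping. That bookkeeping is fine \emph{once the quotient coalgebra exists}.

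The gap is in the step you yourself single out as the crux: you claim that the well-definedness of the coalgebra structure on $\factor{\StExp}{\seqin{\asys}}$ follows from the fundamental theorem. It does not. The fundamental theorem gives, for each individual $\astexp$, a provable equation $\astexp \eqin{\asys} \stexpsum{\terminatesconstof{\chartof{\astexp}}{\astexp}}{\sum_{i=1}^{n}\stexpprod{\aacti{i}}{\astexpacci{i}}}$; from $\astexp \eqin{\asys} \bstexp$ you therefore only obtain a provable equality between two such sums, and to conclude that their termination bits agree and their derivatives match up modulo $\seqin{\asys}$ you would already need to know that provably equal expressions have matching transition behaviour --- which is precisely the congruence property you are trying to establish. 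That $\seqin{\asys}$ is a congruence for the behaviour functor is equivalent to $\seqin{\asys}$ being a bisimulation on the process semantics (the paper's Lemma~\ref{lem:provability:is:bisim}), and this is an independent ingredient that must be proved by induction on derivations in $\asys$, checking the forth/back/termination conditions separately for every axiom of $\milnersysmin$ and, in the case of $\milnersys$, for the rule $\RSPstar$. A concrete way to see that your deduction cannot go through: if one added the left-distributivity law $\stexpprod{x}{(\stexpsum{y}{z})} \formeq \stexpsum{\stexpprod{x}{y}}{\stexpprod{x}{z}}$ to $\milnersysmin$, the fundamental theorem and its proof would survive unchanged, yet the resulting provability relation is not a bisimulation and the quotient coalgebra is no longer well-defined. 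So the congruence property is sensitive to exactly which axioms and rules are present and cannot be a corollary of the fundamental theorem; you need to supply the induction of Lemma~\ref{lem:provability:is:bisim} as a separate lemma before the rest of your argument applies.
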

By analyzing the proof of this statement on the basis of terminology we use here, we find the following. 
  Schmid, Rot, and Silva noticed that statements~\ref{it:1:SRS} and \ref{it:2:SRS} below hold, and used them in conjunction with \ref{it:3:SRS}
    (which is analogous to Proposition~2.9 in \cite{grab:fokk:2020:lics:arxiv,grab:fokk:2020:lics}) 
    to obtain the characterization for charts above. We extend it to \onecharts\ here in \ref{it:4:SRS}: 
\begin{enumerate}[label={(\alph{*})},align=right,leftmargin=*,itemsep=0.25ex]
  \item{}\label{it:1:SRS} 
    Provability in a system $\asys$ like Milner's defines a bisimulation relation
    on the the set of star expressions when that is endowed with the process semantics.
      (See Lemma~\ref{lem:provability:is:bisim}).
      
  \item{}\label{it:2:SRS} 
    Due to \ref{it:1:SRS} a factor chart $\factor{\chartof{\astexp}}{\eqin{\asys}}$ can be defined such
      that $\chartof{\astexp} \funbisim \factor{\chartof{\astexp}}{\seqin{\asys}}$ holds,
        that is, there is a functional bisimulation from $\chartof{\astexp}$ to $\factor{\chartof{\astexp}}{\seqin{\asys}}$. 
      (See Lemma~\ref{lem:bisim:factor:onechart}).
      
  \item{}\label{it:3:SRS} 
    Every star expression $\astexp$ is the principal value of a \provablein{\milnersys} solution of 
      the chart interpretation $\chartof{\astexp}$ of $\astexp$.
      (See Lemma~\ref{lem:chart-int:milnersysmin:solvable}). 
    Equivalently, every star expression $\astexp$ can be \provablyin{\milnersys} reconstructed
      from the transitions to its derivatives in the process semantics. (See Lemma~\ref{lem:FT:chart-int}.)
      
  \item{}\label{it:4:SRS}   
    A star expression function $\sasol$ with principal value $\astexp$ is a \provablein{\milnersys} solution of a \onechart~$\aonechart$
      if and only if
    the relativization $\eqcl{\sasol}{\seqin{\milnersys}}$ of $\sasol$ to \equivalenceclasseswrt{\seqin{\milnersys}}
      defines a \onebisimulation\ from $\aonechart$ to $\factor{\chartof{\astexp}}{\seqin{\asys}}$. (See Proposition~\ref{prop:char:solution:SRS}.)
\end{enumerate}
   
We formulate these statements more precisely here below. We start with a general definition of factor charts.
  But the property \ref{it:2:SRS} of factor charts of the chart interpretation with respect to provability 
    will then only be shown in Lemma~\ref{lem:bisim:factor:chart} below, 
      after the formulation of the property~\ref{it:1:SRS} in Lemma~\ref{lem:provability:is:bisim}. 

\begin{defi}[factor chart]\label{def:factor:chart}
  Let $\achart = \tuple{\verts,\actions,\sone,\start,\transs,\termexts}$
    be a chart (that is, a \onetransition\ free \onechart). 
  Let $\sacongrel$ be an equivalence relation on $\verts$. % that is a \onebisimulation\ on $\aonechart$.
  Then we define the \emph{factor chart} $\factor{\achart}{\sacongrel}$ of $\achart$ with respect to $\sacongrel$ by:
  \begin{align*} 
    &
    \factor{\achart}{\sacongrel} = \tuple{\factor{\verts}{\sacongrel},\actions,\sone,\eqcl{\start}{\sacongrel},\factor{\transs}{\sacongrel},\factor{\termexts}{\sacongrel}}
    \displaybreak[0]\\
    & \quad
    \text{ where: }
    \begin{aligned}[t]
      &
      \factor{\verts}{\sacongrel} 
        \defdby 
          \descsetexpbig{ \eqcl{\avert}{\sacongrel} }{ \avert\in\verts } \punc{,}
      \text{ and for all $\avert,\averti{1},\averti{2}\in\verts$ and $\aact\in\actions$: }   
      \\
      & 
      \eqcl{\averti{1}}{\sacongrel} 
        \ltfact{\aact}{\sacongrel}
      \eqcl{\averti{2}}{\sacongrel}
        \; \funin \: \Longleftrightarrow \;
          \begin{aligned}[t]
            & \text{there are $\averttildei{1},\averttildei{2}\in\verts$ such that }
              \averti{1} \acongrel \averttildei{1} \lt{\aact} \averttildei{2} \acongrel \averti{2} \punc{,}
          \end{aligned}
      \\
      &
      \terminatesfact{({\eqcl{\avert}{\sacongrel}})}{\sacongrel}
        \; \funin\: \Longleftrightarrow \;
          \terminates{\avert} \punc{.}
    \end{aligned}
  \end{align*}
  By the projection function from $\achart$ to $\factor{\achart}{\sacongrel}$
    we mean the function $\sprojwrt{\sacongrel} \funin \verts \to \factor{\verts}{\sacongrel}$, $\avert \mapsto \eqcl{\avert}{\sacongrel}$.
\end{defi}

\begin{lem}\label{lem:provability:is:bisim}
  Let $\asys\in\setexp{\milnersysminover{\actions},\milnersysover{\actions}}$.
  Then provability $\eqin{\asys}$ with respect to $\asys$ is a bisimulation on the chart interpretation $\chartof{\astexp}$ of $\astexp$, for every $\astexp\in\StExpover{\actions}$.  
\end{lem}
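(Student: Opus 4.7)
The plan is to show that, for each fixed $\astexp$, the restriction of $\seqin{\asys}$ to $\vertsof{\chartof{\astexp}}^2$ satisfies the start-vertex, forth, back, and termination conditions of Definition~\ref{def:onebisim}. The start-vertex condition $\astexp \eqin{\asys} \astexp$ is immediate from REFL, and since $\seqin{\asys}$ is symmetric (by SYMM) while derivatives of vertices in $\vertsof{\chartof{\astexp}}$ stay in $\vertsof{\chartof{\astexp}}$ by reachability, it suffices to prove the following general statement: for all $\astexpi{1}, \astexpi{2} \in \StExpover{\actions}$ with $\astexpi{1} \eqin{\asys} \astexpi{2}$, (i) $\terminates{\astexpi{1}} \Leftrightarrow \terminates{\astexpi{2}}$, and (ii) every $\astexpi{1} \lt{a} \astexpacci{1}$ is matched by some $\astexpi{2} \lt{a} \astexpacci{2}$ with $\astexpacci{1} \eqin{\asys} \astexpacci{2}$. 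I would prove this by structural induction on a derivation $\aDeriv$ of $\astexpi{1} \formeq \astexpi{2}$ in $\asys$.

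For the base cases, I would inspect each axiom of $\milnersysmin$ against the TSS of Definition~\ref{def:StExpTSS}. For $\assocstexpsum$, $\commstexpsum$, $\idempotstexpsum$, $\neutralstexpsum$, both identity axioms, $\rdistr$, $\stexpzerostexpprod$, $\assocstexpprod$, and $\recdefstexpit$, the $a$-derivatives of the two sides coincide up to syntactic identity (for $\recdefstexpit$ both $\stexpit{x}$ and $\stexpsum{\stexpone}{\stexpprod{x}{\stexpit{x}}}$ yield exactly $\setexp{\stexpprod{\astexpacc}{\stexpit{x}} : x \lt{a} \astexpacc}$) and the termination behaviors agree, so REFL matches the derivatives. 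For $\termstexpit$, i.e.\ $\stexpit{x} \formeq \stexpit{(\stexpsum{\stexpone}{x})}$, the $a$-derivatives $\stexpprod{\astexpacc}{\stexpit{x}}$ and $\stexpprod{\astexpacc}{\stexpit{(\stexpsum{\stexpone}{x})}}$ (ranging over $x \lt{a} \astexpacc$) are related by CXT applied to the axiom itself inside the context $\stexpprod{\astexpacc}{\cdot}$.

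For the inductive cases arising from the rules of $\eqlogic$: REFL is trivial, SYMM is absorbed by the symmetry of $\seqin{\asys}$, and TRANS chains two applications of the forth condition using transitivity of $\seqin{\asys}$ to combine the matching equations. The CXT rule requires a nested induction on the shape of the one-hole context $\acxtap{\cdot}$. For a product context $\stexpprod{\cdot}{\cstexp}$, the $a$-derivatives of $\stexpprod{\astexp}{\cstexp}$ consist of $\stexpprod{\astexpacc}{\cstexp}$ (from $\astexp \lt{a} \astexpacc$) and, if $\terminates{\astexp}$, the $a$-derivatives of $\cstexp$; the induction hypothesis on $\astexp \eqin{\asys} \bstexp$ matches the first kind with the corresponding derivatives of $\stexpprod{\bstexp}{\cstexp}$ via CXT in the context $\stexpprod{\cdot}{\cstexp}$, and guarantees $\terminates{\astexp} \Leftrightarrow \terminates{\bstexp}$ so the two sides either both pick up the $\cstexp$-derivatives or neither does. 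Sum, the other-side product, and star contexts are handled analogously.

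The main obstacle is the $\RSPstar$ rule in $\milnersys$, which adds a new pair $(\astexp, \stexpprod{\stexpit{\bstexp}}{\cstexp})$ to $\seqin{\asys}$ from the premise $\astexp \eqin{\asys} \stexpsum{\stexpprod{\bstexp}{\astexp}}{\cstexp}$ with $\notterminates{\bstexp}$. A direct TSS analysis shows that the $a$-derivatives of $\stexpprod{\stexpit{\bstexp}}{\cstexp}$ are $\stexpprod{(\stexpprod{\bstexpacc}{\stexpit{\bstexp}})}{\cstexp}$ for each $\bstexp \lt{a} \bstexpacc$, together with the $a$-derivatives of $\cstexp$ (since $\terminates{\stexpit{\bstexp}}$), while $\terminates{\stexpprod{\stexpit{\bstexp}}{\cstexp}} \Leftrightarrow \terminates{\cstexp}$. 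The induction hypothesis on the premise (whose right-hand side has $a$-derivatives $\stexpprod{\bstexpacc}{\astexp}$ for $\bstexp \lt{a} \bstexpacc$ plus those of $\cstexp$, and terminates iff $\terminates{\cstexp}$ because $\notterminates{\bstexp}$) matches each $a$-derivative of $\astexp$ either with some $\stexpprod{\bstexpacc}{\astexp}$ or with some $a$-derivative of $\cstexp$. To complete the matching in the first alternative, I would invoke the \emph{conclusion itself}: CXT lifts $\astexp \eqin{\asys} \stexpprod{\stexpit{\bstexp}}{\cstexp}$ to $\stexpprod{\bstexpacc}{\astexp} \eqin{\asys} \stexpprod{\bstexpacc}{(\stexpprod{\stexpit{\bstexp}}{\cstexp})}$, and $\assocstexpprod$ rewrites the right-hand side to the required $\stexpprod{(\stexpprod{\bstexpacc}{\stexpit{\bstexp}})}{\cstexp}$. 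This use of the conclusion is not circular, since the bisimulation conditions are properties of the fixed relation $\seqin{\asys}$ and the conclusion pair enters that relation the moment the $\RSPstar$ step fires; we merely read it off as an equation already derivable in $\asys$.
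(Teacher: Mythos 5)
Your overall strategy --- induction on the derivation of $\astexpi{1} \formeq \astexpi{2}$ in $\asys$, checking the bisimulation conditions for each axiom of $\milnersysmin$ and for the conclusions of the $\eqlogic$ rules and of $\RSPstar$ --- is exactly the route the paper's (sketched) proof takes, and your treatment of the crucial $\RSPstar$ case, including the non-circular use of the conclusion $\astexp \formeq \stexpprod{\stexpit{\bstexp}}{\cstexp}$ together with $\assocstexpprod$ to relate $\stexpprod{\bstexpacc}{\astexp}$ to $\stexpprod{(\stexpprod{\bstexpacc}{\stexpit{\bstexp}})}{\cstexp}$, is correct. There is, however, a genuine gap in how you set up the induction. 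You reduce the lemma to proving only the termination condition and the \emph{forth} condition for every derivable pair, arguing that \emph{back} then follows from symmetry of $\seqin{\asys}$. That reduction is sound for the final statement, but it breaks the induction at $\SYMM$: if $\astexpi{2} \formeq \astexpi{1}$ is inferred by $\SYMM$ from $\astexpi{1} \formeq \astexpi{2}$, the induction hypothesis gives you forth for the pair $\pair{\astexpi{1}}{\astexpi{2}}$ (transitions of $\astexpi{1}$ are matched by $\astexpi{2}$), whereas what you must establish is forth for $\pair{\astexpi{2}}{\astexpi{1}}$ (transitions of $\astexpi{2}$ are matched by $\astexpi{1}$) --- and the reversed pair need not possess a derivation to which the induction hypothesis applies, since the only one you know of is the current derivation itself. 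Saying that $\SYMM$ ``is absorbed by the symmetry of $\seqin{\asys}$'' therefore does not close this case. The standard repair, and what the paper's sketch indicates by listing (forth), (back), and (termination) together, is to prove all three conditions simultaneously in the induction; then the $\SYMM$ case simply swaps forth and back, and your other cases (including $\TRANS$, $\CXT$, and $\RSPstar$, whose back directions need the back half of the induction hypothesis anyway) go through unchanged.

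A smaller inaccuracy: for $\rightidstexpprod$ and $\assocstexpprod$ the $\aact$\nb-derivatives of the two sides do \emph{not} coincide syntactically --- e.g.\ $\stexpprod{\astexp}{\stexpone} \lt{\aact} \stexpprod{\astexpacc}{\stexpone}$ while $\astexp \lt{\aact} \astexpacc$, and $\stexpprod{(\stexpprod{\astexp}{\bstexp})}{\cstexp} \lt{\aact} \stexpprod{(\stexpprod{\astexpacc}{\bstexp})}{\cstexp}$ while $\stexpprod{\astexp}{(\stexpprod{\bstexp}{\cstexp})} \lt{\aact} \stexpprod{\astexpacc}{(\stexpprod{\bstexp}{\cstexp})}$ --- so these derivatives must be matched by a further instance of the respective axiom (under $\CXT$ where needed), exactly as you already do for $\termstexpit$. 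This is cosmetic and does not affect the structure of the argument.
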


\begin{proof}[Proof (Idea)]
  By verifying the bisimulation conditions (forth), (back), (termination)
    for conclusions of derivations in $\asys$, proceeding by induction on the depth of derivations in $\asys$.
  In the base case, this is settled for the axioms of \milnersysmin.
  In the induction step, it is settled for the conclusions of the reflexivity, symmetry, and transitivity rules of \eqlogic, 
    and of the fixed-point rule \RSPstar\ in \milnersys.
  (The arguments are similar to the proof of Theorem~2.1 in \cite{schm:rot:silv:2021}.)
\end{proof}

\begin{lem}\label{lem:bisim:factor:chart}
  Let $\achart = \tuple{\verts,\actions,\sone,\start,\transs,\termexts}$
    be a (\sonefree) chart. Let $\sacongrel$ be an equivalence relation on $\verts$ that is a bisimulation on $\achart$.
    
  Then
  $\achart \funbisimvia{\sprojwrt{\sacongrel}} \factor{\achart}{\sacongrel}$
    holds,
  that is, $\sprojwrt{\sacongrel}$ defines a functional bisimulation\ from $\achart$ to $\factor{\achart}{\sacongrel}$.          
\end{lem}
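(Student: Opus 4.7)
The plan is to unfold the definition of functional bisimulation and verify, for the graph relation $\abisim \defdby \descsetexp{\pair{\avert}{\eqcl{\avert}{\sacongrel}}}{\avert\in\verts}$ of $\sprojwrt{\sacongrel}$, the four conditions from Definition~\ref{def:onebisim} (specialized to \sonefree\ charts, so that induced transitions $\silt{\cdot}$ collapse to proper transitions $\slt{\cdot}$ and induced termination collapses to $\sterminates$). Three of the four conditions are essentially by construction. The start-vertex condition is immediate, since the start vertex of $\factor{\achart}{\sacongrel}$ is $\eqcl{\start}{\sacongrel}$ by Definition~\ref{def:factor:chart}, so $\pair{\start}{\eqcl{\start}{\sacongrel}}\in\abisim$. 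The termination condition holds by the very definition of $\factor{\termexts}{\sacongrel}$, which stipulates $\terminatesfact{(\eqcl{\avert}{\sacongrel})}{\sacongrel}$ to be equivalent with $\terminates{\avert}$ (this is well-defined precisely because $\sacongrel$, being a bisimulation, respects termination).

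For the forth condition, given $\pair{\avert}{\eqcl{\avert}{\sacongrel}}\in\abisim$ and a transition $\avert \lt{\aact} \avertacc$ in $\achart$, I would choose $\averttildei{1} \defdby \avert$ and $\averttildei{2} \defdby \avertacc$ as witnesses in the definition of $\factor{\transs}{\sacongrel}$ to obtain $\eqcl{\avert}{\sacongrel} \ltfact{\aact}{\sacongrel} \eqcl{\avertacc}{\sacongrel}$, after which $\pair{\avertacc}{\eqcl{\avertacc}{\sacongrel}}\in\abisim$ holds by definition of $\abisim$.

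The only step where the hypothesis that $\sacongrel$ is a bisimulation on $\achart$ actually enters, and thus the main (but still routine) point of the proof, is the back condition. Suppose $\pair{\avert}{\eqcl{\avert}{\sacongrel}}\in\abisim$ and $\eqcl{\avert}{\sacongrel} \ltfact{\aact}{\sacongrel} \eqcl{\avertacc}{\sacongrel}$ in $\factor{\achart}{\sacongrel}$. Unfolding the definition of $\factor{\transs}{\sacongrel}$ produces $\averttildei{1},\averttildei{2}\in\verts$ with $\avert \acongrel \averttildei{1} \lt{\aact} \averttildei{2} \acongrel \avertacc$. Now applying the forth condition of the bisimulation $\sacongrel$ on $\achart$ to the pair $\pair{\avert}{\averttildei{1}}\in\sacongrel$ and the transition $\averttildei{1} \lt{\aact} \averttildei{2}$ yields some $\avertdacc\in\verts$ with $\avert \lt{\aact} \avertdacc$ in $\achart$ and $\avertdacc \acongrel \averttildei{2}$. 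Chaining $\avertdacc \acongrel \averttildei{2} \acongrel \avertacc$ via transitivity of $\sacongrel$ gives $\eqcl{\avertdacc}{\sacongrel} = \eqcl{\avertacc}{\sacongrel}$, so $\pair{\avertdacc}{\eqcl{\avertacc}{\sacongrel}} = \pair{\avertdacc}{\eqcl{\avertdacc}{\sacongrel}} \in \abisim$, which is precisely what the back condition requires.

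I do not expect any genuine obstacle. The one mild subtlety worth being careful about is that $\abisim$, viewed as a relation between $\verts$ and $\factor{\verts}{\sacongrel}$, is asymmetric, so in the back condition one must distinguish the roles of $\avert$ and $\averttildei{1}$ and invoke the forth (not the back) clause of $\sacongrel$ as a bisimulation on $\achart$ to produce the desired successor of $\avert$.
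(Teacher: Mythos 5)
Your proposal is correct and follows exactly the route the paper sketches in its proof hint: the start, forth, and termination conditions are immediate from the construction of $\factor{\achart}{\sacongrel}$, and the hypothesis that $\sacongrel$ is a bisimulation on $\achart$ is used only to discharge the back condition. The one cosmetic remark is that, with the pair oriented as $\pair{\avert}{\averttildei{1}}$, matching the transition $\averttildei{1}\lt{\aact}\averttildei{2}$ by a transition from $\avert$ is formally the \emph{back} clause of $\sacongrel$ (equivalently the forth clause for the reversed pair $\pair{\averttildei{1}}{\avert}$, available since $\sacongrel$ is symmetric), so your closing sentence has the orientation backwards, but this does not affect the argument.
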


\begin{proof}[Proof (hint)]
  The (forth) and (termination) conditions for the graph $\graphof{\sprojwrt{\sacongrel}}$ of $\sprojwrt{\sacongrel}$ 
    to be a bisimulation are easy to verify.
  For demonstrating also the (back) condition for $\graphof{\sprojwrt{\sacongrel}}$ to be a bisimulation
    it is crucial to use the assumption that $\sacongrel$ is a bisimulation on $\achart$.
\end{proof}

\begin{lem}\label{lem:bisim:factor:onechart}
  $\chartof{\astexp} \funbisimvia{\sprojwrt{\seqin{\asys}}} \factor{\chartof{\astexp}}{\seqin{\asys}}$
    for every $\astexp\in\StExpover{\actions}$, and $\asys\in\setexp{\milnersysmin,\milnersys}$.
\end{lem}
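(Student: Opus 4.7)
The plan is to obtain the claim as an immediate combination of the two preceding lemmas, applied to the chart interpretation $\chartof{\astexp}$. Recall that by Definition~\ref{def:StExpTSS}, $\chartof{\astexp}$ is $\sone$-transition free, so it qualifies as a chart in the sense of Definition~\ref{def:factor:chart} and Lemma~\ref{lem:bisim:factor:chart}.

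First, I would observe that the provability relation $\seqin{\asys}$ is an equivalence relation on $\StExpover{\actions}$, since $\asys$ extends $\eqlogic$ and hence contains the rules $\REFL$, $\SYMM$, and $\TRANS$. In particular, its restriction to the vertex set $\vertsof{\chartof{\astexp}} \subseteq \StExpover{\actions}$ (the set of iterated derivatives of $\astexp$) is again an equivalence relation, and so may serve as the congruence $\sacongrel$ in Definition~\ref{def:factor:chart}.

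Second, by Lemma~\ref{lem:provability:is:bisim}, provability $\seqin{\asys}$ is a bisimulation on $\chartof{\astexp}$ for $\asys\in\setexp{\milnersysmin,\milnersys}$. Hence the restriction of $\seqin{\asys}$ to $\vertsof{\chartof{\astexp}}$ is an equivalence relation that simultaneously is a bisimulation on $\chartof{\astexp}$.

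Third, I would invoke Lemma~\ref{lem:bisim:factor:chart} with $\achart \defdby \chartof{\astexp}$ and $\sacongrel\, \defdby\, \seqin{\asys}$ restricted to $\vertsof{\chartof{\astexp}}$. That lemma yields directly $\chartof{\astexp} \funbisimvia{\sprojwrt{\sacongrel}} \factor{\chartof{\astexp}}{\sacongrel}$, which is precisely the statement $\chartof{\astexp} \funbisimvia{\sprojwrt{\seqin{\asys}}} \factor{\chartof{\astexp}}{\seqin{\asys}}$ to be shown. There is no genuine obstacle here; the only point requiring a line of care is to note that the projection function $\sprojwrt{\seqin{\asys}}$ in the statement is well-defined on $\vertsof{\chartof{\astexp}}$ because equivalence classes of $\seqin{\asys}$ on the finite vertex set of $\chartof{\astexp}$ are restrictions of the global equivalence classes appearing in the factor chart.
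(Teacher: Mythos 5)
Your proof is correct and follows essentially the same route as the paper: invoke Lemma~\ref{lem:provability:is:bisim} to see that $\seqin{\asys}$ is a bisimulation on $\chartof{\astexp}$, then apply Lemma~\ref{lem:bisim:factor:chart}. The additional remarks about $\seqin{\asys}$ being an equivalence relation and about restriction to the vertex set are harmless elaborations of points the paper leaves implicit.
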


\begin{proof}
  Let $\asys\in\setexp{\milnersysmin,\milnersys}$.
  Due to Lemma~\ref{lem:provability:is:bisim}, $\eqin{\asys}$ is a bisimulation on $\chartof{\astexp}$.
  Then 
    we obtain 
      $\chartof{\astexp} \funbisimvia{\sprojwrt{\seqin{\asys}}} \factor{\chartof{\astexp}}{\seqin{\asys}}$
    by applying Lemma~\ref{lem:bisim:factor:chart}.
\end{proof}

The following lemma states that every star expression $\astexp$ can be reconstructed, provably in \milnersysmin,
  from the transitions that it facilitates in the process semantics, and the targets of these transitions.
Statements like this are frequently viewed as being analogous to the fundamental theorem of calculus,
  which states that every differentiable function can be reconstructed from its derivative function via integration.
  
\begin{lem}\label{lem:FT:chart-int}
  $ \astexp
     \milnersysmineq
   \terminatesconstof{\chartof{\astexp}}{\astexp} 
     +  
   \sum_{i=1}^{n} \stexpprod{\aacti{i}}{\astexpacci{i}} $ holds,
  given a list representation\vspace{-1pt}
  $\transitionsinfrom{\chartof{\astexp}}{\bvert}
     =
   \descsetexpbig{ \astexp \lt{\aacti{i}} \astexpacci{i} }{ i\in\setexp{1,\ldots,n} }$
  of the transitions from $\astexp$ in the chart interpretation $\chartof{\astexp}$ of $\astexp$.
\end{lem}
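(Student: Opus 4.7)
The plan is to proceed by structural induction on $\astexp$, in each case reading off the transitions and immediate-termination status at the top-most constructor from the rules of $\StExpTSSover{\actions}$ and then matching the claimed equation by manipulating axioms of $\milnersysmin$—chiefly $(\recdefstexpit)$, $(\termstexpit)$, $(\rdistr)$, $(\leftidstexpprod)$, $(\rightidstexpprod)$, and $(\stexpzerostexpprod)$. The ACI axioms for $\sstexpsum$, which are part of $\milnersysmin$, make the claim independent of the particular list representation of $\transitionsinfrom{\chartof{\astexp}}{\astexp}$ that is chosen.

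The base cases $\astexp \in \setexp{\stexpzero, \stexpone, a}$ with $a \in \actions$ will be settled directly: for $\stexpzero$ and $\stexpone$ the transition list is empty and the termination constant recovers $\astexp$ via $(\neutralstexpsum)$; for an action $a$ the sole transition $a \lt{a} \stexpone$ yields $a \milnersysmineq \stexpzero + a \prod \stexpone$ by $(\rightidstexpprod)$ and $(\neutralstexpsum)$. For $\astexp = \astexpi{1} + \astexpi{2}$, the TSS shows that the transitions from $\astexp$ are the concatenation of those from $\astexpi{1}$ and from $\astexpi{2}$, with $\terminates{\astexp}$ iff $\terminates{\astexpi{1}}$ or $\terminates{\astexpi{2}}$, so the claim follows from the induction hypothesis applied to both summands together with ACI. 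For $\astexp = \astexpi{1} \prod \astexpi{2}$, starting from $\astexpi{1} \milnersysmineq \tau_1 + \sum_i a_i \prod \astexpacci{1,i}$ by the induction hypothesis, right-multiplication by $\astexpi{2}$ together with $(\rdistr)$ and $(\assocstexpprod)$ yields $\astexpi{1}\prod\astexpi{2} \milnersysmineq \tau_1 \prod \astexpi{2} + \sum_i a_i \prod (\astexpacci{1,i} \prod \astexpi{2})$; if $\notterminates{\astexpi{1}}$ then $\tau_1 = \stexpzero$ and the first summand vanishes by $(\stexpzerostexpprod)$, otherwise $\tau_1 = \stexpone$ and $\tau_1 \prod \astexpi{2} \milnersysmineq \astexpi{2}$ is itself expanded by the induction hypothesis on $\astexpi{2}$, producing exactly the termination constant and the ``right-hand'' transitions of $\astexp$ prescribed by the TSS rules for $\sstexpprod$.

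The critical case is $\astexp = \stexpit{\astexpi{1}}$, where the TSS yields $\terminates{\stexpit{\astexpi{1}}}$ and places the transitions from $\stexpit{\astexpi{1}}$ in bijection with those from $\astexpi{1}$: each $\astexpi{1} \lt{a_i} \astexpacci{1,i}$ gives rise to $\stexpit{\astexpi{1}} \lt{a_i} \astexpacci{1,i} \prod \stexpit{\astexpi{1}}$. The sub-case $\notterminates{\astexpi{1}}$ is settled directly by $(\recdefstexpit)$, the induction hypothesis (with $\tau_1 = \stexpzero$), $(\rdistr)$ and $(\assocstexpprod)$. I expect the main obstacle to be the sub-case $\terminates{\astexpi{1}}$: the naive calculation via $(\recdefstexpit)$ and the induction hypothesis produces $\stexpit{\astexpi{1}} \milnersysmineq \stexpone + \stexpit{\astexpi{1}} + \sum_i a_i \prod (\astexpacci{1,i}\prod\stexpit{\astexpi{1}})$, and the extra $\stexpit{\astexpi{1}}$-summand cannot be cancelled by equational reasoning in $\milnersysmin$ alone since $\RSPstar$ is unavailable. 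To circumvent this I plan to first rewrite, using the induction hypothesis and ACI, $\astexpi{1} \milnersysmineq \stexpone + b$ where $b \defdby \sum_i a_i \prod \astexpacci{1,i}$ is visibly non-terminating (each summand being action-prefixed). Then $(\termstexpit)$ instantiated at $b$ together with $\CXT$ applied to $\astexpi{1} \milnersysmineq \stexpone + b$ gives $\stexpit{b} \milnersysmineq \stexpit{(\stexpone + b)} \milnersysmineq \stexpit{\astexpi{1}}$, reducing the $\terminates$ sub-case to the $\notterminates$ sub-case applied to $b$: $\stexpit{b} \milnersysmineq \stexpone + b \prod \stexpit{b} \milnersysmineq \stexpone + \sum_i a_i \prod (\astexpacci{1,i} \prod \stexpit{b})$. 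A final use of $\CXT$ with $\stexpit{b} \milnersysmineq \stexpit{\astexpi{1}}$ replaces the trailing $\stexpit{b}$ in each derivative by $\stexpit{\astexpi{1}}$, producing the target equation. This manoeuvre also covers the degenerate case $\astexpi{1} \milnersysmineq \stexpone$ (so $b \milnersysmineq \stexpzero$), which collapses via $\stexpit{\stexpzero} \milnersysmineq \stexpone + \stexpzero \prod \stexpit{\stexpzero} \milnersysmineq \stexpone$ to the expected $\stexpit{\stexpone} \milnersysmineq \stexpone$.
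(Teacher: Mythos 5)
Your proof is correct, and it is worth noting that it is not quite the route the paper takes. The paper only sketches this lemma and refers to the detailed proof of the analogous statement for the \onechart\ interpretation (Lemma~\ref{lem:FT:onechart-int}); there, the induction is on star height with a subinduction on syntactic structure, precisely because the problematic sub-case $\terminates{\astexpi{1}}$ of $\astexp \synteq \stexpit{\astexpi{1}}$ is handled by invoking the auxiliary Lemma~\ref{lem:terminates:2:notterminates} to produce \emph{some} non-terminating $f$ with $\astexpi{1} \milnersysmineq \stexpone + f$ and the same derivatives, and the induction hypothesis is then applied to $f$ --- which is not a subterm of $\astexpi{1}$, so a purely structural induction would not cover it. You avoid both the auxiliary lemma and the star-height measure: you take $b$ to be the concrete guarded sum $\sum_i a_i \prod \astexpacci{1,i}$ already delivered by the induction hypothesis for $\astexpi{1}$, observe that $\stexpit{b} \milnersysmineq \stexpit{\astexpi{1}}$ via $(\termstexpit)$ and $\CXT$, and then unfold $\stexpit{b}$ directly with $(\recdefstexpit)$, $(\rdistr)$ and $(\assocstexpprod)$ --- no second appeal to the induction hypothesis is needed, since $b$ is action-prefixed by construction. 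This keeps the induction structural and the argument self-contained, at the (small) cost of having to check by hand that the derivatives of $\astexpi{1}$ recorded in $b$ are exactly those that the TSS attaches to $\stexpit{\astexpi{1}}$, which you do. Your diagnosis of why the naive unfolding fails (the spurious $\stexpit{\astexpi{1}}$ summand that cannot be cancelled without $\RSPstar$) matches the difficulty the paper's detour through Lemma~\ref{lem:terminates:2:notterminates} is designed to address; both approaches buy the same thing, namely confining all reasoning to $\milnersysmin$.
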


\begin{proof}[Proof (hint)]
  The proof proceeds by induction on the structure of the star expression $\astexp$.
  All axioms of \milnersysmin\ (and hence all axioms of \milnersys) are necessary in the arguments.
  An analogous statement that can be viewed as the restriction of the statement of Lemma~\ref{lem:FT:chart-int}
    for `\onefree\ star expressions'
    was proved as Lemma~A.2 in \cite{grab:fokk:2020:lics:arxiv} ,
      and as Theorem~2.2 in \cite{schm:rot:silv:2021}.
  Here we will prove an analogous statement, Lemma~\ref{lem:FT:onechart-int}, in the next section.
\end{proof}

\begin{lem}\label{lem:chart-int:milnersysmin:solvable}
  For every star expression $\astexp\in\StExpover{\actions}$ 
    with chart interpretation $\chartof{\astexp} = \tuple{\vertsof{\astexp},\actions,\sone,\astexp,\transs,\exts}$
  the identical \starexpression\ function $\sidfunon{\vertsof{\astexp}} \funin \vertsof{\astexp} \to \StExpover{\actions}$, $\astexp \mapsto \astexp$
    is a \provablein{\milnersysmin} solution of $\chartof{\astexp}$ with principal value $\astexp$.  
\end{lem}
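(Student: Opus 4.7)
The plan is to reduce the statement directly to the ``fundamental theorem'' Lemma~\ref{lem:FT:chart-int} applied pointwise. By Definition~\ref{def:provable:solution}, the identical function $\sidfunon{\vertsof{\astexp}}$ is a $\milnersysmin$-provable solution of $\chartof{\astexp}$ precisely when, for every vertex $\bstexp\in\vertsof{\astexp}$, the equation
\[
  \bstexp
    \,\milnersysmineq\,
  \stexpsum{\terminatesconstof{\chartof{\astexp}}{\bstexp}}
           {\sum_{i=1}^{n} \stexpprod{\aacti{i}}{\bstexpacci{i}}}
\]
holds, given any list representation $\transitionsinfrom{\chartof{\astexp}}{\bstexp} = \descsetexpbig{ \bstexp \lt{\aacti{i}} \bstexpacci{i} }{ i\in\setexp{1,\ldots,n} }$ of the transitions out of $\bstexp$ in $\chartof{\astexp}$. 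The principal value at the start vertex is then $\sidfunon{\vertsof{\astexp}}(\astexp)=\astexp$, as required.

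First I would observe a closure property of $\vertsof{\astexp}$: by Definition~\ref{def:chartof}, $\vertsof{\astexp}$ consists of \emph{all} star expressions reachable from $\astexp$ via iterated transitions in $\StExpTSSover{\actions}$. In particular, for every $\bstexp\in\vertsof{\astexp}$ and every transition $\bstexp \lt{\aact} \bstexpacc$ derivable in $\StExpTSSover{\actions}$, the target $\bstexpacc$ is again reachable from $\astexp$, hence lies in $\vertsof{\astexp}$. Consequently the set of transitions from $\bstexp$ recorded in $\chartof{\astexp}$ coincides with the set of transitions from $\bstexp$ in the full TSS $\StExpTSSover{\actions}$, and analogously $\terminatesconstof{\chartof{\astexp}}{\bstexp}$ is $\stexpone$ or $\stexpzero$ according to whether $\terminates{\bstexp}$ or $\notterminates{\bstexp}$ in $\StExpTSSover{\actions}$.

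Next, for each $\bstexp\in\vertsof{\astexp}$ I would invoke Lemma~\ref{lem:FT:chart-int} applied to the star expression $\bstexp$ itself. That lemma yields
\[
  \bstexp
    \,\milnersysmineq\,
  \stexpsum{\terminatesconstof{\chartof{\bstexp}}{\bstexp}}
           {\sum_{i=1}^{n} \stexpprod{\aacti{i}}{\bstexpacci{i}}}
\]
for any list enumeration of the transitions from $\bstexp$ in $\chartof{\bstexp}$. By the closure observation above, these transitions and the termination marker are the same whether computed in $\chartof{\bstexp}$ or in $\chartof{\astexp}$, so the required identity in $\chartof{\astexp}$ follows. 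Independence of the enumeration order and multiplicity is absorbed by the presence of the \ACI{} axioms in $\milnersysmin$, as already noted in Definition~\ref{def:provable:solution}.

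I do not expect any real obstacle here: the lemma is essentially a repackaging of Lemma~\ref{lem:FT:chart-int} once one has noted the closure of $\vertsof{\astexp}$ under taking derivatives. The only point that requires a careful word is the identification of the local data (transitions and termination) at a vertex $\bstexp\in\vertsof{\astexp}$ between the charts $\chartof{\astexp}$ and $\chartof{\bstexp}$, which is immediate from the inductive rules defining $\transs$ and $\sterminates$ in Definition~\ref{def:StExpTSS}.
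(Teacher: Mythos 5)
Your proposal is correct and follows exactly the route the paper takes: the paper's own proof is a one-line appeal to Lemma~\ref{lem:FT:chart-int}, which guarantees the correctness condition at every vertex. Your additional observation that $\vertsof{\astexp}$ is closed under derivatives, so that the local transition and termination data at a vertex $\bstexp$ agree between $\chartof{\astexp}$ and $\chartof{\bstexp}$, is a detail the paper leaves implicit but is exactly the right justification.
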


\begin{proof}
  The correctness conditions for the \starexpression\ function $\sidfunon{\vertsof{\astexp}}$ 
      to be a \provablein{\milnersysmin} solution of the chart interpretation $\chartof{\astexp}$ of $\astexp$
    are guaranteed by Lemma~\ref{lem:FT:chart-int}.
\end{proof}

On the basis of these preparations we now reformulate the characterization of provable solutions of charts by Schmid, Rot, and Silva
  as a characterization of provable solutions of \onecharts\ via functional \onebisimulations\ to factor charts of appropriate chart interpretations.

\begin{prop}\label{prop:char:solution:SRS}
  Let $\aonechart = \tuple{\verts,\actions,\start,\sone,\transs,\termexts}$ be a \onechart.
  Let $\asys \in\setexp{\milnersysover{\actions},\milnersysminover{\actions}}$. 
  Then for all star expression functions $\sasol \funin \verts \to \StExpover{\actions}$ it holds:
  \begin{equation}\label{eq:prop:char:solution:SRS}
    \text{$\sasol$ is \provablein{\asys} solution of $\aonechart$}
      \;\;\:\;\Longleftrightarrow\;\;\;\;
    \aonechart
      \:\funonebisimvia{\eqcl{\sasol}{\eqin{\asys}}} \!\!
    \factorset{\chartof{\asol{\start}}}{\eqin{\asys}} \punc{.}
  \end{equation}    
  where $\eqcl{\sasol}{\eqin{\asys}} \defdby \scompfuns{\sprojwrt{\seqin{\asys}}}{\sasol}$
    with the projection $\sprojwrt{\seqin{\asys}} \funin \StExpover{\actions} \to \factor{\StExpover{\actions}}{\seqin{\asys}}$, $\astexp \mapsto \eqcl{\astexp}{\seqin{\asys}}$.
\end{prop}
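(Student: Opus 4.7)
The plan is to prove the two directions of \eqref{eq:prop:char:solution:SRS} separately, using as principal tools Lemma~\ref{lem:FT:chart-int} (reconstruction of a star expression from its operational derivatives, up to provability in $\milnersysmin$) and Lemma~\ref{lem:provability:is:bisim} (provability $\eqin{\asys}$ is itself a bisimulation on chart interpretations, so that $\eqin{\asys}$-equivalent star expressions have matching transition behaviour in any factor chart). Throughout I would exploit that the factor chart $\factor{\chartof{\asol{\start}}}{\eqin{\asys}}$ inherits \sonefree{}ness from $\chartof{\asol{\start}}$, so that its induced transitions are precisely its proper transitions.

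For the ($\Rightarrow$) direction I would assume that $\sasol$ is an \provablein{\asys}~solution of $\aonechart$ and verify the conditions making $\eqcl{\sasol}{\eqin{\asys}}$ a functional \onebisimulation\ to $\factor{\chartof{\asol{\start}}}{\eqin{\asys}}$. Well-definedness as a map into the vertex set of the factor chart follows by induction on reachability from $\start$: the solution equation $\asol{\avert} \eqin{\asys} \terminatesconstof{\aonechart}{\avert} + \sum_i \stexpprod{\aoneacti{i}}{\asol{\averti{i}}}$ together with the operational rules of Definition~\ref{def:StExpTSS} shows that each $\asol{\averti{i}}$ is $\eqin{\asys}$-equivalent to a derivative of $\asol{\avert}$, and hence ultimately of $\asol{\start}$. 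The start-vertex condition is immediate. For (forth), (back) and (termination) I would read off, via the operational rules, the proper transitions and termination status of the right-hand side of the solution equation, and transfer this information through Lemma~\ref{lem:provability:is:bisim} to the vertex $\eqcl{\asol{\avert}}{\eqin{\asys}}$ in the factor chart.

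For the ($\Leftarrow$) direction I would fix an arbitrary $\avert\in\verts$ and apply Lemma~\ref{lem:FT:chart-int} to the star expression $\asol{\avert}$, obtaining $\asol{\avert} \milnersysmineq \terminatesconstof{\chartof{\asol{\avert}}}{\asol{\avert}} + \sum_{j=1}^{m} \stexpprod{\aacti{j}}{\astexpacci{j}}$ where $\asol{\avert} \lt{\aacti{j}} \astexpacci{j}$ enumerates its proper derivatives. The (forth), (back) and (termination) conditions for the functional \onebisimulation\ $\eqcl{\sasol}{\eqin{\asys}}$ then set up a one-to-one correspondence between these derivatives $\astexpacci{j}$ and the induced-transition targets $\asol{\bar\avert_j}$ of $\avert$ in $\aonechart$, with matching termination constants. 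After reorganising the sum using \ACI\ and the left identity $\stexpprod{\stexpone}{x} \formeq x$, this yields the solution equation at $\avert$ relative to the literal list of $\aonechart$-transitions from $\avert$.

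The main obstacle is the mismatch between the lists of transitions: the solution equation in Definition~\ref{def:provable:solution} enumerates \emph{every} outgoing transition of $\avert$ literally, including $\sone$-transitions, whereas the bisimulation conditions only constrain \emph{induced} transitions. Closing this gap in the ($\Leftarrow$) direction requires careful bookkeeping, expanding each $\sone$-step $\avert \lt{\sone} \averti{i}$ via $\asol{\averti{i}} \milnersysmineq \stexpprod{\stexpone}{\asol{\averti{i}}}$ and iteratively absorbing it into the sum indexed by induced transitions produced by Lemma~\ref{lem:FT:chart-int}; the cleanest route is probably to first settle the statement for $\sone$-free \onecharts, where Lemma~\ref{lem:FT:chart-int} applies verbatim, and then reduce the general case to that by folding $\sone$-transitions through left-identity. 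Some additional care is needed for vertices trapped in $\sone$-cycles, where the solution equation degenerates to a tautology $\asol{\avert} \eqin{\asys} \asol{\avert} + \terminatesconstof{\aonechart}{\avert}$ and the bisimulation condition forces $\eqcl{\asol{\avert}}{\eqin{\asys}}$ to have the correct termination status but no proper transitions.
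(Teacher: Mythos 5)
Your plan follows essentially the same route as the paper's (sketched) proof: the paper packages your ``main obstacle'' into an auxiliary equivalence between being a provable solution of $\aonechart$ and of its induced chart (induced transitions, induced termination), and then, just as you propose, verifies the \onebisimulation\ clauses directly for ($\Rightarrow$) and invokes Lemma~\ref{lem:FT:chart-int} for ($\Leftarrow$). One remark on the point you flag at the end: for a vertex trapped in a $\sone$-cycle with no exit the ($\Rightarrow$) direction genuinely fails (the solution condition there is vacuous while the (back) clause is not), so no amount of ``additional care'' repairs it and the statement implicitly presupposes weak guardedness --- a gap the paper's own sketch shares.
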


\begin{proof}[Proof (sketch)]
  A technical part of the proof consists in showing that 
    for every \starexpression\ function $\sasol \funin \verts \to \StExpover{\actions}$
      on a \onechart~$\aonechart = \tuple{\verts,\actions,\sone,\start,\transs,\termexts}$ 
        the following two statements are equivalent:
  \begin{enumerate}[label={(\roman{*})}]
    \item
      $\sasol$ is an \provablein{\asys} solution of $\aonechart$,
    \item  
      $\sasol$ is an \provablein{\asys} solution of the `induced chart' $\indchartof{\aonechart} = \tuple{\verts,\actions,\sone,\onetranss,\onetermexts}$ of $\aonechart$
      that results by using induced transitions as transitions, and induced termination as ter\-mi\-na\-tion, that is, with:
      \begin{align*}
        \onetranss
          & {} \,\defdby\,
                 \descsetexpbig{ \triple{\avert}{\aact}{\avertacc} }{ \aact\in\actions,\, \avert,\avertacc\in\verts,\: \avert \ilt{\aact} \avertacc } \punc{,}
          &
        \onetermexts
          & {} \,\defdby\,
                 \descsetexpbig{ \avert }{ \avert\in\verts,\: \oneterminates{\avert} } \punc{.}     
      \end{align*}
  \end{enumerate}    
  
  For the implication `$\Rightarrow$' in \eqref{eq:prop:char:solution:SRS} 
    we assume that $\sasol$ is a \provablein{\asys} solution of $\aonechart$. 
    By the auxiliary statement above, $\sasol$ is then also a \provablein{\asys} solution of the induced chart $\indchartof{\aonechart}$ of $\aonechart$.
    Then it is not difficult to verify the \onebisimulation\ conditions (forth), (back), and (termination) 
      for $\eqcl{\sasol}{\seqin{\asys}}$ to define a \onebisimulation\ from $\aonechart$ to $\factorset{\chartof{\asol{\start}}}{\eqin{\asys}}$.
      
  For the converse implication `$\Leftarrow$' in \eqref{eq:prop:char:solution:SRS},
    we assume $\sasol \funin \verts \to \StExpover{\actions}$ as a star expression function 
    with $\aonechart 
            \:\funonebisimvia{\eqcl{\sasol}{\eqin{\asys}}} \!\!
          \factorset{\chartof{\asol{\start}}}{\eqin{\asys}}$.
    Here Lemma~\ref{lem:FT:chart-int},
      the possibility to \provablyin{\asys} reconstruct a star expression $\astexp$ from the transitions to its derivatives,
    can be employed in order to recognize $\sasol$ as a \provablein{\asys} solution of the induced chart $\indchartof{\aonechart}$ of $\aonechart$.
    Then by applying the auxiliary statement, 
      we obtain that $\sasol$ is also a \provablein{\asys} solution of $\aonechart$.
\end{proof}

%----------
\section{Layered loop existence and elimination, and LLEE-witnesses}
  \label{LEE}
%----------

In this subsection we recall definitions from \cite{grab:fokk:2020:lics,grab:2021:TERMGRAPH2020-postproceedings}
  of the loop existence and elimination condition \LEE,
  its `layered' version~\LLEE, and of chart labelings that witness these conditions.
Specifically we will use the adaptation of these concepts to \onecharts\ 
  that has been introduced in \cite{grab:2021:TERMGRAPH2020-postproceedings},
  because the use of \onecharts\ with \onetransitions\ 
    will be crucial for the concept of `\LLEEwitnessed\ coinductive proof' in Section~\ref{coindmilnersys}.
For this purpose we also recall the `\onechart\ interpretation' of star expressions    
  as introduced in \cite{grab:2021:TERMGRAPH2020-postproceedings}
    for which the property \LLEE\ is guaranteed in contrast to the chart interpretation from Definition~\ref{def:chartof}.
We will keep formalities to a minimum as these are necessary for our purpose here,
  and have to refer to 
  \cite{grab:fokk:2020:lics,grab:2021:TERMGRAPH2020-postproceedings}
  and the appertaining reports
  \cite{grab:fokk:2020:lics:arxiv,grab:2020:TERMGRAPH2020-postproceedings:arxiv} for more details.
  
We start with the definitions of loop \onecharts, and of loop sub(-$\sone$)charts, and examples for these concepts.   
  
\begin{defi}[loop \protect\onechart]\label{def:loop:onechart}
  A \onechart~$\aoneloop = \tuple{\verts,\actions,\sone,\start,\transs,\termexts}$ is called a \emph{loop \onechart} if
  it satisfies three conditions:
  \begin{enumerate}[label={{\rm (L\arabic*)}},leftmargin=*,align=left,itemsep=0.5ex]
    \item{}\label{loop:1}
      There is an infinite path from the start vertex $\start$.
    \item{}\label{loop:2}  
      Every infinite path from $\start$ returns to $\start$ after a positive number of transitions.
      % (and so visits $\start$ infinitely often).
    \item{}\label{loop:3}
      Immediate termination is only permitted at the start vertex, that is, $\termexts\subseteq\setexp{\start}$.
  \end{enumerate}\vspace*{-0.25ex}
  Transitions from $\start$ are called \emph{\loopentry\ transitions},
  all other transitions \emph{\loopbody\ transitions}.
\end{defi}

  \begin{figure}[t]
\begin{center}  
\begin{tikzpicture}

%=============================================================== 
% (L1) violating non-example loop chart 
%===============================================================
%  
\matrix[anchor=center,row sep=1cm,every node/.style={draw,very thick,circle,minimum width=2.5pt,fill,inner sep=0pt,outer sep=2pt}] at (0,0) {
  \node(C1-0){}; 
  \\
  \node(C1-1){};
  \\
  \node(C1-2){};
  \\
};
\draw[<-,very thick,>=latex,color=chocolate](C1-0) -- ++ (90:0.5cm);
\draw[->](C1-0) to (C1-1);
\draw[->](C1-1) to (C1-2);

%---------------
% label vertices
%---------------
% \path (C1-0) ++ (-0.275cm,0cm) node{${\averti{0}}$};
% \path (C1-1) ++ (+0.3cm,0cm) node{${\averti{1}}$}; 
% \path (C1-2) ++ (+0.3cm,-0.275cm) node{${\averti{2}}$};

%------------
% label chart
%------------
  \path (C1-2) ++ (0cm,-1cm) node{{{\colorred{\st{(L1)}}}}};

%=============================================================== 
% (L2)-violating non-example loop-chart 
%===============================================================
%  
\matrix[anchor=center,row sep=1cm,every node/.style={draw,very thick,circle,minimum width=2.5pt,fill,inner sep=0pt,outer sep=2pt}] at (2,0) {
  \node(C2-0){}; 
  \\
  \node(C2-1){};
  \\
  \node(C2-2){};
  \\
};
\draw[<-,very thick,>=latex,color=chocolate](C2-0) -- ++ (90:0.5cm);

%---------
% from C2-0
%---------
\draw[->,very thick,color=red] (C2-0) to node[right]{${\phantom{\aact}}$} (C2-1); 
              
%---------
% from C2-1
%---------   
\draw[->] (C2-1) to node[right]{$\hspace*{-0.65pt}{\phantom{\aact}}$} (C2-2);
             
\draw[->,color=red] (C2-1) to node[right]{$\hspace*{-0.65pt}{\phantom{\aact}}$} (C2-2);
             
%---------
% from C2-2
%---------
\draw[->,shorten >=3pt,shorten <=3pt] ([shift=(270:1.13cm)]C2-1) arc[radius=1.13cm,start angle=-90,end angle=90];
\draw[->,out=180,in=180,distance=1.05cm] (C2-2) to node[left]{$\phantom{\bact}$} (C2-1);
  
\draw[->,color=forestgreen,shorten >=3pt,shorten <=3pt] ([shift=(270:1.13cm)]C2-1) arc[radius=1.13cm,start angle=-90,end angle=90];
\draw[->,color=red,out=180,in=180,distance=1.05cm] (C2-2) to node[left]{$\phantom{\bact}$} (C2-1);
  
\path (C2-1) ++ (1.25cm,0cm) node{${\phantom{\bact}}$};
\path ([shift=(270:0.575cm)]C2-1) ++ (-0.83cm,0cm) node{${\phantom{\bact}}$};

%---------------
% label vertices
%---------------
\path (C2-0) ++ (-0.275cm,0cm) node{${\averti{0}}$};
\path (C2-1) ++ (+0.3cm,0cm) node{${\averti{1}}$}; 
\path (C2-2) ++ (+0.3cm,-0.275cm) node{${\averti{2}}$};

%------------
% label chart
%------------
\path (C2-2) ++ (1.5cm,-1cm) node{{\forestgreen{(L1)},\colorred{\st{(L2)}},\forestgreen{(L3)}}};

%===============================================
% another (L2)-violating chart (no-loop example)
%=============================================== 
%
\matrix[anchor=center,row sep=0.75cm,column sep=0.924cm,ampersand replacement=\&,
        every node/.style={draw,very thick,circle,minimum width=2.5pt,fill,inner sep=0pt,outer sep=2pt}] at (4.75,0) {
  \node(C-2-1){};  \&                  \&     \node(C-2-2){};
  \\
                   \&                  \&                  
  \\
                   \& \node(C-2-3){};  \&
  \\
};
\draw[<-,very thick,>=latex,color=chocolate](C-2-1) -- ++ (90:0.5cm);  % ($(C-2-1)+(-0.125cm,0cm)$)

\draw[->,bend right,distance=0.65cm] (C-2-1) to %node[above]{$\aacti{2}$} 
                                                (C-2-2);
\draw[->,bend right,distance=0.65cm,very thick,red] (C-2-1) to % node[left]{$\aacti{3}$}  
                                                                             (C-2-3); 

%\path(C-2-1) ++ (1.15cm,1.25cm) node{\Large $\chartnei{2}$};
\draw[->,bend right,distance=0.65cm]  (C-2-2) to %node[above]{$\aacti{1}$} 
                                                 (C-2-1); 
\draw[->,bend left,distance=0.65cm,red]  (C-2-2) to %node[right]{$\aacti{3}$} 
                                                                  (C-2-3);

\draw[->,bend right,distance=0.45cm] (C-2-3) to %node[left]{$\aacti{1}$}  
                                                ($(C-2-1)+(0.25cm,-0.2cm)$);
\draw[->,bend left,distance=0.65cm,red]  (C-2-3) to %node[right]{$\aacti{2}$} 
                                                                  (C-2-2);

%======================================
% (L3) violating non-example loop chart  
%====================================== 
\matrix[anchor=center,row sep=1.75cm,column sep=1cm,every node/.style={draw,thick,circle,minimum width=2.5pt,fill,inner sep=0pt,outer sep=2pt},
        ampersand replacement=\&] at (7.25,0) {
  \node[chocolate](C3-0){}; 
  \\
  \node[chocolate](C3-1){};  
  \\
};
\draw[<-,very thick,>=latex,color=chocolate,shorten <=0.06cm](C3-0) -- ++ (90:0.56cm);

\draw[very thick,chocolate] (C3-0) circle (0.12cm);
\draw[very thick,chocolate] (C3-1) circle (0.12cm);

\draw[->,very thick,red,bend left,distance=0.6cm,shorten <=0.09cm, shorten >=0.09cm] (C3-0) to node[above]{$\phantom{\aact}$} (C3-1); 
\draw[->,forestgreen,bend left,distance=0.6cm,shorten <=0.09cm, shorten >=0.09cm]  (C3-1) to node[below]{$\phantom{\aact}$} (C3-0);

%------------
% label chart
%------------
\path (C2-2) ++ (4.95cm,-1cm) node{{\forestgreen{(L1)},\forestgreen{(L2)},{\colorred{\st{(L3)}}}}};

%=============================================================== 
% 2nd example loop chart 
%===============================================================
%  
\matrix[anchor=center,row sep=1cm,every node/.style={draw,very thick,circle,minimum width=2.5pt,fill,inner sep=0pt,outer sep=2pt}] at (9.75,0) { %(11.25,0) 
  \node(C5-0){}; 
  \\
  \node(C5-1){};
  \\
  \node(C5-2){};
  \\
};
\draw[<-,very thick,>=latex,color=chocolate](C5-1) -- ++ (135:0.5cm);

%---------
% from C5-0
%---------
\draw[->,forestgreen] (C5-0) to node[right]{${\phantom{\aact}}$} (C5-1);

\draw[->,forestgreen] (C5-0) to node[right]{${\phantom{\aact}}$} (C5-1);

%---------
% from C5-1
%--------- 
\draw[->,very thick,forestgreen] (C5-1) to node[right]{$\hspace*{-0.65pt}{\phantom{\aact}}$} (C5-2);

%---------
% from C5-2
%---------
% \draw[->,shorten >=3pt,shorten <=3pt] ([shift=(270:1.13cm)]C5-1) arc[radius=1.13cm,start angle=-90,end angle=90];
% \draw[->,shorten >=2pt,shorten <=3pt] ([shift=(270:1.13cm)]C5-1) arc[radius=0.565cm,start angle=270,end angle=90];
\draw[->,out=180,in=180,distance=1.05cm] (C5-2) to node[left]{$\phantom{\bact}$} (C5-1);

\draw[->,color=forestgreen,
         shorten >=3pt,shorten <=3pt] ([shift=(270:1.13cm)]C5-1) arc[radius=1.13cm,start angle=-90,end angle=90];

\draw[->,color=forestgreen,
         out=180,in=180,distance=1.05cm] (C5-2) to node[left]{$\phantom{\bact}$} (C5-1);
        
\path (C5-1) ++ (1.25cm,0cm) node{${\phantom{\bact}}$};
\path ([shift=(270:0.575cm)]C5-1) ++ (-0.83cm,0cm) node{${\phantom{\bact}}$};

%---------------
% label vertices
%---------------
\path (C5-0) ++ (-0.275cm,0cm) node{${\averti{0}}$};
\path (C5-1) ++ (+0.3cm,0cm) node{${\averti{1}}$}; 
\path (C5-2) ++ (+0.3cm,-0.275cm) node{${\averti{2}}$};

%------------
% label chart
%------------
\path (C5-2) ++ (0cm,-1cm) node{{\forestgreen{(L1)},\forestgreen{(L2)},{\forestgreen{(L3)}}}};
\path (C5-2) ++ (0cm,-1.5cm) node{\forestgreen{loop \onechart}};

%=============================================================== 
% example loop subchart 
%===============================================================
%  
\matrix[anchor=center,row sep=1cm,every node/.style={draw,very thick,circle,minimum width=2.5pt,fill,inner sep=0pt,outer sep=2pt}] at (12.5,0) { %(11.25,0) 
  \node(C6-0){}; 
  \\
  \node(C6-1){};
  \\
  \node(C6-2){};
  \\
};
\draw[<-,very thick,>=latex,color=chocolate](C6-1) -- ++ (135:0.5cm);

%---------
% from C6-0
%---------
\draw[->] (C6-0) to node[right]{${\phantom{\aact}}$} (C6-1);

\draw[->,darkcyan] (C6-0) to node[right]{${\phantom{\aact}}$} (C6-1);

%---------
% from C6-1
%--------- 
\draw[->,darkcyan] (C6-1) to node[right]{$\hspace*{-0.65pt}{\phantom{\aact}}$} (C6-2);

%---------
% from C6-2
%---------
\draw[->,out=180,in=180,distance=1.05cm] (C6-2) to node[left]{$\phantom{\bact}$} (C6-1);

\draw[->,very thick,color=darkcyan,shorten >=3pt,shorten <=3pt] ([shift=(270:1.13cm)]C6-1) arc[radius=1.13cm,start angle=-90,end angle=90];

\draw[->,out=180,in=180,distance=1.05cm] (C6-2) to node[left]{$\phantom{\bact}$} (C6-1);
        
\path (C6-1) ++ (1.25cm,0cm) node{${\phantom{\bact}}$};
\path ([shift=(270:0.575cm)]C6-1) ++ (-0.83cm,0cm) node{${\phantom{\bact}}$};

\draw[<-,very thick,>=latex,color=darkcyan](C6-2) -- ++ (270:0.5cm);

%---------------
% label vertices
%---------------
\path (C6-0) ++ (-0.275cm,0cm) node{${\averti{0}}$};
\path (C6-1) ++ (+0.3cm,0cm) node{${\averti{1}}$}; 
\path (C6-2) ++ (+0.3cm,-0.275cm) node{${\averti{2}}$};

%------------
% label chart
%------------
\path (C6-2) ++ (0.15cm,-1cm) node{\darkcyan{loop \subonechart}};

\path ($(C5-0)!0.5!(C6-0)$) ++ (0cm,0.25cm) node{\Large $\aonechart$};
\path (C6-2) ++ (0.95cm,-0.15cm) node{\Large $\darkcyan{\aoneloop}$};

\end{tikzpicture}
\end{center}  
  \vspace*{-1.5ex}%
  \caption{\label{fig:exs:nonexs:looponechart}
    Four \protect\onecharts\ (action labels ignored) 
      that violate at least one loop \protect\onechart\ condition \ref{loop:1}, \ref{loop:2}, or \ref{loop:3}, 
    and a loop \protect\onechart\ $\aonechart$ with one of~its~loop~\protect\subonecharts~$\darkcyan{\aoneloop}$.}
\end{figure}%
\begin{exa}\label{exa:loop:chart}
  In Figure~\ref{fig:exs:nonexs:looponechart} we have gathered, on the left,
    four examples of \onecharts\ (with action labels ignored) that are \emph{not} loop \onecharts:
      each of them violates one of the conditions \ref{loop:1}, \ref{loop:2}, or \ref{loop:3}.
  The paths in red indicate violations of \ref{loop:2}, and \ref{loop:3}, respectively,
    where the thicker arrows from the start vertex indicate transitions that would need to be (but are not) \loopentrytransitions.
  However, the \onechart~$\aonechart$ in Figure~\ref{fig:exs:nonexs:looponechart} is~indeed~a~loop~\onechart.   
\end{exa}

\begin{defi}[loop \protect\subonechart\ of \protect\onechart]\label{def:loop:subonechart}
  Let $\aonechart = \tuple{\verts,\actions,\sone,\start,\transs,\termexts}$
    be a \onechart.
    
  A \emph{loop \subonechart\ of} a \onechart~$\aonechart$
    is a loop \onechart~$\aoneloop$
    that is a \subonechart\ of $\aonechart$ 
      with some vertex $\avert\in\verts$ of $\aonechart$ as start vertex
    such that $\aoneloop$ is formed, for a nonempty set $\asettranss$ of transitions of $\aonechart$ from $\avert$,
    by all vertices and transitions on paths that start with a transition in $\asettranss$ and continue onward until $\avert$ is reached again;
    in this case the transitions in $\asettranss$ are the \loopentrytransitions~of~$\aoneloop$,
      and we say that the transitions in $\asettranss$ \emph{induce} $\aoneloop$.
\end{defi}

\begin{exa}
  In the \onechart~$\aonechart$ in Figure~\ref{fig:exs:nonexs:looponechart}
    we have illustrated (in the right copy~of~$\aonechart$)
      a loop \subonechart~$\darkcyan{\aoneloop}$ of $\aonechart$ with start vertex $\averti{2}$
        that is induced by the set $\asettranss \defdby \setexp{ \triple{\averti{2}}{\aact}{\averti{0}} }$ 
          that consists of the single \loopentry\ transition from $\averti{2}$ to $\averti{0}$, assuming that its action label is $\aact$.
  Then $\darkcyan{\aoneloop}$ consists of all \darkcyan{colored} transitions.
  --~We note that also the generated \subonechart~$\gensubchartofby{\aonechart}{\averti{2}}$ of $\aonechart$ 
       that is rooted at $\averti{2}$ is a loop \subonechart\ of $\aonechart$,
         because it is a loop \onechart, and it 
           that is generated by the set of both of the two transitions from $\averti{2}$.
\end{exa}

\begin{defi}[(single-/multi-step) loop elimination]\label{def:loopelimred}
  Let $\aonechart = \tuple{\verts,\actions,\sone,\start,\transs,\termexts}$ be a \onechart.
  Suppose that $\aoneloop$, $\aoneloopi{1}$, \ldots, $\aoneloopi{n}$ are loop \subonecharts\ 
    with sets $\asettranss, \asettranssi{1},\ldots,\asettranssi{n}$ of \loopentry\ transitions 
      from their start vertices $\avert,\averti{1},\ldots,\averti{n}\in\verts$, respectively,
  for $n\in\nat\setminus\setexp{0}$.
 
  The \onechart~$\aonechartacc$ that results by the \emph{elimination of (the loop \subonechart) $\aoneloop$ from $\aonechart$}
    arises by removing all \loopentrytransitions\ in $\asettranss$ of $\aoneloop$ from $\aonechart$, 
      and then also removing all vertices and transitions that become unreachable;
        in this case we write $\aonechart \loopelimred \aonechartacc$,
          and also say that 
            $\aonechartacc$ results by a \emph{single-step loop elimination} from $\aonechart$.
                     
    Suppose that the loop \subonecharts~$\aoneloopi{1}$, \ldots, $\aoneloopi{n}$ 
      satisfy the following two conditions: 
      \begin{enumerate}[label={(ms-\arabic{*})}]
        \item{}\label{msone}
          their sets $\asettranssi{1},\ldots,\asettranssi{n}$ of \loopentry\ transitions are disjoint (that is, $\asettranssi{i} \cap \asettranssi{j} \neq \emptyset$
                                                                                for all $i,j\in\setexp{1,\ldots,n}$ with $i \neq j$),
        \item{}\label{mstwo}   
          no start vertex of a loop \subonechart\ $\aoneloopi{i}$ is in the body of another one $\aoneloopi{j}$,
            for all $i,j\in\setexp{1,\ldots,n}$ with $i\neq j$.
      \end{enumerate}
    Then we say 
      that a \onechart~$\aonechartacc$ results by the \emph{multi-step loop elimination of $\aoneloopi{1}$, \ldots, $\aoneloopi{n}$ from $\aonechart$}
      if $\aonechartacc$ arises from $\aonechart$ 
        by removing all \loopentrytransitions\ in $\asettranssi{1},\ldots,\asettranssi{n}$ of $\aoneloopi{1}$, \ldots, $\aoneloopi{n}$ from $\aonechart$, 
        and then also removing all vertices and transitions that become unreachable;
          in this case we write $\aonechart \loopelimllred \aonechartacc$,
            and say that 
              $\aonechartacc$ results by a \emph{multi-step loop elimination} from $\aonechart$.
\end{defi}

\begin{lem}\label{lem:loopelimred:loopelimllred}
  ${\sloopelimred}  \,\subseteq\,   {\sloopelimllred}   \,\subseteq\,   {\sloopelimredtc}$,
    and consequently \mbox{}
      ${\sloopelimredrtc} = {\sloopelimllredrtc}$.
\end{lem}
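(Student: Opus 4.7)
The plan is to establish the two inclusions separately and then derive the equality of reflexive--transitive closures as an immediate consequence.

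For the first inclusion ${\sloopelimred}\subseteq{\sloopelimllred}$: this is essentially by definition. Suppose $\aonechart \loopelimred \aonechartacc$ by elimination of a single loop \subonechart~$\aoneloop$ with \loopentry\ transitions~$\asettranss$. Taking $n=1$ and $\aoneloopi{1} \defdby \aoneloop$, conditions \ref{msone} and \ref{mstwo} of Definition~\ref{def:loopelimred} are vacuously satisfied, so the same elimination also witnesses $\aonechart \loopelimllred \aonechartacc$.

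For the second inclusion ${\sloopelimllred}\subseteq{\sloopelimredtc}$: suppose $\aonechart \loopelimllred \aonechartacc$ via multi-step elimination of loop \subonecharts\ $\aoneloopi{1},\ldots,\aoneloopi{n}$ with \loopentry\ transition sets $\asettranssi{1},\ldots,\asettranssi{n}$. I would proceed by induction on $n$. The base case $n=1$ coincides with single-step elimination. For the inductive step I would first perform the single-step elimination of $\aoneloopi{1}$, yielding some intermediate \onechart~$\aonechartacci{1}$ with $\aonechart \loopelimred \aonechartacci{1}$, and then show that the data $\aoneloopi{2},\ldots,\aoneloopi{n}$ (with their vertex and transition sets restricted to those that survive the single-step elimination) still form loop \subonecharts\ of $\aonechartacci{1}$ satisfying \ref{msone} and \ref{mstwo}, and that the multi-step elimination of these $n-1$ loops from $\aonechartacci{1}$ yields $\aonechartacc$. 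The inductive hypothesis applied to this $(n-1)$-step elimination produces a chain $\aonechartacci{1} \loopelimredtc \aonechartacc$, and composition with $\aonechart \loopelimred \aonechartacci{1}$ gives the desired $\aonechart \loopelimredtc \aonechartacc$.

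The main obstacle in this induction is the preservation claim for the intermediate step. Concretely, one must verify that after removing $\asettranssi{1}$ and pruning the now-unreachable vertices and transitions, each of the remaining $\aoneloopi{j}$ for $j\ge 2$ is a genuine loop \subonechart\ of $\aonechartacci{1}$ in the sense of Definition~\ref{def:loop:subonechart}, with \loopentry\ transition set $\asettranssi{j}$ (or a surviving subset of it). The disjointness condition \ref{msone} ensures that no transition of $\asettranssi{j}$ is directly removed as part of $\asettranssi{1}$; the non-overlap condition \ref{mstwo} ensures that the start vertex of $\aoneloopi{j}$ is not in the body of $\aoneloopi{1}$ and hence not removed for that reason. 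The residual concern is that some body vertex or even the start vertex of $\aoneloopi{j}$ might become unreachable from $\start$ in $\aonechartacci{1}$; this is handled by the standard ``confluence'' observation that removing a set of edges and then pruning unreachable components gives the same result whether performed all at once on $\asettranssi{1}\cup\cdots\cup\asettranssi{n}$ or in stages, so that whatever disappears across the single steps is exactly what disappears under the multi-step elimination, and the loop-\subonechart\ conditions \ref{loop:1}--\ref{loop:3} are inherited by the surviving restrictions of $\aoneloopi{2},\ldots,\aoneloopi{n}$. Once this preservation is in place, the inductive step goes through. The final equality ${\sloopelimredrtc}={\sloopelimllredrtc}$ then follows by applying the reflexive--transitive closure operator to the established chain ${\sloopelimred}\subseteq{\sloopelimllred}\subseteq{\sloopelimredtc}$, giving ${\sloopelimredrtc}\subseteq{\sloopelimllredrtc}\subseteq({\sloopelimredtc})^{*}={\sloopelimredrtc}$.
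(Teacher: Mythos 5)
Your proof is correct and takes essentially the same route as the paper's: the first inclusion is immediate since a single-step elimination is a multi-step elimination with $n=1$, and the second is obtained by sequentializing the multi-step elimination into single steps, using \ref{msone} so that each \loopentry\ transition is removed exactly once and \ref{mstwo} so that the remaining $\aoneloopi{j}$ survive as loop \subonecharts\ of the intermediate \onecharts, after which the equality of reflexive--transitive closures follows by closing the chain of inclusions. Your additional care about reachability after garbage collection is a detail the paper's proof glosses over, but it does not change the argument.
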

\begin{proof}
  First, ${\sloopelimred}  \subseteq {\sloopelimllred}$ holds because
    every single-step loop elimination is also a multi-step loop elimination.
  Crucially, ${\sloopelimllred} \subseteq {\sloopelimredtc}$ holds, because
    every multi-step loop elimination of loop \subonecharts~$\aoneloopi{1},\ldots,\aoneloopi{n}$ in a \onechart~$\aonechart$
      with \loopentry\ transitions $\asettranssi{1},\ldots,\asettranssi{n}$ 
      can be implemented as a sequence of single-step loop eliminations of $\aoneloopi{1},\ldots,\aoneloopi{n}$ irrespective of the chosen order:
        hereby \ref{msone} guarantees that every \loopentry\ transition belongs uniquely to one of $\aoneloopi{1},\ldots,\aoneloopi{n}$
          and thus is removed in precisely one step;
        and \ref{mstwo} ensures that, after the elimination of a loop \subonechart~$\aoneloopi{i}$,
          another one $\aoneloopi{j}$ with $j\neq i$ that has not yet been eliminated is still a loop \subonechart.
  Finally these statements imply that the many-step versions of single-step and multi-step loop elimination coincide.          
\end{proof}

\begin{defi}[LEE and LLEE]\label{def:LEE}
  Let $\aonechart = \tuple{\verts,\actions,\sone,\start,\transs,\termexts}$ be a \onechart.
   
  We say that $\aonechart$ has the \emph{loop existence and elimination property} (\LEE)
    if repeated loop elimination started on~$\aonechart$ leads to a \onechart\ without an infinite path,
      that is, if there is multi-step loop elimination reduction sequence $\aonechart \loopelimllredrtc \aonechartacc$ 
                                                                          (or by Lemma~\ref{lem:loopelimred:loopelimllred} equivalently,
                                                                             a single-step loop elimination reduction sequence
                                                                             $\aonechart \loopelimredrtc \aonechartacc$)
        that leads to a \onechart~$\aonechartacc$ without an infinite path.
    
  If, in a successful elimination sequence from a \onechart~$\aonechart$,
    loop-entry transitions are never removed that depart from a vertex in the body of a previously eliminated loop \subonechart,
    then we say that $\aonechart$ satisfies \emph{layered \LEE} (\LLEE),
    and that $\aonechart$ is a \emph{\LLEEonechart}.
\end{defi}

  \input{figs/fig-ex-LEE-rev.tex}%
  \input{figs/fig-nonex-ex-LLEEw}%
\begin{exa}\label{exa:LEE}
  In Figure~\ref{fig:ex:LEE} we have illustrated a successful run of the loop elimination procedure for the \onechart~$\aonechart$ there. 
  % In brown we highlight start vertices by \picarrowstart, and immediate termination with a boldface ring.
  The \loopentry\ transitions of loop \subonecharts\ that are eliminated in the next step, respectively, are marked in bold. 
  We have neglected action labels there, except for indicating \onetransitions\ by dotted arrows.
  Since the graph $\aonechart'''$ that is reached
    by three loop-subgraph elimination steps $\aonechart \loopelimredtc \aonechart'''$ from the \onechart\ $\aonechart$ does not have 
  an infinite path, and since no \loopentry\ transitions have been removed from a previously eliminated loop \subonechart,
  we conclude that $\aonechart$ satisfies \LEE\ and \LLEE. 

  In Figure~\ref{fig:nonex:ex:LLEEw} we illustrate two runs of the loop elimination procedure from a \protect\onechart~$\conechart$:
  The one from $\conechart$ to the left only witnesses \LEE\ but not \LLEE, 
  since in the second elimination step
    a \loopentry\ transition (drawn \textcolor{red}{red}) is removed from the body of the loop \subonechart\ that is eliminated in the first step
    (drawn in \forestgreen{green}).
  The one from $\conechart$ to the right witnesses \LLEE, because transitions are only removed sequentially at the same vertex,
    and hence no \loopentry\ transition is removed from the body of a loop \onechart\ that was eliminated~before. 
    
  The two process graphs $G_1$ and $G_2$ on page~\pageref{fig:milner-bosscher-expressible}, which are not expressible by star expressions modulo bisimilarity,
    do not satisfy \LLEE\ nor \LEE:
      neither of them has a loop subchart (as argued in Example~\ref{exa:loop:chart}),
        yet both of them facilitate infinite paths. 
\end{exa}

\begin{defi}[\protect\LLEEwitness]\label{def:LLEEwitness}
  Let $\aonechart = \tuple{\verts,\actions,\sone,\start,\transs,\exts}$ be a \onechart.\nopagebreak[4]
  
  By an \emph{\entrybodylabeling} of $\aonechart$ 
    we mean a \onechart~$\aonecharthat = \tuple{\verts,\actions\times\nat,\sone,\start,\transshat,\termexts}$ with actions in $\actions\times\nat$
  that results from $\aonechart$ by attaching to every transition of $\aonechart$ an additional \emph{marking label} in $\nat$
    (the transitions in $\transshat$ are marking-labeled versions of the transitions in $\transs$). 
   
  A \emph{\LLEEwitness\ $\aonecharthat$ of} a \onechart~$\aonechart$
    is an \entrybodylabeling\ of $\aonechart$
  that is the recording of a \LLEE-guaranteeing, 
                             successful run 
                             $\aonechart \loopelimllredrtc \aonechartacc$
                             of the multi-step loop elimination procedure on $\aonechart$
  that results by attaching to a transition $\atrans$ of $\aonechart$ the marking label $n$ for $n\in\natplus$ 
   (in pictures indicated as $\looplab{n}$, in steps as $\sredi{\looplab{n}}$) 
   forming a \emph{\loopentry\ transition}
  if $\atrans$ is eliminated in the $n$\nb-th multi-step,
  and by attaching marking label~$0$ to all other transitions of $\aonechart$
   (in pictures neglected, in steps indicated as $\sredi{\bodylab}$)
   forming a \emph{body transition}. 
   
  We say that a \LLEEwitness~$\aonecharthat$ of a \onechart~$\aonechart$ \emph{is guarded}
    if the action labels of the \loopentrytransitions\ of $\aonecharthat$ are proper (different from $\sone$).  %, which means that they have a proper-action~transition~label.
  We say that a \LLEEonechart~$\aonechart$ is \emph{guarded} if $\aonechart$ has a guarded \LLEEwitness.
\end{defi}

The definition above of guardedness for \LLEEwitnesses\ %(and its relationship with the property `weakly guarded' for \onecharts)
  is justified in view of the fact that \loopentrytransitions\ divide infinite paths in \LLEEwitnesses\ into finite segments 
    that consist only of \bodytransitions\ with perhaps a leading \loopentrytransition.
This is a consequence of the fact that \LLEEwitnesses\ do not permit infinite paths of \bodytransitions\ 
    (see Lemma~\ref{lem:descsteps:bodysteps:wf},~%
                   \ref{it:2:lem:descsteps:bodysteps:wf}).                 
Therefore guarded \LLEEwitnesses, in which \loopentrytransitions\ must be proper,
  do not permit infinite paths of \onetransitions.
It also follows that the underlying \mbox{(\LLEE-)\onechart} of a guarded \LLEEwitness\ is weakly guarded.

  \begin{figure}[t]
\begin{center}
\begin{tikzpicture}
  
\matrix[anchor=center,row sep=1cm,column sep=0.75cm,ampersand replacement=\&,
        every node/.style={draw,very thick,circle,minimum width=2.5pt,fill,inner sep=0pt,outer sep=2pt}] at (0,0) {
    \&\ \node[chocolate](v){};
  \\[-0.25ex]              
  \node(v1){};
    \&\ \&\ \node(v2){};
  \\[0.25cm]
  \node(v11){};
    \&\ \&\ \node(v21){};
  \\   
  };   
\path (v) ++ (-1.15cm,0.15cm) node(label){\LARGE $\aonecharthatjp{1}$};

\draw[<-,very thick,>=latex,chocolate,shorten <= 2pt](v) -- ++ (90:0.55cm);   
% (v) 
\draw[thick,chocolate] (v) circle (0.12cm);
\path (v) ++ (0.25cm,-0.3cm) node{\small $\avert$};
\draw[->,thick,darkcyan,shorten >= 0.175cm,shorten <= 2pt] 
  (v) to %node[left,pos=0.36,xshift=0.075cm]{\small $\small\black{\aact}$} 
         node[right,pos=0.4,xshift=-0.075cm,yshift=1pt]{\small $\loopnsteplab{3}$}  (v11);
\draw[->,thick,darkcyan,shorten >= 0.175cm,shorten <= 2pt] 
  (v) to %node[right,pos=0.36,xshift=-0.05cm]{\small $\small\black{\bact}$} 
         node[left,pos=0.6,xshift=0.075cm,yshift=1pt]{\small $\loopnsteplab{3}$} (v21);

% (v1) 
\path (v1) ++ (-0.225cm,0.25cm) node{\small $\averti{1}$};
\draw[->,thick,darkcyan,shorten >= 0pt]
  (v1) to % node[left,pos=0.25,xshift=0.075cm]{\small $\small\black{\aact}$} 
          node[left,pos=0.6,xshift=0.075cm]{\small $\loopnsteplab{1}$} (v11);
\draw[->,thick,densely dotted,out=90,in=180,distance=0.5cm,shorten >=2pt](v1) to (v);
\draw[->,shorten <= 0pt,shorten >= 0pt] (v1) to % node[below]{$\bact$} 
                                                (v21); 

% (v11)
\path (v11) ++ (0cm,-0.25cm) node{\small $\averti{11}$};
\draw[->,thick,densely dotted,out=180,in=180,distance=0.75cm](v11) to (v1);

% (v2) 
\path (v2) ++ (0.25cm,0.25cm) node{\small $\averti{2}$};
\draw[->,thick,darkcyan,shorten >= 0pt]
  (v2) to % node[right,pos=0.25,xshift=-0.075cm]{\small $\small\black{\bact}$} 
          node[right,pos=0.6,xshift=-0.075cm]{\small $\loopnsteplab{2}$} (v21);
\draw[->,thick,densely dotted,out=90,in=0,distance=0.5cm,shorten >= 2pt](v2) to (v);

% (v21) 
\path (v21) ++ (0cm,-0.25cm) node{\small $\averti{21}$};
\draw[->,thick,densely dotted,out=0,in=0,distance=0.75cm](v21) to (v2);

\matrix[anchor=center,row sep=1cm,column sep=0.75cm,ampersand replacement=\&,
        every node/.style={draw,very thick,circle,minimum width=2.5pt,fill,inner sep=0pt,outer sep=2pt}] at (4.5,0) {
    \&\ \node[chocolate](v){};
  \\[-0.25ex]              
  \node(v1){};
    \&\ \&\ \node(v2){};
  \\[0.25cm]
  \node(v11){};
    \&\ \&\ \node(v21){};
  \\   
  };   
\path (v) ++ (-1.15cm,0.15cm) node(label){\LARGE $\aonecharthatjp{2}$};

\draw[<-,very thick,>=latex,chocolate,shorten <= 2pt](v) -- ++ (90:0.55cm);   
% (v) 
\draw[thick,chocolate] (v) circle (0.12cm);
\path (v) ++ (0.25cm,-0.3cm) node{\small $\avert$};
\draw[->,thick,darkcyan,shorten >= 0.175cm,shorten <= 2pt] 
  (v) to % node[left,pos=0.36,xshift=0.075cm]{\small $\small\black{\aact}$} 
         node[right,pos=0.4,xshift=-0.075cm,yshift=1pt]{\small $\loopnsteplab{4}$}  (v11);
\draw[->,thick,darkcyan,shorten >= 0.175cm,shorten <= 2pt] 
  (v) to % node[right,pos=0.36,xshift=-0.05cm]{\small $\small\black{\bact}$} 
         node[left,pos=0.6,xshift=0.075cm,yshift=1pt]{\small $\loopnsteplab{3}$} (v21);

% (v1) 
\path (v1) ++ (-0.225cm,0.25cm) node{\small $\averti{1}$};
\draw[->,thick,darkcyan,shorten >= 0pt]
  (v1) to % node[left,pos=0.25,xshift=0.075cm]{\small $\small\black{\aact}$} 
          node[left,pos=0.6,xshift=0.075cm]{\small $\loopnsteplab{2}$} (v11);
\draw[->,thick,densely dotted,out=90,in=180,distance=0.5cm,shorten >=2pt](v1) to (v);
\draw[->,shorten <= 0pt,shorten >= 0pt] (v1) to %node[below]{$\bact$} 
                                                (v21); 

% (v11)
\path (v11) ++ (0cm,-0.25cm) node{\small $\averti{11}$};
\draw[->,thick,densely dotted,out=180,in=180,distance=0.75cm](v11) to (v1);

% (v2) 
\path (v2) ++ (0.25cm,0.25cm) node{\small $\averti{2}$};
\draw[->,thick,darkcyan,shorten >= 0pt]
  (v2) to % node[right,pos=0.25,xshift=-0.075cm]{\small $\small\black{\bact}$} 
          node[right,pos=0.6,xshift=-0.075cm]{\small $\loopnsteplab{1}$} (v21);
\draw[->,thick,densely dotted,out=90,in=0,distance=0.5cm,shorten >= 2pt](v2) to (v);

% (v21) 
\path (v21) ++ (0cm,-0.25cm) node{\small $\averti{21}$};
\draw[->,thick,densely dotted,out=-0,in=0,distance=0.75cm](v21) to (v2);

\matrix[anchor=center,row sep=1cm,column sep=0.75cm,ampersand replacement=\&,
        every node/.style={draw,very thick,circle,minimum width=2.5pt,fill,inner sep=0pt,outer sep=2pt}] at (9,0) {
    \&\ \node[chocolate](v){};
  \\[-0.25ex]              
  \node(v1){};
    \&\ \&\ \node(v2){};
  \\[0.25cm]
  \node(v11){};
    \&\ \&\ \node(v21){};
  \\   
  };   
\path (v) ++ (-1.15cm,0.15cm) node(label){\LARGE $\aonecharthatjp{3}$};

\draw[<-,very thick,>=latex,chocolate,shorten <= 2pt](v) -- ++ (90:0.55cm);   
% (v) 
\draw[thick,chocolate] (v) circle (0.12cm);
\path (v) ++ (0.25cm,-0.3cm) node{\small $\avert$};
\draw[->,thick,darkcyan,shorten >= 0.175cm,shorten <= 2pt] 
  (v) to % node[left,pos=0.36,xshift=0.075cm]{\small $\small\black{\aact}$} 
         node[right,pos=0.4,xshift=-0.075cm,yshift=1pt]{\small $\loopnsteplab{2}$}  (v11);
\draw[->,thick,darkcyan,shorten >= 0.175cm,shorten <= 2pt] 
  (v) to % node[right,pos=0.36,xshift=-0.05cm]{\small $\small\black{\bact}$} 
         node[left,pos=0.6,xshift=0.075cm,yshift=1pt]{\small $\loopnsteplab{2}$} (v21);

% (v1) 
\path (v1) ++ (-0.225cm,0.25cm) node{\small $\averti{1}$};
\draw[->,thick,darkcyan,shorten >= 0pt]
  (v1) to % node[left,pos=0.25,xshift=0.075cm]{\small $\small\black{\aact}$} 
          node[left,pos=0.6,xshift=0.075cm]{\small $\loopnsteplab{1}$} (v11);
\draw[->,thick,densely dotted,out=90,in=180,distance=0.5cm,shorten >=2pt](v1) to (v);
\draw[->,shorten <= 0pt,shorten >= 0pt] (v1) to %node[below]{$\bact$} 
                                                (v21); 

% (v11)
\path (v11) ++ (0cm,-0.25cm) node{\small $\averti{11}$};
\draw[->,thick,densely dotted,out=180,in=180,distance=0.75cm](v11) to (v1);

% (v2) 
\path (v2) ++ (0.25cm,0.25cm) node{\small $\averti{2}$};
\draw[->,thick,darkcyan,shorten >= 0pt]
  (v2) to % node[right,pos=0.25,xshift=-0.075cm]{\small $\small\black{\bact}$} 
          node[right,pos=0.6,xshift=-0.075cm]{\small $\loopnsteplab{1}$} (v21);
\draw[->,thick,densely dotted,out=90,in=0,distance=0.5cm,shorten >= 2pt](v2) to (v);

% (v21) 
\path (v21) ++ (0cm,-0.25cm) node{\small $\averti{21}$};
\draw[->,thick,densely dotted,out=-0,in=0,distance=0.75cm](v21) to (v2);  
  
\end{tikzpicture}
\end{center}  
  \vspace*{-1.5ex}
  \caption{\label{fig:ex:LLEEw-1-2-3} 
           Three \protect\LLEEwitnesses\ of the \protect\onechart~$\aonechart$ in Figure~\protect\ref{fig:ex:LEE},
             of which $\protect\aonecharthatjp{1}$ and $\protect\aonecharthatjp{3}$ 
                  are the recordings of the successful single-step and multi-step runs of the loop elimination procedure in Figure~\protect\ref{fig:ex:LEE},
                  respectively.}
\end{figure}  %
  \begin{figure}[t]
\begin{center}
  \begin{center}\vspace{-0.75ex}\label{non-ex-LLEEw}%
    \scalebox{0.95}{\begin{tikzpicture}  
\matrix[anchor=center,row sep=0.6cm,column sep=1.15cm,ampersand replacement=\&,
        every node/.style={draw,very thick,circle,minimum width=2.5pt,fill,inner sep=0pt,outer sep=2pt}] at (0,0) {
  \node(v){}; 
  \\
               \& \node(u){};
  \\
  \node(w1){};
  \\
  \\
  \node(w2){}; 
  \\   
  };     
   
\path (v) ++ (0.25cm,0.15cm) node{$\avert$}; 
\draw[<-,very thick,>=latex,chocolate](v) -- ++ (90:0.5cm);
\draw[->] (v) to (u);
\draw[->] (v) to (w1);
\path (v) ++ (-0.7cm,0.45cm) node(label){\LARGE $\conecharti{1}$};

\path (u) ++ (-0.05cm,-0.25cm) node{$\cvert$}; 
\draw[->] (u) to (w1);
\draw[->,out=-50,in=50,distance=1.2cm] (u) to (u);
\path (w1) ++ (-0.325cm,0cm) node{$\bverti{1}$};
\draw[->] (w1) to (w2);
\path (w2) ++ (0.35cm,-0.15cm) node{$\bverti{2}$};
\draw[->,out=180,in=180,distance=1.85cm] (w2) to (v);

\matrix[anchor=center,row sep=0.6cm,column sep=1.15cm,ampersand replacement=\&,
        every node/.style={draw,very thick,circle,minimum width=2.5pt,fill,inner sep=0pt,outer sep=2pt}] at (4.5,0) {
  \node(v){}; 
  \\
               \& \node(u){};
  \\
  \node(w1){};
  \\
  \\
  \node(w2){}; 
  \\   
  };     
   
\path (v) ++ (0.25cm,0.15cm) node{$\avert$}; 
\draw[<-,very thick,>=latex,chocolate](v) -- ++ (90:0.5cm);
\draw[->] (v) to (u);
\draw[->,very thick,darkcyan] (v) to node[right,xshift=-0.075cm,yshift=0.1cm]{$\loopnsteplab{1}$} (w1);
\path (v) ++ (-0.7cm,0.5cm) node(label){\LARGE $\conecharthatijp{1}{1}$};
\path (u) ++ (-0.05cm,-0.25cm) node{$\cvert$}; 
\draw[->] (u) to (w1);
\draw[->,very thick,darkcyan,out=-50,in=50,distance=1.2cm] (u) to node[right,xshift=-0.075cm]{$\loopnsteplab{2}$} (u);
\path (w1) ++ (-0.325cm,0cm) node{$\bverti{1}$};
\draw[->,very thick,darkcyan] (w1) to node[right,xshift=-0.075cm,yshift=0.1cm]{$\loopnsteplab{3}$} (w2);
\path (w2) ++ (0.35cm,-0.15cm) node{$\bverti{2}$};
\draw[->,out=180,in=180,distance=1.85cm] (w2) to (v);

\matrix[anchor=center,row sep=0.6cm,column sep=1.15cm,ampersand replacement=\&,
        every node/.style={draw,very thick,circle,minimum width=2.5pt,fill,inner sep=0pt,outer sep=2pt}] at (9,0) {
  \node(v){}; 
  \\
               \& \node(u){};
  \\
  \node(w1){};
  \\
  \\
  \node(w2){}; 
  \\   
  };     
   
\path (v) ++ (0.25cm,0.15cm) node{$\avert$}; 
\draw[<-,very thick,>=latex,chocolate](v) -- ++ (90:0.5cm);
\draw[->,very thick,darkcyan] (v) to node[above,pos=0.6,xshift=0.025cm,yshift=0.025cm]{$\loopsteplab{3}$} (u);
\draw[->,very thick,darkcyan] (v) to node[right,xshift=-0.075cm,yshift=0.1cm]{$\loopnsteplab{1}$} (w1);
\path (v) ++ (-0.7cm,0.5cm) node(label){\LARGE $\conecharthatijp{1}{2}$};
\path (u) ++ (-0.05cm,-0.25cm) node{$\cvert$}; 
\draw[->] (u) to (w1);
\draw[->,very thick,darkcyan,out=-50,in=50,distance=1.2cm] (u) to node[right,xshift=-0.075cm]{$\loopsteplab{2}$} (u);
\path (w1) ++ (-0.325cm,0cm) node{$\bverti{1}$};
\draw[->] (w1) to (w2);

\path (w2) ++ (0.35cm,-0.15cm) node{$\bverti{2}$};
\draw[->,out=180,in=180,distance=1.85cm] (w2) to (v);

\end{tikzpicture}  }\vspace*{-1.5ex}
  \end{center}
\end{center}
  \caption{\label{fig-non-ex-LLEEw}
           A \protect\LEEwitness\ that is not layered (in the middle),
             and a \protect\LLEEwitness\ (right) for a variation $\conecharti{1}$ 
             of the \protect\LLEEonechart~$\conechart$ in Figure~\ref{fig:nonex:ex:LLEEw}.}
\end{figure}
\begin{exa}  
    The \entrybodylabelings\ $\aonecharthati{1}$ and $\aonecharthati{3}$ in Figure~\ref{fig:ex:LLEEw-1-2-3} of the \onechart~$\aonechart$ from Figure~\ref{fig:ex:LEE}
      are \LLEEwitnesses\ that arise from the successful runs of the loop elimination procedure in Example~\ref{def:LLEEwitness}:
        $\aonecharthati{1}$ is the recording on $\aonechart$ of the three single-step loop eliminations 
          (viewed as trivial multi-steps in order to apply the clause for a \LLEEwitness\ in Definition~\ref{def:LLEEwitness})
        that lead to $\aonechart'''$,
        and $\aonecharthati{3}$ is the recording on $\aonechart$ of the two multi-step loop eliminations from $\aonechart$ to $\aonecharthat'''$.       
    The \entrybodylabeling\ $\aonecharthati{2}$ in Figure~\ref{fig:ex:LLEEw-1-2-3}
      is another \LLEEwitness\ of $\aonechart$ that records the successful process of four elimination steps of four loop \subonecharts\
        each of which is induced by only a single \loopentrytransition.
    The \onechart~$\aonechart$ in Figure~\ref{fig:ex:LEE}
      has a property that does not hold in general: 
        $\aonechart$ only admits layered \LEEwitnesses.

  Indeed, this does not hold for the \onechart~$\conechart$ in Figure~\ref{fig:nonex:ex:LLEEw}:
    the \entrybodylabeling~$\conecharthatjp{1}$ is not a layered \LLEEwitness,
      because it arises from a run of the loop elimination process in which in the second step a \loopentry\ transition\vspace{-1.5pt} 
        is eliminated from the body of a loop \subonechart\ that was eliminated in the first step.
    But the \entrybodylabeling~$\conecharthatjp{2}$ there is a layered \LLEEwitness\ of $\conechart$. 
  The situation is analogous for the two \entrybodylabelings\vspace{-1pt}  $\conecharthatijp{1}{1}$ and $\conecharthatijp{1}{2}$
    of the slightly more involved \LLEEonechart~$\conecharti{1}$ in Figure~\ref{fig-non-ex-LLEEw},\vspace{-1pt}
      where $\conecharthatijp{1}{2}$ is a layered \LLEEwitness\ of $\conecharti{1}$, 
      but $\conecharthatijp{1}{1}$ is a \LEEwitness\ of $\conechart$ that is not layered. 
\end{exa}

\begin{rem}[from \protect\LEEwitness{es} to \protect\LLEEwitness{es}]\label{rem:LLEEwitnesses}       
  It can be shown that every \LEEwitness\ that is not layered can be transformed 
                                                                into a layered \LEEwitness\vspace{-1.5pt} (\LLEEwitness) of the same underlying \onechart.    
  Indeed, the step from the (not layered) \LEEwitness~$\conecharthatijp{1}{1}$\vspace{-2pt}  
                                            to the \LLEEwitness~$\conecharthatijp{1}{2}$ in the example in Figure~\ref{fig-non-ex-LLEEw},
    which transfers the \loopentrytransition\ marking label $\loopsteplab{3}$
    from the transition from $\bverti{1}$ to $\bverti{2}$ over to the transition from $\avert$ to $\cvert$,
  hints at the proof of this statement.
  However, we do not need this result, because
    we will be able to use the guaranteed existence of \LLEEwitnesses\ (see Theorem~\ref{thm:onechart-int:LLEEw})
    for the \onechart\ interpretation below (see~Definition~\ref{def:onechartof}).
\end{rem}

For the proofs in Section~\ref{coindmilnersys:2:milnersys} we will need 
  the `\txtdescendsinloopto' relation $\sdescendsinloopto$ as defined below,
    and the fact that it constitutes a `descent' in a \LLEEwitness.
The latter is expressed by the subsequent lemma
  together with termination of the body-step relation $\sredi{\bodylab}$. 
Both of these properties can be established by arguing with the successful runs of the loop \subonechart\ elimination procedure
  that underlies a \LLEEwitness.

\begin{defi}  
  Let $\aonecharthat$ be a \LLEEwitness\ of a \onechart~$\aonechart = \tuple{\verts,\actions,\sone,\start,\transs,\termexts}$. 
  
  Let $\avert,\bvert\in\verts$.
  We denote by $\avert \descendsinloopto \bvert$, and by $\bvert \convdescendsinloopto \avert$, 
    and say that $\avert$ \emph{descends in a loop to} $\bvert$, 
    if $\bvert$ is in the body of the loop \subonechart\ at $\avert$,
    which means that there is a path $\avert \redi{\looplab{n}} \avertacc \redrtci{\bodylab} \bvert$ 
      from $\avert$ via a \loopentrytransition\ and subsequent body transitions \ul{without} encountering $\avert$ again. 
  
\end{defi}

\begin{lem}\label{lem:descsteps:bodysteps:wf}
  The relations $\sdescendsinloopto$ and $\sredi{\bodylab}$ defined by a \LLEEwitness~$\aonecharthat$
  of a \onechart~$\aonechart$ satisfy: % the following properties:
  \begin{enumerate}[(i)]
    \item{}\label{it:1:lem:descsteps:bodysteps:wf}
      $\sconvdescendsinlooptotc$ is a \wellfounded, strict partial order on $\verts$.
    \item{}\label{it:2:lem:descsteps:bodysteps:wf}
      $\sconvredtci{\bodylab}$ is a \wellfounded\ strict partial order on $\verts$.  
  \end{enumerate}
\end{lem}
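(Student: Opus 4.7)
The plan is to extract a well-founded measure from the marking-label information recorded in the LLEE-witness $\aonecharthat$. For part (ii) this is immediate from the definition of LLEE: the body transitions are precisely those that survive the full elimination $\aonechart \loopelimllredrtc \aonechartacc$ (equivalently, those carrying marking $0$), and by the definition of a successful run of the procedure, the terminal \onechart~$\aonechartacc$ admits no infinite path. Since $\verts$ is finite, the absence of infinite body-transition paths is equivalent to acyclicity of $\sredi{\bodylab}$. Consequently $\sredtci{\bodylab}$ is irreflexive and (by construction) transitive, i.e.\ a strict partial order, and its converse $\sconvredtci{\bodylab}$ is well-founded.

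For part (i), I plan to introduce a level function $\ell \funin \verts \to \nat$ defined by
\[
  \ell(\avert) \defdby \max\setexp{n\in\natplus \descsetexpmid \avert \redi{\looplab{n}} \avertacc \text{ for some } \avertacc\in\verts} \punc{,}
\]
with the convention $\ell(\avert) \defdby 0$ if no loop-entry transition departs from $\avert$ (so the max is well-defined by finiteness of $\verts$). The key claim is that $\avert \descendsinloopto \bvert$ implies $\ell(\bvert) < \ell(\avert)$. Suppose the descent is witnessed by $\avert \redi{\looplab{n}} \avertacc \redrtci{\bodylab} \bvert$, so that $\ell(\avert) \geq n$, and suppose $\bvert$ carries a loop-entry transition with marking $m$. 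Then $\bvert$ lies in the body of the loop \subonechart\ at $\avert$ that is eliminated in the $n$-th multi-step; the layering condition forces $m \leq n$, and the within-multi-step condition \ref{mstwo} of Definition~\ref{def:loopelimred} rules out $m = n$, because $\bvert$ would then be the start vertex of one loop \subonechart\ sitting in the body of another within the same multi-step. Hence $m < n$, so $\ell(\bvert) < n \leq \ell(\avert)$. This strict decrease lifts to $\sdescendsinlooptotc$, which rules out cycles (delivering irreflexivity, hence the strict-partial-order property) and forbids infinite descent (delivering well-foundedness of $\sconvdescendsinlooptotc$, since $\ell$ is bounded below by $0$).

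The main obstacle will be to argue the strict inequality $m < n$ carefully, which rests on two complementary uses of the layering structure: across multi-steps the layering clause of Definition~\ref{def:LEE} yields $m \leq n$, while condition~\ref{mstwo} within a single multi-step excludes the equality case $m = n$. A secondary point worth checking is that the clause ``without encountering $\avert$ again'' in the definition of $\sdescendsinloopto$ indeed certifies that $\bvert$ belongs to the body of the $n$-marked loop \subonechart\ rooted at $\avert$ (rather than merely to some enclosing loop), which is exactly what validates invoking the layering hypothesis at the vertex~$\bvert$.
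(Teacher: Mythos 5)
Your argument for part (i) is correct, and it is in fact more informative than the proof in the paper, which merely records that termination of $\sdescendsinloopto$ and $\sredi{\bodylab}$ can be established as for \LLEEcharts\ without \onetransitions\ and defers to Lemma~5.2 of \cite{grab:fokk:2020:lics}. Your level function, the inequality $m \leq n$ obtained from the layering clause, and the exclusion of $m = n$ via condition~\ref{mstwo} are exactly the right ingredients, and you correctly isolate the one point that needs checking: that the target $\bvert$ of a descent $\avert \descendsinloopto \bvert$ really lies in the body of the loop \subonechart\ at $\avert$ eliminated in the $n$-th multi-step. That check goes through, because none of the \bodytransitions\ on the witnessing path can have been garbage-collected before stage $n$ (their sources stay reachable through $\avert$ and the $n$-marked \loopentrytransition, which is still present at that stage), and the path avoids $\avert$.

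Part (ii), however, rests on a false premise. The \bodytransitions\ are \emph{not} ``precisely those that survive the full elimination $\aonechart \loopelimllredrtc \aonechartacc$'': they are precisely the transitions never removed \emph{as \loopentrytransitions}, but each elimination step also deletes all vertices and transitions that become unreachable (Definition~\ref{def:loopelimred}), and this garbage collection can delete \bodytransitions. The run in Figure~\ref{fig:ex:LEE} is a counterexample: the terminal \onechart~$\aonechart'''$ consists of the start vertex alone, while $\aonechart$ contains several \bodytransitions\ (all of its \onetransitions), none of which survive. Consequently, the fact that $\aonechartacc$ admits no infinite path does not by itself exclude an infinite path of \bodytransitions\ in $\aonechart$: such a path would eventually enter a cycle, and your argument does not rule out that this cycle disappears from the run only by becoming unreachable rather than by losing a transition. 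The conclusion of (ii) is still true, but it needs an actual argument --- for instance, that a cycle consisting solely of \bodytransitions\ can never be destroyed by the procedure, since by condition~\ref{loop:2} it cannot lie inside the body of any eliminated loop \subonechart\ (an infinite path circling it would never return to that \subonechart's start vertex), together with a separate argument excluding that it merely becomes unreachable. This is exactly the content of the termination result the paper cites, and it is the step your proof of (ii) skips.
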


\begin{proof}%[Proof (based on a reference to \cite{grab:fokk:2020:lics,grab:fokk:2020:lics:arxiv})]
  Well-foundedness and irreflexivity of each of $\sconvdescendsinlooptotc$ and $\sconvredtci{\bodylab}$
    follows from termination of $\sdescendsinloopto$ and $\sredi{\bodylab}$, respectively.
  These termination properties can be established in the same way
    as for \LLEEcharts\ without \onetransitions,
      for which they follow immediately from Lemma~5.2 in \cite{grab:fokk:2020:lics,grab:fokk:2020:lics:arxiv}.
  Since $\sconvdescendsinlooptotc$ and $\sconvredtci{\bodylab}$ are transitive by definition,
    it follows that both relations are \wellfounded\ strict partial orders.
\end{proof}

While chart interpretations of `\onefree' star expressions always satisfy \LEE, see \cite{grab:fokk:2020:lics:arxiv}),
  we observed in \cite{grab:2021:TERMGRAPH2020-postproceedings}
    that this is not true for the chart interpretations of star expressions in general.   
As a remedy for this failure of \LEE\ for chart interpretations, 
  we introduced `\onechart\ interpretations' of star expressions
    \cite{grab:2021:TERMGRAPH2020-postproceedings}.    
For such \onechart\ interpretations we showed that \LEE\ is guaranteed,
  and that they refine chart interpretations in the sense that there always is a functional \onebisimulation\ from the \onechart\ interpretation of a star expression
    to its chart interpretation (see Theorem~\ref{thm:onechart-int:LLEEw} below).
For the definition of \onechart\ interpretations we extended the syntax of star expressions
  to obtain `stacked star expressions', see the definition below. 
The intuition behind the use of the `stack product' symbol $\sstackprod$ 
  is to keep track of when a transition has descended into the body of an iteration expression
    such that the iteration can be interpreted as a loop \subonechart\ or a tower of nested (and possibly partially overlapping) loop \subonecharts. 
This feature of $\sstackprod$ makes a transition system specification possible (see Definition~\ref{def:onechartof})
    which introduces \onetransitions\ only as `backlinks'
      that lead from the body (the internal vertices) of some loop \subonechart\ $\aoneloop$ of a \onechart\ interpretation back to the start vertex of $\aoneloop$.

\begin{defi}[stacked star expressions]
            \nf\label{def:stackStExp}%\enlargethispage{2ex}
  Let $\actions$ be a set whose members we call \emph{actions}.
  The set $\stackStExpover{\actions}$ of \emph{stacked star expressions over (actions in) $\actions$} is defined by the grammar:
  \begin{center}
    $
    \asstexp
      \;\;\BNFdefdby\;\;
        \astexp
          \BNFor
        \stexpprod{\asstexp}{\astexp}
          \BNFor
        \stexpstackprod{\asstexp}{\stexpit{\astexp}}
          \qquad\text{(where $\astexp\in\StExpover{\actions}$)} \punc{.}  
          $
  \end{center}
  Note that the set $\StExpover{\actions}$ of star expressions would arise if the clause $\stexpstackprod{\asstexp}{\stexpit{\astexp}}$ were dropped.
  
  The \emph{star height} $\sth{\asstexp}$ of stacked star expressions $\asstexp$ is defined by adding the two clauses
  $\sth{ \stexpprod{\asstexp}{\astexp} } \defdby \max \setexp{ \sth{\asstexp},\, \sth{\astexp} }$,
  and
  $\sth{ \stexpstackprod{\asstexp}{\stexpit{\astexp}} } \defdby \max \setexp{ \sth{\asstexp},\, \sth{\stexpit{\astexp}} }$
  to the definition of the star height of star expressions.
 
  The \emph{projection function} $\sproj \funin \stackStExpover{\actions} \to \StExpover{\actions}$  
  is defined by interpreting $\sstexpstackprod$ as~$\sstexpprod$ by the clauses:
  $\proj{\stexpprod{\asstexp}{\astexp}} \defdby \stexpprod{\proj{\asstexp}}{\astexp}$, \mbox{}
  $\proj{\stexpstackprod{\asstexp}{\stexpit{\astexp}}} \defdby \stexpprod{\proj{\asstexp}}{\stexpit{\astexp}}$, \mbox{}
  and $\proj{\astexp} \defdby \astexp$,
  for all stacked star expressions $\asstexp\in\stackStExpover{\actions}$, and star expressions $\astexp\in\StExpover{\actions}$.
\end{defi}

In line with \cite{grab:2021:TERMGRAPH2020-postproceedings} we introduce the \onechart\ interpretation of a star expression~$\astexp$
  with notation $\onechartof{\astexp}$ as `the \onechart\ induced by (the process interpretation of) $\astexp$'.  
For understanding the TSS in its definition below
  it is key to note that, by the rules for iterations, the stacked product operation $\sstackprod$ 
    helps to record that an expression has descended from the iteration expression on the right-hand side of $\sstackprod$. 
This feature is used by the rule for stacked product to introduce \onetransitions\ only as \backlinks\ to expressions from which they have descended.  
The rules for iteration expressions define \loopentrytransitions\ and \bodytransitions, respectively,
  dependent on whether $e$ is `strongly normed' (symbolically denoted by $\normedplus{e}$)
    in the sense of facilitating a process trace to termination,
      and hence dependent on whether an iteration induces a loop \subonechart\ (outside of inner loop~\subonecharts).

\begin{defi}[$\protect\sone$-chart interpretation of star expressions]\label{def:onechartof}
  By the \emph{\onechart\ interpretation} $\onechartof{\astexp}$ of a star expression $\astexp$
  we mean the \onechart\ that arises together with the \entrybodylabeling\ $\onecharthatof{\astexp}$ 
  %$\onechartof{\astexp} = \tuple{\vertsof{\astexp},\actions,\sone,\astexp,\transshat,\termexts}$
  as the $\astexp$\nb-rooted labeled transition system with \onetransitions\ (\oneLTS) generated by $\setexp{\astexp}$
  according to the following TSS on the set $\stackStExp$ of stacked star expressions, 
  where $\darkcyan{\alab} \in \setexp{\bodylabcol} \cup \descsetexp{ \darkcyan{\loopnsteplab{\aLname}} }{ \aLname\in\natplus }$
    are marking labels:
  \begin{center}
    $
    \begin{gathered}
      \begin{aligned}
        & 
        \AxiomC{$\phantom{a_i \:\lti{a_i}{\darkcyan{\bodylabcol}}\: \stexpone}$}
        \UnaryInfC{$a \:\lti{a}{\bodylabcol}\: \stexpone$}
        \DisplayProof
        & & 
        \AxiomC{$ \astexpi{i} \:\lti{\aact}{\darkcyan{\alab}}\: \asstexpacci{i} $}
        % \RightLabel{\scriptsize $(i\in\setexp{1,2})$}
        \AxiomC{\scriptsize ($i\in\setexp{1,2}$)}
        \insertBetweenHyps{\hspace*{-0ex}}
        \BinaryInfC{$ \stexpsum{\astexpi{1}}{\astexpi{2}} \:\lti{\aact}{\bodylabcol}\: \asstexpacci{i} $}
        \DisplayProof 
        & &
        \AxiomC{$   \phantom{\astexpi{1}}
                 \astexp \:\lti{a}{\darkcyan{\alab}}\: \asstexpacc 
                   \phantom{\asstexpacci{1}} $}
        %\RightLabel{\small ($\astexp$ normed)}            
        % \AxiomC{\small ($\astexp$ \txtnormedplus)} 
        \AxiomC{\scriptsize (if $\fap{\textit{nd$^+$}}{\astexp}$)}   
        \insertBetweenHyps{\hspace*{-2ex}}        
        \BinaryInfC{$\stexpit{\astexp} \:\lti{\aact}{\darkcyan{\loopnsteplab{\sth{\stexpit{\astexp}}}}}\: \stexpstackprod{\asstexpacc}{\stexpit{\astexp}}$}
        \DisplayProof
        & &
        \AxiomC{$   \phantom{\astexpi{1}}
                 \astexp \:\lti{a}{\darkcyan{\alab}}\: \asstexpacc 
                   \phantom{\astexpacci{1}} $}
        % \RightLabel{\small ($\astexp$ not normed)}     
        \insertBetweenHyps{\hspace*{-2ex}}     
        % \AxiomC{\small ($\astexp$ not \txtnormedplus)} 
        \AxiomC{\scriptsize (if $\lognot{\fap{\textit{nd$^+$}}{\astexp}}$)} 
        \BinaryInfC{$\stexpit{\astexp} \:\lti{a}{\bodylabcol}\: \stexpstackprod{\asstexpacc}{\stexpit{\astexp}}$}
        \DisplayProof 
      \end{aligned}
      \\
      \begin{aligned}
        &   
        \AxiomC{$ \asstexpi{1} \:\lti{\alert{\aoneact}}{\darkcyan{\alab}}\: \asstexpacci{1} $}
        \UnaryInfC{$ \stexpprod{\asstexpi{1}}{\astexpi{2}} \:\lti{\alert{\aoneact}}{\darkcyan{\alab}}\: \stexpprod{\asstexpacci{1}}{\astexpi{2}} $}
        \DisplayProof
        & & 
        \AxiomC{$ \asstexpi{1} \:\lti{\alert{\aoneact}}{\darkcyan{\alab}}\: \asstexpacci{1} $}
        \UnaryInfC{$ \stexpstackprod{\asstexpi{1}}{\stexpit{\astexpi{2}}} \:\lti{\alert{\aoneact}}{\darkcyan{\alab}}\: \stexpstackprod{\asstexpacci{1}}{\stexpit{\astexpi{2}}} $}
        \DisplayProof
        & &
        \AxiomC{$\terminates{\astexpi{1}}$}
        \AxiomC{$ \astexpi{2} \:\lti{a}{\darkcyan{\alab}}\: \asstexpacci{2} $}
        \BinaryInfC{$ \stexpprod{\astexpi{1}}{\astexpi{2}} \:\lti{a}{\bodylabcol}\: \asstexpacci{2} $}
        \DisplayProof
        & &
        \AxiomC{$ \hspace*{5ex} \terminates{\astexpi{1}}\rule{0pt}{13pt} \hspace*{5ex} $}
        \UnaryInfC{$ \stexpstackprod{\astexpi{1}}{\stexpit{\astexpi{2}}} \:\lti{\sone}{\bodylabcol}\: \stexpit{\astexpi{2}} $}
        \DisplayProof
      \end{aligned}
    \end{gathered}
    $
  \end{center}
  The condition $\normedplus{e}$ means a strengthening of normedness, namely, that $e$ permits a positive length path to an expression $f$~with~$\terminates{f}\,$;
    it is definable by induction.
  Immediate termination for expressions of $\onechartof{\astexp}$ is defined by the same rules 
  as in Definition~\ref{def:StExpTSS} (for star expressions only, preventing immediate termination for expressions with stacked product~$\sstackprod$).
  We note that finiteness of $\onechartof{\astexp}$ as a \onechart\ is guaranteed by Theorem~\ref{thm:onechart-int:LLEEw}, \ref{it:2:thm:onechart-int:LLEEw}, below.
    
  We also extend the \onechart\ interpretation of star expressions in the obvious way to all stacked star expressions $E\in\stackStExp$:
  by $\onechartof{E}$ we mean the $\asstexp$\nb-rooted sub-\oneLTS\ generated by $\setexp{\asstexp}$
  in the \oneLTS\ generated by the TSS above.
\end{defi}

\begin{defi}
  For every stacked star expression $\asstexp\in\stackStExpover{\actions}$, we define 
  the set~$\oneactderivs{\asstexp}$ of \emph{action (partial) \onederivatives\ of} $\asstexp$,
    and the set~$\partonederivs{\asstexp}$ of \emph{(partial) \onederivatives\ of} $\asstexp$ by:
  \begin{align*}
    \oneactderivs{\asstexp} 
      & {} \defdby
        \descsetexpbig{ \pair{\aoneact}{\asstexpacc} }{ \asstexp \lt{\aoneact} \asstexpacc } 
          \subseteq
            \oneactions\times\stackStExpover{\actions} \punc{,}
    \\
    \partonederivs{\asstexp}
      & {} \defdby
        \descsetexpbig{ \asstexpacc }{ \asstexp \lt{\aoneact} \asstexpacc \text{ for some $\aoneact\in\oneactions$}}
          \subseteq
            \stackStExpover{\actions} \punc{,}
  \end{align*}
  where the transitions are defined by the TSS in Definition~\ref{def:onechartof}.   
\end{defi}

\begin{lem}\label{lem:actonederivs}
  The action \onederivatives\ $\oneactderivs{\asstexp}$ of a stacked star expression $\asstexp$ over actions in $\actions$
    satisfy the following recursive equations, for all $\aact\in\actions$, $\astexp,\astexpi{1},\astexpi{2}\in\StExpover{\actions}$,
    and stacked star expressions~$E_1$ over actions in $\actions$:
  \begin{align*}
    \oneactderivs{0} 
      & {} \defdby \oneactderivs{1} 
           \defdby \emptyset \punc{,}
    \\
    \oneactderivs{\aact}
       & {} \defdby \setexp{ \pair{\aact}{1} } \punc{,}
    \displaybreak[0]\\
    \oneactderivs{e_1 + e_2} 
      & {} \defdby  
             \oneactderivs{e_1} \cup \oneactderivs{e_2} \punc{,}
    \displaybreak[0]\\
    \oneactderivs{E_1 \prod e_2}
      & {} \defdby
           \begin{cases}
             \descsetexpbig{ \pair{\aoneact}{E'_1 \prod e_2} }{ \pair{\aoneact}{E'_1}\in\oneactderivs{E_1} }
               & \text{if $\notterminates{E_1}$,}
             \\
             \descsetexpbig{ \pair{\aoneact}{E'_1 \prod e_2} }{ \pair{\aoneact}{E'_1}\in\oneactderivs{E_1} }
               \cup
             \oneactderivs{E_2} 
               & \text{if $\terminates{E_1}$,}  
           \end{cases}  
    \displaybreak[0]\\
    \oneactderivs{E_1 \stackprod e^*_2}
      & {} \defdby
           \begin{cases}
             \descsetexpbig{ \pair{\aoneact}{E'_1 \prod e^*_2} }{ \pair{\aoneact}{E'_1}\in\oneactderivs{E_1} }
               & \text{if $\notterminates{E_1}$,}
             \\
             \descsetexpbig{ \pair{\aoneact}{E'_1 \prod e^*_2} }{ \pair{\aoneact}{E'_1}\in\oneactderivs{E_1} }
               \cup
             \setexp{ \pair{\sone}{e^*_2} } 
               & \text{if $\terminates{E_1}$,}  
           \end{cases}  
    \\
    \oneactderivs{e^*}
      & {} \defdby
             \descsetexpbig{ \pair{\aact}{E' \stackprod e^*} }
                           { \pair{\aact}{E'} \in\oneactderivs{e} } \punc{.}  
  \end{align*}
\end{lem}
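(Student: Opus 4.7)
The plan is to prove the recursive equations by a straightforward case analysis on the top-level structure of the stacked star expression $\asstexp$. In each case, I would read off which rules of the TSS in Definition~\ref{def:onechartof} can possibly have a conclusion of the matching shape, and then collect the action--\onederivative\ pairs that these rules produce. The formula for $\oneactderivs{\asstexp}$ is then obtained by unfolding the premises, which either immediately match $\oneactderivs{\astexpi{1}}$, $\oneactderivs{\astexpi{2}}$, $\oneactderivs{E_1}$, or $\oneactderivs{E_2}$, or else impose the \sidecondition\ $\terminates{E_1}$ that splits the case.

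First I would handle the base cases. For $\asstexp \syntequal 0$ no rule of the TSS has $0$ in the conclusion, so $\oneactderivs{0} = \emptyset$; for $\asstexp \syntequal 1$ only the immediate-termination axiom applies but no transition rule fires, hence $\oneactderivs{1} = \emptyset$; for $\asstexp \syntequal a$ the sole applicable rule is the atomic axiom $a \lti{a}{\bodylab} 1$, giving $\oneactderivs{a} = \setexp{\pair{a}{1}}$. For a sum $\astexpi{1} + \astexpi{2}$ exactly the two sum rules (for $i=1,2$) apply, and together they yield the union $\oneactderivs{\astexpi{1}} \cup \oneactderivs{\astexpi{2}}$.

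Next I would treat the two product cases, which each split on $\terminates{E_1}$. For $\asstexp \syntequal E_1 \prod e_2$, the left-factor rule fires for every $\pair{\aoneact}{E'_1}\in\oneactderivs{E_1}$ and yields $\pair{\aoneact}{E'_1 \prod e_2}$, while the right-factor rule additionally fires exactly when $\terminates{E_1}$, adding the contributions from $\oneactderivs{E_2}$. For $\asstexp \syntequal E_1 \stackprod e_2^*$, the left-factor stacked-product rule similarly produces $\pair{\aoneact}{E'_1 \stackprod e_2^*}$ from each element of $\oneactderivs{E_1}$, and the backlink rule additionally fires when $\terminates{E_1}$, contributing the single pair $\pair{\sone}{e_2^*}$. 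In each case combining these contributions gives the stated recursive clause.

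Finally, for $\asstexp \syntequal e^*$, the two iteration rules (discriminating on $\normedplus{e}$) cover \emph{all} derivatives of $e$ without overlap; both produce a transition to $\stexpstackprod{\asstexpacc}{\stexpit{\astexp}}$ for every action derivative $\pair{\aact}{\asstexpacc}$ of $e$, only the marking label differs. Since $\oneactderivs{\cdot}$ discards marking labels, the union simplifies to $\descsetexp{\pair{\aact}{E' \stackprod e^*}}{\pair{\aact}{E'}\in\oneactderivs{e}}$. I expect no substantial obstacle: the proof is essentially a schematic unfolding. The only point that requires attention is confirming that iteration never contributes a fresh $\sone$-transition at the outermost level; this is guaranteed because the iteration rules use a proper action variable $\aact$, so $\sone$-transitions from $e^*$ can only arise further down, via the backlink rule applied inside a subsequent $\stackprod e^*$ context.
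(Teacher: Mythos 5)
Your proposal is correct and takes essentially the same route as the paper, whose entire proof is ``by case-wise inspection of the definition of the TSS in Definition~\ref{def:onechartof}''; your case analysis is just the explicit unfolding of that inspection, including the correct observations that the two iteration rules differ only in the marking label (which $\oneactderivs{\cdot}$ discards) and that the case split on $\terminates{E_1}$ tracks exactly when the right-factor and backlink rules fire. Note that where you write $\pair{\aoneact}{E'_1 \stackprod e^*_2}$ in the stacked-product clause the statement literally says $E'_1 \prod e^*_2$, but your version is the one that matches the TSS rule and the paper's later use of this clause (in the proof of Lemma~\ref{lem:lem:mimic:RSPstar}), so the discrepancy is a typo in the statement rather than an error on your part.
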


\begin{proof}
  By case-wise inspection of the definition of the TSS in Def.~\ref{def:onechartof}.
\end{proof}

\begin{thmC}[\protect\cite{grab:2020:TERMGRAPH2020-postproceedings:arxiv,grab:2021:TERMGRAPH2020-postproceedings}]\label{thm:onechart-int:LLEEw}
  For every $\astexp\in\StExpover{\actions}$, the following statements hold
    for the concepts as introduced in Definition~\ref{def:onechartof}: 
  \begin{enumerate}[label=(\roman{*}),leftmargin=*,align=right,itemsep=0.25ex]
    \item{}\label{it:1:thm:onechart-int:LLEEw}
      The \entrybodylabeling~$\onecharthatof{\astexp}$ of $\onechartof{\astexp}$ 
        is a guarded \LLEEwitness\ of $\onechartof{\astexp}$.
    \item{}\label{it:2:thm:onechart-int:LLEEw} 
      The projection function $\sproj$ defines a \onebisimulation\ from the \onechart\ interpretation $\onechartof{\astexp}$ of $\astexp$
        to the chart interpretation $\chartof{\astexp}$ of $\astexp$,
        that is symbolically, $\onechartof{\astexp} \funonebisimvia{\sproj} \chartof{\astexp}$, 
          and hence also $\onechartof{\astexp} \funonebisim \chartof{\astexp}$.
      Since the set of stacked star expressions that form the pre-image of a star expression under the projection function is always finite, 
        it follows that \onechart\ interpretations of star expressions are always finite as well.
  \end{enumerate}        
\end{thmC}

For the proof of Lemma~\ref{lem:onechart-int:milnersysmin:solvable} below 
  we will need the second of the two subsequent lemmas, Lemma~\ref{lem:FT:onechart-int}.
Its proof uses the first lemma, which crucially states that every star expression $\astexp$ with immediate termination
  can \provablyin{\milnersysmin} be written as a star expression $1 + \bstexp$ where $\bstexp$ does not permit immediate termination.

\begin{lem}\label{lem:terminates:2:notterminates}
  If $\terminates{e}$ for a star expression $e\in\StExpover{\actions}$,
    then there is a star expression $f\in\StExpover{\actions}$
    with $\notterminates{f}$, $e \milnersysmineq 1 + f$, $\sth{f} = \sth{e}$, 
    and $\compfuns{((\sidfun\times\sproj)}{\soneactderivs)}{f} = \compfuns{((\sidfun\times\sproj)}{\soneactderivs)}{e}$.
\end{lem}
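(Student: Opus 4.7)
The plan is to prove the lemma by structural induction on $e$ under the hypothesis $\terminates{e}$, in each case exhibiting the required $f$ and then verifying the four conditions: non-termination, provable equality with $1+f$ in $\milnersysmin$, equal star height, and equal projected action-derivative set (using Lemma~\ref{lem:actonederivs}). The base case $e = 1$ will take $f \defdby 0$; this makes all four properties immediate, as $\oneactderivs{1} = \oneactderivs{0} = \emptyset$ and $1 \milnersysmineq 1 + 0$ by~$\neutralstexpsum$.

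For the inductive cases that do not involve star, the constructions should be straightforward. For $e = e_1 + e_2$ I would pick $f_1$ (respectively $f_2$) from the IH for each terminating summand and take $f$ to be the sum of the $f_i$'s plus any non-terminating original summand; the required identity follows from associativity, commutativity, and idempotency of $\sstexpsum$. For $e = e_1 \prod e_2$ (where both terminate), I would take $f \defdby f_2 + f_1 \prod e_2$, and derive $e_1 \prod e_2 \milnersysmineq (1+f_1) \prod e_2 \milnersysmineq e_2 + f_1 \prod e_2 \milnersysmineq (1+f_2) + f_1 \prod e_2 \milnersysmineq 1 + f$ using $\rdistr$ and $\leftidstexpprod$. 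In both cases the derivative condition follows from the recursive equations of Lemma~\ref{lem:actonederivs} applied together with the IH.

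The interesting case is $e = e_1^*$, which I would split on whether $\terminates{e_1}$. If $\notterminates{e_1}$, then $f \defdby e_1 \prod e_1^*$ works: $\recdefstexpit$ directly supplies $e_1^* \milnersysmineq 1 + f$, and the derivative equation for $e_1^*$ in Lemma~\ref{lem:actonederivs} matches the one for $e_1 \prod e_1^*$ in the $\notterminates{e_1}$ sub-clause of the product case. If $\terminates{e_1}$, apply the IH to $e_1$ to obtain $f_1$ with $e_1 \milnersysmineq 1 + f_1$, $\notterminates{f_1}$, $\sth{f_1} = \sth{e_1}$, and matching projected derivatives; then take $f \defdby f_1 \prod e_1^*$. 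Non-termination, star height, and the derivative equation all follow directly: for the derivative equation note that $\{(\alpha, \proj{F'_1} \prod e_1^*) : (\alpha, F'_1)\in\oneactderivs{f_1}\}$ coincides with $\{(\aact, \proj{E'_1} \prod e_1^*) : (\aact, E'_1)\in\oneactderivs{e_1}\}$ by the IH's derivative condition, and the latter is the projection of $\oneactderivs{e_1^*}$ by Lemma~\ref{lem:actonederivs}.

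The hard part will be showing $e_1^* \milnersysmineq 1 + f_1 \prod e_1^*$ in the sub-case $\terminates{e_1}$. A naive attempt via direct substitution only yields $e_1^* \milnersysmineq 1 + e_1^* + f_1 \prod e_1^*$, and removing the extra summand $e_1^*$ cannot be done by purely equational manipulation in $\milnersysmin$ (one would need something akin to $\RSPstar$). The workaround I plan is a detour through $f_1^*$: applying CXT to the IH gives $e_1^* \milnersysmineq (1 + f_1)^*$, and combining with $\termstexpit$ (read as $f_1^* \milnersysmineq (1 + f_1)^*$) yields $e_1^* \milnersysmineq f_1^*$. Unfolding via $\recdefstexpit$ gives $f_1^* \milnersysmineq 1 + f_1 \prod f_1^*$, and finally applying CXT to the equation $f_1^* \milnersysmineq e_1^*$ in the context $1 + f_1 \prod [\cdot]$ converts $1 + f_1 \prod f_1^*$ into $1 + f_1 \prod e_1^*$. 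Chaining these equations by $\TRANS$ delivers $e_1^* \milnersysmineq 1 + f_1 \prod e_1^*$. As a sanity check, the special case $e = 1^*$ (where $e_1 = 1$ and $f_1 = 0$) produces $f = 0 \prod 1^*$, and the detour proves $1^* \milnersysmineq 0^* \milnersysmineq 1 + 0 \prod 0^* \milnersysmineq 1 + 0 \prod 1^*$ inside $\milnersysmin$ even though $1^* \milnersysmineq 1$ is not directly available.
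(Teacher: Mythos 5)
Your proposal is correct and follows the same route the paper intends: a structural induction on $e$ in which the only delicate step is the star case with $\terminates{e_1}$, resolved by the detour $e_1^* \milnersysmineq (1+f_1)^* \milnersysmineq f_1^* \milnersysmineq 1 + f_1\prod f_1^* \milnersysmineq 1 + f_1\prod e_1^*$ via the axiom $(\termstexpit)$ — which is exactly the manoeuvre the paper itself performs in the star case of the proof of Lemma~\ref{lem:FT:onechart-int}, where this lemma is applied. The paper's own proof of the statement is only a one-line hint ("induction on the structure of $e$"), so your write-up supplies the details consistently with it; all four invariants (non-termination, the $\milnersysmin$-equality, star height, projected action derivatives via Lemma~\ref{lem:actonederivs}) are correctly maintained in every case.
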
     
        
\begin{proof}
  By a proof by induction on structure of $e$,
    in which all axioms of $\milnersysmin$ are used.
\end{proof}

\begin{lem}\label{lem:FT:onechart-int}
  $ \proj{\asstexp}
     \milnersysmineq
   \terminatesconstof{\onechartof{\asstexp}}{\asstexp}  
     +  
   \sum_{i=1}^{n} \stexpprod{\aoneacti{i}}{\proj{\asstexpacci{i}}} $,
  given a list representation\vspace{-1pt}
  $\transitionsinfrom{\onechartof{\asstexp}}{\bvert}
     =
   \descsetexpbig{ \asstexp \lt{\aoneacti{i}} \asstexpacci{i} }{ i\in\setexp{1,\ldots,n} }$
  of the transitions from $\asstexp$ in $\onechartof{E}$.
\end{lem}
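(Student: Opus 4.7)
The plan is to prove the lemma by induction on an appropriate well-founded measure on stacked star expressions --- for instance $\pair{\sth{\asstexp}}{\size{\asstexp}}$ ordered lexicographically, so that in the iteration case one may descend both to proper subexpressions of the same star height and to expressions of strictly smaller star height. In each case I will read off the termination constant $\terminatesconstof{\onechartof{\asstexp}}{\asstexp}$ and the set $\oneactderivs{\asstexp}$ of action $\sone$-derivatives from the recursive equations of Lemma~\ref{lem:actonederivs}, and then verify by algebraic manipulation in $\milnersysmin$ that the resulting right-hand sum reduces to $\proj{\asstexp}$. The base cases $\asstexp\in\setexp{\stexpzero,\stexpone,\aact}$ follow immediately from $\neutralstexpsum$ and $\rightidstexpprod$.

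For the sum case $\asstexp = \stexpsum{\astexpi{1}}{\astexpi{2}}$, the induction hypothesis applied to $\astexpi{1}$ and $\astexpi{2}$ yields two provable representations whose right-hand sides are combined using associativity, commutativity, and idempotency of $\sstexpsum$; idempotency is what absorbs the duplicate $\stexpone$ when both $\terminates{\astexpi{1}}$ and $\terminates{\astexpi{2}}$ hold. The product case $\asstexp = \stexpprod{\asstexpi{1}}{\astexpi{2}}$ and the stack-product case $\asstexp = \stexpstackprod{\asstexpi{1}}{\stexpit{\astexpi{2}}}$ are both dispatched by applying the induction hypothesis to $\asstexpi{1}$, multiplying on the right by the remaining factor, and distributing via $\rdistr$, $\assocstexpprod$, and the identity axioms $\leftidstexpprod$, $\rightidstexpprod$, $\stexpzerostexpprod$. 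In the subcase $\terminates{\asstexpi{1}}$, distributing the $\stexpone$-summand provided by the induction hypothesis across the right factor produces exactly the additional contributions predicted by Lemma~\ref{lem:actonederivs}: the derivatives of $\astexpi{2}$ in the product case, and the extra backlink summand $\stexpprod{\sone}{\stexpit{\astexpi{2}}}$ in the stack-product case.

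The one case that requires genuine care --- and which I expect to be the main obstacle --- is the iteration case $\asstexp = \stexpit{\astexp}$ when $\terminates{\astexp}$. The natural strategy is to unfold via $\recdefstexpit$ to obtain $\stexpit{\astexp} \milnersysmineq \stexpsum{\stexpone}{\stexpprod{\astexp}{\stexpit{\astexp}}}$ and then substitute the induction hypothesis for $\astexp$. This goes through directly when $\notterminates{\astexp}$. But when $\terminates{\astexp}$, the induction hypothesis contributes a $\stexpone$-summand which, once distributed across the trailing $\stexpit{\astexp}$ via $\rdistr$ and $\leftidstexpprod$, yields a spurious extra summand equal to $\stexpit{\astexp}$ itself. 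This leaves an equation of the shape $\stexpit{\astexp} \milnersysmineq \stexpone + \stexpit{\astexp} + X$ from which the required $\stexpit{\astexp} \milnersysmineq \stexpone + X$ cannot be extracted within $\milnersysmin$; doing so would amount to a fixed-point inference beyond the purely equational fragment.

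To break this deadlock I will invoke Lemma~\ref{lem:terminates:2:notterminates} to obtain a star expression $\bstexp$ with $\notterminates{\bstexp}$, $\astexp \milnersysmineq \stexpsum{\stexpone}{\bstexp}$, $\sth{\bstexp} = \sth{\astexp}$, and the same action $\sone$-derivatives as $\astexp$ modulo $\sproj$. Applying $\termstexpit$ together with congruence then gives $\stexpit{\astexp} \milnersysmineq \stexpit{(\stexpsum{\stexpone}{\bstexp})} \milnersysmineq \stexpit{\bstexp}$. Because $\sth{\bstexp} < \sth{\stexpit{\astexp}}$, the lexicographic induction hypothesis applies to $\bstexp$, reducing the $\terminates{\astexp}$ subcase to the already-handled $\notterminates{}$ subcase for $\bstexp$; the preservation of action $\sone$-derivatives modulo $\sproj$ supplied by Lemma~\ref{lem:terminates:2:notterminates} ensures that the resulting sum over $\oneactderivs{\bstexp}$ matches, term by term up to $\sproj$, the sum over $\oneactderivs{\astexp}$ required by the statement.
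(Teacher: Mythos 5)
Your proposal is correct and follows essentially the same route as the paper: induction on star height (with a structural/size tiebreaker), straightforward algebraic verification against the derivative equations of Lemma~\ref{lem:actonederivs} in all cases except iteration, and, for $\stexpit{\astexp}$ with $\terminates{\astexp}$, exactly the paper's crucial move of passing via Lemma~\ref{lem:terminates:2:notterminates} and $(\termstexpit)$ to a non-terminating body $\bstexp$ with $\stexpit{\astexp} \milnersysmineq \stexpit{\bstexp}$ so that the star-height induction hypothesis applies. The only (immaterial) difference is bookkeeping: the paper rewrites the trailing $\stexpit{\bstexp}$ back to $\stexpit{\astexp}$ before invoking the induction hypothesis, whereas you do the corresponding back-conversion of the trailing factors inside the sum at the end.
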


\begin{proof}
  We establish the lemma by induction on the star height $\sth{\asstexp}$ of $\asstexp$
    with a subinduction on the syntactical structure of $E$.
  All cases of stacked star expressions can be dealt with in a quite straightforward manner, 
    except for the case of star expressions with an outermost iteration. 
  There, an appeal to Lemma~\ref{lem:terminates:2:notterminates} is crucial.
  We treat this case in detail below. 
 
  \smallskip
  Suppose that $E \synteq e^*$ for some star expression $e\in\StExpover{\actions}$ (without occurrences of stacked product $\sstexpstackprod$).
  For showing the representation of $\proj{E}$ as stated by the lemma, we assume 
    that the transitions from $e$ in $\onechartof{e}$ as defined in Definition~\ref{def:onechartof} are as follows:
  \begin{equation}\label{eq:1:lem:FT:onechart-int}
    \transitionsinfrom{\onechartof{e}}{e}
      = \descsetexpbig{ e \lt{\aacti{i}} E'_i }{ i = 1, \ldots n } \punc{,}
  \end{equation}
  for some stacked star expression $E'_1,\ldots,E'_n$, which are \onederivatives\ of $e$. 
  Note that according to the TSS in Definition~\ref{def:onechartof} only proper transitions (those with proper action labels in $\actions$) 
    can depart from the star expression $e$ (which does not contain stacked products~$\sstexpstackprod$).
  Then it follows, again from the TSS in Definition~\ref{def:onechartof} that:
  \begin{equation}\label{eq:3:lem:FT:onechart-int}
    \transitionsinfrom{\onechartof{E}}{E} =
    \transitionsinfrom{\onechartof{e^*}}{e^*}
       = \descsetexpbig{ e^* \lt{\aacti{i}} E'_i \stackprod e^* }{ i = 1, \ldots n } \punc{.}
  \end{equation} 
  We assume now that $\terminates{e}$ holds. (We will see that if $\notterminates{e}$ holds, the argumentation below becomes easier).
  Then by Lemma~\ref{lem:terminates:2:notterminates} there is
    a star expression $f\in\StExpover{\actions}$ 
    with $\notterminates{f}$, and such that
  $ 1 + f \milnersysmineq e$, $\sth{f} = \sth{e}$, and 
  $\compfuns{((\sidfun\times\sproj)}{\soneactderivs)}{f} = \compfuns{((\sidfun\times\sproj)}{\soneactderivs)}{e}$ hold.
  From the latter it follows with~\eqref{eq:1:lem:FT:onechart-int}: 
  \begin{align}
    \transitionsinfrom{\onechartof{f}}{f}
      & {} = \descsetexpbig{ f \lt{\aacti{i}} F'_i }{ i = 1, \ldots n } \punc{,}
        \label{eq:2:lem:FT:onechart-int}
    \\    
    \proj{F'_i}
      & {} = \proj{E'_i} \qquad \text{(for all $i = 1, \ldots n$)} \punc{,}
        \label{eq:4:lem:FT:onechart-int}
  \end{align}
  for some stacked star expression $F'_1,\ldots,F'_n$, which are \onederivatives\ of $f$. 
  Note again that only proper transitions can depart from the star expression $f$ according to Definition~\ref{def:onechartof}.
  On the basis of these assumptions we can now argue as follows: 
  \begin{alignat*}{2}
    \proj{E}
      & \;\,\parbox[t]{\widthof{$\eqin{\milnersysmin}$}}{$\synteq$}\:
        \proj{e^*}
        & & \quad\text{(due to $E \synteq e^*$ in this case)} 
      \displaybreak[0]\\ 
      & \;\,\parbox[t]{\widthof{$\eqin{\milnersysmin}$}}{$\synteq$}\:
        e^*
        & & \quad\text{(since $e^*$ does not contain $\sstackprod$)}  
      \displaybreak[0]\\ 
      & \;\,\parbox[t]{\widthof{$\eqin{\milnersysmin}$}}{$\milnersysmineq$}\:
        (1 + f)^*
        & & \quad\text{(by the choice of $f$ with $1 + f \milnersysmineq e$)}   
      \displaybreak[0]\\ 
      & \;\,\parbox[t]{\widthof{$\eqin{\milnersysmin}$}}{$\milnersysmineq$}\:
        f^*
        & & \quad\text{(by axiom ($\termstexpit$))} 
      \displaybreak[0]\\ 
      & \;\,\parbox[t]{\widthof{$\eqin{\milnersysmin}$}}{$\milnersysmineq$}\:
        1 + f \prod f^*
        & & \quad\text{(by axiom ($\recdefstexpit$))}  
      \displaybreak[0]\\ 
      & \;\,\parbox[t]{\widthof{$\eqin{\milnersysmin}$}}{$\milnersysmineq$}\:
        1 + f \prod (1 + f)^*
        & & \quad\text{(by axiom ($\termstexpit$))} 
      \displaybreak[0]\\ 
      & \;\,\parbox[t]{\widthof{$\eqin{\milnersysmin}$}}{$\milnersysmineq$}\:
        1 + f \prod e^*
        & & \quad\text{(by the choice of $f$ with $1 + f \milnersysmineq e$)}  
      \displaybreak[0]\\ 
      & \;\,\parbox[t]{\widthof{$\eqin{\milnersysmin}$}}{$\milnersysmineq$}\:
        1 + \bigl( \terminatesconstof{\onechartof{f}}{f}
                     +
                   \sum_{i=1}^n
                     \aacti{i} \prod \proj{ F'_i } 
            \bigr) \prod e^*         
        & & \quad\parbox{\widthof{($\sth{f} = \sth{e} < \sth{e} + 1 = \sth{e^*} = \sth{E} $,)}}
                        {(by the induction hypothesis, due to
                         \\\phantom{(}%
                         $\sth{f} = \sth{e} < \sth{e} + 1 = \sth{e^*} = \sth{E}$,
                         \\\phantom{(}
                         and in view of \eqref{eq:2:lem:FT:onechart-int})}  
      \displaybreak[0]\\ 
      & \;\,\parbox[t]{\widthof{$\eqin{\milnersysmin}$}}{$\synteq$}\:
        1 + \bigl( 0
                     +
                   \sum_{i=1}^n
                     \aacti{i} \prod \proj{ E'_i } 
            \bigr) \prod e^*         
        & & \quad\parbox{\widthof{(by $\terminatesconstof{\onechartof{f}}{f} \synteq 0$ due to $\notterminates{f}$}}
                        {(by $\terminatesconstof{\onechartof{f}}{f} \synteq 0$ due to $\notterminates{f}$
                         \\
                         and by using \eqref{eq:4:lem:FT:onechart-int})}
      \displaybreak[0]\\ 
      & \;\,\parbox[t]{\widthof{$\eqin{\milnersysmin}$}}{$\milnersysmineq$}\:
        \terminatesconstof{\onechartof{e^*}}{e^*}
          + \sum_{i=1}^n
              \aacti{i} \prod ( \proj{ E'_i } \prod e^* ) 
        & & \quad\parbox{\widthof{(by axioms ($\neutralstexpsum$), ($\rdistr$),}}
                        {(by axioms ($\neutralstexpsum$), ($\rdistr$),
                         \\
                          ($\assocstexpprod$), $\terminatesconstof{\onechartof{e^*}}{e^*}\synteq 1$ due to $\terminates{(e^*)}$}  
      \displaybreak[0]\\ 
      & \;\,\parbox[t]{\widthof{$\eqin{\milnersysmin}$}}{$\synteq$}\:
        \terminatesconstof{\onechartof{E}}{E}
          + \sum_{i=1}^n
              \aacti{i} \prod  \proj{ E'_i \stackprod e^* }  
        & & \quad\text{(by the definition of the projection $\sproj$).}  
  \end{alignat*}
  In view of \eqref{eq:3:lem:FT:onechart-int}, 
  this chain of \provablein{\milnersysmin} equalities verifies the statement in the lemma
    in this case $E = e^*$ with $\terminates{e}$. 
  If $\notterminates{e  }$ holds, then the detour via $f$ is not necessary,
    and the argument is much simpler. The statement of the lemma holds true then as well. 
\end{proof}

\begin{lem}\label{lem:onechart-int:milnersysmin:solvable}
  For every star expression $\astexp\in\StExpover{\actions}$ 
    with \onechart\ interpretation $\onechartof{\astexp} = \tuple{\vertsof{\astexp},\actions,\sone,\astexp,\transs,\exts}$
  the \starexpression\ function $\sasol \funin \vertsof{\astexp} \to \StExpover{\actions}$, $\asstexp \mapsto \proj{\asstexp}$
    is a \provablein{\milnersysmin} solution of $\onechartof{\astexp}$ with principal value $\astexp$.  
\end{lem}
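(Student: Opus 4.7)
The plan is to reduce the claim directly to Lemma~\ref{lem:FT:onechart-int}, which already does the essential work. According to Definition~\ref{def:provable:solution}, we must show two things: first, that at every vertex $E$ of $\onechartof{e}$, the function $\sasol \funin E \mapsto \proj{E}$ satisfies
\[
  \proj{E}
    \,\milnersysmineq\,
  \terminatesconstof{\onechartof{e}}{E}
    +
  \sum_{i=1}^{n} \aoneacti{i} \prod \proj{E'_i} \punc{,}
\]
given a list representation $\transitionsinfrom{\onechartof{e}}{E} = \{\, E \lt{\aoneacti{i}} E'_i \mid i = 1, \ldots, n \,\}$; and second, that the principal value at the start vertex~$e$ of $\onechartof{e}$ is indeed $e$.

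The second point is immediate: since $e$ is a star expression (it contains no stacked product~$\sstackprod$), the projection acts as the identity on it, so $\asol{e} = \proj{e} \synteq e$.

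For the first point, I would observe that the \onechart~$\onechartof{e}$ is defined in Definition~\ref{def:onechartof} as the sub-\oneLTS\ generated from~$\{e\}$ by the very same TSS that also specifies~$\onechartof{E}$ for any vertex~$E$ reachable from~$e$. Consequently, for every such vertex~$E$:
\[
  \transitionsinfrom{\onechartof{e}}{E} = \transitionsinfrom{\onechartof{E}}{E}
    \qquad\text{and}\qquad
  \terminatesconstof{\onechartof{e}}{E} \synteq \terminatesconstof{\onechartof{E}}{E} \punc{,}
\]
because both the transitions out of~$E$ and the immediate-termination predicate at~$E$ are determined locally by the TSS applied to the stacked star expression~$E$. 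Given these identifications, the required provable equation at vertex~$E$ coincides exactly with the equation supplied by Lemma~\ref{lem:FT:onechart-int} for the stacked star expression~$E$.

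There is no real obstacle here, since Lemma~\ref{lem:FT:onechart-int} has already absorbed all the substantive work (induction on star height with sub-induction on the structure of~$E$, and the crucial invocation of Lemma~\ref{lem:terminates:2:notterminates} in the iteration case). The only point that deserves explicit mention is the locality observation above, ensuring that the `global' termination constants and transition lists in $\onechartof{e}$ agree with the `local' ones in $\onechartof{E}$ to which Lemma~\ref{lem:FT:onechart-int} directly applies. Once this is noted, applying Lemma~\ref{lem:FT:onechart-int} at every vertex of $\onechartof{e}$ verifies the correctness condition \eqref{eq:def:provable:solution} uniformly, and the proof is complete.
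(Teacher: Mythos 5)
Your proposal is correct and follows exactly the route of the paper, whose proof consists of the single sentence that the lemma is an immediate consequence of Lemma~\ref{lem:FT:onechart-int}. The two points you make explicit --- that $\sproj$ is the identity on the star expression $\astexp$ at the start vertex, and that the transitions and termination behaviour at a vertex $\asstexp$ of $\onechartof{\astexp}$ agree with those of $\onechartof{\asstexp}$ because both are generated locally by the same TSS --- are precisely the (routine) observations the paper leaves implicit.
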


\begin{proof}
  The statement of the lemma is an immediate consequence of Lemma~\ref{lem:FT:onechart-int}.
\end{proof}

%------------------------------------------------------
\section{Coinductive version of Milner's proof system}%
  \label{coindmilnersys}
%------------------------------------------------------

In this section we motivate and define `coinductive proofs', introduce coinductive versions of Milner's system $\milnersys$,
and establish first interconnections between these proof systems.

\begin{figure}[tb!]
\begin{center}
\begin{tikzpicture}
  \matrix[anchor=center,row sep=1cm,column sep=1cm,ampersand replacement=\&] {
      \& \node(top){\large $\aonechart$};
    \\
    \node(left){\large $\factorset{\chartof{\astexpi{1}}}{\seqin{\milnersys}}$};
      \&
        \& \node(right){\large $\factorset{\chartof{\astexpi{2}}}{\seqin{\milnersys}}$}; 
    \\    
    }; 
  \path (top) ++ (0.1cm,0.3cm) node[right]{\forestgreen{LLEE}, guarded}; 
  \draw[funonebisimleft]  (top) to node[above,xshift=-1.5em]{$\eqcl{\saeqfuni{1}}{\seqin{\milnersys}}$} (left);
  \draw[funonebisimright] (top) to node[above,xshift=1.5em]{$\eqcl{\saeqfuni{2}}{\seqin{\milnersys}}$} (right);  
    
\end{tikzpicture}  
\end{center}
  \vspace*{-1.5ex}
  \caption{\label{fig:motivation:coindproofs}
           A coinductive proof $\pair{\aonechart}{\saeqfun}$ over $\milnersys$ of $\astexpi{1} \formeq \astexpi{2}$ 
             gives rise to functional \protect\onebisimulations\ to factor charts of $\chartof{\astexpi{1}}$ and $\chartof{\astexpi{2}}$ 
             with respect to provability~in~$\milnersys$.}
\end{figure}%
As the central concept we now introduce `coinductive proofs' over \eqlogicbased\ proof systems~$\asys$.
We have seen examples for such circular deductions earlier in Figure~\ref{fig:ex:1:coindproof} and Figure~\ref{fig:ex:1:RSPstar:to:coindproof}.
We define a coinductive proof over $\asys$ as a \weaklyguarded\ \onechart~$\aonechart$ 
    whose vertices are labeled by equations between star expressions
      such that the left-, and the right-hand sides of the equations
        form \provablein{\asys} solutions of $\aonechart$.
  The conclusion of such a proof is the equation that labels the start vertex of $\aonechart$.        
If $\asys$ is \theoremsubsumed\ by $\milnersys$ (formally, $\asys \isthmsubsumedby \milnersys$ holds), 
  then a coinductive proof with \onechart~$\aonechart$, conclusion $\astexpi{1} \formeq \astexpi{2}$,
  and left- and right-hand side labeling functions $\saeqfuni{1}$ and $\saeqfuni{2}$
    can be viewed, due to Proposition~\ref{prop:char:solution:SRS}, 
      as a pair of \onebisimulations\ defined by $\eqcl{\saeqfuni{1}}{\seqin{\milnersys}}$ and by $\eqcl{\saeqfuni{2}}{\seqin{\milnersys}}$
        from $\aonechart$ to~$\factor{\chartof{\astexpi{1}}}{\seqin{\milnersys}}$, and to~$\factor{\chartof{\astexpi{2}}}{\seqin{\milnersys}}$
          (see Figure~\ref{fig:motivation:coindproofs}).
In this case we can show that the conclusion $\astexpi{1} \formeq \astexpi{2}$ of the coinductive proof is semantically sound
  (see~Proposition~\ref{prop:coindproofeq:2:onebisim}). 
Indeed a stronger statement holds, and its proof will form the central part of Section~\ref{coindmilnersys:2:milnersys}: 
  if $\asys \isthmsubsumedby \milnersys$ holds,
  and the underlying chart of the coinductive proof of $\astexpi{1} \formeq \astexpi{2}$ over $\asys$ is a \LLEEonechart,
    then that proof can be transformed into a proof of $\astexpi{1} \formeq \astexpi{2}$ in Milner's system $\milnersys$
    (see~Proposition~\ref{prop:LLEEcoindproofeq:impl:milnersyseq} in the next section).

In order to guarantee that coinductive proofs over a proof system $\asys$ can only derive semantically valid equations,
  it is necessary to demand that $\asys$ is sound for \processsemantics\ equality $\sprocsemeq$
    (for example if $\asys \isthmsubsumedby \milnersys$, using Proposition~\ref{prop:milnersys:sound}). 
This notwithstanding, we do not include this requirement in the definition below, but add it later to statements when it is needed.
The reason is that in Section~\ref{milnersys:2:coindmilnersys}
  we want to be able to show (see Lemma~\ref{lem:mimic:RSPstar}) that even instances of the fixed-point rule \RSPstar\ with premises that are not semantically valid
    can be mimicked by coinductive proofs over appropriate proof systems that are unsound.

\begin{defi}[(\protect\text{\protect\nf LLEE}-witnessed) coinductive proofs]\label{def:coindproof}
  Let $\actions$ be a set of actions.
  Let $\asys$ be an \eqlogicbasedover{\actions} proof system %for star expressions over $\actions$ 
    with $\ACI \subsystem \asys$. % \checkagain{such that furthermore $\asys$ is sound for~$\procsemeq$}.
  Let $\astexpi{1},\astexpi{2}\in\StExpover{\actions}$ be star expressions.
  
  \smallskip
  By a \emph{coinductive proof over $\asys$ of $\astexpi{1} \formeq \astexpi{2}$}  
  we mean a pair $\aCoProof = \pair{\aonechart}{\saeqfun}$
    that consists of 
      a \weaklyguarded\ \onechart~$\aonechart = \tuple{\verts,\actions,\sone,\start,\transs,\termexts}$,
      and 
      a labeling function $\saeqfun \funin \verts \to \StExpEqover{\actions}$ of vertices of $\aonechart$ by formal equations over $\actions$
      such that for the functions
      $\saeqfuni{1},\saeqfuni{2} \funin \verts \to \StExpover{\actions}$
      that denote the star expressions $\aeqfuni{1}{\avert}$, and $\aeqfuni{2}{\avert}$,
      on the left-, and on the right-hand side of the equation $\aeqfun{\avert}$, respectively, the following conditions hold: 
  \begin{enumerate}[label={\text{\nf (cp\arabic{*})}},align=right,leftmargin=*,itemsep=0.5ex]%[(cp1)]
    \item
      $\saeqfuni{1}$ and $\saeqfuni{2}$ are \provablein{\asys} solutions of $\aonechart$,
    \item
      $\astexpi{1} \synteq \aeqfuni{1}{\start}$ and $\astexpi{2} \synteq \aeqfuni{2}{\start}$
        ($\astexpi{1}$ and $\astexpi{2}$ are principal values of $\saeqfuni{1}$ and $\saeqfuni{2}$, respectively).
  \end{enumerate}
  We denote by $\astexpi{1} \coindproofeqin{\asys} \astexpi{2}$ 
    that there is a coinductive proof over $\asys$ of $\astexpi{1} \formeq \astexpi{2}$.
    
  \smallskip  
  By a \emph{\LLEEwitnessed\ coinductive proof} over $\asys$ we mean a coinductive proof $\aCoProof = \pair{\aonechart}{\saeqfun}$\vspace{-2pt}
    where $\aonechart$ is a guarded \LLEEonechart.
  We denote by $\astexpi{1} \LLEEcoindproofeqin{\asys} \astexpi{2}$
    that there is a \LLEEwitnessed\ coinductive proof over $\asys$ of $\astexpi{1} \formeq \astexpi{2}$. 
\end{defi}

While the restriction to guardedness of the \LLEEonechart\ underlying \LLEEwitnessed\ coinductive proofs could be relaxed to weak guardedness,
  we have required guardedness in this definition in order to (somewhat) reduce technicalities in the proofs in Section~\ref{coindmilnersys:2:milnersys}. 
 
We provide two examples of \LLEEwitnessed\ coinductive proofs. First we develop a new one, and 
then we revisit and justify the example in Figure~\ref{fig:ex:1:coindproof} from the introduction.

\begin{exa}\label{ex:2:LLEEcoindproof}
  In Figure~\ref{fig:ex:2:LLEEcoindproof} we have illustrated
    a \LLEEwitnessed\ coinductive proof over $\milnersysmin$
  of the statement 
  $\stexpit{(\stexpprod{\stexpit{\aact}}
                       {\stexpit{\bact}})}
     \LLEEcoindproofeqin{\milnersysmin}
   \stexpit{(\stexpsum{\aact}{\bact})}$.
  Formally this proof is of the form $\aCoProof = \pair{\aonechart}{\saeqfun}$  
    where $\aonechart = \onechartof{(a^* \prod b^*)^*}$ 
    has the guarded \LLEEwitness~$\onecharthatof{(a^* \prod b^*)^*}$ (see Theorem~\ref{thm:onechart-int:LLEEw}) as indicated in Figure~\ref{fig:ex:2:LLEEcoindproof}
    where framed boxes contain vertex names.
    
  In this illustration we have drawn the \onechart~$\aonechart$ that carries the equations with its start vertex below
    in order to adhere to the prooftree intuition for the represented derivation, namely with the conclusion at the bottom. 
  We will do so repeatedly also below.
  Solution correctness for the left-hand sides $\saeqfuni{1}$ of the equations $\saeqfun$ on $\aonechart$ in Figure~\ref{fig:ex:2:LLEEcoindproof}
    follow from Lemma~\ref{lem:onechart-int:milnersysmin:solvable}, because $\aonechart = \onechartof{(a^* \prod b^*)^*}$
      where $(a^* \prod b^*)^*$ is the left-hand side of the conclusion.  
  This notwithstanding, below we verify the correctness conditions in $\aonechart$
      for the left-hand side~$\saeqfuni{1}$ and the right-hand side~$\saeqfuni{2}$
              %left- and right-hand sides $\saeqfuni{1}$ and $\saeqfuni{2}$ 
              of the equation labeling function $\saeqfun$
    for the (most involved) case of vertex $\averti{1}$ as follows
  (we neglect some associative brackets, and combine some applications of axioms in $\milnersysmin$):%
  \begin{figure}[t!]
\begin{center}
\begin{tikzpicture}\renewcommand{\stexpprod}[2]{{#1}\hspace*{1pt}{\sstexpprod}\hspace*{1pt}{#2}}
  
\matrix[anchor=center,row sep=0.4cm,column sep=-0.3cm] {
  \node(v11){$ \stexpprod{(\stexpprod{(\stexpprod{\stexpone}{\stexpit{\aact}})}{\stexpit{\bact}})}
                         {\stexpit{(\stexpprod{\stexpit{\aact}}{\stexpit{\bact}})}} 
                 \formeq
               \stexpprod{\stexpone}{\stexpit{(\stexpsum{\aact}{\bact})}} $};
    & & \node(v21){$ \stexpprod{(\stexpprod{\stexpone}{\stexpit{\bact}})}
                               {\stexpit{(\stexpprod{\stexpit{\aact}}{\stexpit{\bact}})}} 
                       \formeq
                     \stexpprod{\stexpone}{\stexpit{(\stexpsum{\aact}{\bact})}} $};
  \\                 
  \node(v1){$ \stexpprod{(\stexpprod{\stexpit{\aact}}{\stexpit{\bact}})}{\stexpit{(\stexpprod{\stexpit{\aact}}{\stexpit{\bact}})}} 
                \formeq
              \stexpit{(\stexpsum{\aact}{\bact})} $};
    & & \node(v2){$ \stexpprod{\stexpit{\bact}}{\stexpit{(\stexpprod{\stexpit{\aact}}{\stexpit{\bact}})}} 
                      \formeq
                    \stexpit{(\stexpsum{\aact}{\bact})} $};
  \\
    & \node(v){$\stexpit{(\stexpprod{\stexpit{\aact}}{\stexpit{\bact}})}
                  \formeq
                \stexpit{(\stexpsum{\aact}{\bact})}$};
  \\
  };   
 
% (v11) 
\path (v11) ++ (0cm,0.5cm) node{\tightfbox{$\averti{11}$}};
\draw[->,thick,densely dotted,out=180,in=180,distance=0.65cm](v11) to (v1);

% (v1) 
\path (v1) ++ (-0.85cm,-0.5cm) node{\tightfbox{$\averti{1}$}}; 
\draw[->,thick,darkcyan,shorten >= -2pt]
  (v1) to node[left,pos=0.35]{\small $\black{\aact}$} node[right,pos=0.35]{\small $\loopnsteplab{1}$} (v11);
\draw[->,thick,densely dotted,out=-90,in=180,distance=0.5cm,shorten >= 0.2cm](v1) to (v);
\draw[->,shorten <= 0pt,shorten >= 0pt,out=12.5,in=185] (v1) to node[above]{\small $\bact$} (v21); 

% (v21) 
\path (v21) ++ (0cm,0.5cm) node{\tightfbox{$\averti{21}$}};
\draw[->,thick,densely dotted,out=0,in=0,distance=0.65cm](v21) to (v2);  
  
% (v2) 
\path (v2) ++ (0.85cm,-0.5cm) node{\tightfbox{$\averti{2}$}}; 
\draw[->,thick,darkcyan,shorten >= -2pt]
  (v2) to node[right,pos=0.35]{\small $\black{\bact}$} node[left,pos=0.35]{\small $\loopnsteplab{1}$} (v21);
\draw[->,thick,densely dotted,out=-90,in=0,distance=0.5cm,shorten >= 0.2cm](v2) to (v);

% (v) 
\path (v) ++ (1.5cm,0.5cm) node{\tightfbox{$\start$}}; 
\draw[thick,chocolate,double] (v) ellipse (1.7cm and 0.4cm); 
\draw[<-,very thick,>=latex,chocolate,shorten <= 5pt](v) -- ++ (90:0.85cm); 
\draw[->,thick,darkcyan,out=110,in=-10,shorten <= 0.2cm] 
  (v) to node[below,pos=0.36]{\small $\black{\aact}$} node[above,pos=0.26,xshift=1pt,yshift=1pt]{\small $\loopnsteplab{2}$}  (v11);
\draw[->,thick,darkcyan,out=70,in=190,shorten <= 0.2cm] 
  (v) to node[below,pos=0.37]{\small $\black{\bact}$} node[above,pos=0.25,xshift=-1pt,yshift=1pt]{\small $\loopnsteplab{2}$} (v21);

  %\draw[thick,chocolate,double] (1) ell (0.2cm);

\end{tikzpicture}  
\end{center}
  \vspace*{-1ex}
  \caption{\label{fig:ex:2:LLEEcoindproof}%
           \protect\LLEEwitnessed\ coinductive proof 
             of $\stexpit{(\stexpprod{\stexpit{\aact}}
                                     {\stexpit{\bact}})}
                   \formeq
                 \stexpit{(\stexpsum{\aact}{\bact})}$ over \milnersysmin\ (see Ex.~\ref{ex:2:LLEEcoindproof}).}
\end{figure}  %
  \begin{align*}
    \aeqfuni{1}{\averti{1}}
      & {} \mathrel{\parbox{\widthof{$\milnersysmineq$}}{$\synteq$}}
      (a^* \cdot b^*) \cdot (a^* \cdot b^*)^*
    \\
      & {} \milnersysmineq
    ((1 + a \cdot a^*) \cdot (1 + b \cdot b^*)) \cdot (a^* \cdot b^*)^*
    \displaybreak[0]\\
      & {} \milnersysmineq
    (1 \cdot 1 + a \cdot a^* \cdot 1 + 1 \cdot b \cdot b^* + a \cdot a^* \cdot b \cdot b^*) \cdot (a^* \cdot b^*)^*
    \displaybreak[0]\\
      & {} \milnersysmineq
    (1 + a \cdot a^* + a \cdot a^* \cdot b \cdot b^* + b \cdot b^*) \cdot (a^* \cdot b^*)^*
    \displaybreak[0]\\
      & {} \milnersysmineq
    (1 + a \cdot a^* \cdot (1 + b \cdot b^*) + b \cdot b^*) \cdot (a^* \cdot b^*)^*
    \displaybreak[0]\\
      & {} \milnersysmineq
    (1 + a \cdot a^* \cdot b^* + b \cdot b^*) \cdot (a^* \cdot b^*)^*
    \displaybreak[0]\\
      & {} \milnersysmineq
    1 \cdot (a^* \cdot b^*)^*  +  a \cdot (((1 \cdot a^*) \cdot b^*) \cdot (a^* \cdot b^*)^*)  +  b \cdot ((1 \cdot b^*) \cdot (a^* \cdot b^*)^*)
    \\
      & {} \mathrel{\parbox{\widthof{$\milnersysmineq$}}{$\synteq$}}
    1 \cdot \aeqfuni{1}{\start} + a \cdot \aeqfuni{1}{\averti{11}} + b \cdot \aeqfuni{1}{\averti{21}}
    \displaybreak[0]\\[0.75ex]
    \aeqfuni{2}{\averti{1}}
      & {} \mathrel{\parbox{\widthof{$\milnersysmineq$}}{$\synteq$}}
    (a + b)^* 
      \\
      & {} \milnersysmineq
    (a + b)^*  +  (a + b)^* 
      \milnersysmineq
    1 + (a + b) \cdot (a + b)^*  +  1 + (a + b) \cdot (a + b)^*
    \displaybreak[0]\\
      & {} \milnersysmineq
    1 + 1 + (a + b) \cdot (a + b)^*  + a \cdot (a + b)^* + b \cdot (a + b)^*
    \displaybreak[0]\\
      & {} \milnersysmineq
    1 + (a + b) \cdot (a + b)^*  + a \cdot (1 \cdot (a + b)^*) + b \cdot (1 \cdot (a + b)^*)
    \displaybreak[0]\\
      & {} \milnersysmineq
    1 \cdot (a + b)^*  + a \cdot (1 \cdot (a + b)^*) + b \cdot (1 \cdot (a + b)^*) 
    \\
      & {} \mathrel{\parbox{\widthof{$\milnersysmineq$}}{$\synteq$}}
    1 \cdot \aeqfuni{2}{\start} + a \cdot \aeqfuni{2}{\averti{11}} + b \cdot \aeqfuni{2}{\averti{21}}  
  \end{align*}
  Note that the form of these two correctness conditions at $\averti{1}$ 
    arise from the outgoing transitions from $\averti{1}$ in $\aonechart$ in Figure~\ref{fig:ex:2:LLEEcoindproof}:
      the \onetransition\ from $\averti{1}$ to $\start$, the \transitionact{\aact} from $\averti{1}$ to $\averti{11}$,
        and the \transitionact{\bact} from~$\averti{1}$ to~$\averti{21}$.

  The solution conditions for $\saeqfun = \pair{\saeqfuni{1}}{\saeqfuni{2}}$ at the vertices $\avert$ and $\averti{2}$ can be verified analogously.
  At $\averti{11}$ and~at~$\averti{21}$ the solution conditions follow by using the axiom $\leftidstexpprod$~of~$\milnersysmin$.
\end{exa}
 
\begin{exa}%[LLEE-witnessed coinductive proof in Figure~\ref{fig:ex:1:RSPstar:to:coindproof}]
           \label{ex:1:coindproof}
  For the statement 
  $g^* \prod 0 \synteq (a + b)^* \prod 0 
     \LLEEcoindproofeqin{\milnersysmin}
   (a \prod (a + b) + b)^* \prod 0 \synteq h^* \prod 0$,
  we illustrated in Figure~\ref{fig:ex:1:RSPstar:to:coindproof}
  the coinductive proof $\aCoProof = \pair{\onechartof{h^* \prod 0}}{\saeqfun}$ over $\milnersysmin$
    with underlying guarded \LLEEwitness~$\onecharthatof{h^* \prod 0}$,
  where $\onechartof{h^* \prod 0}$ and $\onecharthatof{h^* \prod 0}$
    is the \onechart\ interpretation as defined according to Definition~\ref{def:onechartof},
    and the equation-labeling function $\saeqfun$ on $\onechartof{h^* \prod 0}$ is defined as in the figure.
    
  The correctness conditions at the start vertex (at the bottom) can be verified as follows:
  \begin{align*}
    g^* \prod 0 
      & {} \mathrel{\parbox{\widthof{$\milnersysmineq$}}{$\synteq$}}
        (a + b)^* \prod 0 
      \milnersysmineq
        (1 + (a + b) \prod (a + b)^*) \prod 0
      \milnersysmineq
        1 \prod 0 + ((a + b) \prod g^*) \prod 0 
      \displaybreak[0]\\
      & {} \milnersysmineq
        0 + (a \prod g^* + b \prod g^*) \prod 0 
      \milnersysmineq
        (a \prod g^* + b \prod g^*) \prod 0 
      \displaybreak[0]\\
      & {} \milnersysmineq
        (a \prod g^*) \prod 0 + (b \prod g^*) \prod 0
      \milnersysmineq
        a \prod (g^* \prod 0) + b \prod (g^* \prod 0)
      \displaybreak[0]\\
      & {} \milnersysmineq
        a \prod ((1 \prod g^*) \prod 0) + b \prod ((1 \prod g^*) \prod 0) \punc{,}
    \displaybreak[0]\\[0.75ex]   
    h^* \prod 0
    & {} \mathrel{\parbox{\widthof{$\milnersysmineq$}}{$\synteq$}}
    (a \prod (a + b) + b)^* \prod 0
      \milnersysmineq
    (1 + (a \prod (a + b) + b) \prod (a \prod (a + b) + b)^*) \prod 0  
    \displaybreak[0]\\
    & {} \milnersysmineq
    1 \prod 0 + ( (a \prod (a + b) + b) \prod h^*) \prod 0 
      \milnersysmineq
    0 + ( (a \prod (a + b)) \prod h^*
          + b \prod h^* ) \prod 0     
    \displaybreak[0]\\
    & {} \milnersysmineq
    ( a \prod ((a + b) \prod h^*)
          + b \prod h^* ) \prod 0  
      \milnersysmineq
    ( (a \prod ((a + b) \prod h^*)) \prod 0 
          + ( b \prod h^* ) \prod 0        
    \displaybreak[0]\\
    & {} \milnersysmineq
    a \prod ((a + b) \prod h^*) \prod 0) 
          + b \prod (h^* \prod 0) 
    \displaybreak[0]\\
    & {} \milnersysmineq
    a \prod ((1 \prod (a + b)) \prod h^*) \prod 0)    
          + b \prod ((1 \prod h^*) \prod 0) \punc{.}
  \end{align*}
  From the provable equality for $g^* \prod 0$
    the correctness condition for $(1 \prod g^*) \prod 0$ at the left upper vertex of $\onechartof{h^* \prod 0}$ 
      can be obtained by additional uses of the axiom ($\leftidstexpprod$). 
  The correctness condition for $((1 \prod (a + b)) \prod h^*) \prod 0$  at the left upper vertex of $\onechartof{h^* \prod 0}$ 
  can be verified~as~follows:
  \begin{align*}
    ((1 \prod (a + b)) \prod h^*) \prod 0
      & {} \mathrel{\parbox{\widthof{$\milnersysmineq$}}{$\milnersysmineq$}}
        ((a + b) \prod h^*) \prod 0 
      % \\
      % & {} 
        \mathrel{\parbox{\widthof{$\milnersysmineq$}}{$\milnersysmineq$}}
          (a \prod h^*  +  b \prod h^*) \prod 0 
      \\
      & {} \mathrel{\parbox{\widthof{$\milnersysmineq$}}{$\milnersysmineq$}}
        (a \prod h^*) \prod 0  +  (b \prod h^*) \prod 0 
      % \\
      %& {} 
        \mathrel{\parbox{\widthof{$\milnersysmineq$}}{$\milnersysmineq$}}
        a \prod (h^* \prod 0)  +  b \prod (h^* \prod 0) 
      \\
      & {} \mathrel{\parbox{\widthof{$\milnersysmineq$}}{$\milnersysmineq$}}
        a \prod ((1 \prod h^*) \prod 0)  +  b \prod ((1 \prod h^*) \prod 0) \punc{.}
  \end{align*}
  Finally, the correctness conditions at the right upper vertex of $\onechartof{h^* \prod 0}$ 
    can be obtained by applications of the axiom~($\leftidstexpprod$)~only.  
\end{exa}  
  
As a direct consequence of Definition~\ref{def:coindproof}, the following lemma 
  states that \LLEEwitnessed\ coinductive provability of an equation implies its coinductive provability.
The subsequent lemma states easy observations about the composition of 
  coinductive provability over a proof system~$\asys$ with provability $\seqin{\asys}$ in $\asys$.  

\begin{lem}\label{lem:LLEEcoindproofeq:2:coindproofeq}
  $\astexpi{1} \LLEEcoindproofeqin{\asys} \astexpi{2}$
    implies 
  $\astexpi{1} \coindproofeqin{\asys} \astexpi{2}$,
    for all $\astexpi{1},\astexpi{2}\in\StExpover{\actions}$, where $\asys$ is an \eqlogicbased\ proof system over $\StExpover{\actions}$
      with $\ACI \subsystem \asys$ that is sound with respect to $\procsemeq$.
\end{lem}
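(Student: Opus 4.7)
The plan is to show that every \LLEEwitnessed\ coinductive proof is, in particular, a (plain) coinductive proof, so that the implication reduces almost entirely to an unpacking of Definition~\ref{def:coindproof}. Comparing the two clauses of that definition, one sees that the only difference between the two notions concerns the underlying \onechart: for a coinductive proof it is required to be weakly guarded, while for a \LLEEwitnessed\ coinductive proof it is required to be a guarded \LLEEonechart. The labeling and solution correctness conditions (cp1) and (cp2) are verbatim the same in both cases.

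Given $\astexpi{1},\astexpi{2}\in\StExpover{\actions}$ with $\astexpi{1} \LLEEcoindproofeqin{\asys} \astexpi{2}$, I would first unfold the hypothesis to obtain a \LLEEwitnessed\ coinductive proof $\aCoProof = \pair{\aonechart}{\saeqfun}$ over~$\asys$ of $\astexpi{1} \formeq \astexpi{2}$, where $\aonechart$ is a guarded \LLEEonechart\ (hence possesses a guarded \LLEEwitness~$\aonecharthat$) and where $\saeqfun$ satisfies (cp1) and (cp2).

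Next I would appeal to the observation recorded in the paragraph following Definition~\ref{def:LLEEwitness}: since the \loopentrytransitions\ of a guarded \LLEEwitness\ carry only proper action labels, and any infinite path in a \LLEEwitness\ is segmented by \loopentrytransitions\ (by Lemma~\ref{lem:descsteps:bodysteps:wf},~\ref{it:2:lem:descsteps:bodysteps:wf}, \bodytransitions\ alone cannot form an infinite path), a guarded \LLEEwitness\ admits no infinite path consisting solely of \onetransitions. Consequently the underlying \onechart~$\aonechart$ of $\aonecharthat$ contains no cycle of \onetransitions, i.e.\ it is weakly guarded.

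Combining these two points, $\pair{\aonechart}{\saeqfun}$ satisfies all the requirements of Definition~\ref{def:coindproof} for a coinductive proof over $\asys$ of $\astexpi{1} \formeq \astexpi{2}$, which yields $\astexpi{1} \coindproofeqin{\asys} \astexpi{2}$ as desired. There is no genuine obstacle here: the proof is a direct inspection of the definitions plus the already-established remark that guardedness of a \LLEEwitness\ forces weak guardedness of its underlying \onechart; the assumptions $\ACI \subsystem \asys$ and soundness of $\asys$ with respect to $\sprocsemeq$ are simply carried over as standing hypotheses on $\asys$ and are not invoked in the argument itself.
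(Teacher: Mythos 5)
Your proof is correct and takes essentially the same approach as the paper, which states this lemma without proof as a direct consequence of Definition~\ref{def:coindproof}: there a \LLEEwitnessed\ coinductive proof is \emph{defined} as a coinductive proof whose underlying \onechart\ is a guarded \LLEEonechart, so the implication is immediate. Your extra check that guardedness of the \LLEEwitness\ forces weak guardedness of the underlying \onechart\ simply reproduces the remark the paper makes just after Definition~\ref{def:LLEEwitness}, and your observation that the hypotheses $\ACI \subsystem \asys$ and soundness with respect to $\sprocsemeq$ are not actually invoked is accurate.
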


\begin{lem}\label{lem:props:coindproofeq:LLEEcoindproofeq}
  Let $\sabinrel \in \setexpbig{ \scoindproofeqin{\asys}, \sLLEEcoindproofeqin{\asys} }$ 
    for some \eqlogicbased\ proof system $\asys$ with $\ACI \subsystem \asys$. 
    Then $\abinrel$ is reflexive, symmetric, and
    satisfies $ \seqin{\asys} \circ \sabinrel \subseteq \sabinrel $, 
              $ \sabinrel \circ \seqin{\asys} \subseteq \sabinrel $, and
              $ \seqin{\asys} \subseteq \sabinrel $. 
\end{lem}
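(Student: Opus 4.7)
The plan is to verify the five claimed properties by direct constructions on coinductive proofs, relying only on the equational-logic rules $\REFL$, $\SYMM$, $\TRANS$, and $\CXT$ that are available because $\asys$ is \eqlogicbased. All constructions either leave the underlying \onechart\ intact or replace it with $\onechartof{\astexp}$, so guardedness and (where applicable) the \LLEE\ property of the underlying \onechart\ are always preserved, and the same arguments will cover both $\scoindproofeqin{\asys}$ and $\sLLEEcoindproofeqin{\asys}$ uniformly.

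First I would handle symmetry: given a coinductive proof $\pair{\aonechart}{\saeqfun}$ over $\asys$ of $\astexpi{1} \formeq \astexpi{2}$, I would define a new labeling $L'$ on the same \onechart\ by swapping the two sides of every equation, so that $L'(\avert) \defdby (\aeqfuni{2}{\avert} \formeq \aeqfuni{1}{\avert})$; since the correctness conditions for the new left and right components are precisely the (already satisfied) old conditions for $\saeqfuni{2}$ and $\saeqfuni{1}$ respectively, $\pair{\aonechart}{L'}$ is a coinductive proof of $\astexpi{2} \formeq \astexpi{1}$ on the same \onechart, and hence with the same guardedness and \LLEE\ status. Next, for the closure properties $\seqin{\asys} \circ \sabinrel \subseteq \sabinrel$ and $\sabinrel \circ \seqin{\asys} \subseteq \sabinrel$, I would take $\astexpi{0} \eqin{\asys} \astexpi{1}$ together with a coinductive proof $\pair{\aonechart}{\saeqfun}$ of $\astexpi{1} \formeq \astexpi{2}$, and build a coinductive proof of $\astexpi{0} \formeq \astexpi{2}$ by relabeling only the start vertex: set $L'(\start) \defdby (\astexpi{0} \formeq \astexpi{2})$ and keep $L'(\avert) \defdby \aeqfun{\avert}$ for $\avert \neq \start$. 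The solution condition at $\start$ for the new left-hand side follows from the old one by $\TRANS$ applied to $\astexpi{0} \eqin{\asys} \astexpi{1}$, and at any vertex $\bvert$ whose successors include $\start$ the updated condition is obtained from the old one by applying $\CXT$ to rewrite $\astexpi{1}$ into $\astexpi{0}$ inside each summand $\aoneacti{i} \cdot \aeqfuni{1}{\start}$ on the right-hand side, followed by $\TRANS$. The companion closure property is handled symmetrically, by relabeling only the right side of the equation at $\start$.

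For reflexivity I would use the \onechart\ interpretation: given $\astexp\in\StExpover{\actions}$, take $\onechartof{\astexp}$ with the diagonal labeling that sends every vertex $\asstexp$ to $\proj{\asstexp} \formeq \proj{\asstexp}$. By Theorem~\ref{thm:onechart-int:LLEEw} the \onechart~$\onechartof{\astexp}$ is a guarded \LLEEonechart, and by Lemma~\ref{lem:onechart-int:milnersysmin:solvable} the projection $\sproj$ is a \provablein{\milnersysmin}, and hence \provablein{\asys}, solution of $\onechartof{\astexp}$ with principal value $\astexp$, in the intended setting of the lemma in which $\asys$ extends $\milnersysmin$; this yields a \LLEEwitnessed\ coinductive proof over $\asys$ of $\astexp \formeq \astexp$. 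Finally, the inclusion $\seqin{\asys} \subseteq \sabinrel$ follows by combining reflexivity with the closure property already proved: from $\astexpi{1} \eqin{\asys} \astexpi{2}$ and $\astexpi{2} \abinrel \astexpi{2}$, the inclusion $\seqin{\asys} \circ \sabinrel \subseteq \sabinrel$ gives $\astexpi{1} \abinrel \astexpi{2}$. The only real technical subtlety lies in the closure properties: changing $\aeqfuni{1}{\start}$ from $\astexpi{1}$ to $\astexpi{0}$ affects the right-hand side of the solution condition at every vertex (including $\start$ itself, if there is a cycle through $\start$) whose list of successors includes $\start$; the key observation that dissolves this obstacle is that, because $\asys$ is \eqlogicbased, $\CXT$ together with $\TRANS$ permits uniform substitution of $\asys$-equivalent subexpressions inside any context, so each affected solution condition is transformed into its updated form by a routine application of equational reasoning, with no need to re-examine the global structure of the coinductive proof.
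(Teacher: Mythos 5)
The paper states this lemma without any proof, presenting it as an ``easy observation,'' so there is nothing to compare your argument against line by line; judged on its own, your constructions are the natural ones and they work. Swapping the two sides of every equation label handles symmetry, since conditions (cp1)--(cp2) are symmetric in $\saeqfuni{1}$ and $\saeqfuni{2}$. For the two composition closures, relabelling only the start vertex and repairing the affected correctness conditions with $\CXT$, $\SYMM$ and $\TRANS$ is exactly right; you correctly isolate the only delicate point, namely that every vertex with a transition \emph{into} $\start$ (including $\start$ itself) has its condition perturbed, and that these perturbations are dissolved by replacing $\asys$-equal subterms in context. Note that this part needs nothing beyond the $\eqlogic$ rules, so it is valid under the stated hypothesis $\ACI \subsystem \asys$. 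Using $\onechartof{\astexp}$ with the diagonal labelling for reflexivity is also the right move, since it covers the \LLEE-witnessed case uniformly via Theorem~\ref{thm:onechart-int:LLEEw}.

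The one substantive issue is the one you half-flag yourself: your reflexivity argument invokes Lemma~\ref{lem:onechart-int:milnersysmin:solvable} and therefore needs $\milnersysmin \isthmsubsumedby \asys$, whereas the lemma only assumes $\ACI \subsystem \asys$. This is not a repairable gap in your proof but an imprecision in the statement: under the literal hypothesis, reflexivity (and with it $\seqin{\asys} \subseteq \sabinrel$) actually fails. Take $\astexp \synteq \aact$ and $\asys$ consisting of $\eqlogic$ plus only the $\ACI$ axioms; every correctness condition at a start vertex has the shape $\aact \eqin{\asys} \stexpsum{\terminatesconstof{\aonechart}{\start}}{\sum_{i} \stexpprod{\aoneacti{i}}{\asol{\averti{i}}}}$, and since $\ACI$-equality preserves the set of maximal non-$\sstexpsum$ summands, the right-hand side always carries a summand $\stexpzero$ or $\stexpone$, which $\aact$ does not; so no \onechart\ admits an \provablein{\asys} solution with principal value $\aact$. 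You should therefore state explicitly that the last two claims of the lemma are proved under the additional assumption that the axioms of $\milnersysmin$ are available in $\asys$ (which holds in every instance $\thplus{\milnersysmin}{\aseteqs}$ the paper actually uses), while symmetry and the two composition closures hold under the hypothesis as written.
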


The proposition below, and the subsequent remark
  are evidence for the fact, mentioned at the start of this section,
    that coinductive proofs over proof systems that are sound with respect to $\sprocsemeq$
      derive semantically valid conclusions themselves. 
Rather than formulating their statements for all semantically sound proof systems,
  we restrict our attention to systems that are \theoremsubsumed\ by \milnersys.      

\begin{prop}\label{prop:coindproofeq:2:onebisim}
  Let $\asys$ be an \eqlogicbased\ proof system over $\StExpover{\actions}$ with $\ACI \subsystem \asys \isthmsubsumedby \milnersys$. 
  Then for all $\astexpi{1},\astexpi{2}\in\StExpover{\actions}$ it holds:
  \begin{equation}
    \astexpi{1}
      \coindproofeqin{\asys}
    \astexpi{2}  
      \;\;\Longrightarrow\;\;
    \chartof{\astexpi{1}}
      \bisim
    \chartof{\astexpi{2}} \punc{,}
  \end{equation}
  That is, if there is a coinductive proof over $\asys$ of $\astexpi{1} \formeq \astexpi{2}$,
    then the chart interpretations of $\astexpi{1}$ and $\astexpi{2}$ are bisimilar. 
\end{prop}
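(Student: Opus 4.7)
The plan is to lift the given coinductive proof over $\asys$ to a coinductive proof over $\milnersys$, and then to apply the coalgebraic characterization of provable solutions from Section~\ref{solutions} (Proposition~\ref{prop:char:solution:SRS}) together with Lemma~\ref{lem:bisim:factor:onechart}, assembling a chain of \onebisimulations\ that connects $\chartof{\astexpi{1}}$ and $\chartof{\astexpi{2}}$ through the underlying \onechart~$\aonechart$.

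More precisely, suppose $\aCoProof = \pair{\aonechart}{\saeqfun}$ is a coinductive proof over~$\asys$ of $\astexpi{1} \formeq \astexpi{2}$, with left- and right-hand side functions $\saeqfuni{1},\saeqfuni{2} \funin \vertsof{\aonechart} \to \StExpover{\actions}$ which are \provablein{\asys} solutions of $\aonechart$ satisfying $\aeqfuni{1}{\start} \synteq \astexpi{1}$ and $\aeqfuni{2}{\start} \synteq \astexpi{2}$. Since by hypothesis $\asys \isthmsubsumedby \milnersys$, every equation derivable in $\asys$ is also derivable in $\milnersys$; applied to the solution correctness conditions \eqref{eq:def:provable:solution}, this shows that $\saeqfuni{1}$ and $\saeqfuni{2}$ are also \provablein{\milnersys} solutions of $\aonechart$.

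Now I would invoke Proposition~\ref{prop:char:solution:SRS} twice, for $\asys = \milnersys$, to obtain functional \onebisimulations
\begin{equation*}
  \aonechart \:\funonebisimvia{\eqcl{\saeqfuni{1}}{\seqin{\milnersys}}}\! \factorset{\chartof{\astexpi{1}}}{\seqin{\milnersys}}
    \qquad\text{and}\qquad
  \aonechart \:\funonebisimvia{\eqcl{\saeqfuni{2}}{\seqin{\milnersys}}}\! \factorset{\chartof{\astexpi{2}}}{\seqin{\milnersys}} \punc{.}
\end{equation*}
By Lemma~\ref{lem:bisim:factor:onechart} we also have the functional \onebisimulations\ $\chartof{\astexpi{i}} \funonebisimvia{\sprojwrt{\seqin{\milnersys}}} \factorset{\chartof{\astexpi{i}}}{\seqin{\milnersys}}$ for $i\in\setexp{1,2}$. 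Combining these four \onebisimulations, I obtain the chain
\begin{equation*}
  \chartof{\astexpi{1}} \:\funonebisim\: \factorset{\chartof{\astexpi{1}}}{\seqin{\milnersys}}
    \:\convfunonebisim\: \aonechart
    \:\funonebisim\: \factorset{\chartof{\astexpi{2}}}{\seqin{\milnersys}}
    \:\convfunonebisim\: \chartof{\astexpi{2}} \punc{,}
\end{equation*}
whose relational composition (closed under symmetry and transitivity) yields a \onebisimulation\ between $\chartof{\astexpi{1}}$ and $\chartof{\astexpi{2}}$.

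Finally, since chart interpretations $\chartof{\astexpi{1}}$ and $\chartof{\astexpi{2}}$ are \onetransitionfree, the induced transitions in the \onebisimulation\ conditions coincide with the proper transitions, so the \onebisimulation\ just obtained is a bisimulation in the ordinary sense, giving $\chartof{\astexpi{1}} \bisim \chartof{\astexpi{2}}$. The main technical point to verify carefully is closure of \onebisimilarity\ under relational composition (so that the chain above can be collapsed into a single \onebisimulation); this is standard but deserves attention because the intermediate \onechart~$\aonechart$ may carry \onetransitions\ even though the endpoints do not. I do not expect a serious obstacle here: the argument is a straightforward application of the coalgebraic lemma of Section~\ref{solutions}, which is exactly what Proposition~\ref{prop:char:solution:SRS} was formulated for.
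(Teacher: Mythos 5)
Your proposal is correct and follows essentially the same route as the paper's own proof: lift the \provablein{\asys} solutions to \provablein{\milnersys} solutions via $\asys \isthmsubsumedby \milnersys$, apply Proposition~\ref{prop:char:solution:SRS} and Lemma~\ref{lem:bisim:factor:onechart} to obtain the cospan of functional ($\sone$-)bisimulations through the factor charts, and compose, using that \onebisimulations\ between \onetransitionfree\ charts are ordinary bisimulations. The closure of \onebisimilarity\ under composition that you flag as the remaining technical point is exactly the fact the paper also relies on without further elaboration.
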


\begin{figure}[t]
\begin{center}
\begin{tikzpicture}
  \matrix[anchor=center,row sep=0.45cm,column sep=0.57cm,ampersand replacement=\&] {
      \& 
        \& 
          \& \node(top){\large $\aonechart$};
    \\
    \node(assm-pos){};
      \&
        \&
          \&
            \&
              \&
                \& \node(concl-pos){};
    \\
     \& \node(leftleft){\large $\chartof{\astexpi{1}}$};
        \& \node(left){\large $\factorset{\chartof{\astexpi{1}}}{\eqin{\milnersys}}$};
          \&
            \& \node(right){\large $\factorset{\chartof{\astexpi{2}}}{\eqin{\milnersys}}$}; 
              \& \node(rightright){\large $\chartof{\astexpi{2}}$};
    \\    
    }; 
  \path (top) ++ (0.1cm,0.3cm) node[right]{\forestgreen{LLEE}, guarded};   
  \draw[funbisimright] (leftleft) to node[below,yshift=-1ex]{\tiny (by Lem.~\ref{lem:bisim:factor:onechart})} (left);  
  \draw[funonebisimleft] (top) to node[above,xshift=-1.6em,pos=0.5]{\black{$\eqcl{\saeqfuni{1}}{\seqin{\milnersys}}$}} (left);
  \draw[funonebisimright] (top) to node[above,xshift=1.6em,pos=0.5]{\black{$\eqcl{\saeqfuni{2}}{\seqin{\milnersys}}$}} (right);  
  \draw[funbisimleft] (rightright) to node[below,yshift=-1ex]{\tiny (by Lem.~\ref{lem:bisim:factor:onechart})} (right);   
    
  \path (assm-pos) ++ (-0.8cm,0.4cm) node(assm)
         {$\left.
           \parbox{\widthof{\large coinductive proof}}
                  {{\large \hspace*{\fill}$\pair{\aonechart}{\saeqfun}$\hspace*{\fill}\mbox{}}
                   % \\
                   % \LLEE-witnessed
                   \\[-0.1ex]
                   \hspace*{\fill}coinductive proof\hspace*{\fill}\mbox{}
                   \\[-0.1ex]
                   \hspace*{\fill}of $\astexpi{1} \formeq \astexpi{2}$\hspace*{\fill}\mbox{}
                   \\[0.1ex]
                   \hspace*{\fill}over $\asys \isthmsubsumedby \milnersys$\hspace*{\fill}\mbox{}}
           \right\}$}; 
    
  \path (concl-pos) ++ (0.5cm,0.4cm) node(concl)
                                         {\large
                                          $\left\{\;
                                           \begin{gathered}
                                             \chartof{\astexpi{1}} \bisim \chartof{\astexpi{2}}
                                             \\
                                             \astexpi{1} \procsemeq \astexpi{2} 
                                           \end{gathered}
                                           \;\right.$};  
    
  \draw[-implies,double equal sign distance]
    (assm)
      to node[below,yshift=-0.125cm,xshift=0.2cm]{\tiny (Prop.$\,$\ref{prop:char:solution:SRS})}  
    ($(assm.east) + (0.75cm,0cm)$);  
  \draw[-implies,double equal sign distance,shorten >=4pt] 
    ($(concl.west) + (-0.85cm,0cm)$) 
      to % node[below,pos=0.275]{\tiny (by Lem.~\ref{lem:bisim:factor:onechart})} 
    (concl); 
    
\end{tikzpicture}  
\end{center}
  \vspace*{-2ex}
  \caption{\label{fig:prop:coindproofeq:2:onebisim}%
           Coinductive proofs %of $\astexpi{1} \formeq \astexpi{2}$ 
             over $\asys \isthmsubsumedby \milnersys$ guarantee
             equality of process semantics.}
\end{figure}
\begin{proof}
  We have illustrated the proof of this proposition in Figure~\ref{fig:prop:coindproofeq:2:onebisim}:
  In every coinductive proof $\pair{\aonechart}{\saeqfun}$ over $\asys$ with $\asys \isthmsubsumedby \milnersys$ of an equation $\astexpi{1} \formeq \astexpi{2}$,
    the \starexpression\ functions $\saeqfuni{1}$ and $\saeqfuni{2}$ are \provablein{\milnersys} solutions of $\aonechart$.
  Then by using Proposition~\ref{prop:char:solution:SRS} and Lemma~\ref{lem:bisim:factor:onechart}
    we get the link of functional ($\sone$-)bi\-si\-mu\-la\-tions between $\chartof{\astexpi{1}}$ and $\chartof{\astexpi{2}}$ 
      as drawn in that picture. 
  Since (functional) bisimulations compose with (functional) \onebisimulations\ to \onebisimulations,
    and \onebisimulations\ between charts are bisimulations,
  we obtain that $\chartof{\astexpi{1}} \bisim \chartof{\astexpi{2}}$ holds, and consequently, that $\astexpi{1} \procsemeq \astexpi{2}$ holds.
\end{proof}

\begin{rem}\label{rem:coindproof:defs:onebisim:up:to}
  For every coinductive proof $\aCoProof = \pair{\aonechart}{\saeqfun}$, whether $\aCoProof$ is \LLEEwitnessed\ or not, 
    over an \eqlogicbased\ proof system $\asys$ with $\ACI \subsystem \asys \isthmsubsumedby \milnersys$
  the finite relation defined by: 
  \begin{center}
    $
    \abisim
      \,\defdby\,
        \descsetexpBig{\displaystyle
          \Big\langle
            \terminatesconstof{\aonechart}{\avert}
              +
            \sum_{i=1}^{n} 
              \aoneacti{i} \prod \aeqfuni{1}{\averti{i}}
            ,\,
            \terminatesconstof{\aonechart}{\avert}
              +
            \sum_{i=1}^{n} 
              \aoneacti{i} \prod \aeqfuni{2}{\averti{i}}
          \Big\rangle
                       }{ \begin{gathered}
                             \transitionsinfrom{\aonechart}{\avert}
                               =
                             \descsetexpbig{ \avert \lt{\aoneacti{i}} \averti{i} }{ i \in\setexp{1,\ldots,n} },
                             \\[-0.75ex]
                             \avert\in\vertsof{\aonechart},\, \text{the set of vertices of $\aonechart$}
                          \end{gathered} }
    $
  \end{center}
  is a \onebisimulation\ up to $\seqin{\asys}$ 
  with respect to the labeled transition system on all star expressions that is defined by the TSS in Definition~\ref{def:chartof}. 
  This can be shown by using that the left-hand sides $\aeqfuni{1}{\avert}$, and respectively the right-hand sides $\aeqfuni{2}{\avert}$,
  of the equations $\aeqfun{\avert}$ in $\aCoProof$, for $\avert\in\vertsof{\aonechart}$,
    form \provablein{\asys} solutions of the \onechart\ $\aonechart$ that underlies $\aCoProof$.
\end{rem}

We now define two proof systems \CLC\ and \CC\ for combining \LLEEwitnessed\ coinductive provability. 
Each of these systems consists of a single rule scheme, a more specific one for \CLC, and a more liberal one for \CC. 
Instances of rules of these two schemes formalize \LLEEwitnessed\ coinductive provability in \CLC, 
                                                  and respectively coinductive provability in \CC,
  of equations between star expressions from assumed equations. 
Different from the exposition in \cite{grab:2021:calco:arxiv,grab:2021:calco},
  where we permitted entire coinductive proofs as formulas and as premises of rules,
    we here keep the proof systems \equationbasedover{\actions}
      by externalizing coinductive proofs from the rules by `hiding' them in \sideconditions.%
      \footnote{Keeping the systems equation-based (by avoiding coinductive proofs as formulas as in \cite{grab:2021:calco:arxiv,grab:2021:calco})
                  permits us to then compare the coinductive proof systems \CLC, \CC, and later \coindmilnersys, and \coindmilnersysbar\
                  via rule derivability and admissibility to Milner's system \milnersys\ and its variants $\milnersysacc$, and $\milnersysaccbar$.}                                                                       
The more restricted proof system \CLC\ %that requires \LLEEwitnessed\ coinductive proofs in \sideconditions\ of rules that formalize coinductive provability
  will form the core of our coinductive reformulation of Milner's~system. % that we will introduce.   

\begin{defi}[proof systems \protect\CLC, \protect\CC\ for combining (\protect\LLEE-witn.) coinductive~provability]\label{def:CLC:CC}
  Let $\actions$ be a set of actions.
  We define \equationbasedover{\actions} proof systems \CLCover{\actions} and \CCover{\actions}.
  
  The \emph{proof system \CLCover{\actions} for combining \LLEEwitnessed\ 
            coinductive provability (over extensions of $\milnersysminover{\actions}\!$) of equations between star expressions over $\actions$}
  is an \equationbasedover{\actions} proof system 
    without \emph{axioms},
    but with the \emph{rules} of the scheme:  
  \begin{gather}
    \mbox{
      \AxiomC{$ \cstexpi{1}   \formeq   \dstexpi{1} $}
      \AxiomC{\ldots}
      \AxiomC{$ \cstexpi{n}   \formeq   \dstexpi{n} $}
      \insertBetweenHyps{\hspace*{0.25ex}}
      \RightLabel{\LCoindProofi{n} \mbox{} {\small (if \eqref{LCoindProof:sidecond} holds)}}
      \TrinaryInfC{$ \astexp \formeq \bstexp $}           
      \DisplayProof
      }
      \label{def:LCoProof}
    \\[0.5ex]
    \text{$\astexp \LLEEcoindproofeqin{\thplus{\milnersysmin}{\aseteqs}} \bstexp$
          for $\Gamma = \setexp{ \cstexpi{1} \formeq \dstexpi{1}, \ldots, \cstexpi{n} \formeq \dstexpi{n} }$
          with $n\in\nat$ (including $n=0$).}
      \label{LCoindProof:sidecond}
  \end{gather} 
  
  The \emph{proof system \CCover{\actions} for combining coinductive provability (over extensions of $\milnersysminover{\actions}$)} % between star expressions over $\actions$}
    is an \equationbasedover{\actions} proof system 
      without \emph{axioms},
        but with the \emph{rules} of the scheme:  
  \begin{gather}
    \mbox{
      \AxiomC{$ \cstexpi{1}   \formeq   \dstexpi{1} $}
      \AxiomC{\ldots}
      \AxiomC{$ \cstexpi{n}   \formeq   \dstexpi{n} $}
      \insertBetweenHyps{\hspace*{0.25ex}}
      \RightLabel{\CoindProofi{n} \mbox{} {\small (if \eqref{CoindProof:sidecond} holds)}}
      \TrinaryInfC{$ \astexp \formeq \bstexp $}           
      \DisplayProof
      }
      \label{def:CoProof}
    \notag
    \\[0.5ex]
    \text{$\astexp \coindproofeqin{\thplus{\milnersysmin}{\aseteqs}} \bstexp$
          for $\Gamma = \setexp{ \cstexpi{1} \formeq \dstexpi{1}, \ldots, \cstexpi{n} \formeq \dstexpi{n} }$
          with $n\in\nat$ (including $n=0$).}
      \label{CoindProof:sidecond}
  \end{gather} 
  
  We will keep the set $\actions$ implicit, and write \CLC\ and \CC\ for \CLCover{\actions} and \CCover{\actions}, respectively. 
\end{defi}

Note that the systems \CLC\ and \CC\ do not contain 
  the rules of $\eqlogic$ nor any axioms.
Instead, derivations in these systems have to start with 0-premise instances of $\LCoindProofi{0}$ or~$\CoindProofi{0}$.
Due to Lemma~\ref{lem:LLEEcoindproofeq:2:coindproofeq} every instance of the rule $\LCoindProofi{n}$ of $\CLC$ for some $n\in\nat$ 
  is also an instance of the rule $\CoindProofi{n}$ of $\CC$.
It follows that derivability of an equation $\astexp \formeq \bstexp$ in \CLC\ implies derivability of $\astexp \formeq \bstexp$ in \CC,
  that is, $\CLC \isthmsubsumedby \CC$ holds, see Lemma~\ref{lem:thmsubsumption:rels:coindproofsystems}, \ref{it:1:lem:thmsubsumption:rels:coindproofsystems}, below.
  
Based on \CLC, we now define the system that we call the coinductive variant $\coindmilnersys$ of Milner's proof system $\milnersys$.
For this, we replace the fixed-point rule %$\RSPstar$ 
                                in $\milnersys$
  by the rule scheme %$\family{\LCoindProofi{n}}{n\in\nat}$ 
                     of \CLC,
    or equivalently, by adding this rule scheme to the purely equational part $\milnersysmin$ of $\milnersys$. 
By adding the rule scheme %$\family{\LCoindProofi{n}}{n\in\nat}$ 
                          of \CC\ to $\milnersysmin$,
  we also define an extension $\coindmilnersysbar$ of $\coindmilnersys$.

\begin{defi}[proof systems $\protect\coindmilnersys$, $\protect\coindmilnersysone$, and $\protect\coindmilnersysbar$]\label{def:coindmilnersys}
  Let $\actions$ be a set of actions.\nopagebreak[4]
  
  The proof system $\coindmilnersysover{\actions}$,
    the \emph{coinductive variant of $\milnersysover{\actions}$},
  is an \eqlogicbased\ proof system whose \emph{axioms} are those of $\milnersysminover{\actions}$,
  and whose \emph{rules} are those of $\eqlogicover{\actions}$,
    plus the \emph{rule scheme} $\family{\LCoindProofi{n}}{n\in\nat}$ from $\CLCover{\actions}$.
  By $\coindmilnersysoneover{\actions}$ we mean
    \emph{the simple coinductive variant of $\milnersysover{\actions}$},
  an \eqlogicbased\ proof system
    that arises by only adding the rule $\LCoindProofi{1}$ of $\CLCover{\actions}$ to the rules and axioms of $\milnersysminover{\actions}$.
  
  By $\coindmilnersysbarover{\actions}$ we mean the variant of $\coindmilnersysover{\actions}$
    in which the more general \emph{rule scheme} $\family{\CoindProofi{n}}{n\in\nat}$ from $\CCover{\actions}$
    is used (instead of $\family{\LCoindProofi{n}}{n\in\nat}$ from $\CLCover{\actions}$).
    
  We again permit to write
             $\coindmilnersys$, $\coindmilnersysone$, $\coindmilnersysbar$
         for $\coindmilnersysover{\actions}$, $\coindmilnersysoneover{\actions}$, and $\coindmilnersysbarover{\actions}$, respectively.
\end{defi}

We now prove a lemma (Lemma~\ref{lem:thmsubsumption:rels:coindproofsystems} below) 
  that gathers elementary theorem equivalence and theorem subsumption statements 
    between the coinductive variants of Milner's system defined above. 
For its proof we argue
  with subsystem relationships as gathered by Lemma~\ref{lem:subsystem:rels:coindproofsystems} below, 
    and we explain basic proof transformations between these systems. 

\begin{lem}\label{lem:subsystem:rels:coindproofsystems}
  The following subsystem relationships hold
  between the coinductive proof systems defined above: %in Definition~\ref{def:CLC:CC} and Definition~\ref{def:coindmilnersys}:
  \begin{enumerate*}[label={(\roman{*})}]
    \item{}\label{it:1:lem:subsystem:rels:coindproofsystems}
      $\CLC \subsystem \coindmilnersys$,
    \item{}\label{it:2:lem:subsystem:rels:coindproofsystems} 
      $\coindmilnersysone \subsystem \coindmilnersys$,
    \item{}\label{it:3:lem:subsystem:rels:coindproofsystems}
      $\CC \subsystem \coindmilnersysbar$.   
  \end{enumerate*}
\end{lem}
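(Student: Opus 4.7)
All three claims are direct verifications from the definitions (Definition~\ref{def:CLC:CC} of $\CLC$ and $\CC$, Definition~\ref{def:coindmilnersys} of $\coindmilnersys$, $\coindmilnersysone$, and $\coindmilnersysbar$), together with the criterion for $\ssubsystem$ in Definition~\ref{def:subsystem:isthmsubsumedby:thmequiv}. The plan is to unfold each system into its pair (axioms, rules) and check containment componentwise; no derivation-level argument is needed.

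For \ref{it:1:lem:subsystem:rels:coindproofsystems}, I note that $\CLC$ has no axioms, so the axiom-containment condition holds vacuously; its rules are exactly the scheme $\family{\LCoindProofi{n}}{n\in\nat}$, which by Definition~\ref{def:coindmilnersys} is explicitly included among the rules of $\coindmilnersys$. Hence $\CLC \subsystem \coindmilnersys$. For \ref{it:3:lem:subsystem:rels:coindproofsystems}, the argument is identical with $\CC$ and $\family{\CoindProofi{n}}{n\in\nat}$ in place of $\CLC$ and $\family{\LCoindProofi{n}}{n\in\nat}$: $\CC$ has no axioms, and its sole family of rules is included among the rules of $\coindmilnersysbar$ by definition.

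For \ref{it:2:lem:subsystem:rels:coindproofsystems}, both $\coindmilnersysone$ and $\coindmilnersys$ have the axioms of $\milnersysmin$ and the rules of $\eqlogic$ in common, so these parts match. The additional rules of $\coindmilnersysone$ consist of the single rule $\LCoindProofi{1}$, which is a particular instance ($n=1$) of the scheme $\family{\LCoindProofi{n}}{n\in\nat}$ that is added in $\coindmilnersys$. Thus every rule of $\coindmilnersysone$ is a rule of $\coindmilnersys$, yielding $\coindmilnersysone \subsystem \coindmilnersys$.

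There is no substantial obstacle here: the only point that requires mild care is making sure the side-condition attached to an instance of $\LCoindProofi{n}$ (respectively $\CoindProofi{n}$) is read identically in both systems under comparison. Since the side-conditions \eqref{LCoindProof:sidecond} and \eqref{CoindProof:sidecond} are formulated without reference to the ambient proof system in which the rule is being applied (they refer only to $\sLLEEcoindproofeqin{\thplus{\milnersysmin}{\aseteqs}}$ and $\scoindproofeqin{\thplus{\milnersysmin}{\aseteqs}}$, respectively), an instance of $\LCoindProofi{n}$ in $\CLC$ is the very same rule instance when viewed inside $\coindmilnersys$, and analogously for $\CoindProofi{n}$ in $\CC$ versus $\coindmilnersysbar$ and for $\LCoindProofi{1}$ in $\coindmilnersysone$ versus $\coindmilnersys$. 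This completes the three sub-system verifications.
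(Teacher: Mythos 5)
Your verification is correct and matches the paper's treatment: the paper states this lemma without proof, regarding all three containments as immediate from Definitions~\ref{def:CLC:CC}, \ref{def:coindmilnersys}, and \ref{def:subsystem:isthmsubsumedby:thmequiv}, which is exactly the componentwise axiom/rule check you carry out. Your remark that the side-conditions are phrased relative to $\thplus{\milnersysmin}{\aseteqs}$ rather than the ambient system, so that rule instances are literally identical across systems, is a worthwhile point of care that the paper leaves implicit.
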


\begin{lem}\label{lem:thmsubsumption:rels:coindproofsystems}
  The following theorem subsumption and theorem equivalence statements hold:
  \begin{enumerate}[label={(\roman{*})},align=right,leftmargin=*,itemsep=0.5ex]
    \item{}\label{it:1:lem:thmsubsumption:rels:coindproofsystems}
      $ \CLC \isthmsubsumedby \CC $,
      
    \item{}\label{it:2:lem:thmsubsumption:rels:coindproofsystems}
      $ \coindmilnersysone
          \isthmsubsumedby
        \coindmilnersys 
          \isthmsubsumedby
        \coindmilnersysbar$,
        
    \item{}\label{it:3:lem:thmsubsumption:rels:coindproofsystems}
      $ \CLC \thmequiv \coindmilnersys $, 
      
    \item{}\label{it:4:lem:thmsubsumption:rels:coindproofsystems}
      $ \CC \thmequiv \coindmilnersysbar $.
  \end{enumerate}
\end{lem}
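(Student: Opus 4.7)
Most of the inclusions in the lemma are immediate. By Lemma~\ref{lem:subsystem:rels:coindproofsystems} we obtain at once $\CLC \isthmsubsumedby \coindmilnersys$, $\CC \isthmsubsumedby \coindmilnersysbar$, and $\coindmilnersysone \isthmsubsumedby \coindmilnersys$, which account for one direction each in (iii), (iv), and for the first inclusion of~(ii). For item~(i) and the remaining inclusion of~(ii) the plan is to observe that Lemma~\ref{lem:LLEEcoindproofeq:2:coindproofeq} upgrades the side condition of every $\LCoindProofi{n}$-instance to the side condition of the corresponding $\CoindProofi{n}$-instance with the same premises and conclusion; consequently, every derivation in $\CLC$ is literally a derivation in $\CC$, and every derivation in $\coindmilnersys$ may be reread as a derivation in $\coindmilnersysbar$ by replacing each $\LCoindProofi{n}$-step by the appropriate $\CoindProofi{n}$-step.

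The substantive content is the converse direction $\coindmilnersys \isthmsubsumedby \CLC$ of~(iii); the analogous direction of~(iv) follows by the same pattern. I would argue by induction on the structure of a derivation $\aDeriv$ in $\coindmilnersys$ of a theorem $\astexp \formeq \bstexp$, showing that $\astexp \eqin{\CLC} \bstexp$. If the root of $\aDeriv$ is an axiom of $\milnersysmin$ or the rule $\REFL$, a single application of $\LCoindProofi{0}$ suffices, since its side condition $\astexp \LLEEcoindproofeqin{\milnersysmin} \bstexp$ follows from $\astexp \milnersysmineq \bstexp$ combined with the inclusion $\seqin{\milnersysmin} \subseteq \sLLEEcoindproofeqin{\milnersysmin}$ of Lemma~\ref{lem:props:coindproofeq:LLEEcoindproofeq}. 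If the root is $\SYMM$, $\TRANS$, or $\CXT$, I invoke the induction hypothesis on the $n$ premises (with $n=1$, $2$, and $1$, respectively) to obtain $\CLC$-derivations of $\cstexpi{1} \formeq \dstexpi{1}, \ldots, \cstexpi{n} \formeq \dstexpi{n}$ and then apply $\LCoindProofi{n}$ with these premises and conclusion $\astexp \formeq \bstexp$; setting $\aseteqs \defdby \setexp{ \cstexpi{i} \formeq \dstexpi{i} }_{i=1}^n$, the side condition $\astexp \LLEEcoindproofeqin{\thplus{\milnersysmin}{\aseteqs}} \bstexp$ is satisfied because the conclusion is already derivable in $\thplus{\milnersysmin}{\aseteqs}$ (by the same equational-logic rule applied to the assumed premises, all of the rules $\REFL$, $\SYMM$, $\TRANS$, $\CXT$ being part of any \eqlogicbased\ system), and Lemma~\ref{lem:props:coindproofeq:LLEEcoindproofeq} supplies the inclusion $\seqin{\thplus{\milnersysmin}{\aseteqs}} \subseteq \sLLEEcoindproofeqin{\thplus{\milnersysmin}{\aseteqs}}$. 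If, finally, the root of $\aDeriv$ is itself a $\LCoindProofi{n}$-instance, I apply the same instance in $\CLC$ to the $\CLC$-derivations of its premises supplied by the induction hypothesis; its side condition is unchanged. The proof of (iv) follows the same recipe with $\CoindProofi{n}$ in place of $\LCoindProofi{n}$ throughout, relying on the corresponding closure properties of $\scoindproofeqin{\cdot}$ in Lemma~\ref{lem:props:coindproofeq:LLEEcoindproofeq}.

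The main obstacle, although not technically deep, is to check in each equational-logic case that the required \LLEEwitnessed\ (respectively, plain) coinductive proof does exist for the chosen assumption set $\aseteqs$. All these existence claims are packaged uniformly through Lemma~\ref{lem:props:coindproofeq:LLEEcoindproofeq}: the argument really amounts to recognizing that the relations $\sLLEEcoindproofeqin{\thplus{\milnersysmin}{\aseteqs}}$ and $\scoindproofeqin{\thplus{\milnersysmin}{\aseteqs}}$ already contain every equation derivable in the same \eqlogicbased\ system. With that observation in hand, the entire proof reduces to routine bookkeeping on top of Lemma~\ref{lem:LLEEcoindproofeq:2:coindproofeq} and Lemma~\ref{lem:props:coindproofeq:LLEEcoindproofeq}, closing each item via the corresponding pair of theorem subsumptions.
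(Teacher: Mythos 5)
Your proof is correct; items (i), (ii), and the easy halves of (iii) and (iv) coincide with the paper's own argument, but for the substantive direction $\coindmilnersys \isthmsubsumedby \CLC$ (and likewise $\coindmilnersysbar \isthmsubsumedby \CC$) you take a genuinely different route. The paper inducts on the depth of the derivation and absorbs entire maximal $\milnersysmin$-segments into the side condition of a \emph{single} coinductive-rule instance: a bottommost $\LCoindProofi{n}$ whose $i$-th subderivation mixes equational reasoning with $m$ coinductive conclusions is replaced by an $\LCoindProofi{n+m-1}$, which forces a re-verification of the side condition for the changed premise set; and for a derivation not ending in a coinductive rule it explicitly constructs the required witness as $\pair{\onechartof{\astexp}}{\saeqfun}$ with the identity labeling altered only at the start vertex. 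You instead wrap each individual inference ($\REFL$, axioms, $\SYMM$, $\TRANS$, $\CXT$) in its own $\LCoindProofi{n}$ instance and discharge every side condition uniformly via the closure property ${\seqin{\asys}} \subseteq {\sLLEEcoindproofeqin{\asys}}$ from Lemma~\ref{lem:props:coindproofeq:LLEEcoindproofeq}. This is shorter and more modular, but it delegates the only real content --- existence of an \LLEEwitnessed\ coinductive proof for every $\thplus{\milnersysmin}{\aseteqs}$-derivable equation --- to a lemma the paper states without proof, and whose justification is precisely the $\onechartof{\astexp}$ construction the paper carries out inside this very proof; you should at least gesture at that construction rather than treat the lemma as free. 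The paper's transformation also produces a far more economical $\CLC$-derivation (one coinductive instance per maximal equational block rather than per inference), which is irrelevant for theorem equivalence but aligns with the bounded-depth normal forms discussed in the conclusion. One shared wrinkle: both arguments use the inclusion ${\sLLEEcoindproofeqin{\asys}} \subseteq {\scoindproofeqin{\asys}}$ for systems $\thplus{\milnersysmin}{\aseteqs}$ that need not be sound, although Lemma~\ref{lem:LLEEcoindproofeq:2:coindproofeq} carries a soundness hypothesis; the inclusion is in fact immediate from Definition~\ref{def:coindproof} without it, so this is harmless.
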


\begin{proof}
  We have argued for statement~\ref{it:1:lem:thmsubsumption:rels:coindproofsystems} above, below Definition~\ref{def:CLC:CC}: 
      every instance of the rule $\LCoindProofi{n}$ of $\CLC$, for $n\in\nat$,
        is also an instance of the rule $\CoindProofi{n}$ of $\CC$.
  This also implies the part $\coindmilnersys \isthmsubsumedby \coindmilnersysbar$ of statement~\ref{it:2:lem:thmsubsumption:rels:coindproofsystems},
    because it shows in every derivation $\aDeriv$ of an equation $\astexp \formeq \bstexp$ in $\coindmilnersys$
    every instance of $\LCoindProofi{n}$, for $n\in\nat$, can be replaced by an instance of $\CoindProofi{n}$
      with as result a derivation $\aDerivacc$ of $\astexp \formeq \bstexp$ in $\coindmilnersysbar$.
  The part $\coindmilnersysone \isthmsubsumedby \coindmilnersys$ of statement~\ref{it:2:lem:thmsubsumption:rels:coindproofsystems}
    follows from the fact that $\coindmilnersysone$ is a subsystem of $\coindmilnersys$ by Lemma~\ref{lem:subsystem:rels:coindproofsystems}, \ref{it:2:lem:subsystem:rels:coindproofsystems}.
          
  \smallskip      
  For statement~\ref{it:3:lem:thmsubsumption:rels:coindproofsystems}, 
  the \theoremsubsumption\ part $\CLC \isthmsubsumedby \coindmilnersys$ follows from $\CLC \subsystem \coindmilnersys$, 
    see Lemma~\ref{lem:subsystem:rels:coindproofsystems},~\ref{it:1:lem:subsystem:rels:coindproofsystems}. 
  For showing the converse implication, $\CLC \thmsubsumes \coindmilnersys$, 
  we will demonstrate the proof-transformation statement \eqref{eq:1:prf:it:3:lem:thmsubsumption:rels:coindproofsystems} below
    by first showing its special case \eqref{eq:2:prf:it:3:lem:thmsubsumption:rels:coindproofsystems}:
  \begin{align}
    &
    \left.\;
    \parbox{0.825\textwidth}
           {Every derivation $\aDeriv$ in \coindmilnersys\ can be transformed into a derivation $\aDerivacc$ in \CLC\ 
            \\[-0.1ex]
            with the same conclusion.}
      \label{eq:1:prf:it:3:lem:thmsubsumption:rels:coindproofsystems}
    \;\right\}   
    \displaybreak[0]\\
    &
    \left.\;
    \parbox{0.825\textwidth}
           {The transformation statement \eqref{eq:1:prf:it:3:lem:thmsubsumption:rels:coindproofsystems} 
            holds for every derivation $\aDeriv$ in \coindmilnersys\
            with an instance of $\LCoindProofi{n}$, for some $n\in\nat$, at the bottom.}
    \;\right\}
      \label{eq:2:prf:it:3:lem:thmsubsumption:rels:coindproofsystems}        
  \end{align} 
  
  The idea for both of these proof transformation statements 
      is to `hide' derivation parts that consist of axioms and rules in \milnersysmin\ 
        into the correctness statements of coinductive proofs that appear as \sideconditions\ in instances of the coinductive rule in \CLC.
  More precisely, the idea is to replace derivation parts $\aDerivi{0}$ in $\milnersysmin$ of a derivation $\aDeriv$ in \coindmilnersys,
    where $\aDerivi{0}$ consists of the inference in \milnersysmin\ of an equation $\cstexp \formeq \dstexp$ 
        from a set $\aseteqs$ of $m$ assumption equations,
    by an instance of the coinductive rule $\LCoindProofi{m}$ that has the $m$ assumptions of $\aDerivi{0}$ in $\aseteqs$ as its premises.
  Then whenever $\cstexp \formeq \dstexp$ is needed to justify a correctness conditions
    it can be reconstructed from the premises, provably in \milnersysmin, on the basis of the derivation $\aDerivi{0}$ in \milnersysmin. 
  
  We start by showing \eqref{eq:2:prf:it:3:lem:thmsubsumption:rels:coindproofsystems}, 
     and proceed for this purpose by induction on the depth~$\depth{\aDeriv}$ of~$\aDeriv$. 
  Suppose that $\aDeriv$ is a derivation in \coindmilnersys\ with an occurrence of an instance $\ainst$ of $\LCoindProofi{n}$ at the bottom.
  To perform the induction step for $\aDeriv$, 
    we need to transform $\aDeriv$ into a derivation $\aDerivacc$ in \CLC\ with the same conclusion.  
  We may assume that the immediate subderivations $\aDerivi{1}$, \ldots, $\aDerivi{n}$ of $\aDeriv$ (just above the instance $\ainst$)
    contain axioms and/or rules of $\milnersysmin$, because otherwise $\aDeriv$ is already a derivation in $\CLC$. 
  To keep the illustration of the transformation step simple, 
    we assume that only the $i$\nb-th subderivation $\aDerivi{i}$ contains axioms and/or rules of $\milnersys$;
      the general case will become clear through this example.
  We assume that $\aDeriv$ is~of~the~form:  
  \begin{align}
    \mbox{%
      \AxiomC{$\aDerivi{1}$}
      \dashedLine
      \UnaryInfC{$\cstexpi{1} \formeq \dstexpi{1}$}
      \AxiomC{$\aDerivi{i1}$}
      \dashedLine
      %\LeftLabel{$\scriptstyle\LCoindProofi{n_{i1}}$}
      \UnaryInfC{$(\cstexpi{i1} \formeq \dstexpi{i1})$}
      \AxiomC{$\aDerivi{im}$}
      \dashedLine
      %\RightLabel{$\scriptstyle\LCoindProofi{n_{im}}$}
      \UnaryInfC{$(\cstexpi{im} \formeq \dstexpi{im})$}
      \insertBetweenHyps{$\;\cdots\;$}
      \noLine
      \BinaryInfC{$\aDerivi{i0}$}
      \noLine
      \UnaryInfC{$\ldots \hspace*{3ex} \cstexpi{i} \formeq \dstexpi{i} \hspace*{3ex} \ldots $}
      \AxiomC{$\aDerivi{n}$}
      \dashedLine
      \UnaryInfC{$\cstexpi{n} \formeq \dstexpi{n}$}
      \RightLabel{$\LCoindProofi{n}$}
      \insertBetweenHyps{\hspace*{1ex}}
      \TrinaryInfC{$\astexp \formeq \bstexp$}
      \DisplayProof
      }
    \label{eq:3:prf:it:3:lem:thmsubsumption:rels:coindproofsystems}
    \\[1ex]
    \text{with \sidecondition} \;\;
      \astexp
        \LLEEcoindproofeqin{\thplus{\milnersysmin}{\aseteqs}}
      \bstexp
    \;\;  
    \text{for $\aseteqs = \setexp{ \cstexpi{1} \formeq \dstexpi{1},
                                      \ldots
                                   \cstexpi{n} \formeq \dstexpi{n} }$\punc{,}}
    \label{eq:4:prf:it:3:lem:thmsubsumption:rels:coindproofsystems}                               
  \end{align}  
  where $\aDerivi{1}, \ldots, \aDerivi{i-1}, \aDerivi{i+1}, \ldots \aDerivi{n}$ are already derivations in \CLC\
    (with bottommost instances of $\LCoindProofi{n_i}$ that are suggested by dashed lines),
  but $\aDerivi{i}$ contains axioms and/or rule instances of \milnersysmin,
    and can be construed with a bottom part $\aDerivi{i0}$ in \milnersysmin\ below $m$ conclusions
      $ \cstexpi{i1} \formeq \dstexpi{i1}$, \ldots, $\cstexpi{im} \formeq \dstexpi{im} $
        of instances of coinductive rules from $\family{\LCoindProofi{j}}{j\in\nat}$.
  
  Then we apply the induction hypothesis to the subderivations $\aDerivi{i1},\ldots,\aDerivi{im}$ of $\aDerivi{0}$, 
      which is possible due to $\depth{\aDerivi{i1}}, \ldots, \depth{\aDerivi{im}} < \depth{\aDerivi{i}} < \depth{\aDeriv}$,
    to obtain derivations $\aDerivacci{i1},\ldots,\aDerivacci{im}$ in $\CLC$ with the same conclusions 
      $\cstexpi{i1} \formeq \dstexpi{i1}$, \ldots, $\cstexpi{im} \formeq \dstexpi{im}$, respectively.  
  Then we transform $\aDeriv$
    by replacing the instance of $\LCoindProofi{n}$ at the bottom
      by an instance of $\LCoindProofi{n+m-1}$,
        keeping the first $i-1$ and the last $n+1-i$ premises and their subderivations,
          but replacing the $i$-th premise and its immediate subderivation $\aDerivi{i}$
            by $m$ additional premises with immediate subderivations $\aDerivacci{i1}$, \ldots, $\aDerivacci{im}$,
              thereby obtaining:
  \begin{gather}\renewcommand{\fCenter}{}
    \mbox{
      \AxiomC{$\aDerivi{1}$}
      \dashedLine
      \UnaryInfC{$\cstexpi{1} \formeq \dstexpi{1}$}
      \Axiom$\fCenter\aDerivacci{i1}$
      \dashedLine
      \UnaryInf$\cstexpi{i1} \fCenter{}\formeq \dstexpi{i1} $
      \Axiom$\fCenter\aDerivacci{im}$
      \dashedLine
      \UnaryInf$\cstexpi{im} \fCenter{}\formeq \dstexpi{im} $
      \AxiomC{$\aDerivi{n}$}
      \dashedLine
      \UnaryInfC{$\cstexpi{n} \formeq \dstexpi{n}$}
      \insertBetweenHyps{$\;\;\ldots\;\;$}
      \RightLabel{$\LCoindProofi{n+m-1}$}
      \QuaternaryInfC{$\astexp \formeq \bstexp$}
      \DisplayProof
      }
    \label{eq:5:prf:it:3:lem:thmsubsumption:rels:coindproofsystems}
  \end{gather}
  However, we need to show the \sidecondition\ for the displayed instance of $\LCoindProofi{n+m-1}$:
  \begin{equation}
  \hspace*{-10cm}
  \begin{minipage}{5cm}
  \[
      \astexp
        \LLEEcoindproofeqin{\thplus{\milnersysmin}{\aseteqsacc}}
      \bstexp
    \;\;  
    \text{with }
    \begin{aligned}[t]
      &
      \aseteqsacc = \bigl\{ \cstexpi{1} \formeq \dstexpi{1},
                              \ldots
                            \cstexpi{i-1} \formeq \dstexpi{i-1},
                            \cstexpi{i+1} \formeq \dstexpi{i+1},
                              \ldots,  
                            \cstexpi{n} \formeq \dstexpi{n} \bigr\}
                     \cup \bseteqs \punc{,}
      \\
      &  
      \text{ where: }                
      \bseteqs 
        \defdby
          \begin{aligned}[t]
            &
            \bigl\{ \cstexpi{i1} \formeq \dstexpi{i1},
                      \ldots
                    \cstexpi{im} \formeq \dstexpi{im} \bigr\} 
            \\[-0.5ex]                
            &
            \text{ the set of conclusions of $\aDerivacci{i1},\ldots,\aDerivacci{im}$.}
         \end{aligned}          
    \end{aligned}                   
	%\tag{\theequation}
    \label{eq:6:prf:it:3:lem:thmsubsumption:rels:coindproofsystems}
    \]                     
\end{minipage}  
  \end{equation}
  Now due to \eqref{eq:4:prf:it:3:lem:thmsubsumption:rels:coindproofsystems}
  there is a \LLEEwitnessed\ coinductive proof $\aLLEECoProof$ of $\astexp \formeq \bstexp$ over $\thplus{\milnersysmin}{\aseteqs}$. 
  But now $\aLLEECoProof$ is also a \LLEEwitnessed\ coinductive proof 
    of $\astexp \formeq \bstexp$ over $\thplus{\milnersysmin}{\aseteqsacc}$,
      and thus over a different set of premises, which we recognize as follows.
  The equations in $\bseteqs$, which have been added to $\aseteqs$ in order to get $\aseteqsacc$ after removing $\cstexpi{i} \formeq \dstexpi{i}$,
    are derivable in $\thplus{\milnersysmin}{\aseteqs}$ by means for the derivation $\aDerivi{0}$.  
  Therefore the correctness conditions for $\aLLEECoProof$ as a \LLEEwitnessed\ coinductive proof over $\thplus{\milnersysmin}{\aseteqs}$
    imply the correctness conditions for $\aLLEECoProof$ as a \LLEEwitnessed\ coinductive proof over $\thplus{\milnersysmin}{\aseteqsacc}$.
  This shows \eqref{eq:6:prf:it:3:lem:thmsubsumption:rels:coindproofsystems}.
  Therefore the  resulting derivation $\aDerivacc$ in \eqref{eq:5:prf:it:3:lem:thmsubsumption:rels:coindproofsystems}
     is a derivation in \CLC\ with the same conclusion as $\aDeriv$.
  (This transformation step can obviously be generalized to the situation in which not just $\aDerivi{i}$, 
    but also others among the immediate subderivations $\aDerivi{1}, \ldots, \aDerivi{n}$ of $\aDeriv$ contain axioms and/or rules of \milnersysmin.) 
  In this way we have performed the induction step.
      
  Finally we establish \eqref{eq:1:prf:it:3:lem:thmsubsumption:rels:coindproofsystems} in full generality.
  For this we consider the remaining situation in which the derivation $\aDeriv$ in \coindmilnersys\ does not terminate with an instance of a coinductive rule.
  Then $\aDeriv$ can be construed with a part derivation $\aDerivi{0}$ in \milnersysmin\ above its conclusion, 
    and below $m\in\nat$ subderivations $\aDerivi{1}$, \ldots, $\aDerivi{m}$, each of which terminates with a coinductive rule
      (see below on the left). Note that $m=0$ is possible if $\aDeriv$ is a derivation in \milnersysmin.
  By applying \eqref{eq:2:prf:it:3:lem:thmsubsumption:rels:coindproofsystems} to $\aDerivi{1}, \ldots, \aDerivi{m}$
    we obtain derivations $\aDerivacci{1}, \ldots, \aDerivacci{m}$ in \CLC\ with the same conclusions, respectively.
  By combining these derivations in \CLC\ with an instance of $\LCoindProofi{m}$ we can perform the following transformation step
    in order to obtain a derivation $\aDerivacc$ in $\CLC\,$:  
  \begin{align}
    \mbox{
      \AxiomC{$\aDerivi{1}$}
      \dashedLine
      \UnaryInfC{$(\astexpi{1} \formeq \bstexpi{1})$}
      \AxiomC{$\aDerivi{m}$}
      \dashedLine
      \UnaryInfC{$(\astexpi{m} \formeq \bstexpi{m})$}
      \insertBetweenHyps{$\ldots$}
      \noLine
      \BinaryInfC{$\aDerivi{0}$}
      \noLine
      \UnaryInfC{$\astexp \formeq \bstexp$}
      \DisplayProof
          } 
      \;\;\Longrightarrow\;\;
        \mbox{
          \renewcommand{\fCenter}{}
            \Axiom$\fCenter\aDerivacci{1}$
            \dashedLine
            \UnaryInf$\astexpi{1} \fCenter{}\formeq \bstexpi{1}$
            \AxiomC{\ldots}
            \Axiom$\fCenter\aDerivacci{m}$
            \dashedLine
            \UnaryInf$\astexpi{m} \fCenter{}\formeq \bstexpi{m}$
            \insertBetweenHyps{\hspace*{1ex}}
            \RightLabel{$\LCoindProofi{m}$}
            \TrinaryInfC{$\astexp \formeq \bstexp$}
            \DisplayProof
               }
    \label{eq:7:prf:it:3:lem:thmsubsumption:rels:coindproofsystems}           
  \end{align}  
  Here we need to establish the following \sidecondition\ for the instance of $\LCoindProofi{m}$ at the bottom of $\aDerivacc$:
  \begin{equation}
    \astexp \LLEEcoindproofeqin{\thplus{\milnersysmin}{\aseteqs}} \bstexp
      \quad \text{where } \aseteqs \defdby \setexp{ \astexpi{1} \formeq \bstexpi{1}, \ldots, \astexpi{m} \formeq \bstexpi{m} } \punc{.} 
        \label{eq:8:prf:it:3:lem:thmsubsumption:rels:coindproofsystems}
  \end{equation}
  We can establish this coinductive-provability statement 
    by recognizing that $\pair{\onechartof{\astexp}}{\saeqfun}$ 
    with $\onechartof{\astexp}$ the \onechart\ interpretation of $\astexp$, and with equation labeling function:
  \begin{equation*}
    \saeqfun \funin \vertsof{\onechartof{\astexp}} \to \StExpover{\actions} \punc{,}
      \;\; 
      \csstexp \mapsto \begin{cases}
                          \csstexp \formeq \csstexp   & \text{if $\csstexp\notsynteq\astexp$}
                          \\[-0.5ex]
                          \astexp \formeq \bstexp     & \text{if $\csstexp\synteq\astexp$}
                        \end{cases}  
  \end{equation*}    
  is a \LLEEwitnessed\ coinductive proof over $\thplus{\milnersysmin}{\aseteqs}$ of $\astexp \formeq \bstexp$.
  To verify this statement, 
    we use that $\onechartof{\astexp}$ is a guarded \LLEEonechart\ by Theorem~\ref{thm:onechart-int:LLEEw}, \ref{it:1:thm:onechart-int:LLEEw}, and
    we have to check the correctness conditions for $\saeqfuni{1}$ and $\saeqfuni{2}$ with $\saeqfun = \pair{\saeqfuni{1}}{\saeqfuni{2}}$ 
    to be \provablein{\thplus{\milnersysmin}{\aseteqs}} solutions of $\onechartof{\astexp}$. 
  We first note that $\astexp \formeq \bstexp$ is provable in $\thplus{\milnersysmin}{\aseteqs}$ (i.e.\ $\astexp \eqin{\thplus{\milnersysmin}{\aseteqs}} \bstexp$)
    since $\aDerivi{0}$ is a derivation of $\astexp \formeq \bstexp$ in \milnersysmin\ from the assumptions in $\aseteqs$. 
  Then~we~argue~as~follows:  
  \begin{itemize}  
    \item
      The correctness conditions for $\saeqfuni{1}$ to be a \provablein{(\thplus{\milnersysmin}{\aseteqs})} solution of $\onechartof{\astexp}$ 
        follow from the fact that $\saeqfuni{1}$ is a \provablein{\milnersysmin} solution of $\onechartof{\astexp}$ 
          due to Lemma~\ref{lem:onechart-int:milnersysmin:solvable}.
    \item   
      $\saeqfuni{2}$ differs from $\saeqfuni{1}$ only in the value for the start vertex $\astexp$,
        where $\aeqfuni{2}{\astexp} \synteq \bstexp$, but $\aeqfuni{1}{\astexp} \synteq \astexp$. 
      From this it follows, in view of $\astexp \eqin{\thplus{\milnersysmin}{\aseteqs}} \bstexp$,
        that the correctness conditions for $\saeqfuni{1}$ 
          imply the correctness conditions for $\saeqfuni{2}$,
            because they differ only up to expressions that are provably equal in $\thplus{\milnersysmin}{\aseteqs}$
              and also need to hold up to \provabilityin{(\thplus{\milnersysmin}{\aseteqs})}.
  \end{itemize} 
  This argument shows \eqref{eq:8:prf:it:3:lem:thmsubsumption:rels:coindproofsystems},
    which justifies the \sidecondition\ of the instance of $\LCoindProofi{m}$ 
      at the bottom of the derivation $\aDerivacc$ on the right in \eqref{eq:7:prf:it:3:lem:thmsubsumption:rels:coindproofsystems}.
  Therefore we have indeed obtained from $\aDeriv$ a derivation in \CLC\ with the same conclusion as $\aDeriv$.
  In this way we have completed the proof of $\coindmilnersys \isthmsubsumedby \CLC$, the remaining part of \ref{it:3:lem:thmsubsumption:rels:coindproofsystems}.
  
  \smallskip
  Statement~\ref{it:4:lem:thmsubsumption:rels:coindproofsystems} can be shown entirely analogously as statement~\ref{it:3:lem:thmsubsumption:rels:coindproofsystems}.
\end{proof}

\begin{rem}[completeness of $\protect\CC$, $\protect\coindmilnersysbar$, $\protect\milnersysaccbar$]\label{rem:CC:coindmilnersysbar:milnersyaccbar}
  The proof systems $\CC$ and $\coindmilnersysbar$,
    as well as the variant $\milnersysaccbar$ of Milner's system with the general unique solvability principle \USP\
  are complete for bisimilarity of star expressions.
  This can be established along Salomaa's completeness proof
    for his inference system for language equality of regular expressions \cite{salo:1966}, 
    by an argument that we can outline as follows.
  Given star expressions $e$ and $f$ with $\chartof{e} \bisim \chartof{f}$, 
    $e$ and $f$ can be shown to be principal values of \provablein{\milnersysmin} solutions of $\chartof{e}$ and $\chartof{f}$, respectively
    (by a lemma for the chart interpretation similar to Lemma~\ref{lem:onechart-int:milnersysmin:solvable}). 
  These solutions can be transferred to the (\sonefree) product chart~$\achart$ of $\chartof{e}$ and $\chartof{f}$,
    with $e$ and $f$ as principal values
      of \provablein{\milnersysmin} solutions $\saeqfuni{1}$ and $\saeqfuni{2}$ of $\achart$, respectively. 
    From this we obtain a (not necessarily \LLEEwitnessed) coinductive proof $\pair{\achart}{\saeqfun}$ of $e = f$ over $\milnersysmin$. % with $\achart$ as underlying chart. 
    It follows that $e = f$ is provable in $\CC$, and in $\coindmilnersysbar$.
    Now since the correctness conditions for the \provablein{\milnersysmin} solutions $\saeqfuni{1}$ and $\saeqfuni{2}$ of $\achart$
      at each of the vertices of $\achart$
      together form a guarded system of linear equations to which the rule \USP\ can be applied (as $\achart$ is \sonefree),
      we obtain that $e \formeq f$ is also provable in $\milnersysaccbar$. 
\end{rem}

%---------------------------------------------------------------------------
\section{From LLEE-witnessed coinductive proofs to Milner's system}
  \label{coindmilnersys:2:milnersys}
%---------------------------------------------------------------------------

In this Section we develop a proof-theoretic interpretation of the coinductive variant system \coindmilnersys\ of \milnersys\
  in Milner's original system $\milnersys$.
Since $\coindmilnersys$ and $\milnersys$ differ only 
  by the coinductive rule scheme $\family{\LCoindProofi{n}}{n\in\nat}$ (which is part of $\coindmilnersys$, but not of $\milnersys$),
  and 
  by the fixed-point rule \RSPstar\ (which is part of $\milnersys$, but not of $\coindmilnersys$),
  the crucial step for this proof transformation is to show that instances of $\LCoindProofi{n}$, for $n\in\nat$,
  can be mimicked in $\milnersys$
  if their premise equations %(but not necessarily the conclusion of the occurring \LLEEwitnessed\ coinductive proof) 
  are derivable in \milnersys.
We will do so by showing that the rules $\LCoindProofi{n}$, for $n\in\nat$, 
  are correct for \milnersys.
This implies that these rules are admissible in \milnersys\ (by Lemma~\ref{lem:derivable:admissible:rules}), 
  and also, that they can be eliminated from derivations in $\thplus{\milnersys}{\family{\LCoindProofi{n}}{n\in\nat}}$,
    which is an extension~of~\coindmilnersys. 
In this way we obtain the proof-theoretic interpretation of $\coindmilnersys$ in $\milnersys$.    

\begin{figure}[t]
\begin{center}
\begin{tikzpicture}
  \matrix[anchor=center,row sep=0.45cm,column sep=0.6cm,ampersand replacement=\&] {
      \& 
        \& 
          \& \node(top){\large $\aonechart$};
    \\
    \node(assm-pos){};
      \&
        \&
          \&
            \&
              \&
                \& \node(concl-pos){};
    \\
     \& \node(leftleft){\large \phantom{$\chartof{\astexpi{1}}$}};
        \& \node(left){\large $\factorset{\chartof{\astexpi{1}}}{\eqin{\milnersys}}$};
          \&
            \& \node(right){\large $\factorset{\chartof{\astexpi{2}}}{\eqin{\milnersys}}$}; 
              \& \node(rightright){\large \phantom{$\chartof{\astexpi{2}}$}};
    \\    
    }; 
  % \draw[funonebisimright] (leftleft) to (left);  
  \draw[funonebisimleft] (top) to node[above,xshift=-1.55em,pos=0.5]{\black{$\eqcl{\saeqfuni{1}}{\seqin{\milnersys}}$}} (left);
  \draw[funonebisimright] (top) to node[above,xshift=1.55em,pos=0.5]{\black{$\eqcl{\saeqfuni{2}}{\seqin{\milnersys}}$}} (right);  
  % \draw[funonebisimleft] (rightright) to (right);   
    
  \path (assm-pos) ++ (-0.5cm,0.4cm) node(assm)
         {$\left.
           \parbox{\widthof{\large coinductive proof}}
                  {{\large \hspace*{\fill}$\pair{\aonechart}{\saeqfun}$\hspace*{\fill}\mbox{}}
                   \\
                   \hspace*{\fill}\forestgreen{\LLEE}-witnessed\hspace*{\fill}\mbox{}
                   \\[-0.1ex]
                   \hspace*{\fill}coinductive proof\hspace*{\fill}\mbox{}
                   \\[-0.1ex]
                   \hspace*{\fill}of $\astexpi{1} \formeq \astexpi{2}$\hspace*{\fill}\mbox{}
                   \\[-0.1ex]
                   \hspace*{\fill}over $\asys\isthmsubsumedby\milnersys$\hspace*{\fill}\mbox{}}
           \right\}$}; 
           
  % \path (top) ++ (0.7cm,0.3cm) node{\forestgreen{LLEE}}; 
  \path (top) ++ (0.1cm,0.3cm) node[right]{\forestgreen{LLEE}, guarded}; 
    
  \path (concl-pos) ++ (0cm,0.4cm) node(concl)
                                         {\large $\astexpi{1} \eqin{\milnersys} \astexpi{2}$};  
    
  \draw[-implies,double equal sign distance] 
    (assm) 
      to node[below,yshift=-0.125cm,xshift=0.2cm]{\tiny (Prop.$\,$\ref{prop:char:solution:SRS})} 
    ($(assm.east) + (0.75cm,0cm)$);  
  \draw[-implies,double equal sign distance,shorten >=4pt] 
    ($(concl.west) + (-0.85cm,0cm)$) 
      % to node[below,yshift=-0.125cm,xshift=-0.2cm]{\tiny (Prop.$\,$\ref{prop:LLEEcoindproofeq:impl:milnersyseq})} 
      to node[below,yshift=-0.125cm,xshift=1cm]{\parbox{\widthof{\tiny (together with ``$\Leftarrow$'' in Prop.~\ref{prop:char:solution:SRS}}}
                                                          {\tiny (Prop.$\,$\ref{prop:LLEEcoindproofeq:impl:milnersyseq}, together
                                                                 \\\phantom{(}%
                                                                   with ``$\Leftarrow$'' 
                                                                   in Prop.~\ref{prop:char:solution:SRS}
                                                                 \\\phantom{(}%
                                                                   for $\asys = \milnersys$)}} 
    (concl); 
    
\end{tikzpicture}  
\end{center}
  \vspace*{-2.5ex}
  \caption{\label{fig:coindmilnersys:2:milnersys}%
           Statement underlying the proof transformation from \protect\coindmilnersys\ to \protect\milnersys.% (Section~\ref{coindmilnersys:2:milnersys}).
           }
\end{figure}%
For proving correctness of $\LCoindProofi{n}$ for \milnersys, where $n\in\nat$,
  we will show that every \LLEEwitnessed\ coinductive proof $\pair{\aonechart}{\saeqfun}$
  over a proof system $\asys$ that is \theoremsubsumed\ by \milnersys\
  of an equation $\astexpi{1} \formeq \astexpi{2}$ 
  can also be established by a proof of $\astexpi{1} \formeq \astexpi{2}$ in Milner's system~$\milnersys$. 
Informally, this statement is illustrated in Figure~\ref{fig:coindmilnersys:2:milnersys}
  by informally employing the characterization in Proposition~\ref{prop:char:solution:SRS} of the provable solutions
  in such \LLEEwitnessed\ coinductive proofs. We establish the indicated informal second step in this section,
    where it will be guaranteed by Proposition~\ref{prop:LLEEcoindproofeq:impl:milnersyseq}.
In particular, our proof of this step will use the following two statements:
\begin{enumerate}[label={{\bf (SE)$_{\protect\onescriptbs}$}},leftmargin=*,align=right,labelsep=0.75ex,itemsep=0.25ex] 
  \item[{\crtcrossreflabel{{\bf (SE)}}[SE]}] \emph{(solution extraction)}
    from a \LLEEwitness~$\aonecharthat$ of $\aonechart$ a \provablein{\milnersys} solution $\sextrsolof{\aonechart}$ of $\aonechart$
      can be extracted (Lemma~\ref{lem:extrsol:is:sol}), and
  \item[{\crtcrossreflabel{{\bf (SU)}}[SU]}] \emph{(solution uniqueness)} %[{\bf (SU)}] (solution uniqueness) 
    every \provablein{\milnersys} solution of the \LLEEonechart~$\aonechart$ (such as $\saeqfuni{1}$ and $\saeqfuni{2}$) is \provablyin{\milnersys} equal
      to the solution $\sextrsolof{\aonechart}$ extracted from $\aonecharthat$ (Lemma~\ref{lem:sols:provably:equal:LLEE}). 
\end{enumerate}
By these statements we will obtain
  for every \LLEEwitnessed\ coinductive proof $\pair{\aonechart}{\saeqfun}$ of $\astexpi{1} \formeq \astexpi{2}$,
    assuming that $\start$ is the star vertex of $\aonechart$ and hence $\aeqfuni{1}{\start} \synteq \astexpi{1}$ as well as $\aeqfuni{2}{\start} \synteq \astexpi{2}$ hold,
that     
$\astexpi{1} \synteq \aeqfuni{1}{\start} \milnersyseq \extrsolof{\aonecharthat}{\start}
                                         \milnersyseq \aeqfuni{2}{\start} \synteq \astexpi{2}$ holds,
and therefore $\astexpi{1} \milnersyseq \astexpi{2}$.                                             
  
The proofs of the statements \ref{SE} and \ref{SU} %{\bf (SE)} and {\bf (SU)} 
                                                   below are adaptations to \LLEEonecharts\ 
  of proofs of analogous statements for \LLEEcharts\ in Section~5 of \cite{grab:fokk:2020:lics,grab:fokk:2020:lics:arxiv}.
We have to, at places substantially, refine the extraction technique of star expressions from process graphs with \LLEE\
  that was first described in \cite{grab:fokk:2020:lics,grab:fokk:2020:lics:arxiv}.
However, we will use the simplification to only reason about {guarded} \LLEEwitnesses,
  in which \loopentrytransition\ are proper transitions. % (see Definition~\ref{def:LLEEwitness}).
We can do so because the \onechart\ interpretation $\onechartof{\astexp}$ of a star expression $\astexp$ %, see Definition~\ref{def:onechartof},
  is guaranteed to have a guarded \LLEEwitness\ $\onecharthatof{\astexp}$ by Theorem~\ref{thm:onechart-int:LLEEw}. 

For developing and proving the extraction statement \ref{SE} %{\bf (SE)}
  we use that the  hierarchical loop structure of a \onechart~$\aonechart$
  with \LLEEwitness~$\aonecharthat$ 
  facilitates the extraction of a \provablein{\milnersysmin} solution of $\aonechart$
  (see Lemma~\ref{lem:extrsol:is:sol}). The reason is as follows. %for the following reason. 
The process behavior at every vertex $\bvert$ in $\aonechart$ can be split 
  into an iteration part that is induced via the \loopentry\ transitions from $\bvert$ in $\aonecharthat$
                                         (which induce loop \subonecharts\ with inner loop \subonecharts\ whose behavior can be synthesized recursively),
  and an exit part that is induced via the body transitions from $\bvert$ in $\aonecharthat$.
This intuition leads us to the definition below.
We define the `extraction function' $\sextrsolof{\aonecharthat}$\vspace*{-2.5pt} of $\aonecharthat$
  from a `relative extraction function' $\sextrsoluntilof{\aonecharthat}$\vspace{-1.5pt} of $\aonecharthat$, 
  whose values $\extrsoluntilof{\aonecharthat}{\bvert}{\avert}$ capture the behavior at $\bvert$ 
  in a loop \subonechart\ at $\avert$ until $\avert$ is reached.

\begin{defi}[(relative) extraction function]\label{def:extrsoluntil:extrsol}
  Let $\aonechart = \tuple{\verts,\actions,\sone,\start,\transs,\termexts}$ be a (guarded) \LLEEonechart\ with guarded \LLEEwitness~$\aonecharthat$. 
  
  The \emph{extraction function 
            $\sextrsolof{\aonecharthat} \funin \verts \to \StExpover{\actions}$ 
            of $\aonecharthat$}
  is defined from the\vspace{-1pt} 
  \emph{relative extraction function  
        $\sextrsoluntilof{\aonecharthat} \funin \descsetexp{\pair{\bvert}{\avert}}
                                                        {\bvert,\avert\in\vertsof{\aonechart},\,\bvert\convdescendsinlooptosc\avert} 
                                                \to \StExpover{\actions} $
        of $\aonecharthat$}
  %as follows, 
  for $\bvert,\avert\in\verts$:\vspace{-1.5ex}
  \begin{align*}
    \extrsoluntilof{\aonecharthat}{\bvert}{\avert}
      & {} \defdby
        \begin{cases}
          \stexpone
            & \text{if $\bvert = \avert$,}
          \\[-0.5ex] 
          \stexpprod{\displaystyle
            \stexpit{\Bigl(
                       \sum_{i=1}^{n}
                         \stexpprod{\aacti{i}}{\extrsoluntilof{\aonecharthat}{\bverti{i}}{\bvert}}
                     \Bigr)}
                     }{\displaystyle
            \Bigl(
              {\sum_{i=1}^{m}
                 \stexpprod{\boneacti{i}}{\extrsoluntilof{\aonecharthat}{\cverti{i}}{\avert}}}
            \Bigr)              
                       }
            & \text{if $\bvert \convdescendsinloopto \avert$,}
        \end{cases}
      \displaybreak[0]\\[-0.5ex]  
      \extrsolof{\aonecharthat}{\bvert}
        & {} \defdby
          \stexpprod{%
            \stexpit{\Bigl(
                       \sum_{i=1}^{n}
                         \stexpprod{\aacti{i}}{\extrsoluntilof{\aonecharthat}{\bverti{i}}{\bvert}}
                     \Bigr)}
                     }{
            \Bigl(
              \stexpsum{\terminatesconstof{\aonechart}{\bvert}}
                       {\sum_{i=1}^{m}
                          \stexpprod{\boneacti{i}}{\extrsolof{\aonecharthat}{\cverti{i}}}}
            \Bigr)              
                       } \punc{,}
     \\[-0.5ex]
     & \hspace*{-8ex} \text{provided: } 
       \transitionsinfrom{\aonecharthat}{\bvert}
         =
       \begin{aligned}[t]
         &
         \descsetexpbig{ \bvert \lti{\aacti{i}}{\looplab{\alabi{i}}} \bverti{i} }
                       { \alabi{i}\in\natplus,\,  i\in\setexp{1,\ldots,n} }
           {\cup} 
         \descsetexpbig{ \bvert \lti{\boneacti{i}}{\looplab{\bodylab}} \cverti{i} }
                       { i\in\setexp{1,\ldots,m} } \text{,}  
       \end{aligned}  
     \\[-0.5ex]
     & \hspace*{-8ex}\text{induction for $\sextrsoluntilof{\aonecharthat}$ on: }
     \begin{aligned}[t]  
       \pair{\bverti{1}}{\averti{1}}
          \lexspo
        \pair{\bverti{2}}{\averti{2}}
          \;\funin\: \Longleftrightarrow\:
            \averti{1} \convdescendsinlooptotc \averti{2}
              \logor
            (\, \averti{1} = \averti{2}
                  \logand 
                \bverti{1} \convredtci{\bodylab} \bverti{2} \,) \punc{,}
     \end{aligned}                        
     \\[-0.75ex]
     & \hspace*{-8ex}\text{induction for $\sextrsolof{\aonecharthat}$ on the strict partial order $\sconvredtci{\bodylab}$ (see Lemma~\ref{lem:descsteps:bodysteps:wf})}
         \punc{,}
  \end{align*}
  where $\slexspo$ is a well-founded strict partial order due to Lemma~\ref{lem:descsteps:bodysteps:wf}.
  The choice of the list representations of \actiontarget\ sets of $\aonecharthat$ %underlying this definition 
  changes the definitions of these functions 
                          only up to provability~in~\ACI.
\end{defi}

We exemplify the extraction process defined above by a concrete example.

\begin{exa}\label{ex:extraction}\mbox{}%
  \begin{figure}[tbp]
\begin{center}
  \begin{equation*}
    \hspace*{-3ex}
    \begin{aligned}[c]
      \scalebox{1.25}{\begin{tikzpicture}\renewcommand{\stexpprod}[2]{{#1}\hspace*{1pt}{\sstexpprod}\hspace*{1pt}{#2}}
  
\matrix[anchor=center,row sep=1cm,column sep=0.75cm,
        every node/.style={draw,very thick,circle,minimum width=2.5pt,fill,inner sep=0pt,outer sep=2pt}] {
  \node(v11){};
    & & \node(v21){};
  \\                 
  \node(v1){};
    & & \node(v2){};
  \\
    & \node[chocolate](v){};
  \\
  };   
\path (v) ++ (0cm,-1cm) node(label){\Large $\aonechart$, $\aonecharthat$};

% (v11)
\path (v11) ++ (0cm,0.25cm) node{\small $\averti{11}$};
\draw[->,thick,densely dotted,out=180,in=180,distance=0.75cm](v11) to (v1);

% (v1) 
\path (v1) ++ (-0.225cm,-0.25cm) node{\small $\averti{1}$};
\draw[->,thick,darkcyan,shorten >= 0pt]
  (v1) to node[left,pos=0.25,xshift=0.075cm]{\small $\black{\aact}$} node[left,pos=0.6,xshift=0.075cm]{\small $\loopnsteplab{1}$} (v11);
\draw[->,thick,densely dotted,out=-90,in=180,distance=0.5cm,shorten >=2pt](v1) to (v);
\draw[->,shorten <= 0pt,shorten >= 0pt] (v1) to node[above]{$\bact$} (v21); 

% (v21) 
\path (v21) ++ (0cm,0.25cm) node{\small $\averti{21}$};
\draw[->,thick,densely dotted,out=-0,in=0,distance=0.75cm](v21) to (v2);  
  
% (v2) 
\path (v2) ++ (0.25cm,-0.25cm) node{\small $\averti{2}$};
\draw[->,thick,darkcyan,shorten >= 0pt]
  (v2) to node[right,pos=0.25,xshift=-0.075cm]{\small $\black{\bact}$} node[right,pos=0.6,xshift=-0.075cm]{\small $\loopnsteplab{1}$} (v21);
\draw[->,thick,densely dotted,out=-90,in=0,distance=0.5cm,shorten >= 2pt](v2) to (v);

\draw[<-,very thick,>=latex,chocolate,shorten <= 2pt](v) -- ++ (-90:0.55cm);   
% (v) 
\draw[thick,chocolate] (v) circle (0.12cm);
\path (v) ++ (0.25cm,-0.3cm) node{\small $\start$};
\draw[->,thick,darkcyan,shorten >= 0.175cm,shorten <= 2pt] 
  (v) to node[left,pos=0.36,xshift=0.075cm]{\small $\black{\aact}$} node[right,pos=0.4,xshift=-0.075cm,yshift=1pt]{\small $\loopnsteplab{2}$}  (v11);
\draw[->,thick,darkcyan,shorten >= 0.175cm,shorten <= 2pt] 
  (v) to node[right,pos=0.36,xshift=-0.05cm]{\small $\black{\bact}$} node[left,pos=0.6,xshift=0.075cm,yshift=1pt]{\small $\loopnsteplab{2}$} (v21);
  
\end{tikzpicture} }
    \end{aligned}
    \hspace*{1ex}
    \begin{aligned}[c]
      \extrsoluntilof{\aonecharthat}{\averti{21}}{\averti{2}}
        & \;\parbox[t]{\widthof{$\milnersysmineq$}}{$\defdby$}\,
      \stexpprod{\stexpit{\stexpzero}}  
                {(\stexpprod{\stexpone}{\extrsoluntilof{\aonecharthat}{\averti{2}}{\averti{2}}})}
        \,\synteq\,
      \stexpprod{\stexpit{\stexpzero}}{(\stexpprod{\stexpone}{\stexpone})}
        \,\milnersysmineq\,
      \stexpone
      \\[-0.1ex]
      \extrsoluntilof{\aonecharthat}{\averti{2}}{\start}
        & \;\parbox[t]{\widthof{$\milnersysmineq$}}{$\defdby$}\,
      \stexpprod{\stexpit{(\stexpprod{\bact}{\extrsoluntilof{\aonecharthat}{\averti{21}}{\averti{2}}})}}
                {(\stexpprod{\stexpone}{\extrsoluntilof{\aonecharthat}{\start}{\start}})}
        \,\milnersysmineq\,  
      \stexpit{\bact}
      \\[-0.1ex]
      \extrsoluntilof{\aonecharthat}{\averti{21}}{\start}
        & \;\parbox[t]{\widthof{$\milnersysmineq$}}{$\defdby$}\,
      \stexpprod{\stexpit{\stexpzero}}  
                {(\stexpprod{\stexpone}{\extrsoluntilof{\aonecharthat}{\averti{2}}{\start}})}
        \,\milnersysmineq\,
      \stexpprod{\stexpone}{\stexpit{\bact}}
        \,\milnersysmineq\,
      \stexpit{\bact}
      \\[-0.1ex]
      \extrsoluntilof{\aonecharthat}{\averti{11}}{\averti{1}}
        & \;\parbox[t]{\widthof{$\milnersysmineq$}}{$\defdby$}\,
      \stexpprod{\stexpit{\stexpzero}}
                {(\stexpprod{\stexpone}{\extrsoluntilof{\aonecharthat}{\averti{1}}{\averti{1}}})}
        \,\synteq\,
      \stexpprod{\stexpit{\stexpzero}}
                {(\stexpprod{\stexpone}{\stexpone})}
        \,\milnersysmineq\,
      \stexpone
      \\[-0.1ex]
      \extrsoluntilof{\aonecharthat}{\averti{1}}{\start}
        & \;\parbox[t]{\widthof{$\milnersysmineq$}}{$\defdby$}\,
      \stexpprod{\stexpit{(\stexpprod{\aact}{\extrsoluntilof{\aonecharthat}{\averti{11}}{\start}})}}
                {(\stexpsum{\stexpprod{\bact}{\extrsoluntilof{\aonecharthat}{\averti{21}}{\start}}}
                           {\stexpprod{\stexpone}{\extrsoluntilof{\aonecharthat}{\start}{\start}}})}
      \\[-0.5ex] 
        & \;\parbox[t]{\widthof{$\milnersysmineq$}}{$\milnersysmineq$}\,
      \stexpprod{\stexpit{\aact}}
                {(\stexpsum{\stexpprod{\bact}{\stexpit{\bact}}}
                           {\stexpone})}
        \,\milnersysmineq\,
      \stexpprod{\stexpit{\aact}}{\stexpit{\bact}} 
      \\[-0.1ex]
      \extrsoluntilof{\aonecharthat}{\averti{11}}{\start}
        & \;\parbox[t]{\widthof{$\milnersysmineq$}}{$\defdby$}\,
      \stexpprod{\stexpit{\stexpzero}}  
                {(\stexpprod{\stexpone}{\extrsoluntilof{\aonecharthat}{\averti{1}}{\start}})}
        \,\milnersysmineq\,
      % \stexpprod{\stexpit{\stexpzero}}{(\stexpprod{\stexpone}{\stexpprod{(\stexpit{\aact}}{\stexpit{\bact})}})}
      %  \,\milnersysmineq\,
      \stexpprod{\stexpit{\aact}}{\stexpit{\bact}}
      \\[-0.1ex]
      \extrsolof{\aonecharthat}{\start}
        & \;\parbox[t]{\widthof{$\milnersysmineq$}}{$\defdby$}\,
      \stexpprod{\stexpit{(\stexpsum{\stexpprod{\aact}{\extrsoluntilof{\aonecharthat}{\averti{11}}{\start}}}
                                    {\stexpprod{\bact}{\extrsoluntilof{\aonecharthat}{\averti{21}}{\start}}})}}
                {\stexpone}
      \\[-0.5ex]
        & \;\parbox[t]{\widthof{$\milnersysmineq$}}{$\milnersysmineq$}\,
     \stexpit{(\stexpsum{\stexpprod{\aact}{(\stexpprod{\stexpit{\aact}}{\stexpit{\bact}})}}
                        {\stexpprod{\bact}{\stexpit{\bact}}})}
    \end{aligned}
  \end{equation*}     
\end{center}
    \vspace*{-2.5ex}
  \caption{\protect\label{fig:ex:extraction}%
           Extraction of the principal value %$\protect\extrsolof{\protect\aonecharthat}{\protect\start}$ 
                                             of a \protect\provablein{\protect\milnersysmin} solution~$\protect\sextrsolof{\protect\aonecharthat}$ 
           from\protect\vspace*{-2pt} the \protect\LLEEwitness~$\protect\acharthat$ in the coinductive proof 
           in Example~\protect\ref{ex:2:LLEEcoindproof},
           with \protect\provablein{\milnersysmin}~simplifications.
           % We shorten solution values by using axioms~of~\protect\milnersysmin.
           }

\end{figure}      %
  We consider the \onechart~$\aonechart$,
  and the \LLEEwitness~$\aonecharthat$ of $\aonechart$,
  in the \LLEEwitnessed\ coinductive proof $\aCoProof = \pair{\aonechart}{\saeqfun}$ of 
  $\stexpit{(\stexpprod{\stexpit{\aact}}
                       {\stexpit{\bact}})}
     \formeq
   \stexpit{(\stexpsum{\aact}{\bact})}$
  in Example~\ref{ex:2:LLEEcoindproof}.
  We detail in Figure~\ref{fig:ex:extraction} the process of computing
  the principal value $\extrsolof{\aonecharthat}{\start}$ of the
  extraction function $\sextrsolof{\aonecharthat}$ of~$\aonecharthat$. % that is\vspace{-2pt} extracted from~$\aonecharthat$.
  The statement of Lemma~\ref{lem:extrsol:is:sol} below will guarantee that $\sextrsolof{\aonecharthat}$
  is a \provablein{\milnersysmin} solution~of~$\aonechart$.
\end{exa}

In order to show
  that the extraction function of a guarded \LLEEwitness\ of a \onechart~$\aonechart$
  defines a \provablein{\milnersysmin} solution of $\aonechart$,
    see Lemma~\ref{lem:extrsol:is:sol} and its proof later,
we first have to establish a \provablein{\milnersysmin} connection
  between the relative extraction function and the extraction function of a guarded \LLEEwitness. 
For this we prove the following lemma.

\begin{lem}\label{lem:lem:extrsol:is:sol}
  Let $\aonechart$ be a (guarded) \LLEEonechart\ with guarded \LLEEwitness~$\aonecharthat$. 
  Then 
  $\extrsolof{\aonecharthat}{\bvert}
     \milnersysmineq 
   \stexpprod{\extrsoluntilof{\aonecharthat}{\bvert}{\avert}}
             {\extrsolof{\aonecharthat}{\avert}}$
  holds 
  for all vertices $\bvert,\avert\in\vertsof{\aonechart}$ such that $\bvert \convdescendsinlooptosc \avert$.
\end{lem}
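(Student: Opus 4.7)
My plan is to prove the equation $\extrsolof{\aonecharthat}{\bvert} \milnersysmineq \stexpprod{\extrsoluntilof{\aonecharthat}{\bvert}{\avert}}{\extrsolof{\aonecharthat}{\avert}}$ by well-founded induction on $\bvert$ with respect to the strict partial order $\sconvredtci{\bodylab}$ on body-steps, which is well-founded by Lemma~\ref{lem:descsteps:bodysteps:wf}(ii). The vertex $\avert$ is held fixed, and the precondition $\bvert \convdescendsinlooptosc \avert$ is maintained throughout. In the base case $\bvert = \avert$ the defining clause $\extrsoluntilof{\aonecharthat}{\avert}{\avert} \synteq \stexpone$ of Definition~\ref{def:extrsoluntil:extrsol} reduces the goal to $\extrsolof{\aonecharthat}{\avert} \milnersysmineq \stexpprod{\stexpone}{\extrsolof{\aonecharthat}{\avert}}$, which is an immediate instance of $(\leftidstexpprod)$.

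For the inductive step I consider $\bvert \convdescendsinloopto \avert$ strictly, so $\bvert$ is a body vertex of the loop subchart at $\avert$. By condition (L3) of Definition~\ref{def:loop:onechart} this forces $\notterminates{\bvert}$ in $\aonechart$, whence $\terminatesconstof{\aonechart}{\bvert} \synteq \stexpzero$. I then fix list representations of the transitions from $\bvert$ in $\aonecharthat$ as loop-entry transitions $\bvert \lti{\aacti{i}}{\looplab{\alabi{i}}} \bverti{i}$ for $i = 1, \ldots, n$ and body transitions $\bvert \lti{\boneacti{j}}{\looplab{\bodylab}} \cverti{j}$ for $j = 1, \ldots, m$, and observe that the iteration factor $\stexpit{\bigl(\sum_{i=1}^{n} \aacti{i} \prod \extrsoluntilof{\aonecharthat}{\bverti{i}}{\bvert}\bigr)}$ appears as a common leading factor in the definitional unfoldings of $\extrsolof{\aonecharthat}{\bvert}$ and $\extrsoluntilof{\aonecharthat}{\bvert}{\avert}$. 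The calculation then proceeds by dropping the $\stexpzero$ summand via $(\neutralstexpsum)$ and $(\commstexpsum)$, applying the induction hypothesis to replace each $\extrsolof{\aonecharthat}{\cverti{j}}$ by $\stexpprod{\extrsoluntilof{\aonecharthat}{\cverti{j}}{\avert}}{\extrsolof{\aonecharthat}{\avert}}$, and finally using right-distributivity $(\rdistr)$ together with associativity $(\assocstexpprod)$ to factor $\extrsolof{\aonecharthat}{\avert}$ out to the right. Minor bookkeeping concerning the chosen list representations is absorbed by the remark in Definition~\ref{def:extrsoluntil:extrsol} that they only affect the defined expressions up to \ACI-provability, which is part of $\milnersysmin$.

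Two side conditions must be verified to legitimize the appeal to the induction hypothesis at each body-successor $\cverti{j}$. First, $\cverti{j}$ must be strictly smaller than $\bvert$ in $\sconvredtci{\bodylab}$, which is immediate from $\bvert \redi{\bodylab} \cverti{j}$ yielding $\cverti{j} \convredtci{\bodylab} \bvert$. Second, and this is the step I expect to be the main obstacle, the precondition $\cverti{j} \convdescendsinlooptosc \avert$ must still hold. This is a structural claim about loop subcharts: every path of body transitions starting inside the loop subchart at $\avert$ remains within that subchart until returning to $\avert$, possibly after traversing further body vertices of the subchart or of nested inner loop subcharts. The property rests on the defining clause of loop subcharts in Definition~\ref{def:loop:subonechart} together with the layered loop-elimination structure recorded by $\aonecharthat$ in Definition~\ref{def:LLEEwitness}, and is the natural adaptation to guarded \LLEEonecharts\ of the corresponding observation for \LLEEcharts\ in \cite{grab:fokk:2020:lics:arxiv}. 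With this structural property in hand, the case $\cverti{j} = \avert$ collapses to the base case of the induction, while the remaining cases fall squarely under the inductive hypothesis, completing the step.
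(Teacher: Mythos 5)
Your proposal is correct and follows essentially the same route as the paper's proof: the same base case via $(\leftidstexpprod)$, the same use of (L3) to set $\terminatesconstof{\aonechart}{\bvert} \synteq \stexpzero$, the same application of the induction hypothesis at the body-successors $\cverti{j}$ (with the case $\cverti{j} = \avert$ handled separately), and the same factoring via $(\rdistr)$ and $(\assocstexpprod)$; the structural side condition $\cverti{j} \convdescendsinloopto \avert$ that you flag as the main obstacle is exactly what the paper asserts, and it follows directly from the definition of $\sdescendsinloopto$ by extending the witnessing path $\avert \redi{\looplab{n}} \cdot \redrtci{\bodylab} \bvert$ with the body step to $\cverti{j} \neq \avert$. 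Your induction on $\sconvredtci{\bodylab}$ with $\avert$ fixed is a harmless simplification of the paper's lexicographic order $\slexspo$, which in this proof only ever descends in its first component; the only detail you omit is the degenerate case $m=0$, where an extra use of $(\stexpzerostexpprod)$ is needed.
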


\begin{proof}
  Let $\aonechart = \tuple{\verts,\actions,\sone,\start,\transs,\exts}$ be a \LLEEonechart\
    with guarded \LLEEwitness\ $\aonecharthat$.
  
  We have to show that \mbox{}
  $\extrsol{\bvert}
     \eqin{\milnersysmin}
   \stexpprod{\extrsoluntilof{\aonecharthat}{\bvert}{\avert}}
             {\extrsol{\avert}}$ \mbox{}
  holds for all $\bvert,\avert\in\verts$ with $\bvert \convdescendsinlooptosc \avert$.
  We first notice that this statement holds obviously for $\bvert = \avert$,
  due $ \extrsoluntilof{\aonecharthat}{\bvert}{\avert} \synteq \extrsoluntilof{\aonecharthat}{\avert}{\avert} \synteq \stexpone$,
  and the presence of the axiom $(\leftidstexpprod)$ in $\milnersysmin$. 
  Therefore it suffices to show, by also using this fact, that
  $\extrsol{\bvert}
     \eqin{\milnersysmin}
   \stexpprod{\extrsoluntilof{\aonecharthat}{\bvert}{\avert}}
             {\extrsol{\avert}}$ \mbox{}
  holds for all $\bvert,\avert\in\verts$ with $\bvert \convdescendsinloopto \avert$.
  We will show this by using the same induction as for the definition of the relative extraction function $\sextrsoluntilof{\aonecharthat}$ in Definition~\ref{def:extrsoluntil:extrsol}, that is,
    by complete induction on the (converse) lexicographic partial order $\slexspo$ of $\sconvdescendsinlooptotc$ and $\sconvredtci{\bodylab}$
    on $\verts\times\verts$ defined by:
    $\pair{\bverti{1}}{\averti{1}}
          \lexspo
        \pair{\bverti{2}}{\averti{2}}
          \;\funin\: \Longleftrightarrow\:
            \averti{1} \convdescendsinlooptotc \averti{2}
              \logor
            (\, \averti{1} = \averti{2}
                  \logand 
                \bverti{1} \convredtci{\bodylab} \bverti{2} \,) $,
  which is \wellfounded\ by Lemma~\ref{lem:descsteps:bodysteps:wf}.               
  For our argument we assume to have given, 
    underlying the definition of the relative extraction function $\sextrsoluntilof{\aonecharthat}$ 
    and the extraction function $\sextrsolof{\aonecharthat}$,
  list\vspace{-2pt} representations $\transitionsinfrom{\aonecharthat}{\bvert}$ of the transitions from $\bvert$ in $\aonecharthat$ 
  as in Definition~\ref{def:extrsoluntil:extrsol},
  for all $\bvert\in\verts$.\vspace{1pt}
  
  In order to carry out the induction step, 
  we let $\bvert,\avert\in\verts$ be arbitrary, but such that $\bvert \convdescendsinloopto \avert$. 
  On the basis of the form of $\transitionsinfrom{\aonecharthat}{\bvert}$ 
  as in Definition~\ref{def:extrsoluntil:extrsol}
  we argue as follows,
  starting with a step in which we use the definition of $\sextrsolof{\aonecharthat}$,
  and followed by a second step in which we use that $\terminatesconstof{\aonechart}{\bvert} \synteq \stexpzero$ holds,
    because $\bvert$ cannot have immediate termination as due to $\bvert \convdescendsinloopto \avert$ it is
    in the body of the loop at $\avert$ (cf.\ condition~(L3) for loop \onecharts\ in Section~\ref{LEE}):
  \begin{align*}
    \extrsolof{\aonecharthat}{\bvert}
      & \;\,\parbox[t]{\widthof{$\eqin{\milnersysmin}$}}{$\synteq$}\:
      \stexpprod{\Bigl(
                   \sum_{i=1}^{n} \stexpprod{\aacti{i}}{\extrsoluntilof{\aonecharthat}{\bverti{i}}{\bvert}}
                 \Bigr)^{\sstar} 
                 }{\Bigr(\stexpsum{\Bigr(\displaystyle
                                     \sum_{i=1}^{m} 
                                       \stexpprod{\boneacti{i}}{\extrsolof{\aonecharthat}{\cverti{i}}}
                                   \Bigl)}
                                  {\terminatesconstof{\aonechart}{\bvert}}
                   \Bigl)}  
    \displaybreak[0]\\
      & \;\,\parbox[t]{\widthof{$\eqin{\milnersysmin}$}}{$\synteq$}\:
      \stexpprod{\Bigl(
                   \sum_{i=1}^{n} \stexpprod{\aacti{i}}{\extrsoluntilof{\aonecharthat}{\bverti{i}}{\bvert}}
                 \Bigr)^{\sstar} 
                 }{\Bigr(\stexpsum{\Bigr(\displaystyle
                                     \sum_{i=1}^{m} 
                                       \stexpprod{\boneacti{i}}{\extrsolof{\aonecharthat}{\cverti{i}}}
                                   \Bigl)}
                                  {\stexpzero}
                   \Bigl)}  
    \displaybreak[0]\\
      & \;\,\parbox[t]{\widthof{$\eqin{\milnersysmin}$}}{$\:\milnersysmineq$}\:
      \stexpprod{\Bigl(
                   \sum_{i=1}^{n} \stexpprod{\aacti{i}}{\extrsoluntilof{\aonecharthat}{\bverti{i}}{\bvert}}
                 \Bigr)^{\sstar} 
                 }{\Bigr(\displaystyle
                     \sum_{i=1}^{m} 
                       \stexpprod{\boneacti{i}}{\extrsolof{\aonecharthat}{\cverti{i}}}
                   \Bigl)} 
        \\
        & \;\,\parbox[t]{\widthof{$\eqin{\milnersysmin}$\hspace*{3ex}}}{\mbox{}}\:
          \text{(by axiom $(\neutralstexpsum)$)}
    \displaybreak[0]\\
      & \;\,\parbox[t]{\widthof{$\eqin{\milnersysmin}$}}{$\eqin{\milnersysmin}$}\:
      \stexpprod{\Bigl(
                   \sum_{i=1}^{n} \stexpprod{\aacti{i}}{\extrsoluntilof{\aonecharthat}{\bverti{i}}{\bvert}}
                 \Bigr)^{\sstar}
                 }{\sum_{i=1}^{m} 
                     \stexpprod{\boneacti{i}}
                               {\stexpprod{(\extrsoluntilof{\aonecharthat}{\cverti{i}}{\avert}}
                                          {\extrsolof{\aonecharthat}{\avert})}}}
        \notag\\
        & \;\,\parbox[t]{\widthof{$\eqin{\milnersysmin}$\hspace*{3ex}}}{\mbox{}}\:
          \parbox[t]{\widthof{\phantom{(if }%
                              imply $\cverti{i} \convdescendsinloopto \avert$ and $\cverti{i} \convredi{\bodylab} \bvert$,
                              and hence $\pair{\cverti{i}}{\avert}  \lexspo  \pair{\bvert}{\avert}$) holds)}}
                         {$\bigl($if $\cverti{i} = \avert$, then 
                             $\extrsolof{\aonecharthat}{\cverti{i}}
                                \milnersysmineq
                              \stexpprod{\extrsoluntilof{\aonecharthat}{\cverti{i}}{\avert}}
                                        {\extrsolof{\aonechart}{\avert})}$
                             due to $\extrsoluntilof{\aonecharthat}{\avert}{\avert} = 1$;   
                          \\
                          \phantom{(}if $\cverti{i} \neq \avert$, we can apply the induction hypothesis
                          to $\extrsolof{\aonecharthat}{\cverti{i}}$,
                          \\
                          \phantom{(if }%
                          as $\bvert \redi{\bodylab} \cverti{i}$ (see $\transitionsinfrom{\aonecharthat}{\bvert}$ %$\atsiof{\aonecharthat}{\bvert}$ 
                                                                      as in Def.~\ref{def:extrsoluntil:extrsol})
                           and $\cverti{i} \neq \avert$
                          \\
                          \phantom{(if }%
                          imply $\cverti{i} \convdescendsinloopto \avert$,
                          and $\cverti{i} \convredi{\bodylab} \bvert$
                            entails $\pair{\cverti{i}}{\avert}  \lexspo  \pair{\bvert}{\avert}$)} 
    \displaybreak[0]\\  
      & \;\,\parbox[t]{\widthof{$\eqin{\milnersysmin}$}}{$\eqin{\milnersysmin}$}\:
      \stexpprod{\Bigl(\Bigl(
                   \sum_{i=1}^{n} \stexpprod{\aacti{i}}{\extrsoluntilof{\aonecharthat}{\bverti{i}}{\bvert}}
                 \Bigr)^{\sstar} 
                 }{\stexpprod{\Bigr(
                                \sum_{i=1}^{m} 
                                  \stexpprod{\boneacti{i}}
                                            {\extrsoluntilof{\aonecharthat}{\cverti{i}}{\avert}}
                              \Bigl)\Bigr)}
                             {\extrsolof{\aonecharthat}{\avert}}
                   }
        \\
        & \;\,\parbox[t]{\widthof{$\eqin{\milnersysmin}$\hspace*{3ex}}}{\mbox{}}\:
          \text{(by axioms $(\rdistr)$, and $(\assocstexpprod)$)} 
    \displaybreak[0]\\[1ex]
      & \;\,\parbox[t]{\widthof{$\eqin{\milnersysmin}$}}{$\synteq$}\:
      \stexpprod{ \extrsoluntilof{\aonecharthat}{\bvert}{\avert} }
                  { \extrsolof{\aonecharthat}{\avert} }
        \\
        & \;\,\parbox[t]{\widthof{$\eqin{\milnersysmin}$\hspace*{3ex}}}{\mbox{}}\: 
          \text{(by $\bvert \convdescendsinloopto \avert$, 
                    and the definition of $\extrsoluntilof{\aonecharthat}{\bvert}{\avert}$ in Def.~\ref{def:extrsoluntil:extrsol})}
  \end{align*}
  We note that this reasoning also applies for the special cases $n=0$,
   and with a slight change also for $m=0$, 
    where
    $\sum_{i=1}^{m} 
       \stexpprod{\boneacti{i}}{\extrsolof{\aonecharthat}{\cverti{i}}} 
       \synteq
     \sum_{i=1}^{m} 
       \stexpprod{\boneacti{i}}
                 {\stexpprod{(\extrsoluntilof{\aonecharthat}{\cverti{i}}{\avert}}
                            {\extrsolof{\aonecharthat}{\avert})}}
       \synteq
     \sum_{i=1}^{m} 
     \stexpprod{\boneacti{i}}
               {\extrsoluntilof{\aonecharthat}{\cverti{i}}{\avert}}  
       \synteq
     \stexpzero$,
    and then an axiom ($\stexpzerostexpprod$) has to be used.   
  In this way we have shown, due to $\ACI \subsystem \milnersysmin$, the desired \provablein{\milnersysmin} equality 
  $\asol{\bvert}
     \eqin{\milnersysmin}
   \stexpprod{\extrsoluntilof{\aonecharthat}{\bvert}{\avert}}
             {\asol{\avert}}$ \mbox{}
  for the vertices $\avert$ and $\bvert$ that we picked with the property $\bvert \convdescendsinloopto \avert$.  
  
  Since $\bvert,\avert\in\verts$ with $\bvert \convdescendsinloopto \avert$ were arbitrary above, %for this argument,
  we have successfully carried out the proof by induction that
  \mbox{}
  $\extrsolof{\aonecharthat}{\bvert}
     \eqin{\milnersysmin}
   \stexpprod{\extrsoluntilof{\aonecharthat}{\bvert}{\avert}}
             {\extrsolof{\aonechart}{\avert}}$ \mbox{}
  holds for all $\bvert,\avert\in\verts$ with $\bvert \convdescendsinloopto \avert$.
  As we have argued that the statement also holds for $\bvert = \avert$, we have proved the lemma.
\end{proof}

\begin{lem}[extracted function is provable solution]\label{lem:extrsol:is:sol}
  Let $\aonechart$ %$ = \tuple{\verts,\actions,\sone,\start,\transs,\exts}$ 
    be a %(guarded) 
         \LLEEonechart\ with guarded \LLEEwitness\ $\aonecharthat$.
  Then the extraction function $\sextrsolof{\aonecharthat}$ of $\aonecharthat$ 
  is a \provablein{\milnersysmin} solution~of~$\aonechart$. 
\end{lem}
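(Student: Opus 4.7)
The plan is to verify, for every vertex $\bvert$ of $\aonechart$ with outgoing transitions listed as in Definition~\ref{def:extrsoluntil:extrsol}, namely
\[
  \transitionsinfrom{\aonecharthat}{\bvert}
    \,=\,
      \descsetexpbig{ \bvert \lti{\aacti{i}}{\looplab{\alabi{i}}} \bverti{i} }{ i\in\setexp{1,\ldots,n} }
      \,\cup\,
      \descsetexpbig{ \bvert \lti{\boneacti{i}}{\looplab{\bodylab}} \cverti{i} }{ i\in\setexp{1,\ldots,m} } \punc{,}
\]
the correctness equation
\[
  \extrsolof{\aonecharthat}{\bvert}
    \,\milnersysmineq\,
      \terminatesconstof{\aonechart}{\bvert}
        + \sum_{i=1}^{n} \aacti{i} \prod \extrsolof{\aonecharthat}{\bverti{i}}
        + \sum_{i=1}^{m} \boneacti{i} \prod \extrsolof{\aonecharthat}{\cverti{i}} \punc{.}
\]
Guardedness of $\aonecharthat$ ensures that each $\aacti{i}$ is a proper action, so that the axiom $(\termstexpit)$ is not needed to separate a $\stexpone$ summand from the loop body; all inferences will be performed in the purely equational system $\milnersysmin$.

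First I would fix abbreviations $E \defdby \sum_{i=1}^{n} \aacti{i} \prod \extrsoluntilof{\aonecharthat}{\bverti{i}}{\bvert}$ and $F \defdby \terminatesconstof{\aonechart}{\bvert} + \sum_{i=1}^{m} \boneacti{i} \prod \extrsolof{\aonecharthat}{\cverti{i}}$, so that by Definition~\ref{def:extrsoluntil:extrsol} we have $\extrsolof{\aonecharthat}{\bvert} \synteq \stexpit{E} \prod F$. Unfolding the outer star by means of axiom $(\recdefstexpit)$, followed by right distributivity $(\rdistr)$, left identity $(\leftidstexpprod)$, associativity $(\assocstexpprod)$, and commutativity of $\sstexpsum$, yields the chain
\[
  \extrsolof{\aonecharthat}{\bvert}
    \,\synteq\, \stexpit{E} \prod F
    \,\milnersysmineq\, (1 + E \prod \stexpit{E}) \prod F
    \,\milnersysmineq\, F + E \prod (\stexpit{E} \prod F)
    \,\synteq\, F + E \prod \extrsolof{\aonecharthat}{\bvert} \punc{.}
\]
Distributing $E \prod \extrsolof{\aonecharthat}{\bvert}$ over the sum defining $E$ and regrouping products gives
\[
  E \prod \extrsolof{\aonecharthat}{\bvert}
    \,\milnersysmineq\,
      \sum_{i=1}^{n} \aacti{i} \prod \bigl( \extrsoluntilof{\aonecharthat}{\bverti{i}}{\bvert} \prod \extrsolof{\aonecharthat}{\bvert} \bigr) \punc{.}
\]

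The decisive step is now to invoke Lemma~\ref{lem:lem:extrsol:is:sol} on each summand: since the transition $\bvert \lti{\aacti{i}}{\looplab{\alabi{i}}} \bverti{i}$ is a loop-entry transition of $\aonecharthat$ with marking $\alabi{i}\in\natplus$, the target $\bverti{i}$ lies in the body of the loop subchart at $\bvert$, so $\bverti{i} \convdescendsinloopto \bvert$. Lemma~\ref{lem:lem:extrsol:is:sol} therefore delivers $\extrsoluntilof{\aonecharthat}{\bverti{i}}{\bvert} \prod \extrsolof{\aonecharthat}{\bvert} \milnersysmineq \extrsolof{\aonecharthat}{\bverti{i}}$. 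Substituting these equalities back and expanding $F$ according to its definition finally yields the required correctness equation for $\bvert$. Since $\bvert$ was an arbitrary vertex of $\aonechart$, this shows that $\sextrsolof{\aonecharthat}$ is a \provablein{\milnersysmin} solution of $\aonechart$.

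The main obstacle is bookkeeping rather than conceptual: one must handle cleanly the degenerate cases where $n=0$ or $m=0$ (empty sums, appealing to $(\neutralstexpsum)$ and $(\stexpzerostexpprod)$ as in the proof of Lemma~\ref{lem:lem:extrsol:is:sol}), verify that all rewrites stay within $\milnersysmin$ and do not silently invoke $\RSPstar$, and confirm that the list representations chosen for $\transitionsinfrom{\aonecharthat}{\bvert}$ match the ones implicitly used in the definitions of $\sextrsoluntilof{\aonecharthat}$ and $\sextrsolof{\aonecharthat}$ (any mismatch is absorbed up to \ACI, which is a subsystem of $\milnersysmin$).
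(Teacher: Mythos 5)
Your proposal is correct and follows essentially the same route as the paper's proof: unfold the outer star with $(\recdefstexpit)$, distribute with $(\rdistr)$, $(\leftidstexpprod)$, $(\assocstexpprod)$ to expose the recursive occurrence of $\extrsolof{\aonecharthat}{\bvert}$, and then discharge the loop-entry summands via Lemma~\ref{lem:lem:extrsol:is:sol}. The only nitpick is that a loop-entry transition $\bvert \loopnstepto{\alabi{i}} \bverti{i}$ in general gives only $\bverti{i} \convdescendsinlooptosc \bvert$ rather than the strict $\bverti{i} \convdescendsinloopto \bvert$ (the self-loop case $\bverti{i} = \bvert$), but this is exactly why Lemma~\ref{lem:lem:extrsol:is:sol} is stated for $\sconvdescendsinlooptosc$, so your argument still goes through.
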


\begin{proof}
  Let $\aonechart = \tuple{\verts,\actions,\sone,\start,\transs,\exts}$ be a (guarded) \LLEEonechart\
    with guarded \LLEEwitness~$\aonecharthat$.
     
  We show that the extraction function $\sextrsolof{\aonecharthat}$ of $\aonecharthat$ %as defined in Definition~\ref{def:extrsoluntil:extrsol}
  is a \provablein{\milnersysmin} solution of $\aonechart$
  by verifying the \provablein{\milnersysmin} correctness conditions for $\sextrsolof{\aonecharthat}$
  at every vertex $\bvert\in\verts$. 
  For the argument we assume to have given, 
    underlying the definition of the relative extraction function $\sextrsoluntilof{\aonecharthat}$ 
    and\vspace{-2pt} the extraction function $\sextrsolof{\aonecharthat}$,
  list representations $\transitionsinfrom{\aonecharthat}{\bvert}$ of the transitions from $\bvert$ in $\aonecharthat$ 
     % \actiontarget\ sets $\atsiof{\aonecharthat}{\bvert}$ of $\aonecharthat$ 
  as written in Definition~\ref{def:extrsoluntil:extrsol},
  for all vertices $\bvert\in\verts$. 
 
  We let $\bvert\in\verts$ be arbitrary. 
  Starting from the definition of $\sextrsolof{\aonecharthat}$ in Definition~\ref{def:extrsoluntil:extrsol} 
    on the basis of the form of $\transitionsinfrom{\aonecharthat}{\bvert}$,
  we argue by the following steps:  
  \begin{align*}
    \extrsolof{\aonecharthat}{\bvert}
    %\\
    %& 
      & \;\,\parbox[t]{\widthof{$\eqin{\milnersysmin}$}}{$\:\synteq\:$}\:
      \stexpprod{\Bigl(
                   \sum_{i=1}^{n} \stexpprod{\aacti{i}}{\extrsoluntilof{\aonecharthat}{\bverti{i}}{\bvert}}
                 \Bigr)^{\sstar}
                         }{ 
                     \:       
                     \Bigr(\displaystyle
                     \stexpsum{\Bigl(
                                 \sum_{i=1}^{m} \stexpprod{\boneacti{i}}{\extrsolof{\aonecharthat}{\cverti{i}}}
                               \Bigr)}
                              {\terminatesconstof{\aonechart}{\bvert}}
                     \Bigl)}              
    \displaybreak[0]\\
    & \;\,\parbox[t]{\widthof{$\eqin{\milnersysmin}$}}{$\eqin{\milnersysmin}$}\:
      \stexpprod{\Bigl(
                 \stexpsum{\sone}
                          {\stexpprod{\Bigl(
                                        \sum_{i=1}^{n} \stexpprod{\aacti{i}}{\extrsoluntilof{\aonecharthat}{\bverti{i}}{\bvert}}
                                      \Bigr)\hspace*{-2pt}}
                                     {\hspace*{-2pt}
                                      \Bigl(
                                        \sum_{i=1}^{n} \stexpprod{\aacti{i}}{\extrsoluntilof{\aonecharthat}{\bverti{i}}{\bvert}}
                                      \Bigr)^{\sstar}}}
                 \Bigr)\hspace*{-2pt}}
                {\hspace*{-2pt}
                 \Bigr(\displaystyle
                 \stexpsum{\Bigl(
                             \sum_{i=1}^{m} \stexpprod{\boneacti{i}}{\extrsolof{\aonecharthat}{\cverti{i}}}
                           \Bigr)}
                          {\terminatesconstof{\aonechart}{\bvert}}
                 \Bigl)}
        \\
        & \;\,\parbox[t]{\widthof{$\eqin{\milnersysmin}$\hspace*{3ex}}}{\mbox{}}\:
          \text{(by axiom $(\recdefstexpit)$)}
    \displaybreak[0]\\
    & \;\,\parbox[t]{\widthof{$\eqin{\milnersysmin}$}}{$\eqin{\milnersysmin}$}\:
      \begin{aligned}[t]
        & 
        \Bigr( 
          \stexpsum{\Bigl(
                      \sum_{i=1}^{m} \stexpprod{\boneacti{i}}{\extrsolof{\aonecharthat}{\cverti{i}}}
                    \Bigr)}
                   {\terminatesconstof{\aonechart}{\bvert}}
        \Bigl)                     
        \\[-1ex]
        &
        \stexpsum{}
                 {\stexpprod{\Bigl(
                               \sum_{i=1}^{n} \stexpprod{\aacti{i}}{\extrsoluntilof{\aonecharthat}{\bverti{i}}{\bvert}}
                             \Bigr)}
                            {\stexpprod{\Bigl( 
                                        \Bigl(
                                          \sum_{i=1}^{n} \stexpprod{\aacti{i}}{\extrsoluntilof{\aonecharthat}{\bverti{i}}{\bvert}}
                                        \Bigr)^{\sstar}}
                                       {\Bigl(
                                          \stexpsum{\Bigl(
                                                      \sum_{i=1}^{m} \stexpprod{\boneacti{i}}{\extrsolof{\aonecharthat}{\cverti{i}}}
                                                    \Bigr)}
                                                   {\terminatesconstof{\aonechart}{\bvert}}
                                        \Bigr)}
                                        \Bigr)}}
      \end{aligned}
        \\
        & \;\,\parbox[t]{\widthof{$\eqin{\milnersysmin}$\hspace*{3ex}}}{\mbox{}}\:
          \text{(by axioms $(\rdistr)$, $(\leftidstexpprod)$, and $(\assocstexpprod)$)}
    \displaybreak[0]\\
    & \;\,\parbox[t]{\widthof{$\eqin{\milnersysmin}$}}{$\synteq$}\:
      \stexpsum{\Bigr( 
                  \stexpsum{\Bigl(
                              \sum_{i=1}^{m} \stexpprod{\boneacti{i}}{\extrsolof{\aonecharthat}{\cverti{i}}}
                            \Bigr)}
                           {\terminatesconstof{\aonechart}{\bvert}}
                \Bigl)}
               {\stexpprod{\Bigl(
                             \sum_{i=1}^{n} \stexpprod{\aacti{i}}{\extrsoluntilof{\aonecharthat}{\bverti{i}}{\bvert}}
                           \Bigr)}
                          {\extrsolof{\aonecharthat}{\bvert}}}
        \\
        & \;\,\parbox[t]{\widthof{$\eqin{\milnersysmin}$\hspace*{3ex}}}{\mbox{}}\:
          \text{(by definition of $\extrsolof{\aonecharthat}{\bvert}$ in Def.~\ref{def:extrsoluntil:extrsol})} 
        \displaybreak[0]\\
    & \;\,\parbox[t]{\widthof{$\eqin{\milnersysmin}$}}{$\eqin{\milnersysmin}$}\:
      \stexpsum{\Bigl(
                  \sum_{i=1}^{n} 
                    \stexpprod{\aacti{i}}
                              {\stexpprod{\bigl(\extrsoluntilof{\aonecharthat}{\bverti{i}}{\bvert}}
                                         {\extrsolof{\aonecharthat}{\bvert}\bigr)}}
                \Bigr)}
               {\Bigr( 
                  \stexpsum{\Bigl(
                              \sum_{i=1}^{m} \stexpprod{\boneacti{i}}{\extrsolof{\aonecharthat}{\cverti{i}}}
                            \Bigr)}
                           {\terminatesconstof{\aonechart}{\bvert}}
                \Bigl)}
        \\
        & \;\,\parbox[t]{\widthof{$\eqin{\milnersysmin}$\hspace*{3ex}}}{\mbox{}}\:
          \text{(by axioms $(\commstexpsum)$, $(\rdistr)$, and $(\assocstexpprod)$)} 
        \displaybreak[0]\\
    & \;\,\parbox[t]{\widthof{$\eqin{\milnersysmin}$}}{$\eqin{\milnersysmin}$}\:
      \stexpsum{\Bigl(
                  \sum_{i=1}^{n} 
                    \stexpprod{\aacti{i}}{\extrsolof{\aonecharthat}{\bverti{i}}}
                \Bigr)}
               {\stexpsum{\Bigl(
                            \sum_{i=1}^{m} \stexpprod{\boneacti{i}}{\extrsolof{\aonecharthat}{\cverti{i}}}
                          \Bigr)} 
                         {\terminatesconstof{\aonechart}{\bvert}}}
        \\
        & \;\,\parbox[t]{\widthof{$\eqin{\milnersysmin}$\hspace*{3ex}}}{\mbox{}}\:
          \parbox[t]{\widthof{(by Lemma~\ref{lem:lem:extrsol:is:sol}, due to $\bverti{i} \convdescendsinlooptosc \bvert$, 
                           which follows from $\bvert \loopnstepto{\alabi{i}} \bverti{i}$}}
                    {(by Lemma~\ref{lem:lem:extrsol:is:sol}, due to $\bverti{i} \convdescendsinlooptosc \bvert$, 
                      which follows from $\bvert \loopnstepto{\alabi{i}} \bverti{i}$%
                     \\
                     \phantom{(}(see $\transitionsinfrom{\aonecharthat}{\bvert}$
                                     % $\atsiof{\aonecharthat}{\bvert}$ 
                                 as in Def.~\ref{def:extrsoluntil:extrsol}),
                                 and by axioms $(\assocstexpsum)\,$).}
    \\
    & \;\,\parbox[t]{\widthof{$\eqin{\milnersysmin}$}}{$\eqin{\ACI}$}\:
      \stexpsum{\terminatesconstof{\aonechart}{\bvert}}
               {\stexpsum{\Bigl(
                            \sum_{i=1}^{n} 
                              \stexpprod{\aacti{i}}{\extrsolof{\aonecharthat}{\bverti{i}}}
                          \Bigr)}
                         {\Bigl(
                            \sum_{i=1}^{m} \stexpprod{\boneacti{i}}{\extrsolof{\aonecharthat}{\cverti{i}}}
                          \Bigr)}} \punc{.}                        
  \end{align*}
  Since $\ACI \subsystem \milnersysmin$,
  this chain of equalities yields a \provablein{\milnersysmin} equality
  that establishes,
  in view of $\transitionsinfrom{\aonecharthat}{\bvert}$ %$\atsiof{\aonecharthat}{\bvert}$ 
             as in Definition~\ref{def:extrsoluntil:extrsol},
  the correctness condition for $\sextrsolof{\aonecharthat}$ to be a \provablein{\milnersysmin} solution 
  at the vertex $\bvert$ that we picked. 
  
  Since $\bvert\in\verts$ was arbitrary,
  we have established that the extraction function $\sextrsolof{\aonecharthat}$ of $\aonecharthat$
  is a \provablein{\milnersysmin} solution of $\aonechart$.
\end{proof}

For showing the solution uniqueness statement \ref{SU} %{\bf (SU)}
  we can also use the hierarchical loop structure of a \onechart~$\aonechart$ with \LLEEwitness~$\aonecharthat$
    for carrying out proofs by induction.
We repurpose the two-step approach of the proof of Lemma~\ref{lem:extrsol:is:sol}
  that used the \provablein{\milnersys} relationship between
    the extraction function $\sextrsolof{\aonecharthat}$ of~$\aonecharthat$ and the relative\vspace*{-2pt} extraction function $\sextrsoluntilof{\aonecharthat}$ of~$\aonecharthat$
      in Lemma~\ref{lem:lem:extrsol:is:sol}. 
In doing so\vspace*{-0.5pt} we first establish, for every \provablein{\milnersys} solution $\sasol$ of $\aonechart$,
  a connection with the relative extraction function $\sextrsoluntilof{\aonechart}$ of $\aonecharthat$,
    see Lemma~\ref{lem:lem:sols:provably:equal:LLEE} below. 
The proof of this lemma proceeds by an induction that starts at innermost loop \subonecharts\ of the given \LLEEwitness,
  and then progresses to outer loop \subonecharts.
Different from Lemma~\ref{lem:lem:extrsol:is:sol},
  it will be crucial here to employ the fixed-point rule \RSPstar\ of \milnersys\ in the proof.

\begin{lem}\label{lem:lem:sols:provably:equal:LLEE}
  Let $\aonechart$ be a \LLEEonechart\ with guarded \LLEEwitness~$\aonecharthat$.
  Furthermore, let $\asys$ be an \eqlogicbasedover{\actions}\ proof system %over $\StExpover{\actions}$ 
    such that $\ACI \subsystem \asys \isthmsubsumedby \milnersys$.
  
  Let $\sasol \funin \vertsof{\aonechart} \to \StExpover{\actions}$
      be an \provablein{\asys} solution of $\aonechart$.
  Then 
  $\asol{\bvert}
     \eqin{\milnersys}
   \stexpprod{\extrsoluntilof{\aonecharthat}{\bvert}{\avert}}
             {\asol{\avert}}$ \mbox{}
  holds for all vertices $\bvert,\avert\in\vertsof{\aonechart}$ with $\bvert \convdescendsinlooptosc \avert$.
\end{lem}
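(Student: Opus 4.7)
The plan is to mirror the induction structure of Definition~\ref{def:extrsoluntil:extrsol} and of the proof of Lemma~\ref{lem:lem:extrsol:is:sol}, but to invoke the fixed-point rule \RSPstar\ of $\milnersys$ at the crucial point. Specifically, I would proceed by well-founded induction on the lexicographic order $\slexspo$ on pairs $\pair{\bvert}{\avert}$ with $\bvert \convdescendsinlooptosc \avert$, which is \wellfounded\ by Lemma~\ref{lem:descsteps:bodysteps:wf}. The base case $\bvert = \avert$ is immediate: since $\extrsoluntilof{\aonecharthat}{\avert}{\avert} \synteq \stexpone$ by definition, the claim reduces to $\asol{\avert} \milnersyseq \stexpprod{\stexpone}{\asol{\avert}}$, which is an instance of $(\leftidstexpprod)$ of $\milnersysmin \subsystem \milnersys$.

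For the inductive step I assume $\bvert \convdescendsinloopto \avert$, so $\bvert$ lies in the body of the loop \subonechart\ at $\avert$. I list the transitions from $\bvert$ in $\aonecharthat$ as in Definition~\ref{def:extrsoluntil:extrsol}, with \loopentrytransitions\ $\bvert \loopnstepto{\alabi{i}} \bverti{i}$ ($i=1,\ldots,n$) and \bodytransitions\ $\bvert \redi{\bodylab} \cverti{j}$ ($j=1,\ldots,m$). Each \loopentry\ transition induces an inner loop \subonechart\ at $\bvert$, yielding $\bverti{i} \convdescendsinlooptosc \bvert$ and hence $\pair{\bverti{i}}{\bvert} \lexspo \pair{\bvert}{\avert}$ via the first disjunct of $\slexspo$ (since $\bvert \convdescendsinlooptotc \avert$). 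Each \bodytransition\ target $\cverti{j}$ remains within the loop \subonechart\ at $\avert$ (by condition~(L2)), so $\cverti{j} \convdescendsinlooptosc \avert$, and since $\cverti{j} \convredi{\bodylab} \bvert$ one gets $\pair{\cverti{j}}{\avert} \lexspo \pair{\bvert}{\avert}$ via the second disjunct. The induction hypothesis applied to these smaller pairs yields
\begin{align*}
  \asol{\bverti{i}} & {} \milnersyseq \stexpprod{\extrsoluntilof{\aonecharthat}{\bverti{i}}{\bvert}}{\asol{\bvert}},
    &
  \asol{\cverti{j}} & {} \milnersyseq \stexpprod{\extrsoluntilof{\aonecharthat}{\cverti{j}}{\avert}}{\asol{\avert}}.
\end{align*}
Since $\asys \isthmsubsumedby \milnersys$, the solution correctness of $\sasol$ at $\bvert$ is also \provablein{\milnersys}, and combined with $\terminatesconstof{\aonechart}{\bvert} \synteq \stexpzero$ (by condition~(L3), as $\bvert \neq \avert$) it reads $\asol{\bvert} \milnersyseq \sum_{i=1}^{n} \stexpprod{\aacti{i}}{\asol{\bverti{i}}} + \sum_{j=1}^{m} \stexpprod{\boneacti{j}}{\asol{\cverti{j}}}$. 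Substituting the IH identities and applying $(\rdistr)$, $(\assocstexpprod)$, and $(\neutralstexpsum)$ of $\milnersysmin$ rewrites this \provablyin{\milnersys} to $\asol{\bvert} \milnersyseq \stexpprod{b}{\asol{\bvert}} + c$ with
\begin{align*}
  b & {} \defdby \sum_{i=1}^{n} \stexpprod{\aacti{i}}{\extrsoluntilof{\aonecharthat}{\bverti{i}}{\bvert}},
    &
  c & {} \defdby \stexpprod{\Bigl(\sum_{j=1}^{m} \stexpprod{\boneacti{j}}{\extrsoluntilof{\aonecharthat}{\cverti{j}}{\avert}}\Bigr)}{\asol{\avert}}.
\end{align*}
An application of \RSPstar\ then gives $\asol{\bvert} \milnersyseq \stexpprod{\stexpit{b}}{c}$, which by $(\assocstexpprod)$ and the defining clause of $\extrsoluntilof{\aonecharthat}{\bvert}{\avert}$ equals $\stexpprod{\extrsoluntilof{\aonecharthat}{\bvert}{\avert}}{\asol{\avert}}$, as required.

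The main obstacle is verifying the \sidecondition\ $\notterminates{b}$ of \RSPstar. This is precisely where \emph{guardedness} of the \LLEEwitness~$\aonecharthat$ is essential: guardedness forces each action $\aacti{i}$ labeling a \loopentry\ transition to be a proper action in $\actions$, not $\sone$, and then the rules of $\StExpTSS$ in Definition~\ref{def:StExpTSS} guarantee $\notterminates{\stexpprod{\aacti{i}}{\extrsoluntilof{\aonecharthat}{\bverti{i}}{\bvert}}}$ for each summand, hence $\notterminates{b}$. A minor further complication is handling the degenerate cases $n = 0$ and $m = 0$: the corresponding empty sums become $\stexpzero$, and additional uses of $(\stexpzerostexpprod)$ and $(\neutralstexpsum)$ (with $\RSPstar$ applied with $c \milnersyseq \stexpzero$ in the case $m=0$, which remains valid since $0 \notterminates$) close the argument in the same style as the proof of Lemma~\ref{lem:lem:extrsol:is:sol}.
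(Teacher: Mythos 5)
Your proposal is correct and follows essentially the same route as the paper's proof: the same well-founded induction on the lexicographic order $\slexspo$, the same use of the solution correctness of $\sasol$ at $\bvert$ together with $\terminatesconstof{\aonechart}{\bvert} \synteq \stexpzero$, the same rearrangement via $(\rdistr)$ and $(\assocstexpprod)$ into the form $\asol{\bvert} \milnersyseq \stexpprod{b}{\asol{\bvert}} + c$, and the same application of $\RSPstar$ followed by the defining clause of $\sextrsoluntilof{\aonecharthat}$. The only (harmless) differences are bookkeeping: you fold the reflexive cases $\bverti{i} = \bvert$ and $\cverti{j} = \avert$ into the induction hypothesis rather than treating them as separate trivial subcases, and you make explicit the role of guardedness in securing the \sidecondition\ $\notterminates{b}$, which the paper only asserts.
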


\begin{proof}
  Let $\aonecharthat$ be a guarded \LLEEwitness\ of a (guarded) \LLEEonechart~$\aonechart = \tuple{\verts,\actions,\sone,\start,\transs,\exts}$.
  Let $\sasol \funin \verts \to \StExpover{\actions}$ be an \provablein{\asys} solution of $\aonechart$.
  
  We have to show that \mbox{}
  $\asol{\bvert}
     \eqin{\milnersys}
   \stexpprod{\extrsoluntilof{\aonecharthat}{\bvert}{\avert}}
             {\asol{\avert}}$ \mbox{}
  holds for all $\bvert,\avert\in\verts$ with $\bvert \convdescendsinlooptosc \avert$.
  We first notice that this statement holds obviously for $\bvert = \avert$,
  due $ \extrsoluntilof{\aonecharthat}{\bvert}{\avert} \synteq \extrsoluntilof{\aonecharthat}{\avert}{\avert} \synteq \stexpone$. 
  Therefore it suffices to show, by also using this fact, that
  $\asol{\bvert}
     \eqin{\milnersys}
   \stexpprod{\extrsoluntilof{\aonecharthat}{\bvert}{\avert}}
             {\asol{\avert}}$ \mbox{}
  holds for all $\bvert,\avert\in\verts$ with $\bvert \convdescendsinloopto \avert$.
  We will show this by using the same induction as for the definition of the relative extraction function in Definition~\ref{def:extrsoluntil:extrsol}, that is,
    by complete induction on the (converse) lexicographic partial order $\slexspo$ of $\sconvdescendsinlooptotc$ and $\sconvredtci{\bodylab}$
    on $\verts\times\verts$ defined by:
    $\pair{\bverti{1}}{\averti{1}}
          \lexspo
        \pair{\bverti{2}}{\averti{2}}
          \;\funin\: \Longleftrightarrow\:
            \averti{1} \convdescendsinlooptotc \averti{2}
              \logor
            (\, \averti{1} = \averti{2}
                  \logand 
                \bverti{1} \convredtci{\bodylab} \bverti{2} \,) $,
  which is \wellfounded\ due to Lemma~\ref{lem:descsteps:bodysteps:wf}.               
  For our argument we suppose to have given, 
    underlying the definition of the relative extraction function $\sextrsoluntilof{\aonecharthat}$ 
    and the extraction function $\sextrsolof{\aonecharthat}$,
  list representations $\transitionsinfrom{\aonecharthat}{\bvert}$ of the transitions from $\bvert$ in $\aonecharthat$ 
     % \actiontarget\ sets $\atsiof{\aonecharthat}{\bvert}$ of $\aonecharthat$ 
  as written in Definition~\ref{def:extrsoluntil:extrsol},
  for all $\bvert\in\verts$. 
   
  In order to carry out the induction step, 
  we let $\bvert,\avert\in\verts$ be arbitrary such that $\bvert \convdescendsinloopto \avert$. 
  On the basis of the form of $\transitionsinfrom{\aonecharthat}{\bvert}$ %$\atsiof{\aonecharthat}{\bvert}$ 
                              as in Definition~\ref{def:extrsoluntil:extrsol}
  we argue as follows,
  starting with a step in which we use that $\sasol$ is an \provablein{\asys} solution of $\aonechart$,
  and followed by a second step in which we use that $\terminatesconstof{\aonechart}{\bvert} \synteq \stexpzero$ holds,
    because $\bvert$ cannot have immediate termination as due to $\bvert \convdescendsinloopto \avert$ it is
    in the body of the loop at $\avert$ (see condition~(L3) for loop \onecharts\ in Section~\ref{LEE}):
  \begin{align*}
    \asol{\bvert}
        &
        \;\,\parbox[t]{\widthof{$\eqin{\milnersysmin}$}}{$\eqin{\asys}$}\:
      \stexpsum{\terminatesconstof{\aonechart}{\bvert}}
               {\stexpsum{\Bigl(\Bigl(
                            \sum_{i=1}^{n}
                              \stexpprod{\aacti{i}}{ \asol{\bverti{i}} }
                          \Bigr)}
                         {\Bigl(
                            \sum_{i=1}^{m}
                              \stexpprod{\boneacti{i}}{ \asol{\cverti{i}} }
                          \Bigr)\Bigr)}}
    \displaybreak[0]\\
        &
        \;\,\parbox[t]{\widthof{$\eqin{\milnersysmin}$}}{$\eqin{\milnersysmin}$}\:
      \stexpsum{\Bigl(
                  \sum_{i=1}^{n}
                    \stexpprod{\aacti{i}}{ \asol{\bverti{i}} }
                \Bigr)}
               {\Bigl(
                  \sum_{i=1}^{m}
                    \stexpprod{\boneacti{i}}{ \asol{\cverti{i}} }
                \Bigr)}
    \displaybreak[0]\\                      
      &
      \;\,\parbox[t]{\widthof{$\eqin{\milnersysmin}$}}{$\eqin{\milnersys}$}\:  
    \stexpsum{\Bigl(
                \sum_{i=1}^{n}
                  \stexpprod{\aacti{i}}
                            {\bigl(\stexpprod{\extrsoluntilof{\aonecharthat}{\bverti{i}}{\bvert}}
                                             {\asol{\bvert}}\bigr)}
              \Bigr)}
             {\Bigl(
                \sum_{i=1}^{m}
                  \stexpprod{\boneacti{i}}
                            {\asol{\cverti{i}}}
              \Bigr)}
        \\
        & \;\,\parbox[t]{\widthof{$\eqin{\milnersysmin}$\hspace*{3ex}}}{\mbox{}}\:
          \parbox[t]{\widthof{\phantom{(if }%
                              because then $\bvert \loopnstepto{\alabi{i}} \bverti{i}$  
                              (see $\transitionsinfrom{\aonecharthat}{\bvert}$ %$\atsiof{\aonecharthat}{\bvert}$ 
                                   as in Def.~\ref{def:extrsoluntil:extrsol}) implies
                              $\bverti{i} \convdescendsinloopto \bvert$,}}
                         {$\bigl($if $\bverti{i} = \bvert$, then 
                             $\asol{\bverti{i}}
                                \milnersyseq
                              \stexpprod{\extrsoluntilof{\aonecharthat}{\bverti{i}}{\bvert}}
                                        {\asol{\bvert})}$
                           due to $\extrsoluntilof{\aonecharthat}{\bvert}{\bvert} = 1$; 
                          \\
                          \phantom{(}if $\bverti{i} \neq \bvert$, we can apply the induction hypothesis
                          to $\asol{\bverti{i}}$,
                          \\
                          \phantom{(if }%
                          because then $\bvert \loopnstepto{\alabi{i}} \bverti{i}$  
                           (see $\transitionsinfrom{\aonecharthat}{\bvert}$ %$\atsiof{\aonecharthat}{\bvert}$ 
                            as in Def.~\ref{def:extrsoluntil:extrsol}) implies
                          $\bverti{i} \convdescendsinloopto \bvert$,
                          \\
                          \phantom{(if }%
                          and due to $\bvert \convdescendsinloopto \avert$ we get $\pair{\bverti{i}}{\bvert} \lexspo \pair{\bvert}{\avert}$)}
    \displaybreak[0]\\[0.75ex]
      &
      \;\,\parbox[t]{\widthof{$\eqin{\milnersysmin}$}}{$\eqin{\milnersys}$}\:  
    \stexpsum{\Bigl(
                \sum_{i=1}^{n}
                  \stexpprod{\aacti{i}}
                            {\bigl(\stexpprod{\extrsoluntilof{\aonecharthat}{\bverti{i}}{\bvert}}
                                             {\asol{\bvert}}\bigr)}
              \Bigr)}
             {\Bigl(
                \sum_{i=1}^{m}
                  \stexpprod{\boneacti{i}}
                            {\bigl(\stexpprod{\extrsoluntilof{\aonecharthat}{\cverti{i}}{\avert}}
                                             {\asol{\avert}}\bigr)}
              \Bigr)}
        \\
        & \;\,\parbox[t]{\widthof{$\eqin{\milnersysmin}$\hspace*{3ex}}}{\mbox{}}\:
          \parbox[t]{\widthof{\phantom{(if }%
                              imply $\cverti{i} \convdescendsinloopto \avert$ and $\cverti{i} \convredi{\bodylab} \bvert$,
                              and hence $\pair{\cverti{i}}{\avert}  \lexspo  \pair{\bvert}{\avert}$) holds)}}
                         {$\bigl($if $\cverti{i} = \avert$, then 
                             $\asol{\cverti{i}}
                                \milnersyseq
                              \stexpprod{\extrsoluntilof{\aonecharthat}{\cverti{i}}{\avert}}
                                        {\asol{\avert})}$
                             due to $\extrsoluntilof{\aonecharthat}{\avert}{\avert} = 1$;   
                          \\
                          \phantom{(}if $\cverti{i} \neq \avert$, we can apply the induction hypothesis
                          to $\asol{\cverti{i}}$,
                          \\
                          \phantom{(if }%
                          as $\bvert \redi{\bodylab} \cverti{i}$ (see $\transitionsinfrom{\aonecharthat}{\bvert}$ %$\atsiof{\aonecharthat}{\bvert}$ 
                                                                  as in Def.~\ref{def:extrsoluntil:extrsol})
                           and $\cverti{i} \neq \avert$
                          \\
                          \phantom{(if }%
                          imply $\cverti{i} \convdescendsinloopto \avert$,
                          and $\cverti{i} \convredi{\bodylab} \bvert$
                            entails $\pair{\cverti{i}}{\avert}  \lexspo  \pair{\bvert}{\avert}$)}
    \displaybreak[0]\\[0.5ex]
      &
      \;\,\parbox[t]{\widthof{$\eqin{\milnersysmin}$}}{$\eqin{\milnersysmin}$}\:  
    \stexpsum{\stexpprod{\Bigl(
                           \sum_{i=1}^{n}
                             \stexpprod{\aacti{i}}
                                       {\extrsoluntilof{\aonecharthat}{\bverti{i}}{\bvert}}
                         \Bigr)}                 
                        {\asol{\bvert}}}
             {\stexpprod{\Bigl(
                           \sum_{i=1}^{m}
                             \stexpprod{\boneacti{i}}
                                       {\extrsoluntilof{\aonecharthat}{\cverti{i}}{\avert}}
                         \Bigr)}              
                        {\asol{\avert}}}
        \\
        & \;\,\parbox[t]{\widthof{$\eqin{\milnersysmin}$\hspace*{3ex}}}{\mbox{}}\:
          \text{(by axioms $(\assocstexpprod)$, and {$(\rdistr)$}).}
  \end{align*}       
  We note that these equalities also hold for the special cases in which $n=0$ or/and $m=0$,
  where in the case $m = 0$ an axiom ($\stexpzerostexpprod$) needs to be used in the last step.
  Since $\ACI \subsystem \asys \isthmsubsumedby \milnersys$, and $\milnersysmin \subsystem \milnersys$, 
  we have obtained the following provable equality:  
  \begin{align*}     
    \asol{\bvert}
        &
        \;\,\parbox[t]{\widthof{$\eqin{\milnersysmin}$}}{$\eqin{\milnersys}$}\: 
    \stexpsum{\stexpprod{\Bigl(
                           \sum_{i=1}^{n}
                             \stexpprod{\aacti{i}}
                                       {\extrsoluntilof{\aonecharthat}{\bverti{i}}{\bvert}}
                         \Bigr)}                 
                        {\asol{\bvert}}}
             {\stexpprod{\Bigl(
                           \sum_{i=1}^{m}
                             \stexpprod{\boneacti{i}}
                                       {\extrsoluntilof{\aonecharthat}{\cverti{i}}{\avert}}
                         \Bigr)}              
                        {\asol{\avert}}} \punc{,}
  \end{align*}
  Since 
  $\notterminates{\bigl(
                     \sum_{i=1}^{n}
                       \stexpprod{\aacti{i}}
                                 {\extrsoluntilof{\aonecharthat}{\bverti{i}}{\bvert}}
                   \bigr)}$
  holds, we can apply $\RSPstar$ in order to obtain, and reason further: 
  \begin{align*}
    \asol{\bvert}
      &
      \;\,\parbox[t]{\widthof{$\eqin{\milnersysmin}$}}{$\eqin{\milnersys}$}\:
    \stexpprod{%\stexpit{
                        {\Bigl(
                           \sum_{i=1}^{n}
                             \stexpprod{\aacti{i}}
                                       {\extrsoluntilof{\aonecharthat}{\bverti{i}}{\bvert}}
                         \Bigr)^{\!\sstar}\!}
                         %}
               }           
              {\!
               \Bigl(
               \stexpprod{\Bigl(
                             \sum_{i=1}^{m}
                               \stexpprod{\boneacti{i}}
                                         {\extrsoluntilof{\aonecharthat}{\cverti{i}}{\avert}}
                          \Bigr)}              
                         {\asol{\avert}}
               \Bigr)}
    \displaybreak[0]\\[0.75ex]
      &
      \;\,\parbox[t]{\widthof{$\eqin{\milnersysmin}$}}{$\eqin{\milnersysmin}$}\:
    \stexpprod{\Bigl(
                 \stexpprod{\Bigl(
                              \sum_{i=1}^{n}
                                \stexpprod{\aacti{i}}
                                          {\extrsoluntilof{\aonecharthat}{\bverti{i}}{\bvert}}
                            \Bigr)^{\!\sstar}\!}
                           {\!
                            \Bigl(
                              \sum_{i=1}^{m}
                                \stexpprod{\boneacti{i}}
                                          {\extrsoluntilof{\aonecharthat}{\cverti{i}}{\avert}}
                            \Bigr)}  
               \Bigr)}                                        
              {\asol{\avert}}  
        \\
        & \;\,\parbox[t]{\widthof{$\eqin{\milnersysmin}$\hspace*{3ex}}}{\mbox{}}\:
          \text{(by axiom $(\assocstexpprod)$)}
    \displaybreak[0]\\[0.75ex]
      &
      \;\,\parbox[t]{\widthof{$\eqin{\milnersysmin}$}}{$\synteq$}\:
    \stexpprod{\extrsoluntilof{\aonecharthat}{\bvert}{\avert}}
              {\asol{\avert}}
        \\
        & \;\,\parbox[t]{\widthof{$\eqin{\milnersysmin}$\hspace*{3ex}}}{\mbox{}}\:
          \text{(by $\bvert \convdescendsinloopto \avert$, and the definition of $\extrsoluntilof{\aonecharthat}{\bvert}{\avert}$
                 in Def.~\ref{def:extrsoluntil:extrsol})}
  \end{align*} 
  In this way we have shown, due to $\milnersysmin \subsystem \milnersys$, the desired \provablein{\milnersys} equality 
  $\asol{\bvert}
     \eqin{\milnersys}
   \stexpprod{\extrsoluntilof{\aonecharthat}{\bvert}{\avert}}
             {\asol{\avert}}$ \mbox{}
  for the vertices $\avert$ and $\bvert$ that we picked with the property $\bvert \convdescendsinloopto \avert$.  
  
  Since $\bvert,\avert\in\verts$ with $\bvert \convdescendsinloopto \avert$ were arbitrary for this argument,
  we have successfully carried out the proof by induction 
  \mbox{}
  $\asol{\bvert}
     \eqin{\milnersys}
   \stexpprod{\extrsoluntilof{\aonecharthat}{\bvert}{\avert}}
             {\asol{\avert}}$ \mbox{}
  holds for all $\bvert,\avert\in\verts$ with $\bvert \convdescendsinloopto \avert$.
  As we have argued that the statement also holds for $\bvert = \avert$, we have proved the lemma.
\end{proof}

\begin{defi}
  For an \equationbasedover{\actions} proof system $\asys$ %over $\StExpover{\actions}$
    we say
  that two star expression functions $\sasoli{1},\sasoli{2} \funin \verts\to\StExpover{\actions}$
  are \emph{\provablyin{\asys} equal} if $\asoli{1}{\avert} \eqin{\asys} \asoli{2}{\avert}$ holds for all $\avert\in\verts$. 
\end{defi}

Now we use the relationship of arbitrary \provablein{\milnersys} solutions of a guarded \LLEEwitness~$\aonecharthat$
  with the relative extraction function $\sextrsolof{\aonecharthat}$ of $\aonecharthat$ as stated by Lemma~\ref{lem:descsteps:bodysteps:wf}\vspace{-1.5pt}
    in order to demonstrate the solution uniqueness statement {\bf (SU)}.
The proof can be viewed as proceeding on the maximal length of \bodytransition\ paths from vertices~$\avert$,
  where for descents from $\avert$ via a \loopentrytransition\ into an inner loop the statement of Lemma~\ref{lem:lem:sols:provably:equal:LLEE} is used. 
Again the use of the fixed-point rule \RSPstar\ of \milnersysmin\ is crucial,
  because any two \provablein{\milnersysmin} solutions of a guarded \LLEEonechart\ cannot be expected to be \provablyin{\milnersysmin} equal in general.%
    \footnote{As a simple example, the use of \RSPstar\ is necessary for proving equal in \milnersys\ the two \provablein{\milnersysmin} solutions
              of the guarded \LLEEonechart\ $\onechartof{(a \cdot a)^* \cdot 0}$ 
              with the principal values $a^* \cdot 0$ and $(a \cdot a)^* \cdot 0$, respectively.}

\begin{lem}[provable equality of solutions of \protect\LLEEonecharts]\label{lem:sols:provably:equal:LLEE}
  Let $\aonechart$ be a guarded \LLEEonechart,
  and let $\asys$ be an \eqlogicbased\ proof system over $\StExpover{\actions}$ such that $\ACI \subsystem \asys \isthmsubsumedby \milnersys$.\nopagebreak[4]
 
  Then any two \provablein{\asys} solutions of $\aonechart$ are \provablyin{\milnersys} equal.
\end{lem}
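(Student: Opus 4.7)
The plan is to use the extracted solution $\sextrsolof{\aonecharthat}$ from Lemma~\ref{lem:extrsol:is:sol} as a canonical reference point and show that every \provablein{\asys} solution of $\aonechart$ is \provablyin{\milnersys} equal to it vertex-wise. Two appeals to this statement, combined with symmetry and transitivity in $\milnersys$, then give the lemma. Formally, assuming the key claim
\[
  \asol{\bvert} \eqin{\milnersys} \extrsolof{\aonecharthat}{\bvert}
  \qquad \text{for every \provablein{\asys} solution $\sasol$ of $\aonechart$ and every $\bvert\in\vertsof{\aonechart}$,}
\]
we conclude $\asoli{1}{\bvert} \eqin{\milnersys} \extrsolof{\aonecharthat}{\bvert} \eqin{\milnersys} \asoli{2}{\bvert}$ for any two \provablein{\asys} solutions $\sasoli{1},\sasoli{2}$, as desired.

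To establish the key claim I will proceed by well-founded induction on the strict partial order $\sconvredtci{\bodylab}$ (which is \wellfounded\ by Lemma~\ref{lem:descsteps:bodysteps:wf}, item~\ref{it:2:lem:descsteps:bodysteps:wf}). Fix $\bvert\in\vertsof{\aonechart}$, and use the same list representation of $\transitionsinfrom{\aonecharthat}{\bvert}$ as in Definition~\ref{def:extrsoluntil:extrsol}, splitting outgoing transitions into \loopentrytransitions\ $\bvert \lti{\aacti{i}}{\looplab{\alabi{i}}} \bverti{i}$ ($i=1,\ldots,n$) and \bodytransitions\ $\bvert \lti{\boneacti{i}}{\looplab{\bodylab}} \cverti{i}$ ($i=1,\ldots,m$). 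Since $\sasol$ is an \provablein{\asys} solution of $\aonechart$ and $\asys\isthmsubsumedby\milnersys$, we start from
\[
  \asol{\bvert}
    \eqin{\milnersys}
  \terminatesconstof{\aonechart}{\bvert}
    +
  \sum_{i=1}^{n}\aacti{i}\prod\asol{\bverti{i}}
    +
  \sum_{i=1}^{m}\boneacti{i}\prod\asol{\cverti{i}} \punc{.}
\]
For each loop-entry successor $\bverti{i}$ we have $\bverti{i}\convdescendsinloopto\bvert$, so Lemma~\ref{lem:lem:sols:provably:equal:LLEE} yields $\asol{\bverti{i}} \eqin{\milnersys} \extrsoluntilof{\aonecharthat}{\bverti{i}}{\bvert}\prod\asol{\bvert}$. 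For each body successor $\cverti{i}$ we have $\cverti{i}\convredi{\bodylab}\bvert$, so the induction hypothesis gives $\asol{\cverti{i}} \eqin{\milnersys} \extrsolof{\aonecharthat}{\cverti{i}}$. Substituting and rearranging using axioms of $\milnersysmin$ transforms the equation into the fixed-point shape
\[
  \asol{\bvert}
    \eqin{\milnersys}
  \Bigl(\sum_{i=1}^{n}\aacti{i}\prod\extrsoluntilof{\aonecharthat}{\bverti{i}}{\bvert}\Bigr)\prod\asol{\bvert}
    +
  \Bigl(\terminatesconstof{\aonechart}{\bvert}
          + \sum_{i=1}^{m}\boneacti{i}\prod\extrsolof{\aonecharthat}{\cverti{i}}\Bigr) \punc{.}
\]
Because $\aonecharthat$ is \emph{guarded}, every label $\aacti{i}$ is a proper action, hence $\notterminates{\bigl(\sum \aacti{i}\prod\extrsoluntilof{\aonecharthat}{\bverti{i}}{\bvert}\bigr)}$, and $\RSPstar$ applies to yield $\asol{\bvert}\eqin{\milnersys}\extrsolof{\aonecharthat}{\bvert}$ after one further \milnersysmin-calculation matching the defining shape of $\extrsolof{\aonecharthat}{\bvert}$ in Definition~\ref{def:extrsoluntil:extrsol}.

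The main obstacle I anticipate is the fixed-point step: the premise of $\RSPstar$ must be assembled in precisely the $\stexpsum{\stexpprod{\bstexp}{\astexp}}{\cstexp}$ shape with guarded $\bstexp$, which requires the guarded \LLEEwitness\ assumption to be used essentially (without it the loop-entry bouquet might facilitate immediate termination, blocking $\RSPstar$) and depends on careful use of right-distributivity and associativity in the reshuffling; once this shape is reached, the remainder is purely equational. The degenerate cases $n=0$ (vertex without loop-entry transitions, where no appeal to $\RSPstar$ is needed and the key claim reduces to the equational solution condition together with the induction hypothesis on body successors) and $m=0$ (handled via the axiom $(\stexpzerostexpprod)$) fit the same pattern and do not pose additional difficulties.
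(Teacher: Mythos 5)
Your proposal is correct and follows essentially the same route as the paper: reduce the statement to showing that every \provablein{\asys} solution is \provablyin{\milnersys} equal to the extracted solution $\sextrsolof{\aonecharthat}$, induct on $\sconvredtci{\bodylab}$, invoke Lemma~\ref{lem:lem:sols:provably:equal:LLEE} for the \loopentry\ targets and the induction hypothesis for the body targets, and close with an application of $\RSPstar$ whose guardedness \sidecondition\ is supplied by the guarded \LLEEwitness. The only cosmetic remark is that a \loopentry\ target $\bverti{i}$ may coincide with $\bvert$ itself, so the relation to cite is $\bverti{i}\convdescendsinlooptosc\bvert$ rather than $\bverti{i}\convdescendsinloopto\bvert$; Lemma~\ref{lem:lem:sols:provably:equal:LLEE} is stated for the reflexive closure, so this does not affect the argument.
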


\begin{proof}
  Let $\aonechart = \tuple{\verts,\actions,\sone,\start,\transs,\exts}$ 
    be a \LLEEonechart\ with guarded \LLEEwitness~$\aonecharthat$,
  and let $\asys$ an \eqlogicbased\ proof system as assumed in the lemma. 
  In order to show that any two \provablein{\asys} solutions of $\aonechart$
  are \provablyin{\milnersys} equal, 
  it suffices to show that every \provablein{\asys} solution of $\aonechart$ is \provablyin{\milnersys} equal
  to the extraction function $\sextrsolof{\aonecharthat}$ of $\aonecharthat$. 

  For demonstrating this,   
  let $\sasol \funin \verts \to \StExpover{\actions}$ be an \provablein{\asys} solution of $\aonechart$.
  We have to show that
  $\asol{\bvert}
     \milnersyseq
   \extrsolof{\aonecharthat}{\bvert}$ 
  holds for all $\bvert\in\verts$. 
  We proceed\vspace*{-2pt} by complete induction on the \wellfounded\ relation $\sconvredtci{\bodylab}$ (see Lemma~\ref{lem:descsteps:bodysteps:wf},
                                                                                                                   \ref{it:2:lem:descsteps:bodysteps:wf}),
    which does not require us to treat base cases separately. 
  For our argument we assume to have given, 
    underlying the definition of the relative extraction function $\sextrsoluntilof{\aonecharthat}$ 
    and the extraction function $\sextrsolof{\aonechart}$,
  list\vspace{-2.5pt} representations $\transitionsinfrom{\aonecharthat}{\bvert}$ of the transitions from $\bvert$ in $\aonecharthat$ 
  as written in Definition~\ref{def:extrsoluntil:extrsol},
  for all $\bvert\in\verts$. 
  
  Let $\bvert\in\verts$ be arbitrary.
  On the basis of $\transitionsinfrom{\aonecharthat}{\bvert}$ % the form of $\atsiof{\aonecharthat}{\bvert}$ 
                              as in Definition~\ref{def:extrsoluntil:extrsol}
  we argue as follows,
  starting with a step in which we use that $\sasol$ is an \provablein{\asys} solution of $\aonechart$
    in view of the assumed form of $\transitionsinfrom{\aonecharthat}{\bvert}\,$: 
  \begin{align}
    \asol{\bvert}
      &
      \;\,\parbox[t]{\widthof{$\eqin{\milnersysmin}$}}{$\eqin{\asys}$}\:
    \stexpsum{\terminatesconstof{\aonechart}{\bvert}}
             {\Bigl(
                \stexpsum{\Bigl(
                            \sum_{i=1}^{n}
                              \stexpprod{\aacti{i}}{ \asol{\bverti{i}} }
                          \Bigr)}
                         {\Bigl(
                            \sum_{i=1}^{m}
                              \stexpprod{\boneacti{i}}{ \asol{\cverti{i}} }
                          \Bigr)}
              \Bigr)}
        \displaybreak[0]
    \notag\displaybreak[0]\\
      &
      \;\,\parbox[t]{\widthof{$\eqin{\milnersysmin}$}}{$\eqin{\ACI}$}\:
    \stexpsum{\Bigl(
                \sum_{i=1}^{n}
                  \stexpprod{\aacti{i}}{ \asol{\bverti{i}} }
              \Bigr)}
             {\Bigl(
                \stexpsum{\Bigl(
                            \sum_{i=1}^{m}
                              \stexpprod{\boneacti{i}}{ \asol{\cverti{i}} }
                          \Bigr)}
                         {\terminatesconstof{\aonechart}{\bvert}
                          \Bigr)}
              \Bigr)}
      % \notag\\
      % & \;\,\parbox[t]{\widthof{$\eqin{\milnersysmin}$\hspace*{3ex}}}{\mbox{}}\:
      %   \text{(by axioms $(\commstexpsum)$ and $(\assocstexpsum)$)} 
        \displaybreak[0]
    \notag\\
      &
      \;\,\parbox[t]{\widthof{$\eqin{\milnersysmin}$}}{$\eqin{\milnersys}$}\:  
    \stexpsum{\Bigl(
                \sum_{i=1}^{n}
                  \stexpprod{\aacti{i}}
                            {(\stexpprod{\extrsoluntilof{\aonecharthat}{\bverti{i}}{\bvert}}
                                        {\asol{\bvert}})}
              \Bigr)}
             {\Bigl(
                \stexpsum{\Bigl(
                            \sum_{i=1}^{m}
                              \stexpprod{\boneacti{i}}{ \asol{\cverti{i}} }
                          \Bigr)}
                         {\terminatesconstof{\aonechart}{\bvert}}
              \Bigr)}
        \notag\\[-0.5ex]
        & \;\,\parbox[t]{\widthof{$\eqin{\milnersysmin}$\hspace*{3ex}}}{\mbox{}}\:
          \parbox{\widthof{%
                     ($\bvert \loopnstepto{\alabi{i}} \bverti{i}$ 
                     (due to $\transitionsinfrom{\aonecharthat}{\bvert}$ %$\atsiof{\aonecharthat}{\bvert}$ 
                             as in Definition~\ref{def:extrsoluntil:extrsol}) implies $\bverti{i} \convdescendsinloopto \bvert$,}}
                 {($\bvert \loopnstepto{\alabi{i}} \bverti{i}$ 
                    (due to $\transitionsinfrom{\aonecharthat}{\bvert}$ %$\atsiof{\aonecharthat}{\bvert}$ 
                            as in Definition~\ref{def:extrsoluntil:extrsol}) implies $\bverti{i} \convdescendsinlooptosc \bvert$,
                  \\
                  \phantom{(}%
                  from which Lemma~\ref{lem:lem:sols:provably:equal:LLEE} yields
                  $\asol{\bverti{i}} \milnersyseq \stexpprod{\extrsoluntilof{\aonecharthat}{\bverti{i}}{\bvert}}{\asol{\bvert}}\,$)}
      \displaybreak[0]              
    \notag\\
      &
      \;\,\parbox[t]{\widthof{$\eqin{\milnersysmin}$}}{$\:\eqin{\milnersysmin}$}\:  
    \stexpsum{\stexpprod{\Bigl(
                           \sum_{i=1}^{n}
                             \stexpprod{\aacti{i}}
                                       {\extrsoluntilof{\aonecharthat}{\bverti{i}}{\bvert}}
                         \Bigr)}
                        {\asol{\bvert}}}
             {\Bigl(
               \stexpsum{\Bigl(
                           \sum_{i=1}^{m}
                             \stexpprod{\boneacti{i}}{ \asol{\cverti{i}} }
                         \Bigr)}
                        {\terminatesconstof{\aonechart}{\bvert}}
              \Bigr)}
        \notag\\
        & \;\,\parbox[t]{\widthof{$\eqin{\milnersysmin}$\hspace*{3ex}}}{\mbox{}}\:
          \text{(by axioms $(\assocstexpprod)$ and $(\rdistr)$)}
      \displaybreak[0]      
    \notag\\
      &
      \;\,\parbox[t]{\widthof{$\eqin{\milnersysmin}$}}{$\:\eqin{\milnersys}$}\:  
    \stexpsum{\stexpprod{\Bigl(
                           \sum_{i=1}^{n}
                             \stexpprod{\aacti{i}}
                                       {\extrsoluntilof{\aonecharthat}{\bverti{i}}{\bvert}}
                         \Bigr)}
                        {\asol{\bvert}}}
             {\Bigl(
               \stexpsum{\Bigl(
                           \sum_{i=1}^{m}
                             \stexpprod{\boneacti{i}}{ \extrsolof{\aonecharthat}{\cverti{i}} }
                         \Bigr)}
                        {\terminatesconstof{\aonechart}{\bvert}}
              \Bigr)}
        \notag\\
        & \;\,\parbox[t]{\widthof{$\eqin{\milnersysmin}$\hspace*{3ex}}}{\mbox{}}\:
          \parbox{\widthof{(due to $\bvert \redi{\bodylab} \cverti{i}$ (see $\atsiof{\aonecharthat}{\bvert}$ as in Definition~\ref{def:extrsoluntil:extrsol}),
                   and hence $\cverti{i} \convredi{\bodylab} \bvert$,}}
                 {(due to $\bvert \redi{\bodylab} \cverti{i}$ (see $\transitionsinfrom{\aonecharthat}{\bvert}$ %$\atsiof{\aonecharthat}{\bvert}$ 
                                                                   as in Definition~\ref{def:extrsoluntil:extrsol}),
                   and hence $\cverti{i} \convredi{\bodylab} \bvert$,
                  \\[0.25ex]
                  \phantom{(}%
                  $\asol{\cverti{i}} \milnersyseq \extrsolof{\aonecharthat}{\cverti{i}}$ follows from the induction hypothesis).}        
    \notag\displaybreak[0]
    \intertext{Since $\ACI \subsystem \asys \isthmsubsumedby \milnersys$, and $\milnersysmin \subsystem \milnersys$, 
               we have obtained the following provable equality:}
    \displaybreak[0]              
    \asol{\bvert}
      &
      \;\,\parbox[t]{\widthof{$\eqin{\milnersysmin}$}}{$\eqin{\milnersys}$}\:  
    \stexpsum{\stexpprod{\Bigl(
                           \sum_{i=1}^{n}
                             \stexpprod{\aacti{i}}
                                       {\extrsoluntilof{\aonecharthat}{\bverti{i}}{\bvert}}
                         \Bigr)}
                        {\asol{\bvert}}}
             {\Bigl(
               \stexpsum{\Bigl(
                           \sum_{i=1}^{m}
                             \stexpprod{\boneacti{i}}{ \extrsolof{\aonecharthat}{\cverti{i}} }
                         \Bigr)}
                        {\terminatesconstof{\aonechart}{\bvert}}
              \Bigr)}
              \notag
  \end{align}
  Now since 
  $\notterminates{\bigl(
                     \sum_{i=1}^{n}
                       \stexpprod{\aacti{i}}
                                 {\extrsoluntilof{\aonecharthat}{\bverti{i}}{\bvert}}
                   \bigr)}$
  holds, we can apply the rule $\RSPstar$ to this in order to obtain:
  \begin{align*}                                       
    \asol{\bvert}
      &
      \;\,\parbox[t]{\widthof{$\eqin{\milnersysmin}$}}{$\:\eqin{\milnersys}$}\:  
    \stexpprod{%\stexpit{
                        \Bigl(
                          \sum_{i=1}^{n}
                            \stexpprod{\aacti{i}}
                                      {\extrsoluntilof{\aonecharthat}{\bverti{i}}{\bvert}}
                        \Bigr)^{\!\sstar}\!
                        %}
               }{\Bigl(
                   \stexpsum{\Bigl(
                               \sum_{i=1}^{m}
                                 \stexpprod{\boneacti{i}}{ \extrsolof{\aonecharthat}{\cverti{i}} }
                             \Bigr)}
                            {\terminatesconstof{\aonechart}{\bvert}}
                \Bigr)}
      \\                 
      &
      \;\,\parbox[t]{\widthof{$\eqin{\milnersysmin}$}}{$\synteq$}\: 
    \extrsolof{\aonecharthat}{\bvert}
      \;\,
      \parbox[t]{\widthof{$\eqin{\milnersysmin}$\hspace*{3ex}}}{\mbox{}}\:
        \text{(by the definition of $\sextrsolof{\aonecharthat}$ in Definition~\ref{def:extrsoluntil:extrsol})} 
  \end{align*}
  Thus we have verified the proof obligation
  $\asol{\bvert}
     \milnersyseq
   \extrsolof{\aonecharthat}{\bvert}$
  for the induction step, for the vertex $\bvert$ as picked.
 
  By having performed the induction step, we have successfully carried out the proof by complete induction on $\sconvredi{\bodylab}$
  that 
  $\asol{\bvert}
     \milnersyseq
   \extrsolof{\aonecharthat}{\bvert}$ holds for all $\bvert\in\verts$,
  and for an arbitrary \provablein{\asys} solution $\sasol$ of $\aonechart$. 
  This implies the statement of the lemma, that any two \provablein{\asys} solutions of $\aonechart$ are \provablyin{\milnersys} equal.  
\end{proof}

\begin{prop}\label{prop:LLEEcoindproofeq:impl:milnersyseq}
  For every \eqlogicbased\ proof system $\asys$ over $\StExpover{\actions}$ with $\ACI \subsystem \asys \isthmsubsumedby \milnersys$, 
  provability by \LLEEwitnessed\ coinductive proofs over $\asys$ implies derivability~in~$\milnersys$:
  \begin{equation}\label{eq:prop:LLEEcoindproofeq:impl:milnersyseq}
      \bigl(\;
        \astexpi{1}
          \LLEEcoindproofeqin{\asys}
        \astexpi{2}
            \quad\;\Longrightarrow\quad\;
        \astexpi{1}
          \milnersyseq
        \astexpi{2} 
      \;\bigr)
      \qquad \text{for all $\astexpi{1},\astexpi{2}\in\StExpover{\actions}$.} 
  \end{equation}
\end{prop}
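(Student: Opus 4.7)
The plan is to assemble the proposition from the two statements that were set up for precisely this purpose: the solution extraction statement \ref{SE}, provided by Lemma~\ref{lem:extrsol:is:sol}, and the solution uniqueness statement \ref{SU}, provided by Lemma~\ref{lem:sols:provably:equal:LLEE}. Given a \LLEEwitnessed\ coinductive proof $\aCoProof = \pair{\aonechart}{\saeqfun}$ over $\asys$ of $\astexpi{1} \formeq \astexpi{2}$, by Definition~\ref{def:coindproof} the underlying \onechart\ $\aonechart$ is a guarded \LLEEonechart, so it carries at least one guarded \LLEEwitness\ $\aonecharthat$, which I would fix for the rest of the argument.

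First I would apply Lemma~\ref{lem:extrsol:is:sol} to $\aonecharthat$ to obtain that the extraction function $\sextrsolof{\aonecharthat}$ from Definition~\ref{def:extrsoluntil:extrsol} is a \provablein{\milnersysmin} solution of $\aonechart$, and hence, using $\milnersysmin \subsystem \milnersys$, also a \provablein{\milnersys} solution. Next, by the correctness clause in Definition~\ref{def:coindproof}, the component labelings $\saeqfuni{1}$ and $\saeqfuni{2}$ of $\saeqfun$ are \provablein{\asys} solutions of $\aonechart$ with principal values $\astexpi{1}$ and $\astexpi{2}$, respectively; and because $\asys \isthmsubsumedby \milnersys$ is assumed, they too are \provablein{\milnersys} solutions of $\aonechart$.

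Then I would invoke Lemma~\ref{lem:sols:provably:equal:LLEE} with the ambient system instantiated to $\milnersys$ itself, which legitimately meets its hypotheses since $\ACI \subsystem \milnersys \isthmsubsumedby \milnersys$. Applied in turn to the pairs $\pair{\saeqfuni{1}}{\sextrsolof{\aonecharthat}}$ and $\pair{\sextrsolof{\aonecharthat}}{\saeqfuni{2}}$, this yields \provablein{\milnersys} equality at every vertex, so in particular at the start vertex $\start$. Combining these with the identifications $\astexpi{1} \synteq \aeqfuni{1}{\start}$ and $\astexpi{2} \synteq \aeqfuni{2}{\start}$ from Definition~\ref{def:coindproof} gives the chain
\[
  \astexpi{1}
    \synteq \aeqfuni{1}{\start}
    \milnersyseq \extrsolof{\aonecharthat}{\start}
    \milnersyseq \aeqfuni{2}{\start}
    \synteq \astexpi{2} \punc{,}
\]
and hence $\astexpi{1} \milnersyseq \astexpi{2}$ by transitivity of $\seqin{\milnersys}$, which is the conclusion of \eqref{eq:prop:LLEEcoindproofeq:impl:milnersyseq}.

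In this sense the proof of the proposition itself is a straightforward orchestration; the real work has been carried out beforehand in Lemmas~\ref{lem:lem:extrsol:is:sol}, \ref{lem:extrsol:is:sol}, \ref{lem:lem:sols:provably:equal:LLEE}, and \ref{lem:sols:provably:equal:LLEE}, whose inductions exploit the well-foundedness properties of $\sconvdescendsinlooptotc$ and $\sconvredtci{\bodylab}$ supplied by Lemma~\ref{lem:descsteps:bodysteps:wf}. The genuine difficulty — and what one should flag as the main obstacle — is concentrated in the uniqueness step: showing that two arbitrary \provablein{\asys} solutions of a guarded \LLEEonechart\ must become equal once the fixed-point rule $\RSPstar$ of $\milnersys$ is available. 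Without $\RSPstar$ this fails in general (since $\milnersysmin$ alone does not suffice to collapse distinct \provablein{\milnersysmin} solutions of loop specifications), so the whole transformation is only possible because the target system contains $\RSPstar$, and because the guardedness of the \LLEEwitness\ makes every side-condition of $\RSPstar$ applied in Lemma~\ref{lem:lem:sols:provably:equal:LLEE} genuinely satisfied.
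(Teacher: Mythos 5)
Your proof is correct and follows essentially the same route as the paper: the paper's own argument simply applies Lemma~\ref{lem:sols:provably:equal:LLEE} once, directly to the pair of \provablein{\asys} solutions $\saeqfuni{1}$ and $\saeqfuni{2}$, to conclude they are \provablyin{\milnersys} equal at $\start$. Your explicit detour through the extraction function $\sextrsolof{\aonecharthat}$ (invoking Lemma~\ref{lem:extrsol:is:sol} and then applying Lemma~\ref{lem:sols:provably:equal:LLEE} twice) is harmless but redundant, since that intermediate step is already internal to the proof of Lemma~\ref{lem:sols:provably:equal:LLEE}.
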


\begin{proof}
  For showing \eqref{eq:prop:LLEEcoindproofeq:impl:milnersyseq}, 
  let $\astexp,\bstexp\in\StExpover{\actions}$ be such that 
  $\astexp \LLEEcoindproofeqin{\asys} \bstexp$.
  Then there is a \LLEEwitnessed\ coinductive proof $\aLLEECoProof = \pair{\aonechart}{\saeqfun}$ of $\astexpi{1} \formeq \astexpi{2}$ over $\asys$,
  which consists of a guarded \LLEEonechart\ $\aonechart$ % $ = \tuple{\verts,\actions,\sone,\start,\transs,\exts}$ 
                     and 
                    \provablein{\asys} solutions $\saeqfuni{1},\saeqfuni{2} \funin \vertsof{\aonechart} \to \StExpover{\actions}$ of $\aonechart$
                       with $\astexpi{1} \synteq \aeqfuni{1}{\start}$ and $\astexpi{2} \synteq \aeqfuni{2}{\start}$. 
  By applying Lemma~\ref{lem:sols:provably:equal:LLEE} to $\aLLEECoProof$
    we find that $\saeqfuni{1}$ and $\saeqfuni{2}$ are \provablyin{\milnersys} equal.
  This entails
    $\astexpi{1} 
       \synteq
     \aeqfuni{1}{\start}
       \milnersyseq
     \aeqfuni{2}{\start}
       \synteq
     \astexpi{2}$,
   and thus $\astexpi{1} \milnersyseq \astexpi{2}$.    
\end{proof}

\begin{exa}\label{ex:extraction:proof}\mbox{}%
  \begin{figure}[tbp]
\begin{center}
  $
  \hspace*{-3ex}%
  \begin{aligned}
    \scalebox{0.9}{\begin{tikzpicture}\renewcommand{\stexpprod}[2]{{#1}\hspace*{1pt}{\sstexpprod}\hspace*{1pt}{#2}}
  
\matrix[anchor=north,row sep=1.25cm,column sep=0cm,
        % every node/.style={draw,very thick,circle,minimum width=2.5pt,fill,inner sep=0pt,outer sep=2pt}
        ] {
  \node(v11){$\aeqfuni{i}{\averti{11}}$};
    & & \node(v21){$\aeqfuni{i}{\averti{21}}$};
  \\                 
  \node(v1){$\aeqfuni{i}{\averti{1}}$};
    & & \node(v2){$\aeqfuni{i}{\averti{2}}$};
  \\
    & \node(v){$\aeqfuni{i}{\start}$};
  \\
  };   
\path (v) ++ (0cm,%-1.3cm
                  +4.75cm) node(label){\Large $\aonechart$, $\aonecharthat$};

% (v11)
%\path (v11) ++ (0cm,0.25cm) node{\small $\averti{11}$};
\draw[->,thick,densely dotted,out=180,in=165,distance=0.75cm](v11) to (v1);

% (v1) 
%\path (v1) ++ (-0.225cm,-0.25cm) node{\small $\averti{1}$};
\draw[->,thick,darkcyan,shorten >= 0pt]
  (v1) to node[left,pos=0.25,xshift=0.075cm]{\small $\black{\aact}$} node[left,pos=0.6,xshift=0.075cm]{\small $\loopnsteplab{1}$} (v11);
\draw[->,thick,densely dotted,out=-90,in=165,distance=0.5cm](v1) to (v);
\draw[->,shorten <= 0pt,shorten >= 0pt] (v1) to node[above]{$\bact$} (v21); 

% (v21) 
%\path (v21) ++ (0cm,0.25cm) node{\small $\averti{21}$};
\draw[->,thick,densely dotted,out=-0,in=15,distance=0.75cm](v21) to (v2);  
  
% (v2) outgoing
%\path (v2) ++ (0.25cm,-0.25cm) node{\small $\averti{2}$};
\draw[->,thick,darkcyan,shorten >= 0pt]
  (v2) to node[right,pos=0.25,xshift=-0.075cm]{\small $\black{\bact}$} node[right,pos=0.6,xshift=-0.075cm]{\small $\loopnsteplab{1}$} (v21);
\draw[->,thick,densely dotted,out=-90,in=15,distance=0.5cm](v2) to (v);

\draw[<-,very thick,>=latex,chocolate,shorten <=2.5pt](v) -- ++ (-90:0.8cm); 
\draw[thick,chocolate,double] (v) ellipse (0.475cm and 0.3cm);   
% (v) outgoing  
\draw[->,thick,darkcyan,shorten >= 0.175cm,shorten <= 3pt] 
  (v) to node[left,pos=0.36,xshift=0.075cm]{\small $\black{\aact}$} node[right,pos=0.4,xshift=-0.075cm,yshift=1pt]{\small $\loopnsteplab{2}$}  (v11);
\draw[->,thick,darkcyan,shorten >= 0.175cm,shorten <= 3pt] 
  (v) to node[right,pos=0.36,xshift=-0.05cm]{\small $\black{\bact}$} node[left,pos=0.6,xshift=0.075cm,yshift=1pt]{\small $\loopnsteplab{2}$} (v21);
  
\end{tikzpicture} }
  \end{aligned}
  \hspace*{1ex}
  \begin{aligned}[c]
    \aeqfuni{i}{\averti{21}}
      & \;\parbox[t]{\widthof{$\milnersysmineq$}}{$\eqinsol{\milnersysmin}$}\,
    \stexpprod{\stexpone}{\aeqfuni{i}{\averti{2}}} 
      \,\milnersysmineq\,
    \aeqfuni{i}{\averti{2}} 
    \\[-0.75ex]
      & \phantom{\;\parbox[t]{\widthof{$\milnersysmineq$}}{$\eqinsol{\milnersysmin}$}}\,
        \text{($\eqinsol{\milnersysmin}$ means use of 
               `is \provablein{\milnersysmin} solution')}
    \\
    \aeqfuni{i}{\averti{2}}
      & \;\parbox[t]{\widthof{$\milnersysmineq$}}{$\eqinsol{\milnersysmin}$}\,
    \stexpsum{\stexpprod{\bact}{\aeqfuni{i}{\averti{21}}}}
             {\stexpprod{\stexpone}{\aeqfuni{i}{\start}}} 
      \,\milnersysmineq\,
    \stexpsum{\stexpprod{\bact}{\aeqfuni{i}{\averti{2}}}}
             {\aeqfuni{i}{\start}}
    \\[-0.5ex]
    & \;\parbox[t]{\widthof{$\milnersysmineq$}}{$\Downarrow${\text{~~applying $\RSPstar$}}}
    \\[-0.5ex]
    \aeqfuni{i}{\averti{2}}
      & \;\parbox[t]{\widthof{$\milnersysmineq$}}{$\milnersyseq$}\,
    \stexpprod{\stexpit{\bact}}
              {\aeqfuni{i}{\start}}
    \\[0.5ex]
    \aeqfuni{i}{\averti{11}}
      & \;\parbox[t]{\widthof{$\milnersysmineq$}}{$\eqinsol{\milnersysmin}$}\,
    \stexpprod{\stexpone}{\aeqfuni{i}{\averti{1}}} 
      \,\milnersysmineq\,
    \aeqfuni{i}{\averti{1}} 
    \\
    \aeqfuni{i}{\averti{1}}
      & \;\parbox[t]{\widthof{$\milnersysmineq$}}{$\milnersysmineq$}\,
    \stexpsum{\stexpprod{\aact}{\aeqfuni{i}{\averti{11}}}}
             {\stexpsum{\stexpprod{\bact}{\aeqfuni{i}{\averti{21}}}}
                       {\stexpprod{\stexpone}{\aeqfuni{i}{\start}}}}
    % \\[-0.25ex]
    %   & \;\parbox[t]{\widthof{$\milnersysmineq$}}{$\milnersysmineq$}\,
    % \stexpsum{\stexpprod{\aact}{\aeqfuni{i}{\averti{11}}}}
    %          {\stexpsum{\stexpprod{\bact}{\aeqfuni{i}{\averti{2}}}}
    %                    {\aeqfuni{i}{\start}}}
    \\[-0.25ex]      
      & \;\parbox[t]{\widthof{$\milnersysmineq$}}{$\milnersyseq$}\,
    \stexpsum{\stexpprod{\aact}{\aeqfuni{i}{\averti{1}}}}
             {\stexpprod{(\stexpsum{\stexpprod{\bact}{\stexpit{\bact}}}
                                   {\stexpone})}
                        {\aeqfuni{i}{\start}}}
    \\[-0.25ex]      
      & \;\parbox[t]{\widthof{$\milnersysmineq$}}{$\milnersysmineq$}\,
    \stexpsum{\stexpprod{\aact}{\aeqfuni{i}{\averti{1}}}}
             {\stexpprod{\stexpit{\bact}}
                        {\aeqfuni{i}{\start}}}
    \\[-0.5ex]
    & \;\parbox[t]{\widthof{$\milnersysmineq$}}{$\Downarrow${\text{~~applying $\RSPstar$}}}
    \\[-0.5ex]
    \aeqfuni{i}{\averti{1}}
      & \;\parbox[t]{\widthof{$\milnersysmineq$}}{$\milnersyseq$}\,
    \stexpprod{\stexpit{\aact}}
              {(\stexpprod{\stexpit{\bact}}{\aeqfuni{i}{\start}})}
      \,\milnersysmineq\,          
    \stexpprod{(\stexpprod{\stexpit{\aact}}{\stexpit{\bact}})}{\aeqfuni{i}{\start}}  
    % %
    % \\[0.5ex]
    % %
    % \aeqfuni{i}{\start}
    %   & \;\parbox[t]{\widthof{$\milnersysmineq$}}{$\eqinsol{\milnersysmin}$}\,
    % 1 +  
    % \stexpsum{\stexpprod{\aact}{\aeqfuni{i}{\averti{11}}}}
    %          {\stexpprod{\bact}{\aeqfuni{i}{\averti{21}}}} 
    %   \,\milnersysmineq\,
    % 1 +  
    % \stexpsum{\stexpprod{\aact}{\aeqfuni{i}{\averti{1}}}}
    %          {\stexpprod{\bact}{\aeqfuni{i}{\averti{2}}}}   
    % \\[-0.25ex]
    %   & \;\parbox[t]{\widthof{$\milnersysmineq$}}{$\milnersyseq$}\,
    % \stexpprod{(\stexpsum{\stexpprod{\aact}{(\stexpprod{\stexpit{\aact}}{\stexpit{\bact}})}}
    %                      {\stexpprod{\bact}{\stexpit{\bact}}})}
    %           {\aeqfuni{i}{\start}}
    %   + 1          
    % \\[-0.5ex]
    % & \;\parbox[t]{\widthof{$\milnersysmineq$}}{$\Downarrow${\text{~~applying $\RSPstar$}}}
    % \\[-0.5ex]
    % \aeqfuni{i}{\start}
    %   & \;\parbox[t]{\widthof{$\milnersysmineq$}}{$\milnersyseq$}\,
    % \stexpprod{(\stexpsum{\stexpprod{\aact}{(\stexpprod{\stexpit{\aact}}{\stexpit{\bact}})}}
    %                      {\stexpprod{\bact}{\stexpit{\bact}}})}
    %           {\stexpone}
    %   \,\synteq\,
    % \extrsolof{\aonecharthat}{\start}            
  \end{aligned}
  $
  \\[0.75ex]
  $
  \begin{aligned}
    \aeqfuni{i}{\start}
      & \;\parbox[t]{\widthof{$\milnersysmineq$}}{$\eqinsol{\milnersysmin}$}\,
    1 +  
    \stexpsum{\stexpprod{\aact}{\aeqfuni{i}{\averti{11}}}}
             {\stexpprod{\bact}{\aeqfuni{i}{\averti{21}}}} 
      \,\milnersysmineq\,
    1 +  
    \stexpsum{\stexpprod{\aact}{\aeqfuni{i}{\averti{1}}}}
             {\stexpprod{\bact}{\aeqfuni{i}{\averti{2}}}}   
    \\[-0.25ex]
      & \;\parbox[t]{\widthof{$\milnersysmineq$}}{$\milnersyseq$}\,
    \stexpprod{(\stexpsum{\stexpprod{\aact}{(\stexpprod{\stexpit{\aact}}{\stexpit{\bact}})}}
                         {\stexpprod{\bact}{\stexpit{\bact}}})}
              {\aeqfuni{i}{\start}}
      + 1          
    \\[-0.5ex]
    & \;\parbox[t]{\widthof{$\milnersysmineq$}}{$\Downarrow${\text{~~applying $\RSPstar$}}}
    \\[-0.5ex]
    \aeqfuni{i}{\start}
      & \;\parbox[t]{\widthof{$\milnersysmineq$}}{$\milnersyseq$}\,
    \stexpprod{\stexpit{(\stexpsum{\stexpprod{\aact}{(\stexpprod{\stexpit{\aact}}{\stexpit{\bact}})}}
                                  {\stexpprod{\bact}{\stexpit{\bact}}})}}
              {\stexpone}
      \,\milnersysmineq\,
    \stexpit{(\stexpsum{\stexpprod{\aact}{(\stexpprod{\stexpit{\aact}}{\stexpit{\bact}})}}
                       {\stexpprod{\bact}{\stexpit{\bact}}})}        
      \,\milnersysmineq\,
    \extrsolof{\aonecharthat}{\start}
  \end{aligned}
  $     
\end{center}
    \vspace*{-1ex}
\caption{\protect\label{fig:ex:extraction:proof}%
          % Use of the \protect\LLEEwitness~$\protect\aonecharthat$ underlying the coinductive proof $\protect\pair{\protect\aonechart}{\protect\saeqfun}$ in Ex.~\protect\ref{ex:2:LLEEcoindproof}
          % for showing that 
          % the principal value $\protect\aeqfuni{i}{\protect\start}$ of the \protect\provablein{\protect\milnersysmin} solution $\protect\saeqfuni{i}$ for $i\protect\in\protect\setexp{1,2}$ 
          % is \protect\provablyin{\protect\milnersys} equal to the principal value $\protect\extrsolof{\protect\aonecharthat}{\protect\start}$ of the solution $\protect\sextrsolof{\protect\aonecharthat}$ 
          % extracted from $\protect\aonecharthat$ (see Fig.~\protect\ref{fig:ex:extraction}).
         Showing 
         for the coinductive proof $\protect\pair{\protect\aonechart}{\protect\saeqfun}$ in Example~\protect\ref{ex:2:LLEEcoindproof} 
         that the principal value $\protect\aeqfuni{i}{\protect\start}$ of the \protect\provablein{\protect\milnersysmin} solution $\protect\saeqfuni{i}$ for $i\protect\in\protect\setexp{1,2}$ 
         is \protect\provablyin{\protect\milnersys} equal to the principal value $\protect\extrsolof{\protect\aonecharthat}{\protect\start}$ of the solution $\protect\sextrsolof{\protect\aonecharthat}$ 
         extracted the underlying \protect\LLEEwitness~$\protect\aonecharthat$.
         }
\end{figure}      %
  We consider again the \LLEEwitnessed\ coinductive proof $\aCoProof = \pair{\aonechart}{\saeqfun}$ of 
  $\stexpit{(\stexpprod{\stexpit{\aact}}
                       {\stexpit{\bact}})}
     \formeq
   \stexpit{(\stexpsum{\aact}{\bact})}$
  in Example~\ref{ex:2:LLEEcoindproof}. 
  In Figure~\ref{fig:ex:extraction:proof} we exhibit the extraction process of derivations in $\milnersys$
  of $\aeqfuni{1}{\start} \formeq \extrsolof{\aonechart}{\avert}$ 
  and $\aeqfuni{2}{\start} \formeq \extrsolof{\aonechart}{\avert}$
  from the guarded \LLEEwitness~$\aonecharthat$ of $\aonechart$.
  These two derivations in \milnersys\ 
  can be combined by using $\eqlogic$ rules % with $\SYMM$ and $\TRANS$ instances
  in order to obtain a derivation in $\milnersys$ of
  $\stexpit{(\stexpprod{\stexpit{\aact}}{\stexpit{\bact}})} 
     \synteq       \aeqfuni{1}{\start} 
     \formeq \aeqfuni{2}{\start}
     \synteq       \stexpit{(\stexpsum{\aact}{\bact})}$.
\end{exa}

\begin{lem}\label{lem:LCoindProof:admissible:milnersys}
  The rules $\LCoindProofi{n}$ are correct for $\milnersys$, for all $n\in\nat$.
  This statement also holds effectively: 
  Every derivation $\aDeriv$ in $\thplus{\milnersys}{\LCoindProofi{n}}$
    that consists of a bottommost instance of $\LCoindProofi{n}$, where $n\in\nat$,
      whose immediate subderivations are derivations in $\milnersys$
  can be transformed effectively into a derivation $\aDerivacc$ in $\milnersys$ that has the same conclusion~as~$\aDeriv$.
\end{lem}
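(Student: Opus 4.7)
The plan is to reduce this correctness statement directly to Proposition~\ref{prop:LLEEcoindproofeq:impl:milnersyseq} by recognizing the system of premises of the bottommost instance as a proof system that is \theoremsubsumed\ by $\milnersys$. Concretely, fix a derivation $\aDeriv$ as in the lemma: it terminates with an instance $\ainst$ of $\LCoindProofi{n}$, whose immediate subderivations $\aDerivi{1}, \ldots, \aDerivi{n}$ in $\milnersys$ derive the $n$ premises $\cstexpi{1} \formeq \dstexpi{1}, \ldots, \cstexpi{n} \formeq \dstexpi{n}$, and whose \sidecondition\ asserts $\astexp \LLEEcoindproofeqin{\thplus{\milnersysmin}{\aseteqs}} \bstexp$ for $\aseteqs = \setexp{\cstexpi{1} \formeq \dstexpi{1}, \ldots, \cstexpi{n} \formeq \dstexpi{n}}$, where $\astexp \formeq \bstexp$ is the conclusion of $\ainst$.

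First I would verify that the \eqlogicbased\ proof system $\asys \defdby \thplus{\milnersysmin}{\aseteqs}$ satisfies $\ACI \subsystem \asys \isthmsubsumedby \milnersys$. The left inclusion is immediate since $\ACI \subsystem \milnersysmin \subsystem \asys$. For the theorem subsumption, every derivation in $\asys$ without assumptions can be transformed into a derivation in $\milnersys$ without assumptions by replacing, in top-down order, each leaf that is an instance of an additional axiom $\cstexpi{i} \formeq \dstexpi{i}$ from $\aseteqs$ by the $\milnersys$-subderivation $\aDerivi{i}$. Then I would invoke Proposition~\ref{prop:LLEEcoindproofeq:impl:milnersyseq} with this $\asys$ applied to the \sidecondition\ $\astexp \LLEEcoindproofeqin{\asys} \bstexp$, which yields $\astexp \milnersyseq \bstexp$, i.e., a derivation $\aDerivacc$ in $\milnersys$ with the same conclusion as $\aDeriv$.

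For the effectiveness of the transformation, I would unpack the constructions that are chained above. The translation of $\asys$-derivations into $\milnersys$-derivations by splicing in the $\aDerivi{i}$ is a straightforward local tree rewriting on Hilbert-style derivations. The proof of Proposition~\ref{prop:LLEEcoindproofeq:impl:milnersyseq} is itself effective: given the underlying \LLEEwitnessed\ coinductive proof $\aLLEECoProof = \pair{\aonechart}{\saeqfun}$ with guarded \LLEEwitness~$\aonecharthat$ of $\aonechart$, it invokes Lemma~\ref{lem:sols:provably:equal:LLEE}, whose proof constructs the extraction function $\sextrsolof{\aonecharthat}$ explicitly as in Definition~\ref{def:extrsoluntil:extrsol} and produces $\milnersys$-derivations of both $\aeqfuni{1}{\start} \milnersyseq \extrsolof{\aonecharthat}{\start}$ and $\aeqfuni{2}{\start} \milnersyseq \extrsolof{\aonecharthat}{\start}$ by the well-founded inductions on $\slexspo$ and $\sconvredtci{\bodylab}$ (see Lemmas~\ref{lem:lem:sols:provably:equal:LLEE} and~\ref{lem:sols:provably:equal:LLEE}), where each induction step uses axioms of $\milnersysmin$ together with a single application of $\RSPstar$. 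Composing these two derivations via the $\SYMM$ and $\TRANS$ rules of $\eqlogic$ yields a derivation of $\astexp \formeq \bstexp$ in $\thplus{\milnersys}{\aseteqs}$, and finally splicing in the given subderivations $\aDerivi{i}$ at each leaf that invokes $\cstexpi{i} \formeq \dstexpi{i}$ as an axiom removes the dependence on $\aseteqs$ and produces the desired $\aDerivacc$ in $\milnersys$.

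The main obstacle is conceptually mild but technically requires some care: one must ensure that the $\milnersys$-derivation extracted via Lemma~\ref{lem:sols:provably:equal:LLEE} uses the equations in $\aseteqs$ exclusively as axiom leaves of the Hilbert-style proof tree, and not, say, embedded inside further instances of $\LCoindProofi{k}$ or any nested \sidecondition, so that the final splicing step is a well-defined local transformation. This is guaranteed because the proof of Lemma~\ref{lem:sols:provably:equal:LLEE} does not introduce any coinductive rule instances — it only combines $\milnersysmin$-axioms, $\eqlogic$-rules, applications of $\RSPstar$, and appeals to the \provablein{\asys} solution correctness conditions of $\saeqfuni{1}$ and $\saeqfuni{2}$, each of which unfolds into a $\thplus{\milnersysmin}{\aseteqs}$-derivation whose only non-$\milnersysmin$ leaves are the premise equations from $\aseteqs$. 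Thus the resulting derivation is a genuine Hilbert-style derivation in $\thplus{\milnersys}{\aseteqs}$ with $\aseteqs$ appearing only as axiom leaves, making the final substitution of the $\aDerivi{i}$ entirely routine.
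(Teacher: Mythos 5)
Your proposal is correct and follows essentially the same route as the paper: both reduce the claim to Proposition~\ref{prop:LLEEcoindproofeq:impl:milnersyseq} via the observation that the $\milnersys$-derivations $\aDerivi{1},\ldots,\aDerivi{n}$ of the premises make the system $\thplus{\milnersysmin}{\aseteqs}$ \theoremsubsumed\ by $\milnersys$. The only cosmetic difference is that you apply the proposition directly with $\asys = \thplus{\milnersysmin}{\aseteqs}$, whereas the paper first observes that the coinductive proof is thereby also one over $\milnersys$ and then applies the proposition with $\asys = \milnersys$; your additional unpacking of the effectiveness claim is accurate but not needed beyond what the paper asserts.
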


\begin{proof}
  We let $n\in\nat$.
  In order to show correctness of the rule $\LCoindProofi{n}$ for $\milnersys$,
    we consider a derivation $\aDeriv$ in $\thplus{\milnersys}{\LCoindProofi{n}}$
      that has immediate subderivations $\aDerivi{1}$, \ldots, $\aDerivi{n}$ in \milnersys,
        and that terminates with an instance $\ainst$ of $\LCoindProofi{n}\,$, 
          where $\aseteqs \defdby \setexp{\cstexpi{1}   \formeq   \dstexpi{1}, \ldots, \cstexpi{1}   \formeq   \dstexpi{1}}\,$:
  \begin{equation*}
    \AxiomC{$\aDerivi{1}$}
    \noLine
    \UnaryInfC{$ \cstexpi{1}   \formeq   \dstexpi{1} $}
    \AxiomC{\ldots}
    \AxiomC{$\aDerivi{n}$}
    \noLine
    \UnaryInfC{$ \cstexpi{n}   \formeq   \dstexpi{n} $}
    %
    % \AxiomC{$ \aLLEECoProofoverof{\thplus{\milnersysmin}{\aseteqs}}{\astexp \formeq \bstexp} $}
    % 
    \RightLabel{$\LCoindProofi{n}$}
    \LeftLabel{$\ainst$}
    \TrinaryInfC{$ \astexp \formeq \bstexp $} 
    \DisplayProof
  \end{equation*}
  and $\astexp \LLEEcoindproofeqin{\thplus{\milnersysmin}{\aseteqs}} \bstexp$ holds as \sidecondition\
    on the instance $\ainst$ of $\LCoindProofi{n}$.
  Then there is a \LLEEwitnessed\ coinductive proof $\aLLEECoProof = \pair{\aonechart}{\saeqfun}$ of $\astexp \formeq \bstexp$ 
  over $\thplus{\milnersysmin}{\aseteqs}$.   
  We have to show that there is a derivation $\aDerivacc$ in \milnersys\ with the same conclusion $ \astexp \formeq \bstexp $. 
  
  Since $\aDerivi{1}$, \ldots, $\aDerivi{n}$ are derivations in \milnersys,
    their conclusions in $\aseteqs$ are derivable in \milnersys. 
  This implies $\thplus{\milnersys}{\aseteqs} \isthmsubsumedby \milnersys$.
  It follows that $\aLLEECoProof$ is also a \LLEEwitnessed\ coinductive proof of $\astexp\formeq\bstexp$ over \milnersys,
  that is, it holds:
  \begin{equation*}
    \astexp \LLEEcoindproofeqin{\milnersys} \bstexp \punc{.}
  \end{equation*}
  From this we obtain $\astexp \milnersyseq \bstexp$ by applying Proposition~\ref{prop:LLEEcoindproofeq:impl:milnersyseq},
    which guarantees a derivation $\aDerivacc$ in \milnersys\ with conclusion $\astexp \formeq \bstexp$ as desired.
  The proof of Proposition~\ref{prop:LLEEcoindproofeq:impl:milnersyseq} furthermore guarantees
    that such a derivation $\aDerivacc$ in \milnersys\ can be constructed effectively from 
      the coinductive proof of $\astexp\formeq\bstexp$ over $\thplus{\milnersysmin}{\aseteqs}$ (and hence over $\milnersys$)
        and the derivations $\aDerivi{1}$, \ldots, $\aDerivi{n}$ in \milnersys.
\end{proof}

\begin{thm}\label{thm:coindmilnersys:isthmsubsumedby:milnersys}
  $\coindmilnersys \isthmsubsumedby \milnersys$. %,
    % that is, $\coindmilnersys$ is \theoremsubsumed\ by $\milnersys$.
  Moreover, every derivation in $\coindmilnersys$ with conclusion $\astexp \formeq \bstexp$ 
    can be transformed effectively
  into a derivation in $\milnersys$ that has the same conclusion.
\end{thm}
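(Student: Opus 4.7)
The plan is to derive this theorem essentially as a corollary of Lemma~\ref{lem:LCoindProof:admissible:milnersys}, which already isolates the hard content: it established that each rule $\LCoindProofi{n}$ is correct for $\milnersys$, with an effective elimination procedure for a bottommost instance whose immediate subderivations already live in $\milnersys$. Since $\coindmilnersys$ differs from $\milnersys$ only in that it trades the single rule $\RSPstar$ for the family $\family{\LCoindProofi{n}}{n\in\nat}$ while sharing all axioms of $\milnersysmin$ and all rules of $\eqlogic$, every derivation $\aDeriv$ in $\coindmilnersys$ is in particular a derivation in the extended system $\thplus{\milnersys}{\family{\LCoindProofi{n}}{n\in\nat}}$. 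So the task reduces to eliminating from $\aDeriv$ all instances of rules in $\family{\LCoindProofi{n}}{n\in\nat}$ to obtain a derivation $\aDerivacc$ in $\milnersys$ with the same conclusion.

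I would proceed by induction on the depth of $\aDeriv$. If the bottommost rule application in $\aDeriv$ is an axiom of $\milnersysmin$ or a rule of $\eqlogic$, I would keep that rule application and recursively transform the (at most two) immediate subderivations, combining the results under the same rule. If instead the bottommost rule application is an instance $\ainst$ of $\LCoindProofi{n}$ for some $n\in\nat$, with immediate subderivations $\aDerivi{1},\ldots,\aDerivi{n}$ in $\coindmilnersys$, I would first apply the induction hypothesis to each $\aDerivi{i}$ to obtain derivations $\aDerivacci{i}$ in $\milnersys$ with the same conclusions as $\aDerivi{i}$; combining the $\aDerivacci{i}$ with $\ainst$ yields a derivation in $\thplus{\milnersys}{\LCoindProofi{n}}$ that satisfies the premise of the effective part of Lemma~\ref{lem:LCoindProof:admissible:milnersys} (a bottommost instance of $\LCoindProofi{n}$ whose immediate subderivations lie in $\milnersys$). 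Invoking that lemma then produces, effectively, a derivation $\aDerivacc$ in $\milnersys$ with the conclusion of $\ainst$. This completes the induction step and yields an overall effective transformation.

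From the transformation, the theorem-subsumption statement $\coindmilnersys \isthmsubsumedby \milnersys$ follows immediately: any theorem $\astexp \formeq \bstexp$ of $\coindmilnersys$, witnessed by some $\aDeriv$, is also a theorem of $\milnersys$, witnessed by the transformed $\aDerivacc$. There is no substantive obstacle at this stage, because the real work has already been carried out in Section~\ref{coindmilnersys:2:milnersys}: the extraction of a \provablein{\milnersysmin} solution from a guarded \LLEEwitness\ (Lemma~\ref{lem:extrsol:is:sol}) and the \provablein{\milnersys} uniqueness of solutions of guarded \LLEEonecharts\ (Lemma~\ref{lem:sols:provably:equal:LLEE}) are the ingredients that, via Proposition~\ref{prop:LLEEcoindproofeq:impl:milnersyseq}, power Lemma~\ref{lem:LCoindProof:admissible:milnersys}. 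The only remaining care is administrative: making sure that when $\aDeriv$ terminates in an instance of $\LCoindProofi{n}$, the \sidecondition\ $\astexp \LLEEcoindproofeqin{\thplus{\milnersysmin}{\aseteqs}} \bstexp$ still holds after the immediate subderivations have been replaced by $\milnersys$-derivations with the same conclusions, which it does because the set $\aseteqs$ of premise equations is unchanged by this replacement.
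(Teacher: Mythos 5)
Your proposal is correct and follows essentially the same route as the paper: both reduce the theorem to Lemma~\ref{lem:LCoindProof:admissible:milnersys} (correctness of the rules $\LCoindProofi{n}$ for $\milnersys$) and then eliminate the coinductive rule instances from a given derivation, observing that the \sidecondition\ depends only on the premise formulas and is therefore unaffected by replacing their subderivations. The only inessential difference is bookkeeping: you recurse on the depth of the derivation and remove a bottommost instance after its immediate subderivations have been transformed, whereas the paper inducts on the number of coinductive rule instances and removes topmost ones first (and, for the non-effective part, additionally phrases the argument via admissibility and a chain of subsystem/theorem-equivalence relations).
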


\begin{proof}%
  Due to Lemma~\ref{lem:LCoindProof:admissible:milnersys},
    every rule $\LCoindProofi{n}$, for $n\in\nat$, is correct for $\milnersys$. 
  Then by using Lemma~\ref{lem:derivable:admissible:rules}, \ref{it:1:lem:derivable:admissible:rules}, 
    we find that each of these rules are admissible in $\milnersys$. 
  This means that $\thplus{\milnersys}{\LCoindProofi{n}} \thmequiv \milnersys$ holds for all $n\in\nat$,
  which implies, with an argument by induction on the prooftree size of derivations in $\thplus{\milnersys}{\family{\LCoindProofi{n}}{n\in\nat}}$,
  that $\thplus{\milnersys}{\family{\LCoindProofi{n}}{n\in\nat}} \thmequiv \milnersys$ holds as well.
  With this statement we can now argue as follows: 
  \begin{alignat*}{2}
    \coindmilnersys
      & {} = 
    \thplus{\milnersysmin}{\family{\LCoindProofi{n}}{n\in\nat}}
      & & \text{(by Definition~\ref{def:coindmilnersys})}
      \\
      & {} \subsystem
    \thplus{(\thplus{\milnersysmin}{\family{\LCoindProofi{n}}{n\in\nat}})}{\RSPstar}
      \hspace*{1.25em}
      & & \text{(by extension via adding the rule $\RSPstar$)}
      \displaybreak[0]\\
      & {} =
    \thplus{(\thplus{\milnersysmin}{\RSPstar})}{\family{\LCoindProofi{n}}{n\in\nat}}
      & & \text{(by construing the same system differently)}
      \displaybreak[0]\\
      & {} =
    \thplus{{\milnersys}}{\family{\LCoindProofi{n}}{n\in\nat}}
      & & \text{(by Definition~\ref{def:milnersys})}
      \\
      & {} \thmequiv
    \milnersys
      & & \text{(as argued above)} 
    \punc{.}   
  \end{alignat*}  
  From this we obtain $\coindmilnersys \isthmsubsumedby \milnersys$
    in view of
    $(\sbinrelcomp{\ssubsystem}{\sthmequiv}) 
       \:\subseteq\: 
     (\sbinrelcomp{\sisthmsubsumedby}{\sthmequiv}) 
       \:\subseteq\: 
     \sisthmsubsumedby$.
    % because $\ssubsystem$ implies $\sisthmsubsumedby$, and $\sbinrelcomp{\sisthmsubsumedby}{\sthmequiv} \subseteq \sisthmsubsumedby$. 

  For demonstrating the effective transformation statement of the theorem 
    we use the transformation from the proof of the implication ``$\Leftarrow$'' 
                                                      in Lemma~\ref{lem:derivable:admissible:rules}, \ref{it:1:lem:derivable:admissible:rules},
                                                which states that correct rules are also admissible. 
  We have to show that every derivation $\aDeriv$ in $\coindmilnersys$ can be transformed effectively 
    into a derivation $\aDerivacc$ in $\milnersys$ with the same conclusion.
  In order to establish this statement by induction we prove it for all derivations $\aDeriv$ 
    in the extension $\thplus{\coindmilnersys}{\RSPstar} 
                        = \thplus{\thplus{\milnersysmin}{\family{\LCoindProofi{n}}{n\in\nat}}}{\RSPstar}
                        = \thplus{\milnersys}{\family{\LCoindProofi{n}}{n\in\nat}}$
           of $\coindmilnersys$. %$ = \thplus{\milnersysmin}{\family{\LCoindProofi{n}}{n\in\nat}}$.
             
  We proceed by induction on the number of instances of rules $\LCoindProofi{n}$, for $n\in\nat$, in $\aDeriv$.
  Let $\aDeriv$ be a derivation $\thplus{\thplus{\milnersysmin}{\family{\LCoindProofi{n}}{n\in\nat}}}{\RSPstar}$.
  If $\aDeriv$ does not contain an instance of $\LCoindProofi{n}$ with $n\in\nat$,
    then $\aDeriv$ is already a derivation in $\milnersys = \thplus{\milnersysmin}{\RSPstar}$,
      and no further transformation is necessary.
  Otherwise $\aDeriv$ contains at least one instance of $\LCoindProofi{n}$ with $n\in\nat$.
  We pick an instance $\ainst$ in $\aDeriv$ of a rule $\LCoindProofi{n_0}$ with $n_0\in\nat$
    that is topmost among the instances of the coinductive rule in $\aDeriv$, that is, 
      none of the immediate subderivations of the instance $\ainst$ in $\aDeriv$ 
        contains any instance of a rule $\LCoindProofi{n}$ for $n\in\nat$.
  Let $\aDerivi{0}$ be the subderivation of $\aDeriv$ that ends in $\ainst$.
  Since $\ainst$ is topmost, 
     all of the immediate subderivations of $\ainst$ and $\aDerivi{0}$ in $\aDeriv$ are derivations in $\milnersys = \thplus{\milnersysmin}{\RSPstar}$,
     and $\aDerivi{0}$ is a derivation in $\thplus{\thplus{\milnersysmin}{\LCoindProofi{n_0}}}{\RSPstar}
                                             =
                                           \thplus{\milnersys}{\LCoindProofi{n_0}}$.  
  Therefore we can apply the effective part of Lemma~\ref{lem:LCoindProof:admissible:milnersys}
   to the subderivation $\aDerivi{0}$.
  We obtain a derivation $\aDerivacci{0}$ in $\milnersys$ with the same conclusion as $\ainst$ and $\aDerivi{0}$.
  Then by replacing $\aDerivi{0}$ in $\aDeriv$ with $\aDerivacci{0}$
    we obtain a derivation $\aDerivtilde$ in $\thplus{\thplus{\milnersysmin}{\family{\LCoindProofi{n}}{n\in\nat}}}{\RSPstar}$ 
      that has the same conclusion as $\aDeriv$, but that has one instance of a coinductive rule less than $\aDeriv$.
  Now we can apply the induction hypothesis to $\aDerivtilde$ 
    in order to effectively transform it in a derivation $\aDerivacc$ in $\milnersys$ that has the same conclusion as $\aDeriv$.
  In this way we have performed the induction step.
\end{proof}

We conclude this section
  with an illustrative application of the results obtained here
    that provides in-roads for a completeness proof for Milner's system \milnersys. 
Specifically    
  %For this purpose 
  we apply Proposition~\ref{prop:LLEEcoindproofeq:impl:milnersyseq},
    the trans\-for\-ma\-tion of \LLEEwitnessed\ coinductive proofs over \milnersys\ into derivations in \milnersys.
We show (see Corollary~\ref{cor:milnersys:compl:expansion:minimization} below) 
  that Milner's system is complete
    for bisimilarity of chart interpretations of star expressions
      when bisimilarity is witnessed by joint expansion or joint minimization to a guarded \LLEEonechart\ via functional \onebisimulations\
        (see Definition~\ref{def:compl:expansion:minimization}). 
For showing this statement we must, however, use here without proof a technical result from \cite{grab:2022:lics}:
  \provablein{\milnersys} solutions of \onecharts\ can be transferred backwards over functional \onebisimulations\
    (see Lemma~\ref{lem:transfer:sols:convfunonesbisim}). 
This statement is a generalization to \onecharts\ of Proposition~5.1 in \cite{grab:fokk:2020:lics},
  which states that \provablein{\asys} solutions of %(\onetransition\ free) 
                                                    charts, for $\ACI\supsystem\asys$, % for \eqlogicbased\ proof systems $\asys$ with $\ACI\subsystem\asys$
    can be transferred backwards over functional bisimulations.

\begin{defi}\label{def:compl:expansion:minimization}
  Let $\aonechart$, $\aonecharti{1}$, and $\aonecharti{2}$ be \onecharts.
    We say that $\aonecharti{1}$ and $\aonecharti{2}$ are \onebisimilar\ \emph{via $\aonechart$ as (their) joint expansion}
         ($\aonecharti{1}$ and $\aonecharti{2}$ are \onebisimilar\ \emph{via $\aonechart$ as (their) joint minimization})
      if $\aonecharti{1} \convfunonebisim  \aonechart \funonebisim \aonecharti{2}$ holds
        (respectively, if $\aonecharti{1} \funonebisim \aonechart \convfunonebisim \aonecharti{2}$ holds).
\end{defi}

\begin{lem}[$\protect\sim\,$Lemma~3.8, (i), in \protect\cite{grab:2022:lics}]\label{lem:transfer:sols:convfunonesbisim} % used ~ as in Lemma 3.1
  \Provablein{\milnersys} solvability with principal value $\astexp$
    is preserved under converse functional \onebisimilarity,
      on weakly guarded \onecharts, for all star expressions $\astexp\in\StExp$. 
\end{lem}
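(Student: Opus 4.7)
The plan is to define the desired \provablein{\milnersys} solution of $\aonecharti{1}$ as the composition $\sasoli{2} \circ \sphifun$, where $\sphifun$ witnesses a functional \onebisimulation\ $\aonecharti{1} \funonebisim \aonecharti{2}$ and $\sasoli{2}$ is the given \provablein{\milnersys} solution of $\aonecharti{2}$ with principal value $\astexp$. Since every functional \onebisimulation\ maps the start vertex of its source to the start vertex of its target, the principal value of $\sasoli{1} \defdby \sasoli{2} \circ \sphifun$ coincides with that of $\sasoli{2}$, hence is \provablyin{\milnersys} equal to $\astexp$.

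The central obstacle is verifying that $\sasoli{1}$ satisfies the direct solution equations at each vertex $\avert$ of $\aonecharti{1}$: the direct outgoing transitions from $\avert$ and from $\phifun{\avert}$ need not correspond under $\sphifun$ in general, because the \onebisimulation\ conditions (forth), (back), and (termination) only align \emph{induced} proper transitions and induced termination. To bridge this gap, I would introduce an auxiliary notion: a \starexpression\ function $\sasol$ on a weakly guarded \onechart\ $\aonechart$ is an \emph{induced \provablein{\asys} solution} if for every vertex $\avert$, $\asol{\avert}$ is \provablyin{\asys} equal to the sum of the induced-termination constant at $\avert$ ($\stexpone$ if $\oneterminates{\avert}$, else $\stexpzero$) and the summands $\aact \prod \asol{\avertacc}$ taken over all induced proper transitions $\avert \ilt{\aact} \avertacc$.

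The argument then factors into three substeps, all within \milnersysmin. First, on weakly guarded \onecharts, every direct \provablein{\milnersysmin} solution is also an induced one; I would prove this by induction on the depth of the longest $\sone$-path from $\avert$ (finite by weak guardedness), iteratively expanding the direct solution equation at $\avert$ by substituting the direct equations at each $\sone$-successor while absorbing $\sone$-prefixes via axiom $(\leftidstexpprod)$, with \ACI\ handling idempotent absorption of summands that arise from overlapping induced paths. Second, induced solvability transfers across $\sphifun$: (termination) equates the induced-termination constants at $\avert$ and $\phifun{\avert}$, while (forth) and (back) put the sets of induced proper transitions of $\avert$ and $\phifun{\avert}$ in surjective correspondence modulo $\sphifun$ on targets, so \ACI\ (with idempotency collapsing any duplications) renders the induced-RHS expressions \provablyin{\milnersysmin} equal, making $\sasoli{1}$ an induced \provablein{\milnersysmin} solution of $\aonecharti{1}$. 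Third, induced \provablein{\milnersysmin} solvability of $\sasoli{1}$ converts back to direct \provablein{\milnersysmin} solvability: a single application of $(\leftidstexpprod)$ to each $\sone$-summand in the direct RHS at $\avert$, followed by substitution of the induced solution equations at the $\sone$-successors and rearrangement via \ACI\ using the decomposition of induced transitions into proper direct transitions and induced transitions at $\sone$-successors, collapses the direct RHS into the induced RHS at $\avert$, which equals $\asoli{1}{\avert}$.

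The hard part will be the first substep: the induction on the depth of $\sone$-paths must compose the nested expansions carefully so that all induced transitions from $\avert$ appear exactly once (up to idempotent absorption) in the unfolded direct RHS. A notable consequence of this route is that none of the three substeps invokes $\RSPstar$; the hypothesis of \provablein{\milnersys} solvability of $\aonecharti{2}$ enters only through the provable equalities defining $\sasoli{2}$, so $\sasoli{1}$ is in fact a \provablein{\milnersysmin} solution of $\aonecharti{1}$.
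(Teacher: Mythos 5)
The paper itself gives no proof of this lemma: it is imported explicitly ``without proof'' from \cite{grab:2022:lics} (Lemma~3.8,~(i)), as the \onechart\ generalization of Proposition~5.1 of \cite{grab:fokk:2020:lics}, where the composite $\sasoli{2} \circ \sphifun$ works immediately because on charts direct and induced transitions coincide. Your reconstruction is sound and is, in substance, the intended argument: the ``induced \provablein{\asys} solution'' you introduce is exactly solvability of the induced chart $\indchartof{\aonechart}$, and the equivalence of direct and induced solvability on weakly guarded \onecharts\ (your first and third substeps) is precisely the auxiliary statement that the paper sketches in the proof of Proposition~\ref{prop:char:solution:SRS}. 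Your middle substep uses (forth), (back) and (termination) correctly: the two induced right-hand sides have the same \emph{set} of summands, so idempotency in $\ACI$ closes the gap, and weak guardedness is needed only so that the induction on the length of maximal $\sone$-paths in the first substep terminates.

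The one genuine mistake is the closing claim that $\sasoli{1}$ is a \provablein{\milnersysmin} solution of $\aonecharti{1}$. That does not follow from the observation that your three substeps avoid $\RSPstar$: the correctness equation for $\sasoli{1}$ at a vertex $\avert$ is obtained by chaining your \provablein{\milnersysmin} manipulations with the correctness equations of $\sasoli{2}$ at the image vertices under $\sphifun$, and those equations are only guaranteed to hold modulo $\seqin{\milnersys}$ (your first substep, applied to $\sasoli{2}$, likewise needs to be run modulo $\seqin{\milnersys}$ rather than $\seqin{\milnersysmin}$ as you state it). What your observation really yields is parametricity in the proof system: for any \eqlogicbased\ system $\asys$ with $\milnersysmin \subsystem \asys$, \provablein{\asys} solvability with principal value $\astexp$ transfers conversely along functional \onebisimulations\ between weakly guarded \onecharts; the \milnersysmin\ conclusion holds only under the stronger hypothesis that $\sasoli{2}$ is itself a \provablein{\milnersysmin} solution. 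This does not affect the lemma as stated, which asserts only the \milnersys\ version.
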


\begin{figure}[tb!]
\begin{center}
\begin{tikzpicture}
  \matrix[anchor=center,row sep=0.75cm,column sep=1.25cm,ampersand replacement=\&] { 
      \& \node(top-exp){};
         \& \& \& \& \& \node(top-left-min){}; \& \& \node(top-right-min){};
    \\
      \& \& \& \& \node(concl-pos){}; 
    \\
    \node(left-exp){};
      \&
         \& \node(right-exp){}; 
            \& \& \& \& \& \node (min-min){};
    \\    
    };  
    
  \path (top-exp) ++ (0cm,0cm) node{\large $\aonechart$};
    \path (top-exp) ++ (0.1cm,0.3cm) node[right]{\forestgreen{LLEE}, guarded}; 
  \path (left-exp) ++ (0cm,0cm) node{\large $\chartof{\astexpi{1}}$};
  \path (right-exp) ++ (0cm,0cm) node{\large $\chartof{\astexpi{2}}$};
  \draw[funonebisimleft,shorten <=0.25cm,shorten >=0.25cm]  (top-exp) to (left-exp);
  \draw[funonebisimright,shorten <=0.25cm,shorten >=0.25cm] (top-exp) to (right-exp); 
  \path (concl-pos) ++ (0cm,0cm) node(concl){\large $\astexpi{1} \milnersyseq \astexpi{2}$};
  \draw[-implies,double equal sign distance] ($(concl) + (-2.25cm,0cm)$) 
                                               to node[below]{\tiny Cor.~\ref{cor:milnersys:compl:expansion:minimization}, \ref{it:1:cor:milnersys:compl:expansion:minimization}}
                                                  ($(concl.west) + (-0.25cm,0cm)$); 
  \draw[-implies,double equal sign distance] ($(concl) + (2.25cm,0cm)$) 
                                               to node[below]{\tiny Cor.~\ref{cor:milnersys:compl:expansion:minimization}, \ref{it:2:cor:milnersys:compl:expansion:minimization}}
                                                  ($(concl.east) + (0.25cm,0cm)$); 
  \path (top-left-min) ++ (0cm,0cm) node{\large $\chartof{\astexpi{1}}$};
  \path (top-right-min) ++ (0cm,0cm) node{\large $\chartof{\astexpi{2}}$};
  \path (min-min) ++ (0cm,0cm) node{\large $\aonechart$};
    \path (min-min) ++ (0.1cm,-0.25cm) node[right]{\forestgreen{LLEE}, guarded}; 
  \draw[funonebisimleft,shorten <=0.25cm,shorten >=0.25cm]  (top-left-min) to (min-min);
  \draw[funonebisimright,shorten <=0.25cm,shorten >=0.25cm] (top-right-min) to (min-min);  
\end{tikzpicture}  
\end{center}
  \vspace*{-1.5ex}
  \caption{\label{fig:cor:milnersys:compl:expansion:minimization}%
           Illustration of the statements (for arbitrary given $\astexpi{1},\astexpi{2}\in\StExp$) of 
                           Corollary~\protect\ref{cor:milnersys:compl:expansion:minimization}:
             Milner's system $\milnersys$ is complete for \protect\onebisimilarity\ of chart interpretations 
               via guarded \protect\LLEEonecharts\ as joint expansion or as joint minimization.
           }
\end{figure}%
\begin{cor}\label{cor:milnersys:compl:expansion:minimization} 
  The following two statements hold (see Figure~\ref{fig:cor:milnersys:compl:expansion:minimization} for their illustrations):
  \begin{enumerate*}[label={(\roman{*})}]
    \item{}\label{it:1:cor:milnersys:compl:expansion:minimization}
      $\milnersys$ is complete for \onebisimilarity\ of chart interpretations of two star expressions via a guarded \LLEEonechart\ as joint expansion.
    \item{}\label{it:2:cor:milnersys:compl:expansion:minimization}
      $\milnersys$ is complete for \onebisimilarity\ of chart interpretations of two star expressions via a guarded \LLEEonechart\ as joint minimization. 
  \end{enumerate*}
\end{cor}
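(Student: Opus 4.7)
The plan is to prove both parts of the corollary by a uniform schema. For each given pair $e_1, e_2$ and joint witness $\aonechart$ with guarded \LLEEwitness\ $\aonecharthat$, I would show that both $e_i$ are \provablyin{\milnersys} equal to the common extracted star expression $e \defdby \extrsolof{\aonecharthat}{\start}$ and then conclude $e_1 \milnersyseq e \milnersyseq e_2$. By Lemma~\ref{lem:extrsol:is:sol} the extraction function $\sextrsolof{\aonecharthat}$ is a \provablein{\milnersysmin} (and hence \provablein{\milnersys}) solution of $\aonechart$ with principal value $e$, so the remaining work in each case is to produce a guarded \LLEEonechart\ carrying two \provablein{\milnersys} solutions, one with principal value $e_i$ and one with principal value $e$, and to apply the solution-uniqueness statement Lemma~\ref{lem:sols:provably:equal:LLEE}. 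The key auxiliary instrument in both directions is Lemma~\ref{lem:transfer:sols:convfunonesbisim}, which transfers \provablein{\milnersys} solutions backwards along functional \onebisimulations\ into any weakly guarded target.

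For case~\ref{it:1:cor:milnersys:compl:expansion:minimization} the \LLEEonechart\ on which uniqueness is invoked is $\aonechart$ itself. By Lemma~\ref{lem:chart-int:milnersysmin:solvable}, the identity function on $\vertsof{\chartof{e_i}}$ is a \provablein{\milnersysmin} solution of $\chartof{e_i}$ with principal value $e_i$; applying Lemma~\ref{lem:transfer:sols:convfunonesbisim} along the given $\aonechart \funonebisim \chartof{e_i}$ yields a \provablein{\milnersys} solution $\saeqfuni{i}$ of $\aonechart$ with principal value $e_i$. Since $\saeqfuni{i}$ and $\sextrsolof{\aonecharthat}$ are both \provablein{\milnersys} solutions of the guarded \LLEEonechart\ $\aonechart$, Lemma~\ref{lem:sols:provably:equal:LLEE} delivers $e_i \milnersyseq e$.

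For case~\ref{it:2:cor:milnersys:compl:expansion:minimization} the direction $\chartof{e_i} \funonebisim \aonechart$ is the wrong way around for pulling the identity solution onto $\aonechart$, and the uniqueness argument has to be run on $\onechartof{e_i}$ instead. By Theorem~\ref{thm:onechart-int:LLEEw}, $\onechartof{e_i}$ is a guarded \LLEEonechart\ and the projection $\sproj$ is a functional \onebisimulation\ $\onechartof{e_i} \funonebisim \chartof{e_i}$; composing it with the hypothesized $\chartof{e_i} \funonebisim \aonechart$ produces a functional \onebisimulation\ $\onechartof{e_i} \funonebisim \aonechart$. Lemma~\ref{lem:transfer:sols:convfunonesbisim} then transfers $\sextrsolof{\aonecharthat}$ backwards to a \provablein{\milnersys} solution of $\onechartof{e_i}$ with principal value $e$. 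On the other hand, by Lemma~\ref{lem:onechart-int:milnersysmin:solvable}, the map $\asstexp \mapsto \proj{\asstexp}$ is a \provablein{\milnersysmin} solution of $\onechartof{e_i}$ with principal value $e_i$. Applying Lemma~\ref{lem:sols:provably:equal:LLEE} on the guarded \LLEEonechart\ $\onechartof{e_i}$ now yields $e_i \milnersyseq e$ as required.

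The only non-routine point I anticipate is the asymmetry between the two cases: in~\ref{it:1:cor:milnersys:compl:expansion:minimization} the given $\aonechart$ serves directly as the arena for uniqueness, whereas in~\ref{it:2:cor:milnersys:compl:expansion:minimization} the direction of the bisimulations forces the detour through the intrinsic \onechart\ interpretation $\onechartof{e_i}$, which in turn relies on the small but explicit check that functional \onebisimulations\ compose. Everything else is a direct appeal to machinery already developed in Sections~\ref{LEE} and~\ref{coindmilnersys:2:milnersys}, together with the externally imported Lemma~\ref{lem:transfer:sols:convfunonesbisim}.
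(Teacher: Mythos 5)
Your proposal is correct and follows essentially the same route as the paper: in case~(i) it pulls the identity solutions of $\chartof{\astexpi{i}}$ back onto $\aonechart$ via Lemma~\ref{lem:transfer:sols:convfunonesbisim}, and in case~(ii) it detours through the guarded \LLEEonecharts~$\onechartof{\astexpi{i}}$ using Theorem~\ref{thm:onechart-int:LLEEw} and the composition of functional \onebisimulations, exactly as the paper does. The only (cosmetic) difference is that you invoke Lemma~\ref{lem:extrsol:is:sol} and Lemma~\ref{lem:sols:provably:equal:LLEE} directly to compare each solution with the extracted one, whereas the paper packages the two provable solutions into an \LLEEwitnessed\ coinductive proof and cites Proposition~\ref{prop:LLEEcoindproofeq:impl:milnersyseq}, which is itself proved by that same uniqueness lemma.
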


\begin{proof}
  For showing statement \ref{it:1:cor:milnersys:compl:expansion:minimization},
    let $\astexpi{1},\astexpi{2}\in\StExpover{\actions}$ be such that there is a guarded \LLEEonechart~$\aonechart$
      with $\chartof{\astexpi{1}} \convfunonebisim \aonechart \funonebisim \chartof{\astexpi{2}}$.
    We have to show $\astexpi{1} \milnersyseq \astexpi{2}$.
  
  Due to Lemma~\ref{lem:chart-int:milnersysmin:solvable}, there are \provablein{\milnersysmin} (and thus \provablein{\milnersys}) solutions of $\chartof{\astexpi{1}}$ and $\chartof{\astexpi{2}}$
    with principal values $\astexpi{1}$, and $\astexpi{2}$, respectively.  
  Then we obtain by applying Lemma~\ref{lem:transfer:sols:convfunonesbisim}, 
    in view of the converse functional \onebisimulations\ from $\chartof{\astexpi{1}}$ and $\chartof{\astexpi{2}}$ to $\aonechart$,
    that there are two \provablein{\milnersys} solutions $\saeqfuni{1}$ and $\saeqfuni{2}$ of $\aonechart$ with the principal values $\astexpi{1}$ and $\astexpi{2}$, respectively. 
  These two \provablein{\milnersys} solutions of the guarded \LLEEonechart~$\aonechart$ can be combined
    to obtain a \LLEEwitnessed\ coinductive proof $\pair{\aonechart}{\saeqfun}$ of $\astexpi{1} \formeq \astexpi{2}$ over \milnersys. Thus we obtain:
  \begin{equation*}
    \astexpi{1} \LLEEcoindproofeqin{\milnersys} \astexpi{2} \punc{.}
  \end{equation*}    
  From this we arrive at $\astexpi{1} \milnersyseq \astexpi{2}$ by Proposition~\ref{prop:LLEEcoindproofeq:impl:milnersyseq}.
  
  \smallskip
  For showing statement \ref{it:2:cor:milnersys:compl:expansion:minimization},
    let $\astexpi{1},\astexpi{2}\in\StExpover{\actions}$ be such that there is a guarded \LLEEonechart~$\aonechart$
      with $\chartof{\astexpi{1}} \funonebisim \aonechart \convfunonebisim \chartof{\astexpi{2}}$.
    We have to show $\astexpi{1} \milnersyseq \astexpi{2}$.
  
  We first note that due to Lemma~\ref{lem:lem:extrsol:is:sol} 
    the extraction function $\sextrsolof{\aonecharthat}$ of a (guarded)\vspace*{-2.5pt} \LLEEwitness~$\aonecharthat$ of the guarded \LLEEonechart~$\aonechart$
      yields a \provablein{\milnersysmin} (and hence also a \provablein{\milnersys}) solution $\sasol$ of $\aonechart$ whose principal value we denote by $\astexp$.
  Next we apply Theorem~\ref{thm:onechart-int:LLEEw}, \ref{it:2:thm:onechart-int:LLEEw},
    to extend the assumed functional \onebisimulations\ to $\aonechart$ above the chart interpretations $\chartof{\astexpi{1}}$ and $\chartof{\astexpi{2}}$
      to start from the \onechart\ interpretations $\onechartof{\astexpi{1}}$ and $\onechartof{\astexpi{2}}$:
        we obtain  
          $\onechartof{\astexpi{1}} \funonebisim \chartof{\astexpi{1}} \funonebisim \aonechart \convfunonebisim \chartof{\astexpi{2}} \convfunonebisim \onechartof{\astexpi{2}}$.
  By Theorem~\ref{thm:onechart-int:LLEEw}, \ref{it:1:thm:onechart-int:LLEEw},
    we find that $\onechartof{\astexpi{1}}$ and $\onechartof{\astexpi{2}}$ are guarded \LLEEonecharts.        
  By transitivity of $\sfunonebisim$ we obtain  
    $\onechartof{\astexpi{1}} \funonebisim \aonechart \convfunonebisim \onechartof{\astexpi{2}}$.
  Now we can apply Lemma~\ref{lem:transfer:sols:convfunonesbisim}
    to obtain, from the \provablein{\milnersys} solution $\sasol$ of $\aonechart$, 
      a \provablein{\milnersys} solution $\saeqfuni{\astexpi{1},2}$ of $\onechartof{\astexpi{1}}$ with principal value $\astexp$, 
                                     and $\saeqfuni{\astexpi{2},2}$ of $\onechartof{\astexpi{2}}$ also with principal value $\astexp$.
  By Lemma~\ref{lem:onechart-int:milnersysmin:solvable}
    we furthermore obtain \provablein{\milnersysmin} (and hence \provablein{\milnersys}) solutions 
          $\saeqfuni{\astexpi{1},1}$ of $\onechartof{\astexpi{1}}$ with principal value $\astexpi{1}$, 
      and $\saeqfuni{\astexpi{2},1}$ of $\onechartof{\astexpi{2}}$ with principal value $\astexpi{2}$.
  The two \provablein{\milnersys} solutions $\saeqfuni{\astexpi{1},1}$ and $\saeqfuni{\astexpi{1},2}$ of the guarded \LLEEonechart~$\onechartof{\astexpi{1}}$
     can be combined
    to a \LLEEwitnessed\ coinductive proof $\pair{\onechartof{\astexpi{1}}}{\saeqfuni{\astexpi{1}}}$ of $\astexpi{1} \formeq \astexp$ over \milnersys.
  Analogously, 
    the two \provablein{\milnersys} solutions $\saeqfuni{\astexpi{2},1}$ and $\saeqfuni{\astexpi{2},2}$ of the guarded \LLEEonechart~$\onechartof{\astexpi{2}}$
     can be combined
    to a \LLEEwitnessed\ coinductive proof $\pair{\onechartof{\astexpi{2}}}{\saeqfuni{\astexpi{2}}}$ of $\astexpi{2} \formeq \astexp$ over \milnersys.
  Together we obtain:
  \begin{equation*}
    \astexpi{1} \LLEEcoindproofeqin{\milnersys} \astexp 
      \;\;\text{ and }\;\;
        \astexpi{2} \LLEEcoindproofeqin{\milnersys} \astexp \punc{.}
  \end{equation*}    
  Now by an appeal to Proposition~\ref{prop:LLEEcoindproofeq:impl:milnersyseq}
    we obtain 
      $\astexpi{1} \milnersyseq \astexp$ and $\astexpi{2} \milnersyseq \astexp$.
  Finally, by applying of symmetry and transitivity rules in \milnersys\ we obtain $\astexpi{1} \milnersyseq \astexpi{2}$.
\end{proof}  
  
\begin{rem}\label{rem:cor:milnersys:compl:expansion:minimization}
  While both of the statements \ref{it:1:cor:milnersys:compl:expansion:minimization} and \ref{it:2:cor:milnersys:compl:expansion:minimization}
   of Corollary~\ref{cor:milnersys:compl:expansion:minimization}
     were important pieces of the puzzle for constructing the completeness proof for \milnersys\ as sketched in \cite{grab:2022:lics},
       neither of them yields such a proof directly.
  This is because of the following two facts:     
  First, two bisimilar chart interpretations do not always have a guarded \LLEEonechart\ as their joint expansion along $\sconvfunonebisim$.
    This can be demonstrated with the counterexample of the charts in Example~4.1 of \cite{grab:fokk:2020:lics}.
    As a consequence, \ref{it:1:cor:milnersys:compl:expansion:minimization} is sometimes not applicable.
  Second,
    two bisimilar chart interpretations do not in general have a guarded \LLEEonechart\ as their joint minimization along $\sfunonebisim$.
      This is an easy consequence of the counterexample that is described in \cite[Sect.$\,$6]{grab:2022:lics}.
      Therefore direct application of \ref{it:2:cor:milnersys:compl:expansion:minimization} is also excluded in some situations.
      
  Yet more sophisticated applications of \ref{it:1:cor:milnersys:compl:expansion:minimization} and \ref{it:2:cor:milnersys:compl:expansion:minimization}   
    can still turn out to be expedient. 
  Indeed,
    a strengthening of \ref{it:2:cor:milnersys:compl:expansion:minimization} has helped us settle a subcase,
      and a refinement of \ref{it:2:cor:milnersys:compl:expansion:minimization} 
        has led to completeness proof for \milnersys\ as sketched in \cite{grab:2022:lics}.
  In \cite{grab:fokk:2020:lics}, 
    Fokkink and I have employed the stronger and more specific version 
      of the joint minimization idea
        in Corollary~\ref{cor:milnersys:compl:expansion:minimization},~\ref{it:2:cor:milnersys:compl:expansion:minimization},
          for a completeness proof of an adaptation \BBP\ by Bergstra, Bethke, and Ponse of \milnersys\ to `\onefree\ star expressions' 
           (without $\stexpone$, and with binary iteration instead of unary iteration).  
    Concretely,
       we used that bisimilar chart interpretations of \onefree\ star expressions have \LLEEcharts\ as their bisimulation collapses
         (and thus have guarded \LLEEonecharts\ as joint minimizations). 
  The completeness proof for \milnersys\ as summarized in \cite{grab:2022:lics}
    has been built around a more involved argument that employs `crystallized' \LLEEonechart\ approximations of the joint bisimulation collapse
      of bisimilar chart interpretations.
\end{rem}

%--------------------------------------------------------------------------
\section{From Milner's system to LLEE-witnessed coinductive proofs}
  \label{milnersys:2:coindmilnersys}
%--------------------------------------------------------------------------

In this section we develop a \prooftheoretic\ interpretation of $\milnersys$
  in the subsystem $\coindmilnersysone$ of $\coindmilnersys$, 
    and hence also a \prooftheoretic\ interpretation of \milnersys\ in \coindmilnersys . %(as $\coindmilnersysone$ is a subsystem of $\coindmilnersys$).
Since $\milnersys$ and $\coindmilnersysone$ differ only by the fixed-point rule \RSPstar\ (which is part of $\milnersys$, but not of $\coindmilnersys$)
  and the rule $\LCoindProofi{1}$ (which is part of $\coindmilnersysone$, but not of $\milnersys$),
  the crucial step for this proof transformation is to show that instances of \RSPstar\ can be mimicked in $\coindmilnersysone$.
  
We will do so by showing that instances of \RSPstar\ are derivable in $\coindmilnersysone$,
  and in particular,  can be mimicked by instances of $\LCoindProofi{1}$.   
More precisely, we will show that every instance~$\ainst$ of the fixed-point rule \RSPstar\ of \milnersys\
  can be mimicked by an instance of $\LCoindProofi{1}$ that has the same premise and conclusion,
    and that uses as its \sidecondition\ a \LLEEwitnessed\ coinductive proof over \milnersysmin\ 
    in which the premise of $\ainst$ may be used.
Still more explicitly, we show that every \RSPstar-instance with premise $e = f \prod e + g$ such that $\terminates{f}$ and with conclusion $e = f^* \prod g$   
  gives rise to a coinductive proof of $e = f^* \prod g$ over $\thplus{\milnersysmin}{\setexp{e = f \prod e + g}}$
    with underlying \onechart~$\onechartof{f^* \prod g}$ and guarded \LLEEwitness~$\onecharthatof{f^* \prod g}$.
  
We first illustrate this mimicking step by a concrete example (see Example~\ref{ex:1:RSPstar:2:LLEEcoindproof}),
  in order to motivate and convey the idea of this proof transformation. 
It will be built on three auxiliary statements
  (see after Example~\ref{ex:1:RSPstar:2:LLEEcoindproof}) two of which we have shown already in Section~\ref{LEE}.
Subsequently we prove the remaining crucial auxiliary statement (Lemma~\ref{lem:lem:mimic:RSPstar}),
  and then establish the transformation by showing that $\RSPstar$ is a derivable rule in $\coindmilnersysone$
    (Lemma~\ref{lem:RSPstar:derivable:coindmilnersysone}, using Lemma~\ref{lem:mimic:RSPstar}).
Finally we use derivability of $\RSPstar$ in $\coindmilnersysone$ in order to obtain the proof transformation from $\milnersys$ to $\coindmilnersysone$
  (see Theorem~\ref{thm:milnersys:isthmsubsumedby:coindmilnersysone}).

\begin{exa}\label{ex:1:RSPstar:2:LLEEcoindproof}
  We consider an instance of \RSPstar\ that corresponds, up to an application of $\rdistr$,
  to the instance of \RSPstar\ at the bottom in Figure~\ref{fig:ex:extraction:proof}:
  \begin{equation}\renewcommand{\fCenter}{\formeq}\label{eq:ex:1:RSPstar:2:LLEEcoindproof}
  \vcenter{\hbox{
    \mbox{%
    \Axiom$ \overbrace{(a + b)^*}^{\chocolate{e}}   \fCenter   \overbrace{((a \cdot a^* + b) \cdot b^*)}^{\alert{f}} \cdot \overbrace{(a + b)^*}^{\chocolate{e}} + \overbrace{1}^{ \forestgreen{g} } $
    \RightLabel{\RSPstar \hspace*{1.5ex} \text{\small (where $\notterminates{\alert{f}}$)}}
    \UnaryInf$ \underbrace{(a + b)^*}_{\smash{\chocolate{e}}}   \fCenter   \underbrace{((a \cdot a^* + b) \cdot b^*)^*}_{\smash{\alert{f^*}} } \cdot \underbrace{1}_{\smash{\forestgreen{g}}} $
    \DisplayProof}
  }}
  \end{equation}%
  We want to mimic this instance of $\RSPstar$ by an instance of $\LCoindProofi{1}$ that uses a \LLEEwitnessed\ coinductive proof of 
  $\chocolate{e} \formeq \alert{f^*} \cdot \forestgreen{g}$ over $\milnersysmin$ plus the premise of the \RSPstar\ instance \eqref{eq:ex:1:RSPstar:2:LLEEcoindproof}.
  We first obtain the \onechart\ interpretation $\onechartof{\alert{\stexpit{\bstexp}}}$ of $\alert{\stexpit{\bstexp}}$
  according to Definition~\ref{def:onechartof}, see Figure~\ref{fig:ex:1:RSPstar:0},
  together with its \LLEEwitness~$\onecharthatof{\alert{\stexpit{\bstexp}}}\,$
    that is guaranteed by Theorem~\ref{thm:onechart-int:LLEEw}.
  % %
  \begin{figure}[tb]
\begin{center}
\begin{tikzpicture}
  \renewcommand{\prod}{\,{\cdot}\,}
  \renewcommand{\stackprod}{\,{\chocolate{\varstar}}\,}

\matrix[anchor=center,row sep=1cm,column sep=1.75cm,
        every node/.style={draw=none}
        % every node/.style={draw,very thick,circle,minimum width=2.5pt,fill,inner sep=0pt,outer sep=2pt}
        ] {
    & \node(v11){}; & & & & \node(v21){};
    \\
    \node(v1acc){}; & & \node(v1){}; & & \node(v2acc){}; & & \node(v2){};
    \\
    & & & \node(v){}; 
    \\ 
  };

% (v11) 
\path (v11) ++ (0cm,0cm) node{$ ((1 \stackprod a^*) \prod b^*) \stackprod \alert{f^*} $}; 
\draw[->,thick,densely dotted,out=0,in=60] ($(v11) + (1.4cm,0cm)$) to ($(v1) + (0.8cm,0.3cm)$);

% (v1acc) 
\path (v1acc) ++ (0cm,0cm) node{$ ((1 \prod a^*) \prod b^*) \stackprod \alert{f^*} $};
\draw[->,shorten <=0.35cm,shorten >=0.35cm] 
  (v1acc) to node[below,pos=0.525]{\small \black{$\aact$}} (v11);
\draw[->,shorten <=1.25cm,shorten >=1.2cm] (v1acc) to node[below,pos=0.2]{\small $\bact$} (v21);
\draw[-,thick,densely dotted,out=270,in=180,shorten <= 0.15cm,distance=1.15cm] (v1acc) to ($(v) + (-1.4cm,0cm)$);
 
% (v1) 
\path (v1) ++ (0cm,0cm) node{$ (a^* \prod b^*) \stackprod \alert{f^*} $};
\draw[->,thick,densely dotted,out=270,in=180,shorten <= 0.15cm] (v1) to ($(v) + (-1.4cm,0cm)$);
\draw[->,thick,darkcyan,shorten <=0.35cm,shorten >=0.4cm] 
  (v1) to node[below,pos=0.65]{\small \black{$\aact$}} node[above,pos=0.4]{\small $\loopsteplab{1}$} (v11);
\draw[->,shorten <=1.25cm,shorten >=1cm] (v1) to node[below,pos=0.45]{\small $\bact$} (v21);

% (v21) 
\path (v21) ++ (0cm,0cm) node{$ (1 \stackprod b^*) \stackprod \alert{f^*} $}; 
\draw[->,thick,densely dotted,out=0,in=90,shorten >=0.1cm] ($(v21) + (1.1cm,0cm)$) to (v2);

% (v2acc) 
\path (v2acc) ++ (0cm,0cm) node{$ (1 \prod b^*) \stackprod \alert{f^*} $};
\draw[->,shorten <=0.35cm,shorten >=0.35cm] 
  (v2acc) to node[below,pos=0.525]{\small \black{$\bact$}} (v21); 
\draw[-,thick,densely dotted,out=270,in=0,shorten <= 0.25cm] (v2acc) to ($(v) + (1.4cm,0cm)$);  
 
% (v2) 
\path (v2) ++ (0cm,0cm) node{$ b^* \stackprod \alert{f^*} $};
\draw[->,thick,densely dotted,out=225,in=0,shorten <= 0.25cm] (v2) to ($(v) + (1.4cm,0cm)$); 
\draw[->,thick,darkcyan,shorten <=0.3cm,shorten >=0.3cm] 
  (v2) to node[below,pos=0.525]{\small \black{$\bact$}} node[above,pos=0.4]{\small $\loopsteplab{1}$} (v21);

% (v)
\draw[thick,chocolate,double] (v) ellipse (1.35cm and 0.31cm);
%\path (v) ++ (0cm,0cm) node{$ ((a \prod a^* + b))^* $}; 
\path (v) ++ (0cm,-0.3cm) node{$ \underbrace{{((a \prod a^* + b) \prod b^*)^*}}_{\alert{f^*}} $}; 
\draw[->,thick,darkcyan,shorten <=1cm,shorten >=1.2cm] 
  (v) to node[below,pos=0.575]{\small \black{$\aact$}} node[above,pos=0.55]{\small $\loopsteplab{2}$} (v1acc);
\draw[->,thick,darkcyan,shorten <=0.35cm,shorten >=0.35cm] 
  (v) to node[below,pos=0.55]{\small \black{$\bact$}} node[above,pos=0.425]{\small $\loopsteplab{2}$} (v2acc);
    
\end{tikzpicture}
\end{center}
  \caption{\label{fig:ex:1:RSPstar:0}%
           The \onechart\ interpretation $\onechartof{\alert{f^*}}$ for $\alert{f^*}$ as 
           in \eqref{eq:ex:1:RSPstar:2:LLEEcoindproof} in Example~\ref{ex:1:RSPstar:2:LLEEcoindproof}. 
           }

\end{figure}
  
  Due to Lemma~\ref{lem:onechart-int:milnersysmin:solvable} %\enlargethispage{20ex} 
    the iterated partial \onederivatives\ as depicted define a \provablein{\milnersysmin} solution of $\onechartof{\alert{\stexpit{\bstexp}}}$
    when stacked products $\sstexpstackprod$ are replaced by products $\sstexpprod\,$. 
  From this \LLEEwitness\ that carries a \provablein{\milnersys} solution we now obtain a \LLEEwitnessed\ coinductive proof 
    of $ \alert{f} \prod \chocolate{e} + \forestgreen{g} \formeq \alert{f^*} \prod \forestgreen{g}$
    under the assumption of $\chocolate{e} \formeq \alert{f} \prod \chocolate{e} + \forestgreen{g}$, as follows.
      By replacing parts $({\ldots}) \stackprod \alert{f^*}$ by $\proj{({\ldots})} \prod \chocolate{e}$   
    in the \provablein{\milnersys} solution of $\onechartof{\alert{f^*}}$,
  and respectively, by replacing $({\ldots}) \stackprod \alert{f^*}$ by $(\proj{({\ldots})} \prod \alert{f^*}) \prod \forestgreen{g}$  
  we obtain the left- and the right-hand sides of the formal equations in the cyclic derivation in Figure~\ref{fig:ex-1-RSPstar-4}.
  \begin{figure}[tb]
  \begin{center}
\begin{tikzpicture}
 
\matrix[anchor=center,row sep=1.75cm,column sep=1.35cm,
        every node/.style={draw=none}
        % every node/.style={draw,very thick,circle,minimum width=2.5pt,fill,inner sep=0pt,outer sep=2pt}
        ] {
    & \node(v11){}; & & & & \node(v21){};
    \\ 
    \node(v1acc){}; & & \node(v1){}; & & \node(v2acc){}; & & \node(v2){};
    \\
    & & & \node(v){}; 
    \\
  };

% (v11) 
\path (v11) ++ (0cm,0cm) node{$ ((1 \prod a^*) \prod b^*) \prod \chocolate{e}
                                  \formeq
                                (((1 \prod a^*) \prod b^*) \prod \alert{f^*}) \prod \forestgreen{g} $}; 
\draw[->,thick,densely dotted,out=-45,in=90] ($(v11) + (1cm,-0.35cm)$) to ($(v1) + (0.5cm,0.55cm)$);

% (v1acc) 
% \path (v1acc) ++ (0cm,0cm) node{\parbox{\widthof{$ \hspace*{3ex} = (((1 \prod a^*) \prod b^*) \prod \alert{f^*}) \prod \forestgreen{g} $}}
%                                        {$ (1 \prod a^*) \prod b^*) \prod \chocolate{e} $
%                                         \\[-0.25ex]%
%                                         $ \hspace*{3ex} = (((1 \prod a^*) \prod b^*) \prod \alert{f^*}) \prod \forestgreen{g} $}};
\path (v1acc) ++ (0cm,-0.3cm) node{$ ((1 \prod a^*) \prod b^*) \prod \chocolate{e} 
                                    \formeq
                                   (((1 \prod a^*) \prod b^*) \prod \alert{f^*}) \prod \forestgreen{g} $};
\draw[->,shorten <=0cm,shorten >=0.3cm] 
  (v1acc) to node[above,pos=0.35]{\small \black{$\aact$}} (v11);
\draw[->,shorten <=0cm,shorten >=1cm,out=35,in=195] (v1acc) to node[above,pos=0.15]{\small $\bact$} (v21);
\draw[->,thick,densely dotted,out=270,in=120,shorten <= 0.25cm,shorten >= 0.325cm] ($(v1acc) + (-0.475cm,-0.3cm)$) to ($(v) + (-2.25cm,0cm)$);

% (v1) 
% \path (v1) ++ (0cm,0cm) node{\parbox{\widthof{$ \hspace*{3ex} = ((a^* \prod b^*) \prod \alert{f^*}) \prod \forestgreen{g} $}}
%                                     {$ (a^* \prod b^*) \prod \chocolate{e} $
%                                      \\[-0.25ex]%
%                                      $ \hspace*{3ex} = ((a^* \prod b^*) \prod \alert{f^*}) \prod \forestgreen{g} $}};
\path (v1) ++ (0cm,0.3cm) node{$ (a^* \prod b^*) \prod \chocolate{e} 
                                 \formeq
                               ((a^* \prod b^*) \prod \alert{f^*}) \prod \forestgreen{g} $};
\draw[->,thick,densely dotted,out=270,in=90,shorten >= 0.2cm] ($(v1) + (0cm,0cm)$) to (v); %($(v) + (-2.3cm,0cm)$);
\draw[->,thick,darkcyan,shorten <=0.5cm,shorten >=0.4cm] 
  (v1) to node[below,pos=0.45,xshift=-0.05cm]{\small \black{$\aact$}} node[above,pos=0.25,xshift=0.05cm]{\small $\loopsteplab{1}$} (v11);
\draw[->,shorten <=1.25cm,shorten >=1cm] (v1) to node[below,pos=0.45]{\small $\bact$} (v21);

% (v21) 
\path (v21) ++ (0cm,0cm) node{$ (1 \prod b^*) \prod \chocolate{e}
                                  \formeq
                                ((1 \prod b^*) \prod \alert{f^*}) \prod \forestgreen{g} $}; 
\draw[->,thick,densely dotted,out=315,in=90,shorten <=0.5cm,shorten >=0.5cm] ($(v21) + (0.75cm,0cm)$) to (v2); 
 
% (v2acc) 
\path (v2acc) ++ (0cm,-0.3cm) node{$ (1 \prod b^*) \prod \chocolate{e} 
                                    \formeq
                                  ((1 \prod b^*) \prod \alert{f^*}) \prod \forestgreen{g} $};
\draw[->,shorten <=0cm,shorten >=0.35cm] 
  (v2acc) to node[below,pos=0.525]{\small \black{$\bact$}} (v21);  
\draw[->,thick,densely dotted,out=290,in=70,shorten <= 0.2cm,shorten >= 0.2cm,distance=0.5cm] ($(v2acc) + (0cm,-0.3cm)$) to ($(v) + (1cm,0.15cm)$); 
 
% (v2) 
\path (v2) ++ (0cm,0.3cm) node{$ b^* \prod \chocolate{e}
                                    \formeq
                                  (b^* \prod \alert{f^*}) \prod \forestgreen{g} $};
\draw[->,thick,densely dotted,out=265,in=30,shorten <= 0cm] (v2) to ($(v) + (4.4cm,0.15cm)$); 
\draw[->,thick,darkcyan,shorten <=0.5cm,shorten >=0.3cm] 
  (v2) to node[below,pos=0.625]{\small \black{$\bact$}} node[above,pos=0.425,xshift=0.05cm]{\small $\loopsteplab{1}$} (v21);

(v)
%\path (v) ++ (0cm,0cm) node{$ (a \prod a^* + b)^* \prod 1 $}; 
\path (v) ++ (0cm,-1cm) node{$ \underbrace{
                                   \underbrace{{((a \prod a^* + b) \prod b^*)}\rule[-10pt]{0pt}{11.5pt}}_{\alert{f}} \prod \underbrace{ (a + b)^* \rule[-10pt]{0pt}{11.5pt}}_{\chocolate{e}} + \underbrace{1\rule[-10pt]{0pt}{11.5pt}}_{\forestgreen{g}}
                                             }_{\sformeq\: \chocolate{e} \;\; \text{\small (by rule assumption)}} 
                                   \hspace*{-1.5ex} \formeq
                                 \underbrace{{(((a \prod a^* + b) \prod b^*)^* \rule[-10pt]{0pt}{11.5pt}}}_{\alert{f^*}} \prod \hspace*{-2ex} \underbrace{ 1 \rule[-10pt]{0pt}{11.5pt} }_{\forestgreen{g}} $}; 
% \path (v) ++ (0cm,-0.5cm) node{$ \underbrace{(a + b)^*\rule[-5pt]{0pt}{7pt}}
%                                              _{\,=\, \chocolate{e}
%                                                \,\eqin{\thplus{\milnersysmin}{\setexp{\chocolate{\astexp}\,   \,\formeq\,   \stexpsum{\stexpprod{\alert{\bstexp}}{\,\chocolate{\astexp}}}{\,\forestgreen{\cstexp}}}}}
%                                                   \stexpsum{\stexpprod{\alert{\bstexp}}{\chocolate{\astexp}}}{\forestgreen{\cstexp}}}
%                                    \formeq
%                                  \underbrace{{\rule[-5pt]{0pt}{7pt}(a \prod a^* + b)^*}}_{\alert{f^*}} \prod \underbrace{ 1 \rule[-5pt]{0pt}{7pt}}_{\forestgreen{g}} $};

\draw[->,thick,darkcyan,shorten <=0.75cm,shorten >=1.5cm] 
  (v) to node[below,pos=0.4]{\small \black{$\aact$}} node[above,pos=0.375]{\small $\loopsteplab{2}$} (v1acc);
\draw[->,thick,darkcyan,shorten <=0.25cm,shorten >=0.6cm] 
  (v) to node[below,pos=0.425]{\small \black{$\bact$}} node[above,pos=0.275]{\small $\loopsteplab{2}$} (v2acc);

%\draw[thick,chocolate,double] ($(v) + (0.375cm,-0.1cm)$) ellipse (3.1cm and 0.35cm); 
\draw[thick,chocolate,double] ($(v) + (-0.15cm,-0.2cm)$) ellipse (5.35cm and 0.385cm);

% (root)
% 
% 
% 
% 
% \draw[->,very thick,>=latex,chocolate] ($(root) + (-2.2cm,0cm)$) to ($(root) + (-1.8cm,0cm)$);
% \draw[->,thick,darkcyan] (root) to node[above,pos=0.45]{\small \black{$\aact$}} node[below,pos=0.3,xshift=0.1cm]{\small $\loopnsteplab{1}$} (1);
% \draw[->,thick,darkcyan] (root) to node[above,pos=0.45]{\small \black{$\bact$}} node[below,pos=0.3,xshift=-0.1cm]{\small $\loopnsteplab{1}$} (2);
% 
% 
% % (1)
% \draw[->] (1) to node[above]{\small $\aact,\bact$} (2);
% 
% % (2)
% \draw[->,thick,densely dotted,out=0,in=0] (2) to (root);

\end{tikzpicture}
\end{center}
  \vspace*{-1.5ex}
  \caption{\label{fig:ex-1-RSPstar-4}%
           \protect\LLEEwitnessed\ coinductive proof %$\aLLEECoProof$ 
           of $ \alert{f} \cdot \chocolate{e} + \forestgreen{g} = \alert{f^*} \prod \forestgreen{g}$
           over $\thplus{\milnersysmin}{\setexp{\chocolate{\astexp}   \formeq   \stexpsum{\stexpprod{\alert{\bstexp}}{\chocolate{\astexp}}}{\forestgreen{\cstexp}}}}\,$.
            }
\end{figure}%
  That derivation is a \LLEEwitnessed\ coinductive proof $\aLLEECoProof$ of $ \alert{f} \cdot \chocolate{e} + \forestgreen{g} = \alert{f^*} \prod \forestgreen{g}$
    over $\thplus{\milnersysmin}{\setexp{\chocolate{\astexp}   \formeq   \stexpsum{\stexpprod{\alert{\bstexp}}{\chocolate{\astexp}}}{\forestgreen{\cstexp}}}}\,$:
  The right-hand sides form a \provablein{\milnersys} solution of $\chartof{\alert{f^*}\cdot\forestgreen{g}}$
  due to Lemma~\ref{lem:onechart-int:milnersysmin:solvable}
  (note that $\chartof{\alert{f^*}\cdot\forestgreen{g}}$ is isomorphic to $\chartof{\alert{f^*}}$ due to $\forestgreen{g} \synteq 1$).
  The left-hand sides also form a solution of $\chartof{\alert{f^*}\cdot\forestgreen{g}}$
  (see Lemma~\ref{lem:lem:mimic:RSPstar} below),
  noting that for the \onetransitions\ back to the conclusion the assumption $\chocolate{e} \formeq \alert{f} \prod \chocolate{e} + \forestgreen{g}$
  must be used in addition to $\milnersysmin$.
  By using this assumption again, the result $\aLLEECoProofacc$ of replacing
    $\alert{f} \prod \chocolate{e} + \forestgreen{g}$ in the conclusion of $\aLLEECoProof$
  by $\chocolate{e}$ is also a \LLEEwitnessed\ coinductive proof 
    over $\thplus{\milnersysmin}{\setexp{\chocolate{\astexp}   \formeq   \stexpsum{\stexpprod{\alert{\bstexp}}{\chocolate{\astexp}}}{\forestgreen{\cstexp}}}}$.
  Consequently:
  \begin{equation}
    \begin{aligned}
      \AxiomC{$ \chocolate{\astexp}   \formeq   \stexpsum{\stexpprod{\alert{\bstexp}}{\chocolate{\astexp}}}{\forestgreen{\cstexp}} $}
      \AxiomC{$ \aLLEECoProofoverof{\thplus{\milnersysmin}{\setexp{\chocolate{\astexp}   \formeq   \stexpsum{\stexpprod{\alert{\bstexp}}{\chocolate{\astexp}}}{\forestgreen{\cstexp}}}}}
                                   {\chocolate{\astexp}   \formeq   \stexpprod{\stexpit{\alert{\bstexp}}}{\forestgreen{\cstexp}}} $}
      \insertBetweenHyps{\hspace*{1ex}}                             
      \RightLabel{\LCoindProofi{1}}
      \BinaryInfC{$ \chocolate{\astexp}   \formeq   \stexpprod{\alert{\stexpit{\bstexp}}}{\forestgreen{\cstexp}} $}           
      \DisplayProof
    \end{aligned}
  \end{equation}
  is a rule instance of $\coindmilnersys$ and $\CLC$ by which we have mimicked the $\RSPstar$ instance in \eqref{eq:ex:1:RSPstar:2:LLEEcoindproof}.
\end{exa}
    \afterpage{\FloatBarrier}%

By examining the steps that we used in the example above,
  we find that three main auxiliary statements were used
    for the construction of a \LLEEwitnessed\ coinductive proof
      that mimics an instance of the fixed-point rule \RSPstar\ 
        by an instance of $\LCoindProofi{1}$.
In relation to an instance of \RSPstar\ of the generic form as in Definition~\ref{def:milnersys},
  these are the statements that, for all star expressions $\astexp$, $\bstexp$, and $\cstexp$, it holds: 
\begin{enumerate}[label={(\alph{*})},leftmargin=*,align=right,itemsep=0.25ex]
  \item{}\label{it:thm:onechart-int:LLEEw}
    The \onechart\ interpretation $\onechartof{\astexp}$ of a star expressions $\astexp$ is a guarded \LLEEonechart. %,
      % as guaranteed by Theorem~\ref{thm:onechart-int:LLEEw}.
  \item{}\label{it:lem:onechart-int:milnersysmin:solvable}
    $e$ is the principal value of a \provablein{\milnersysmin} solution of the \onechart\ interpretation $\onechartof{\astexp}$~of~$\astexp$.
      % see Lemma~\ref{lem:onechart-int:milnersysmin:solvable}.
  \item{}\label{it:lem:lem:mimic:RSPstar}
    $e$ is the principal value of a \provablein{(\thplus{\milnersysmin}{\setexp{ e = f \prod e +  g}})} solution
      of the \onechart\ interpretation $\onechartof{f^* \prod g}$ of $f^* \prod g$. 
\end{enumerate}
While \ref{it:thm:onechart-int:LLEEw} is guaranteed by Theorem~\ref{thm:onechart-int:LLEEw},
  and \ref{it:lem:onechart-int:milnersysmin:solvable} by Lemma~\ref{lem:onechart-int:milnersysmin:solvable},
we are now going to justify the central statement \ref{it:lem:lem:mimic:RSPstar} by proving the following lemma.

\begin{lem}\label{lem:lem:mimic:RSPstar}
  Let $\astexp,\bstexp,\cstexp\in\StExpover{\actions}$ with $\alert{\notterminates{\bstexp}}$,
  and let $\aseteqs \defdby \setexp{ \astexp = \stexpsum{\stexpprod{\bstexp}{\astexp}}{\cstexp} }$. 
  Then %both of the star expressions $\stexpsum{\stexpprod{\bstexp}{\astexp}}{\cstexp}$ and 
       $\astexp$ is the principal value of a
  \provablein{(\thplus{\milnersysmin}{\aseteqs})} solution of
    the \onechart\ interpretation $\onechartof{\stexpprod{\stexpit{\bstexp}}{\cstexp}}$~of~$\stexpit{\bstexp}{\prod}{\cstexp}$.
\end{lem}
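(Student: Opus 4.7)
The plan is to construct an explicit \provablein{(\thplus{\milnersysmin}{\aseteqs})} solution $\sasol$ of $\onechartof{\stexpprod{\stexpit{\bstexp}}{\cstexp}}$ with principal value $\astexp$, following the pattern suggested by Example~\ref{ex:1:RSPstar:2:LLEEcoindproof}. The first observation is that every vertex $v$ of $\onechartof{\stexpprod{\stexpit{\bstexp}}{\cstexp}}$ falls into exactly one of three syntactic classes, reflecting the transition structure laid down by the TSS of Definition~\ref{def:onechartof}: (a) $v = \stexpprod{\stexpit{\bstexp}}{\cstexp}$, the start vertex; (b) $v = \stexpprod{(\stexpstackprod{\asstexp}{\stexpit{\bstexp}})}{\cstexp}$ for some nonempty stacked expression $\asstexp$ (reached by first entering the iteration and then making internal body steps, with the outer $\stexpstackprod \stexpit{\bstexp}$ invariant under the rules for $\stexpstackprod$); (c) a proper action-derivative of $\cstexp$ and its descendants (reached by first taking a body transition from the start using $\terminates{(\stexpit{\bstexp})}$). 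I would then define
\[
  \asol(v) \;\defdby\; \begin{cases}
    \astexp & \text{if $v = \stexpprod{\stexpit{\bstexp}}{\cstexp}$,}\\
    \stexpprod{\proj{\asstexp}}{\astexp} & \text{if $v = \stexpprod{(\stexpstackprod{\asstexp}{\stexpit{\bstexp}})}{\cstexp}$,}\\
    \proj{v} & \text{if $v$ is a derivative of $\cstexp$.}
  \end{cases}
\]
so that the principal value is $\astexp$ by construction.

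Correctness at the start vertex (case (a)) is where the assumption $\aseteqs = \setexp{\astexp \formeq \stexpsum{\stexpprod{\bstexp}{\astexp}}{\cstexp}}$ is used. The transitions from $\stexpprod{\stexpit{\bstexp}}{\cstexp}$ are of two kinds: $\stexpprod{\stexpit{\bstexp}}{\cstexp} \lt{\aacti{i}} \stexpprod{(\stexpstackprod{\astexpacci{i}}{\stexpit{\bstexp}})}{\cstexp}$ for each derivative $\bstexp \lt{\aacti{i}} \astexpacci{i}$ (iteration entries, appearing as such in the TSS because we are iterating), and $\stexpprod{\stexpit{\bstexp}}{\cstexp} \lt{\bacti{j}} \cstexpacci{j}$ for each $\cstexp \lt{\bacti{j}} \cstexpacci{j}$ (body transitions, using $\terminates{(\stexpit{\bstexp})}$). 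The required equality reduces, after Lemma~\ref{lem:FT:onechart-int} applied separately to $\bstexp$ (which yields $\bstexp \milnersysmineq \sum_i \stexpprod{\aacti{i}}{\proj{\astexpacci{i}}}$ since $\terminatesconstof{\onechartof{\bstexp}}{\bstexp} \synteq \stexpzero$ by $\notterminates{\bstexp}$) and to $\cstexp$ (which yields $\cstexp \milnersysmineq \terminatesconstof{\onechartof{\cstexp}}{\cstexp} + \sum_j \stexpprod{\bacti{j}}{\proj{\cstexpacci{j}}}$), to the assumption $\astexp \milnersysmineq \stexpsum{\stexpprod{\bstexp}{\astexp}}{\cstexp}$ after one application of right-distributivity $\rdistr$ and of $\assocstexpprod$.

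Correctness at vertices of class (b) is the main technical step and relies on Lemma~\ref{lem:FT:onechart-int} alone (no use of $\aseteqs$). At $v = \stexpprod{(\stexpstackprod{\asstexp}{\stexpit{\bstexp}})}{\cstexp}$ the transitions are $v \lt{\aoneact} \stexpprod{(\stexpstackprod{\asstexpacc}{\stexpit{\bstexp}})}{\cstexp}$ for each $\asstexp \lt{\aoneact} \asstexpacc$, together with the backlink $v \lt{\sone} \stexpprod{\stexpit{\bstexp}}{\cstexp}$ whenever $\terminates{\asstexp}$; the vertex $v$ itself never has immediate termination. The proof obligation is
\[
  \stexpprod{\proj{\asstexp}}{\astexp}
    \eqin{\milnersysmin}
  \sum_{\asstexp \lt{\aoneact} \asstexpacc} \stexpprod{\aoneact}{(\stexpprod{\proj{\asstexpacc}}{\astexp})}
    \;+\; \terminatesconstof{\asstexp} \stexpprod{}{(\stexpprod{\sone}{\astexp})}\,,
\]
and this follows by applying Lemma~\ref{lem:FT:onechart-int} inside $\onechartof{\asstexp}$ to rewrite $\proj{\asstexp}$, multiplying the result on the right by $\astexp$ using $\rdistr$ and $\assocstexpprod$, and finally using $\leftidstexpprod$ to absorb $\stexpprod{\sone}{\astexp} \milnersysmineq \astexp$—which exactly matches the value of $\sasol$ at the backlink target $\stexpprod{\stexpit{\bstexp}}{\cstexp}$ (class (a)). Correctness at vertices of class (c) is direct: it is Lemma~\ref{lem:FT:onechart-int} applied to the vertex itself.

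The main obstacle is finding the right definition of $\sasol$ and, in particular, seeing that the backlink $\sone$-transition to the start vertex (where $\sasol = \astexp$) is correctly absorbed by the $\stexpprod{\sone}{\astexp} \milnersysmineq \astexp$ simplification that arises from the $\terminatesconstof{\asstexp} = \stexpone$ contribution in Lemma~\ref{lem:FT:onechart-int}; without this matching, one would instead be forced to prove $\stexpprod{\stexpit{\bstexp}}{\astexp} \formeq \astexp$, which is not available in $\thplus{\milnersysmin}{\aseteqs}$ (precisely because such a statement would require $\RSPstar$, the rule we are trying to eliminate). Once this alignment is in place, the verification is a case analysis in which only $\milnersysmin$ is used inside the iteration and the single assumption in $\aseteqs$ is used exactly once, at the start vertex.
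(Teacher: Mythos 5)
Your proposal is correct and follows essentially the same route as the paper's proof: the same three-way classification of the vertices of $\onechartof{\stexpprod{\stexpit{\bstexp}}{\cstexp}}$, the same definition of the solution function (sending $\stexpprod{(\stexpstackprod{\asstexp}{\stexpit{\bstexp}})}{\cstexp}$ to $\stexpprod{\proj{\asstexp}}{\astexp}$ and derivatives of $\cstexp$ to their projections), and the same case analysis in which Lemma~\ref{lem:FT:onechart-int} carries the internal vertices while the single equation in $\aseteqs$ is used only at the start vertex. You also correctly isolate the key point that the $\sone$-backlink target carries the value $\astexp$, so the $\stexpone$-contribution from $\terminates{\asstexp}$ matches the correctness condition without any appeal to $\RSPstar$.
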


\begin{proof}
  First, it can be verified that the vertices of $\onechartof{f^* \cdot g}$ are of either of three forms:
  \begin{equation}\label{eq:1:lem:lem:mimic:RSPstar}
    \vertsof{\onechartof{f^* \cdot g}}
      =
    \setexp{ f^* \cdot g }
      \cup
    \descsetexpbig{ (F \stackprod f^*) \cdot g }{ F \in \itpartonederivs{f} }
      \cup
    \descsetexp{ G }{ G \in \itpartonederivs{g} } \punc{,}  
  \end{equation}
  where $\itpartonederivs{f}$ means the set of iterated \onederivatives\ of $f$ according to the TSS in Def.~\ref{def:onechartof}.
  
  This facilitates to define a function $\sasol \funin \vertsof{\onechartof{f^* \cdot g}} \to \StExpover{\actions}$
    on $\onechartof{f^* \cdot g}$ by:
  \begin{alignat*}{2}
    \asol{f^* \prod g}
      & {} \defdby
             e \punc{,}
    \\
    \asol{(F \stackprod f^*) \prod g}
      & {} \defdby
             \proj{F} \prod e \punc{,}
        & & \qquad \text{(for $F \in \itpartonederivs{f}$),}       
    \\
    \asol{G}
      & {} \defdby
             \proj{G}
        & & \qquad \text{(for $G \in \itpartonederivs{g}$),}   
  \end{alignat*}
  We will show that $\sasol$ is a \provablein{(\thplus{\milnersysmin}{\aseteqs})} solution of $\onechartof{f^* \cdot g}$.
  Instead of verifying the correctness conditions for $\sasol$ for list representations of transitions,
    we will argue more loosely with sums over action \onederivatives\ sets $\oneactderivs{H}$ of stacked star expressions $H$
    where such sums are only \welldefined\ up to \ACI. Due to $\ACI \subsystem \milnersysmin$ such an argumentation 
    is possible. Specifically we will demonstrate, for all $E\in\vertsof{\onechartof{f^* \cdot g}}$, 
    that $\sasol$ is a \provablein{(\thplus{\milnersysmin}{\aseteqs})} solution at $E$, that is, that it holds:
  \begin{equation}\label{eq:2:lem:lem:mimic:RSPstar}
    \asol{E}
      \eqin{\thplus{\milnersysmin}{\aseteqs}}
    \terminatesconstof{\onechartof{E}}{E}
      + \hspace*{-3.75ex}
    \sum_{\pair{\aoneact}{E'}\in\oneactderivs{E}} \hspace*{-3.75ex}
      \aoneact \prod \asol{E'} \punc{,}
        % \qquad \text{for all $E\in\vertsof{\onechartof{f^* \cdot g}}$,}
  \end{equation}
  where by the sum on the right-hand side we mean an arbitrary representative 
    of the \ACI\ equi\-va\-lence class of star expressions that is described by the sum expression of~this~form.  
  
  \smallskip
  
  For showing \eqref{eq:2:lem:lem:mimic:RSPstar}, 
    we distinguish the three cases of vertices $E\in\vertsof{\onechartof{f^* \prod g}}$  
  according to \eqref{eq:1:lem:lem:mimic:RSPstar},
    that is, $E \synteq f^* \prod g$,
             $E \synteq (F \stackprod f^*) \prod g$ for some $F\in\itpartonederivs{f}$,
             and $E \synteq G$ for~some~$G\in\itpartonederivs{g}$. 
  We will see that the assumption $\aseteqs$ will only be needed for the treatment of the first case.
  
  In the first case, we consider $E \synteq f^* \prod g$.
  We find by Lemma~\ref{lem:actonederivs} 
    (or by inspecting the TSS in Definition~\ref{def:onechartof}), and in view of \eqref{eq:1:lem:lem:mimic:RSPstar}:
  \begin{align}
    \oneactderivs{f^* \prod g}
      & {} =
    \descsetexp{ \pair{\aoneact}{(F \stackprod f^*) \prod g} }
               { \pair{\aoneact}{F} \in \oneactderivs{f} } 
       \cup
    \oneactderivs{g} \punc{,}
      \label{eq:3a:lem:lem:mimic:RSPstar}
    \displaybreak[0]\\
    \begin{split}
      \partonederivs{f^* \prod g}
        & {} =
      \descsetexp{ (F \stackprod f^*) \prod g }
                 { F \in \partonederivs{f} }
        \cup
      \partonederivs{g}
      \\
        & {} \subseteq
      \descsetexp{ (F \stackprod f^*) \prod g }
                 { F \in \itpartonederivs{f} }
        \cup
      \itpartonederivs{g}
         \subseteq 
           \vertsof{ \onechartof{f^* \prod g} } \punc{.}   
    \end{split}           
      \label{eq:3b:lem:lem:mimic:RSPstar}               
  \end{align}%
  Then \eqref{eq:3b:lem:lem:mimic:RSPstar} guarantees that $\sasol$ is defined for all partial \onederivatives\ of $E \synteq f^* \cdot g$. % occurring stacked star expressions.
    With this knowledge we can argue as follows:
  \begin{alignat*}{2}
    \asol{E}
      & \;\,\parbox[t]{\widthof{$\eqin{\thplus{\milnersysmin}{\aseteqs}}$}}{$\synteq$}\:
        \asol{f^* \prod g}
        \qquad\quad \text{(by $E \synteq f^* \prod g$)}
      \\  
      & \;\,\parbox[t]{\widthof{$\eqin{\thplus{\milnersysmin}{\aseteqs}}$}}{$\synteq$}\:
        e \phantom{ {} \cdot e + g} 
        \qquad\quad \text{(by the definition of $\sasol$)}
      \\
      & \;\,\parbox[t]{\widthof{$\eqin{\thplus{\milnersysmin}{\aseteqs}}$}}{$\eqin{\thplus{\milnersysmin}{\aseteqs}}$}\:
        f \prod e + g
        \qquad\quad \text{(since $\aseteqs = \setexp{ e = f \prod e + g }$)}  
      \displaybreak[0]\\ 
      & \;\,\parbox[t]{\widthof{$\eqin{\thplus{\milnersysmin}{\aseteqs}}$}}{$\milnersysmineq$}\:
        \Big(
          \terminatesconstof{\onechartof{f}}{f}
            + \hspace*{-3.75ex}
          \sum_{\pair{\aoneact}{F}\in\oneactderivs{f}} \hspace*{-3.5ex}
                 \aoneact \prod \proj{F}
             \Big) \prod e
        +
        \Big(
          \terminatesconstof{\onechartof{g}}{g}
            + \hspace*{-3.75ex}
          \sum_{\pair{\aoneact}{G}\in\oneactderivs{g}} \hspace*{-3.5ex}
                 \aoneact \prod \proj{G}
             \Big)   
        \\[-0.25ex]
        & \;\,\parbox[t]{\widthof{$\eqin{\thplus{\milnersysmin}{\aseteqs}}$\hspace*{3ex}}}{\mbox{}}\: 
          \text{(by using Lemma~\ref{lem:FT:onechart-int})}   
      \displaybreak[0]\\ 
      & \;\,\parbox[t]{\widthof{$\eqin{\thplus{\milnersysmin}{\aseteqs}}$}}{$\milnersysmineq$}\:
        \Big(
          \terminatesconstof{\onechartof{f}}{f}
            \prod e
            + \hspace*{-3.75ex}
          \sum_{\pair{\aoneact}{F}\in\oneactderivs{f}} \hspace*{-3.5ex}
                 \aoneact \prod (\proj{F} \prod e)
             \Big) 
        +
        \Big(
          \terminatesconstof{\onechartof{g}}{g}
            + \hspace*{-3.75ex}
          \sum_{\pair{\aoneact}{G}\in\oneactderivs{g}} \hspace*{-3.5ex}
                 \aoneact \prod \proj{G}
             \Big)   
        \\%[-0.25ex]
        & \;\,\parbox[t]{\widthof{$\eqin{\thplus{\milnersysmin}{\aseteqs}}$\hspace*{3ex}}}{\mbox{}}\: 
          \text{(by using ($\rdistr$) and ($\assocstexpprod$))} 
      \displaybreak[0]\\[0.5ex] 
      & \;\,\parbox[t]{\widthof{$\eqin{\thplus{\milnersysmin}{\aseteqs}}$}}{$\milnersysmineq$}\:
        \Big(
            \hspace*{-3.75ex}
          \sum_{\pair{\aoneact}{F}\in\oneactderivs{f}} \hspace*{-3.5ex}
                 \aoneact \prod ( \proj{F} \prod e )
             \Big) 
        + \Bigl(
            \terminatesconstof{\onechartof{f^* \prod g}}{f^* \prod g}
              + \hspace*{-3.75ex}
            \sum_{\pair{\aoneact}{G}\in\oneactderivs{g}} \hspace*{-3.5ex}
                   \aoneact \prod \proj{G} 
          \Bigr)   
        \\%[-0.25ex]
        & \;\,\parbox[t]{\widthof{$\eqin{\thplus{\milnersysmin}{\aseteqs}}$\hspace*{3ex}}}{\mbox{}}\: 
          \parbox{\widthof{(since $\terminatesconstof{\onechartof{f}}{f} \synteq 0$ due to $\alert{\notterminates{f}}$,
                            by using axiom ($\neutralstexpsum$)}}
            {(since $\terminatesconstof{\onechartof{f}}{f} \synteq 0$ due to $\alert{\notterminates{f}}$,
                  by using axiom ($\stexpzerostexpprod$),
             \\%[-0.25ex]\phantom{(}%
             and $\terminatesconstof{\onechartof{f^* \prod g}}{f^* \prod g} \synteq \terminatesconstof{\onechartof{g}}{g}$)}     
      \displaybreak[0]\\[0.5ex] 
      & \;\,\parbox[t]{\widthof{$\eqin{\thplus{\milnersysmin}{\aseteqs}}$}}{$\ACIeq$}\:
        \terminatesconstof{\onechartof{f^* \prod g}}{f^* \prod g}
           + 
        \Big(
            \hspace*{-3.75ex}
          \sum_{\pair{\aoneact}{F}\in\oneactderivs{f}} \hspace*{-3.5ex}
                 \aoneact \prod ( \asol{(F \stackprod f^*) \cdot g } )
             \Big) 
        +   \hspace*{-3.75ex}
          \sum_{\pair{\aoneact}{G}\in\oneactderivs{g}} \hspace*{-3.5ex}
                 \aoneact \prod \asol{G}    
        \\[-0.25ex]
        & \;\,\parbox[t]{\widthof{$\eqin{\thplus{\milnersysmin}{\aseteqs}}$\hspace*{3ex}}}{\mbox{}}\: 
          \text{(by definition of $\sasol$, axioms ($\commstexpsum$))}     
      \displaybreak[0]\\[0.5ex] 
      & \;\,\parbox[t]{\widthof{$\eqin{\thplus{\milnersysmin}{\aseteqs}}$}}{$\ACIeq$}\:
        \terminatesconstof{\onechartof{E}}{E}
           + 
            \hspace*{-5.75ex}
          \sum_{\pair{\aoneact}{E'}\in\oneactderivs{f^* \cdot g}=\oneactderivs{E}} \hspace*{-5.5ex}
                 \aoneact \prod \asol{E'}  
        \\%[-0.25ex]
        & \;\,\parbox[t]{\widthof{$\eqin{\thplus{\milnersysmin}{\aseteqs}}$\hspace*{3ex}}}{\mbox{}}\: 
          \text{(by $E \synteq f^* \prod g$ and \eqref{eq:3a:lem:lem:mimic:RSPstar}).}
  \end{alignat*}  
  Due to $\ACI \subsystem \milnersysmin \subsystem \thplus{\milnersysmin}{\aseteqs}$
    this chain of equalities is provable in $\thplus{\milnersysmin}{\aseteqs}$, 
    which verifies \eqref{eq:2:lem:lem:mimic:RSPstar} for $E$ as considered here,
  or in other words,    
    $\sasol$ is a \provablein{(\thplus{\milnersysmin}{\aseteqs})} solution of at $E$.
  
  \smallskip
  
  In the second case we consider $E \synteq (F \stackprod f^*) \prod g \in\vertsof{\onechartof{f^* \prod g}}$. 
  Then $F \in \itpartonederivs{f}$, and  
  $\terminatesconstof{\onechartof{E}}{E} 
     \synteq 
   \terminatesconstof{\onechartof{(F \stackprod f^*) \prod g}}{(F \stackprod f^*) \prod g}
     \synteq 0$ holds,
  because expressions with stacked product occurring do not have immediate termination by Definition~\ref{def:onechartof}.   
  We distinguish the subcases $\terminates{F}$~and~$\notterminates{F}$.
  
  For the first subcase we assume $\notterminates{F}$. 
  Then $\terminatesconstof{\onechartof{F}}{F} \synteq 0$ holds,
    and we find by Lemma~\ref{lem:actonederivs} (or by inspecting the TSS in Definition~\ref{def:onechartof}),
      by $F \in \itpartonederivs{f}$, and from \eqref{eq:1:lem:lem:mimic:RSPstar}:
  \begin{align}
    \oneactderivs{(F \stackprod f^*) \prod g}
      & {} =
    \descsetexp{ \pair{\aoneact}{(F' \stackprod f^*) \prod g} }
               { \pair{\aoneact}{F'} \in \oneactderivs{F} } \punc{,}
      \label{eq:2c:lem:lem:mimic:RSPstar}           
    \\
    \begin{split}
      \partonederivs{(F \stackprod f^*) \prod g}
        & {} =
      \descsetexp{ (F' \stackprod f^*) \prod g }
                 { F' \in \partonederivs{F} }
      \\
        & {} \subseteq
      \descsetexp{ (F' \stackprod f^*) \prod g }
                 { F' \in \itpartonederivs{f} }              
        \subseteq 
          \vertsof{ \onechartof{f^* \cdot g} } \punc{.} 
      \label{eq:2d:lem:lem:mimic:RSPstar}                
    \end{split}                          
  \end{align}%
  Due to \eqref{eq:2d:lem:lem:mimic:RSPstar}, $\sasol$ is defined for all partial \onederivatives\ of $E \synteq (F \stackprod f^*) \prod g$.
  % Due to \eqref{eq:2d:lem:lem:mimic:RSPstar}, $\sasol$ is defined for all occurring stacked star expressions. 
    We argue as follows:
  \begin{alignat*}{2}
    \asol{E}
      & \;\,\parbox[t]{\widthof{$\eqin{\milnersysmin}$}}{$\synteq$}\:
        \asol{(F \stackprod f^*) \prod g}
        & & \quad\text{(by $E \synteq (F \stackprod f^*) \prod g$)}
      \\  
      & \;\,\parbox[t]{\widthof{$\eqin{\milnersysmin}$}}{$\synteq$}\:
        \proj{F} \prod  e
        & & \quad\text{(by the definition of $\sasol$)} 
      \displaybreak[0]\\ 
      & \;\,\parbox[t]{\widthof{$\eqin{\milnersysmin}$}}{$\milnersysmineq$}\:
        \Big(
          \terminatesconstof{\onechartof{F}}{F}
            + \hspace*{-3.75ex}
          \sum_{\pair{\aoneact}{F'}\in\oneactderivs{F}} \hspace*{-3.5ex}
                 \aoneact \prod \proj{F'}
             \Big) \prod e
        & & \quad\text{(by using Lemma~\ref{lem:FT:onechart-int})} 
      \displaybreak[0]\\ 
      & \;\,\parbox[t]{\widthof{$\eqin{\milnersysmin}$}}{$\milnersysmineq$}\:
        0 \prod e 
          +
            \hspace*{-3.75ex}
          \sum_{\pair{\aoneact}{F'}\in\oneactderivs{F}} \hspace*{-3.5ex}
                 \aoneact \prod ( \proj{F'} \prod e )
        & & \quad\parbox{\widthof{(by $\terminatesconstof{\onechartof{F}}{F} \synteq 0$, due to $\notterminates{F}$, and}}%
                        {(by $\terminatesconstof{\onechartof{F}}{F} \synteq 0$, due to $\notterminates{F}$, and
                         \\[-0.5ex]\phantom{(}%
                         axioms ($\rdistr$), ($\assocstexpprod$))}
      \displaybreak[0]\\ 
      & \;\,\parbox[t]{\widthof{$\eqin{\milnersysmin}$}}{$\milnersysmineq$}\:
        0 +
            \hspace*{-3.75ex}
          \sum_{\pair{\aoneact}{F'}\in\oneactderivs{F}} \hspace*{-3.5ex}
                 \aoneact \prod \asol{ (F' \stackprod f^*) \prod g } 
        & & \quad\text{(by ax.\ ($\stexpzerostexpprod$) and def. of $\sasol$)} 
      \\ 
      & \;\,\parbox[t]{\widthof{$\eqin{\milnersysmin}$}}{$\ACIeq$}\:
        \terminatesconstof{\onechartof{(F \stackprod f^*) \prod g}}{ (F \stackprod f^*) \prod g }
          + 
            \hspace*{-5ex}
          \sum_{\pair{\aoneact}{E'}\in\oneactderivs{(F \stackprod f^*) \prod g}} \hspace*{-8ex}
                 \aoneact \prod \asol{ E' } 
        & & \quad\text{(due to \eqref{eq:2c:lem:lem:mimic:RSPstar}, and $\terminatesconstof{\onechartof{E}}{E} \synteq 0)$}    
      \displaybreak[0]\\[0.5ex] 
      & \;\,\parbox[t]{\widthof{$\eqin{\milnersysmin}$}}{$\synteq$}\:
        \terminatesconstof{\onechartof{E}}{E}
           + 
            \hspace*{-3.75ex}
          \sum_{\pair{\aoneact}{E'}\in\oneactderivs{E}} \hspace*{-3.5ex}
                 \aoneact \prod \asol{E'}  
        & & \quad \text{(by $E \synteq (F \stackprod f^*) \prod g$).}
  \end{alignat*}  
  For the second subcase we assume $\terminates{F}$.
  Then $F\in\StExpover{\actions}$ (that is, $F$ does not contain a stacked product symbol),
    and $\terminatesconstof{\onechartof{F}}{F} \synteq 1$ holds.
  Furthermore, we find, again by inspecting the TSS in Definition~\ref{def:onechartof},
    by $F \in \itpartonederivs{f}$, and in view of \eqref{eq:1:lem:lem:mimic:RSPstar}:
  \begin{align}
    \oneactderivs{(F \stackprod f^*) \prod g}
      & {} =
    \setexp{ \pair{\sone}{f^* \prod g} }
      \cup
    \descsetexp{ \pair{\aoneact}{(F' \stackprod f^*) \prod g} }
               { \pair{\aoneact}{F'} \in \oneactderivs{F} } \punc{.}
    \label{eq:4a:lem:lem:mimic:RSPstar}           
    \\
    \begin{split}
      \partonederivs{(F \stackprod f^*) \prod g}
       & {} =
         \setexp{ f^* \prod g }
           \cup
         \descsetexp{ (F' \stackprod f^*) \prod g }
                    { F' \in \partonederivs{F} } 
         \\
       & {} \subseteq
         \setexp{ f^* \prod g }
           \cup
         \descsetexp{ (F' \stackprod f^*) \prod g }
                    { F' \in \itpartonederivs{f} }
            \subseteq 
          \vertsof{ \onechartof{f^* \cdot g} } \punc{.} 
      \label{eq:4b:lem:lem:mimic:RSPstar}       
    \end{split}            
  \end{align}%
  Due to \eqref{eq:4b:lem:lem:mimic:RSPstar}, $\sasol$ is defined for all partial \onederivatives\ of $E \synteq (F \stackprod f^*) \prod g$
    also in this subcase. Then we can argue as follows:
  \begin{alignat*}{2}
    \asol{E}
      & \;\,\parbox[t]{\widthof{$\eqin{\milnersysmin}$}}{$\synteq$}\:
        \asol{(F \stackprod f^*) \prod e}
        & & \quad\text{(by $E \synteq (F \stackprod f^*) \prod g$)}
      \\  
      & \;\,\parbox[t]{\widthof{$\eqin{\milnersysmin}$}}{$\synteq$}\:
        \proj{F} \prod  e
        & & \quad\text{(by the definition of $\sasol$)}  
      \displaybreak[0]\\ 
      & \;\,\parbox[t]{\widthof{$\eqin{\milnersysmin}$}}{$\milnersysmineq$}\:
        \Big(
          \terminatesconstof{\onechartof{F}}{F}
            + \hspace*{-3.75ex}
          \sum_{\pair{\aoneact}{F'}\in\oneactderivs{F}} \hspace*{-3.5ex}
                 \aoneact \prod \proj{F'}
             \Big) \prod e
        & & \quad\text{(by using Lemma~\ref{lem:FT:onechart-int})}
      \displaybreak[0]\\ 
      & \;\,\parbox[t]{\widthof{$\eqin{\milnersysmin}$}}{$\milnersysmineq$}\:
        1 \prod e 
          +
            \hspace*{-3.75ex}
          \sum_{\pair{\aoneact}{F'}\in\oneactderivs{f}} \hspace*{-3.5ex}
                 \aoneact \prod ( \proj{F'} \prod e )
        & & \quad\parbox{\widthof{(axioms ($\rdistr$), ($\assocstexpprod$))}}%
                        {(by $\terminatesconstof{\onechartof{F}}{F} \synteq 1$, and
                         \\[0ex]\phantom{(}%
                         axioms ($\rdistr$), ($\assocstexpprod$))}
      \displaybreak[0]\\ 
      & \;\,\parbox[t]{\widthof{$\eqin{\milnersysmin}$}}{$\synteq$}\:
        1 \prod \asol{f^* \prod g} +
            \hspace*{-3.75ex}
          \sum_{\pair{\aoneact}{F'}\in\oneactderivs{f}} \hspace*{-3.5ex}
                 \aoneact \prod \asol{ (F' \stackprod f^*) \prod g } 
        & & \quad\text{(by the definition of $\sasol$)}
      \displaybreak[0]\\ 
      & \;\,\parbox[t]{\widthof{$\eqin{\milnersysmin}$}}{$\ACIeq$}\:
        0 +
            \hspace*{-3.75ex}
          \sum_{\pair{\aoneact}{E'}\in\oneactderivs{(F \stackprod f^*) \prod g}} \hspace*{-7.5ex}
                 \aoneact \prod \asol{ E' } 
        & & \quad \parbox{\widthof{(($\commstexpsum$), and ($\assocstexpsum$))}}
                         {(by \eqref{eq:4a:lem:lem:mimic:RSPstar}, using axioms
                          \\\phantom{(}
                          ($\commstexpsum$), and ($\assocstexpsum$))}
      \displaybreak[0]\\  
      & \;\,\parbox[t]{\widthof{$\eqin{\milnersysmin}$}}{$\synteq$}\:
        \terminatesconstof{\onechartof{E}}{E} 
          + \hspace*{-3.75ex}
        \sum_{\pair{\aoneact}{E'}\in\oneactderivs{E}} \hspace*{-3.5ex}
          \aoneact \prod \asol{E'} 
        & & \quad \text{(by $E \synteq (F \stackprod f^*) \prod g$,
                         and $\terminatesconstof{\onechartof{E}}{E} \synteq 0$).}
  \end{alignat*}  
  Due to $\ACI \subsystem \milnersysmin \subsystem \thplus{\milnersysmin}{\aseteqs}$
    the chains of equalities in both subcases are provable in $\thplus{\milnersysmin}{\aseteqs}$,
  and therefore we have now verified \eqref{eq:2:lem:lem:mimic:RSPstar} also in the (entire) second case,
    that is, that $\sasol$ is a \provablein{(\thplus{\milnersysmin}{\aseteqs})} solution of $\onechartof{f^* \prod g}$ at $E$ as in this case. 
  
  \smallskip
  
  In the final case, $E \synteq G$ with $G \in \itpartonederivs{g}$.
  Since then \onederivatives\ of $G$ are in $\itpartonederivs{g}$ as well, and hence by \eqref{eq:1:lem:lem:mimic:RSPstar} also in $\vertsof{ \onechartof{f^* \cdot g} }$,
    it follows that $\sasol$ is defined for all \onederivatives\ of $G$ and $E$. 
  With this knowledge we can argue as follows:
  \begin{alignat*}{2}
    \asol{E}
      & \;\,\parbox[t]{\widthof{$\eqin{\milnersysmin}$}}{$\synteq$}\:
        \asol{G}
        & & \qquad \text{(by $E \synteq G$)}
      \\   
      & \;\,\parbox[t]{\widthof{$\eqin{\milnersysmin}$}}{$\synteq$}\:
        \proj{G} 
        & & \qquad \text{(by the definition of $\sasol$)}
      \displaybreak[0]\\
      & \;\,\parbox[t]{\widthof{$\eqin{\milnersysmin}$}}{$\milnersysmineq$}\:
        \terminatesconstof{\onechartof{G}}{G}
          + \hspace*{-3.75ex}
        \sum_{\pair{\aoneact}{G'}\in\oneactderivs{G}} \hspace*{-3.5ex}
          \aoneact \prod \proj{G'}
        & & \qquad \text{(by using Lemma~\ref{lem:FT:onechart-int})}  
      \displaybreak[0]\\ 
      & \;\,\parbox[t]{\widthof{$\eqin{\milnersysmin}$}}{$\eqin{\ACI}$}\:
        \terminatesconstof{\onechartof{G}}{G} 
          + \hspace*{-3.75ex}
        \sum_{\pair{\aoneact}{G'}\in\oneactderivs{G}} \hspace*{-3.5ex}
          \aoneact \prod \asol{G'} 
        & & \qquad \text{(by the definition\ of $\sasol$)}  
      \displaybreak[0]\\ 
      & \;\,\parbox[t]{\widthof{$\eqin{\milnersysmin}$}}{$\eqin{\ACI}$}\:
        \terminatesconstof{\onechartof{E}}{E} 
          + \hspace*{-3.75ex}
        \sum_{\pair{\aoneact}{E'}\in\oneactderivs{E}} \hspace*{-3.5ex}
          \aoneact \prod \asol{E'} 
        & & \qquad \text{(by $E \synteq G$).}
  \end{alignat*}  
  Due to $\ACI \subsystem \milnersysmin \subsystem \thplus{\milnersysmin}{\aseteqs}$
    this chain of equalities verifies \eqref{eq:2:lem:lem:mimic:RSPstar} also in this case.
    
  By having established \eqref{eq:2:lem:lem:mimic:RSPstar} for the, according to \eqref{eq:1:lem:lem:mimic:RSPstar}, three possible forms of stacked star expressions
    that are vertices of $\onechartof{f^* \prod g}$, we have shown
    that $\sasol$ is indeed a \provablein{(\thplus{\milnersysmin}{\aseteqs})} solution of $\onechartof{f^* \prod g}$.
\end{proof}

After having proved statement~\ref{it:lem:lem:mimic:RSPstar},
  we can combine the statements~\ref{it:thm:onechart-int:LLEEw}, \ref{it:lem:onechart-int:milnersysmin:solvable}, and \ref{it:lem:lem:mimic:RSPstar}
    as above in order construct \LLEEwitnessed\ coinductive proofs
      with which instances of \RSPstar\ can be mimicked by instances of $\LCoindProofi{1}$. 
This leads us to Lemma~\ref{lem:mimic:RSPstar} below,
  and, as it can show derivability of $\RSPstar$ in $\coindmilnersysone$, to Lemma~\ref{lem:RSPstar:derivable:coindmilnersysone}. 
      
\begin{lem}\label{lem:mimic:RSPstar}
  Let $\astexp,\bstexp,\cstexp\in\StExpover{\actions}$ with $\notterminates{\bstexp}$,
  and let $\aseteqs \defdby \setexp{ \astexp   \formeq   \stexpsum{\stexpprod{\bstexp}{\astexp}}{\cstexp} }$. 
  Then it holds that
  $ \astexp \LLEEcoindproofeqin{(\thplus{\milnersysmin}{\aseteqs})} \stexpprod{\stexpit{\bstexp}}{\cstexp} $.
\end{lem}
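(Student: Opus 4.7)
The plan is to construct the required \LLEEwitnessed coinductive proof of $\astexp \formeq \stexpprod{\stexpit{\bstexp}}{\cstexp}$ over $\thplus{\milnersysmin}{\aseteqs}$ by assembling the three ingredients the author highlights just before the statement. As the underlying \onechart\ I would take $\aonechart \defdby \onechartof{\stexpprod{\stexpit{\bstexp}}{\cstexp}}$, which by Theorem~\ref{thm:onechart-int:LLEEw},~\ref{it:1:thm:onechart-int:LLEEw} carries the guarded \LLEEwitness\ $\aonecharthat \defdby \onecharthatof{\stexpprod{\stexpit{\bstexp}}{\cstexp}}$ and therefore qualifies as the guarded \LLEEonechart\ required by Definition~\ref{def:coindproof}. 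Its start vertex is the stacked star expression $\stexpprod{\stexpit{\bstexp}}{\cstexp}$ itself.

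Next I would extract the two \starexpression\ functions on $\vertsof{\aonechart}$ whose vertex-wise pairing supplies the equation labeling. Let $\saeqfuni{2}$ be the \provablein{\milnersysmin} solution of $\aonechart$ furnished by Lemma~\ref{lem:onechart-int:milnersysmin:solvable}; by construction $\aeqfuni{2}{\stexpprod{\stexpit{\bstexp}}{\cstexp}} \synteq \stexpprod{\stexpit{\bstexp}}{\cstexp}$. Let $\saeqfuni{1}$ be the \provablein{(\thplus{\milnersysmin}{\aseteqs})} solution of $\aonechart$ with principal value $\aeqfuni{1}{\stexpprod{\stexpit{\bstexp}}{\cstexp}} \synteq \astexp$ provided by Lemma~\ref{lem:lem:mimic:RSPstar}, whose hypothesis $\notterminates{\bstexp}$ is exactly the hypothesis of the present lemma. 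Since $\milnersysmin \subsystem \thplus{\milnersysmin}{\aseteqs}$, the function $\saeqfuni{2}$ is \emph{a fortiori} also an \provablein{(\thplus{\milnersysmin}{\aseteqs})} solution of $\aonechart$.

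Finally, taking $\saeqfun$ to be the vertex-wise pairing of $\saeqfuni{1}$ and $\saeqfuni{2}$, so that $\aeqfun{\avert}$ is the equation $\aeqfuni{1}{\avert} \formeq \aeqfuni{2}{\avert}$ at each $\avert \in \vertsof{\aonechart}$, the pair $\pair{\aonechart}{\saeqfun}$ meets every clause of Definition~\ref{def:coindproof} for a \LLEEwitnessed coinductive proof over $\thplus{\milnersysmin}{\aseteqs}$: $\aonechart$ is a guarded \LLEEonechart, the coordinate functions $\saeqfuni{1}$ and $\saeqfuni{2}$ are both \provablein{(\thplus{\milnersysmin}{\aseteqs})} solutions of $\aonechart$, and the equation labeling the start vertex reads $\astexp \formeq \stexpprod{\stexpit{\bstexp}}{\cstexp}$. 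This witnesses $\astexp \LLEEcoindproofeqin{(\thplus{\milnersysmin}{\aseteqs})} \stexpprod{\stexpit{\bstexp}}{\cstexp}$, as required. I expect no genuine obstacle here: the substantive technical effort has already been invested in Lemma~\ref{lem:lem:mimic:RSPstar}, where the hypothesis $\notterminates{\bstexp}$ and the delicate case analysis on partial $\sone$-derivatives of stacked star expressions are put to use; the present lemma is essentially the bookkeeping step that packages that result together with Theorem~\ref{thm:onechart-int:LLEEw} and Lemma~\ref{lem:onechart-int:milnersysmin:solvable} into a single coinductive proof in the sense of Definition~\ref{def:coindproof}.
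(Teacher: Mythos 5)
Your proposal is correct and follows essentially the same route as the paper's own proof: both invoke Lemma~\ref{lem:lem:mimic:RSPstar} for the left-hand-side solution with principal value $\astexp$, Lemma~\ref{lem:onechart-int:milnersysmin:solvable} for the right-hand-side solution with principal value $\stexpprod{\stexpit{\bstexp}}{\cstexp}$, and Theorem~\ref{thm:onechart-int:LLEEw} for the guarded \LLEEwitness\ of $\onechartof{\stexpprod{\stexpit{\bstexp}}{\cstexp}}$, then pair the two solutions vertex-wise into the required coinductive proof. Your explicit remark that the \provablein{\milnersysmin} solution is a fortiori an \provablein{(\thplus{\milnersysmin}{\aseteqs})} solution is left implicit in the paper but is exactly the right observation.
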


\begin{proof}
  First, there is a \provablein{\thplus{\milnersysmin}{\aseteqs}} solution 
    $\sasoli{1}$ of $\onechartof{\stexpprod{\stexpit{\bstexp}}{\cstexp}}$ with  
    $\asoli{1}{\stexpprod{\stexpit{\bstexp}}{\cstexp}} \synteq \astexp$,
      due to Lemma~\ref{lem:lem:mimic:RSPstar}.
  Second, there is a \provablein{\milnersysmin} solution $\sasoli{2}$ of $\onechartof{\stexpprod{\stexpit{\bstexp}}{\cstexp}}$
    with $\asoli{2}{\stexpprod{\stexpit{\bstexp}}{\cstexp}} \synteq \stexpprod{\stexpit{\bstexp}}{\cstexp}$,
    due to Lemma~\ref{lem:onechart-int:milnersysmin:solvable}.  
  Then $\pair{\onechartof{f^*{\prod}g}}{\saeqfun}$
    with $\aeqfun{\avert} \defdby \asoli{1}{\avert} \formeq \asoli{2}{\avert}$ 
        for all $\avert\in\vertsof{\onechartof{f^*{\prod}g}}$
    is a \LLEEwitnessed\ coinductive proof 
      of $\astexp \formeq f^*{\prod}g$
        over $\thplus{\milnersysmin}{\aseteqs}$,
        because $\onechartof{\stexpprod{\stexpit{\bstexp}}{\cstexp}}$ 
        has the guarded \LLEEwitness\ $\onecharthatof{\stexpprod{\stexpit{\bstexp}}{\cstexp}}$~%
        by~Theorem~\ref{thm:onechart-int:LLEEw}.
\end{proof}

We note that the equation in the set $\aseteqs$ in the assumption of Lemma~\ref{lem:mimic:RSPstar} 
  does not need to be sound semantically.
Therefore it was crucial for the formulation of this lemma that
  we did not require proof systems $\asys$ to be sound with respect to $\procsemeq$ 
    for the definition of coinductive proofs in Definition~\ref{def:coindproof}.
Indeed we have done so there in order to be able to formulate this lemma,
  which states that also instances of the fixed-point rule \RSPstar\
    with premises that are not semantically sound can be mimicked by appropriate coinductive proofs.

\begin{lem}\label{lem:RSPstar:derivable:coindmilnersysone}
  $\RSPstar$ is a derivable rule in $\coindmilnersys$. 
\end{lem}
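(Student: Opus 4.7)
The plan is to show that every instance of the fixed-point rule $\RSPstar$ can be mimicked by a single instance of the coinductive rule $\LCoindProofi{1}$, applied to the same premise and with the same conclusion. Since $\coindmilnersysone \subsystem \coindmilnersys$ (Lemma~\ref{lem:subsystem:rels:coindproofsystems}, \ref{it:2:lem:subsystem:rels:coindproofsystems}), establishing derivability in $\coindmilnersysone$ suffices, and indeed the argument will take place entirely inside $\coindmilnersysone$.

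Concretely, I would consider an arbitrary instance $\ainst$ of $\RSPstar$ with premise $\astexp \formeq \stexpsum{\stexpprod{\bstexp}{\astexp}}{\cstexp}$, conclusion $\astexp \formeq \stexpprod{\stexpit{\bstexp}}{\cstexp}$, and guardedness \sidecondition\ $\notterminates{\bstexp}$. Setting $\aseteqs \defdby \setexp{\astexp \formeq \stexpsum{\stexpprod{\bstexp}{\astexp}}{\cstexp}}$, Lemma~\ref{lem:mimic:RSPstar} yields precisely $\astexp \LLEEcoindproofeqin{(\thplus{\milnersysmin}{\aseteqs})} \stexpprod{\stexpit{\bstexp}}{\cstexp}$, which is the \sidecondition\ required by the rule $\LCoindProofi{1}$ of $\coindmilnersysone$ when its single premise is the premise of $\ainst$. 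The derivation in $\coindmilnersysone$ that mimics $\ainst$ is then just the one-step deduction of $\astexp \formeq \stexpprod{\stexpit{\bstexp}}{\cstexp}$ from the premise $\astexp \formeq \stexpsum{\stexpprod{\bstexp}{\astexp}}{\cstexp}$ by that instance of $\LCoindProofi{1}$. Since the set of assumptions of this derivation is contained in (in fact equals) the set of premises of $\ainst$ and its conclusion matches that of $\ainst$, by Definition~\ref{def:derivable:admissible:rules} this establishes that $\RSPstar$ is derivable in $\coindmilnersysone$, and hence also in $\coindmilnersys$.

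There is essentially no obstacle left at this stage: the hard work was done in Lemma~\ref{lem:mimic:RSPstar}, whose proof required the delicate construction of a \provablein{(\thplus{\milnersysmin}{\aseteqs})} solution on $\onechartof{\stexpprod{\stexpit{\bstexp}}{\cstexp}}$ together with the guarantee from Theorem~\ref{thm:onechart-int:LLEEw} that this \onechart\ carries a guarded \LLEEwitness. Once that lemma is in hand, derivability is an immediate repackaging: the $\RSPstar$-instance and the corresponding $\LCoindProofi{1}$-instance have matching premise and conclusion, and Lemma~\ref{lem:mimic:RSPstar} supplies exactly the coinductive-provability \sidecondition\ that licenses the latter.
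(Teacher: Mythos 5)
Your proposal is correct and follows essentially the same route as the paper's proof: replace each instance of $\RSPstar$ by a single instance of $\LCoindProofi{1}$ with identical premise and conclusion, and discharge its \sidecondition\ by appealing to Lemma~\ref{lem:mimic:RSPstar}. Nothing further is needed.
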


\begin{proof}
  Every instance $\ainst$ of $\RSPstar$ can be replaced by a mimicking derivation $\aDerivi{\ainst}$ in $\coindmilnersysone$ 
    according to the following step,
      where $\notterminates{\bstexp}$ holds as the \sidecondition\ of the instance of $\RSPstar\,$:
  \begin{equation}\label{eq:prf:lem:RSPstar:derivable:coindmilnersysone}
    \begin{aligned}\renewcommand{\fCenter}{\formeq}
      \Axiom$ \astexp   \fCenter   \stexpsum{\stexpprod{\bstexp}{\astexp}}{\cstexp} $
      \LeftLabel{$\ainst$}
      \RightLabel{$\RSPstar$} %{ {\small (if $ \notterminates{\bstexp} $)}}
      \UnaryInf$ \astexp   \fCenter   \stexpprod{\stexpit{\bstexp}}{\cstexp} \rule[-5pt]{0pt}{7.5pt} $
      \DisplayProof
    \end{aligned}
    \qquad\Longmapsto\qquad
    %
    % \underbrace{%
    \begin{aligned}
      \AxiomC{$ \astexp   \formeq   \stexpsum{\stexpprod{\bstexp}{\astexp}}{\cstexp} $}
      \RightLabel{\LCoindProofi{1}}
      \UnaryInfC{$ \astexp   \formeq   \stexpprod{\stexpit{\bstexp}}{\cstexp} \rule[-5pt]{0pt}{7.5pt} $}           
      \DisplayProof
    \end{aligned}%
     %           }_{\aDerivi{\ainst}}
  \end{equation}
  Here the \sidecondition\
    $ \astexp \LLEEcoindproofeqin{(\thplus{\milnersysmin}{\setexp{\astexp   \formeq   \stexpsum{\stexpprod{\bstexp}{\astexp}}{\cstexp}}})} \stexpprod{\stexpit{\bstexp}}{\cstexp} $
    of the instance of $\LCoindProofi{1}$ in the derivation $\aDerivi{\ainst}$ on the right is guaranteed by Lemma~\ref{lem:mimic:RSPstar}. 
\end{proof}

We now can show the main result of this section,
  the proof transformation from \milnersys\ to \coindmilnersysone. 
We obtain this transformation by using derivability of \RSPstar\ in \coindmilnersysone\ as stated by this lemma,
  and by combining basic proof-theoretic transformations
    that eliminate derivable, and hence correct and admissible, rules from derivations
      as described in the proof of Lemma~\ref{lem:derivable:admissible:rules}.

\begin{thm}\label{thm:milnersys:isthmsubsumedby:coindmilnersysone}
  $\milnersys 
     \isthmsubsumedby
   \coindmilnersysone$.
  What is more,
  every derivation in $\milnersys$ with conclusion $\astexp \formeq \bstexp$ 
    can be transformed effectively
  into a derivation with conclusion $\astexp \formeq \bstexp$ in $\coindmilnersysone$.
\end{thm}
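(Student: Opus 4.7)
The plan is to mirror the structure of the proof of Theorem~\ref{thm:coindmilnersys:isthmsubsumedby:milnersys}, but travelling in the opposite direction, and using Lemma~\ref{lem:RSPstar:derivable:coindmilnersysone} as the key ingredient. First I would observe that the mimicking derivation exhibited in step~\eqref{eq:prf:lem:RSPstar:derivable:coindmilnersysone} of the proof of Lemma~\ref{lem:RSPstar:derivable:coindmilnersysone} is actually already a derivation in $\coindmilnersysone$, since it uses only a single instance of the coinductive rule $\LCoindProofi{1}$ and no other rule of $\coindmilnersys$ outside $\coindmilnersysone$. Hence $\RSPstar$ is derivable not merely in $\coindmilnersys$ but already in its subsystem $\coindmilnersysone$. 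By Lemma~\ref{lem:derivable:admissible:rules}, \ref{it:2:lem:derivable:admissible:rules}, derivable rules are admissible, so $\thplus{\coindmilnersysone}{\RSPstar} \thmequiv \coindmilnersysone$.

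With this in hand, the theorem-subsumption statement $\milnersys \isthmsubsumedby \coindmilnersysone$ is obtained by the following chain, mirroring the chain in the proof of Theorem~\ref{thm:coindmilnersys:isthmsubsumedby:milnersys}:
\begin{equation*}
  \milnersys
    \,=\,
  \thplus{\milnersysmin}{\RSPstar}
    \,\subsystem\,
  \thplus{\coindmilnersysone}{\RSPstar}
    \,\thmequiv\,
  \coindmilnersysone \punc{,}
\end{equation*}
using $\milnersysmin \subsystem \coindmilnersysone$ for the middle inclusion, and then the fact that $(\sbinrelcomp{\ssubsystem}{\sthmequiv}) \subseteq \sisthmsubsumedby$.

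For the effective transformation statement, I would follow the same pattern as in the proof of Theorem~\ref{thm:coindmilnersys:isthmsubsumedby:milnersys}, namely iterate the local elimination step of Lemma~\ref{lem:derivable:admissible:rules}, \ref{it:2:lem:derivable:admissible:rules}, applied to the rule $\RSPstar$ with its mimicking derivation from \eqref{eq:prf:lem:RSPstar:derivable:coindmilnersysone}. Given a derivation $\aDeriv$ in $\milnersys$, view it as a derivation in $\thplus{\coindmilnersysone}{\RSPstar}$, and proceed by induction on the number of $\RSPstar$-instances occurring in $\aDeriv$. If there are none, $\aDeriv$ is already a derivation in $\milnersysmin \subsystem \coindmilnersysone$ and we are done. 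Otherwise pick a topmost instance $\ainst$ of $\RSPstar$ in $\aDeriv$, so that its immediate subderivation lies in $\milnersysmin$; replace the subderivation ending in $\ainst$ by the mimicking derivation $\aDerivi{\ainst}$ from \eqref{eq:prf:lem:RSPstar:derivable:coindmilnersysone}, which is a derivation in $\coindmilnersysone$ with the same conclusion and using the same premise. The resulting derivation has one less $\RSPstar$-instance, and the induction hypothesis applies.

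Since all the real proof-theoretic work has already been carried out in Lemma~\ref{lem:lem:mimic:RSPstar} (the construction of the \LLEEwitnessed\ coinductive proof that mimics a premise of \RSPstar) and Lemma~\ref{lem:RSPstar:derivable:coindmilnersysone} (packaging this as a derivable rule), I do not expect any substantive obstacle here; the only care needed is in the bookkeeping of the top-down elimination, to make sure each replacement strictly decreases the number of $\RSPstar$-instances so that the induction terminates, which it does since $\aDerivi{\ainst}$ contains no occurrence of $\RSPstar$.
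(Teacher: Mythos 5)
Your proposal is correct and follows essentially the same route as the paper: it invokes Lemma~\ref{lem:RSPstar:derivable:coindmilnersysone} to get derivability (hence, via Lemma~\ref{lem:derivable:admissible:rules}, admissibility) of $\RSPstar$ in $\coindmilnersysone$, runs the same subsystem/theorem-equivalence chain $\milnersys \subsystem \thplus{\coindmilnersysone}{\RSPstar} \thmequiv \coindmilnersysone$, and obtains the effective transformation by locally replacing each $\RSPstar$-instance with the mimicking $\LCoindProofi{1}$-instance from \eqref{eq:prf:lem:RSPstar:derivable:coindmilnersysone}. The only cosmetic slip is the remark that a topmost $\RSPstar$-instance always has its immediate subderivation in $\milnersysmin$ (false after the first replacement, since $\LCoindProofi{1}$-instances may then occur above it), but nothing in the argument depends on this.
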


\begin{proof}
  Due to Lemma~\ref{lem:RSPstar:derivable:coindmilnersysone},
    $\RSPstar$ is a derivable rule in $\coindmilnersysone$. 
      Then by Lemma~\ref{lem:derivable:admissible:rules}, \ref{it:2:lem:derivable:admissible:rules}, 
        $\RSPstar$ is also an admissible rule in $\coindmilnersysone$,
          and hence $\thplus{\coindmilnersysone}{\RSPstar} \thmequiv \coindmilnersysone$ holds.
  With that we can now argue as follows: 
  \begin{align*}
    \milnersys
      =
    \thplus{\milnersysmin}{\RSPstar}
      & {} \subsystem
    \thplus{(\thplus{\milnersysmin}{\RSPstar})}{\LCoindProofi{1}}
      \\
      & {} =
    \thplus{(\thplus{\milnersysmin}{\LCoindProofi{1}})}{\RSPstar}
      =
    \thplus{\coindmilnersysone}{\RSPstar} 
      \thmequiv
    \coindmilnersysone \punc{.}   
  \end{align*}  
  From this we can infer $\milnersys \isthmsubsumedby \coindmilnersysone$,
    because $\ssubsystem$ implies $\sisthmsubsumedby$, and $\sbinrelcomp{\sisthmsubsumedby}{\sthmequiv} \subseteq \sthmequiv$. 
  That every derivation $\aDeriv$ in $\milnersys$ can be transformed effectively 
    into a derivation $\aDerivacc$ in $\coindmilnersys$ with the same conclusion 
      follows from derivability of \RSPstar\ in $\coindmilnersys$:
        then in $\aDeriv$ every instance of \RSPstar\ can be replaced by a corresponding instance of $\LCoindProofi{1}$ 
          as described in \eqref{eq:prf:lem:RSPstar:derivable:coindmilnersysone}
              of the proof of Lemma~\ref{lem:RSPstar:derivable:coindmilnersysone}
            with as result a derivation $\aDerivacc$ in $\coindmilnersysone$ with the same conclusion as $\aDeriv$.
  This argument instantiates the implication from rule derivability to rule admissibility,
   and the transformations explained 
      in the proof of Lemma~\ref{lem:derivable:admissible:rules}, \ref{it:1:lem:derivable:admissible:rules} and \ref{it:2:lem:derivable:admissible:rules},
        specifically \eqref{eq:prf:it:1:lem:derivable:admissible:rules}. %, in the rather easy case here.   
\end{proof}

\begin{exa}%[LLEE-witnessed coinductive proof in Fig.~\ref{fig:ex:1:RSPstar:to:coindproof}]
           \label{ex:fig:ex:1:RSPstar:to:coindproof}
  In Figure~\ref{fig:ex:1:RSPstar:to:coindproof}         
    we provided a first illustration for translating 
      an instance of the fixed-point rule into a coinductive proof in Figure~\ref{fig:ex:1:RSPstar:to:coindproof} on page~\pageref{fig:ex:1:RSPstar:to:coindproof}.
  Specifically, we mimicked the instance $\ainst$ (see below) of the fixed-point rule \RSPstar\ in Milner's system $\milnersys = \thplus{\milnersysmin}{\RSPstar}$
    by a coinductive proof % (see also below)
    over $\thplus{\milnersysmin}{\setexp{\text{premise of $\ainst$}}}$ with \LLEEwitness~$\onecharthatof{\alert{f^*}\cdot \forestgreen{0}}$. 

  The correctness conditions that have to be satisfied for the right-hand sides 
  in order to recognize this prooftree as a \LLEEwitnessed\ coinductive proof $\thplus{\milnersysmin}{\setexp{\text{premise of $\ainst$}}}$
    are the same as those 
      that we have verified for the right-hand sides of the coinductive proof over $\milnersysmin$ with the same \LLEEwitness\ in Example~\ref{ex:1:coindproof}.
    Note that the premise of $\ainst$ is not used for the correctness conditions of the right-hand sides.
  The correctness condition for the left-hand side $\chocolate{{e _0 ^*} \prod 0}$ at the bottom vertex of $\onechartof{\alert{f^*}\cdot \forestgreen{0}}$
  can be verified as follows, now making use of the premise of the considered instance $\ainst$ of $\RSPstar\,$:
  \begin{align*}
    \chocolate{{e _0 ^*} \prod 0}
      & \;\;{\eqin{\setexp{\text{premise of $\ainst$}}}}\;\;
        \alert{f} \prod (\chocolate{{e _0 ^*} \prod 0}) + \forestgreen{0}
      \\
      & \;\;\parbox{\widthof{$\milnersysmineq$}}{$\milnersysmineq$}\;\;
        (a \prod (a + b) + b) \prod (\chocolate{{e _0 ^*} \prod 0})
      \\  
      & \;\;\parbox{\widthof{$\milnersysmineq$}}{$\milnersysmineq$}\;\;
        (a \prod (a + b)) \prod (\chocolate{{e _0 ^*} \prod 0})  +  b \prod (\chocolate{{e _0 ^*} \prod 0})
      \\  
      & \;\;\parbox{\widthof{$\milnersysmineq$}}{$\milnersysmineq$}\;\;
        a \prod ((a + b) \prod (\chocolate{{e _0 ^*} \prod 0})) + b \prod (1 \prod (\chocolate{{e _0 ^*} \prod 0}))
      \\  
      & \;\;\parbox{\widthof{$\milnersysmineq$}}{$\milnersysmineq$}\;\;
        a \prod ((1 \prod (a + b)) \prod (\chocolate{{e _0 ^*} \prod 0})) + b \prod (1 \prod (\chocolate{{e _0 ^*} \prod 0}))
  \end{align*}
  Together this yields the provable equation:
  \begin{align*}
    \chocolate{{e _0 ^*} \prod 0}
      & \;\;{\eqin{\thplus{\milnersysmin}{\setexp{\text{premise of $\ainst$}}}}}\;\;
        a \prod ((1 \prod (a + b)) \prod (\chocolate{{e _0 ^*} \prod 0})) + b \prod (1 \prod (\chocolate{{e _0 ^*} \prod 0})) \punc{,}
  \end{align*}
  which demonstrates the correctness condition for
    the left-hand side $\chocolate{{e _0 ^*} \prod 0}$ at the bottom vertex of $\onechartof{\alert{f^*}\cdot \forestgreen{0}}$.
  The correctness condition for the left-hand side $a \prod ((1 \prod (a + b))$ at the top left vertex of $\onechartof{\alert{f^*}\cdot \forestgreen{0}}$
  can be verified without using the premise of $\ainst$ as follows:
  \begin{align*}
    ((1 \prod (a + b)) \prod (\chocolate{{e _0 ^*} \prod 0})
      & {} \mathrel{\parbox{\widthof{$\milnersysmineq$}}{$\milnersysmineq$}}
        ((a + b) \prod (\chocolate{{e _0 ^*} \prod 0}) 
        \mathrel{\parbox{\widthof{$\milnersysmineq$}}{$\milnersysmineq$}}
          a \prod (\chocolate{{e _0 ^*} \prod 0}) +  b \prod (\chocolate{{e _0 ^*} \prod 0}) 
      \\
      & {} \mathrel{\parbox{\widthof{$\milnersysmineq$}}{$\milnersysmineq$}}
          a \prod (1 \prod (\chocolate{{e _0 ^*} \prod 0})) +  b \prod (1 \prod (\chocolate{{e _0 ^*} \prod 0})) \punc{.}
  \end{align*}
  Finally, the correctness condition of the left-hand side $1 \prod (\chocolate{{e _0 ^*} \prod 0})$ at the right upper vertex of $\onechartof{\alert{f^*} \prod \forestgreen{0}}$ 
    can be obtained by an application of the axiom~($\leftidstexpprod$) only.
\end{exa}

We close this section by giving an example that provides an additional sanity check 
  for the proof transformation from \milnersys\ to \coindmilnersys\ that we developed above. 
The example below shows that the construction of a \LLEEwitnessed\ coinductive proof fails
  for an inference that is \underline{\smash{not}} an instance of the fixed-point rule \RSPstar\ because the \sidecondition\ is violated.   

\begin{nonexa}\label{nonex:RSPstar:2:LLEEcoindproof} 
  We consider the following (not semantically valid) inference
    according to a (semantically unsound) extension $\notRSPstar$ of the fixed-point rule \RSPstar\
      that does not require the guardedness \sidecondition\ $\notterminates{\bstexp}$ (see Definition~\ref{def:milnersys}):
  \begin{center}\renewcommand{\fCenter}{\formeq}
    \Axiom$ \overbrace{(a + c)^*}^{\chocolate{e}}   \fCenter   \overbrace{(a + 1)}^{\alert{f}} \cdot \overbrace{(a + c)^*}^{\chocolate{e}} + \overbrace{1}^{\forestgreen{g}} $
    \RightLabel{$\notRSPstar \;\; \text{(but not \RSPstar\ since $f \synteq \alert{\terminates{\black{(a + 1)}}}$)}$}
    \UnaryInf$ (a + c)^*   \fCenter   \underbrace{(a + 1)^*}_{\alert{f^*}} \cdot \underbrace{1}_{\forestgreen{g}} $
    \DisplayProof
  \end{center}
  The premise is semantically valid by Proposition~\ref{prop:milnersys:sound} because it is provable in \milnersysmin:
  \begin{align*}
    (a + c)^* 
      & \milnersysmineq
          1 + (a + c) \cdot (a + c)^*
        \milnersysmineq
          1 + a \cdot (a + c)^* + c \cdot (a + c)^*
      \displaybreak[0]\\
      & \milnersysmineq
          1 + a \cdot (a + c)^* + 1 + a \cdot (a + c)^* + c \cdot (a + c)^*
      \displaybreak[0]\\
      & \milnersysmineq
          1 + a \cdot (a + c)^* + (a + c)^*
        \milnersysmineq
          1 + a \cdot (a + c)^* + 1 \cdot (a + c)^* 
      \displaybreak[0]\\  
      & \milnersysmineq
          1 + (a + 1) \cdot (a + c)^* 
        \milnersysmineq
          (a + 1) \cdot (a + c)^* + 1  \punc{.}   
  \end{align*}
  But the conclusion of the inference is obviously not valid semantically, 
  because its left-hand side can iterate \transitionsact{\cact}, while its right-hand side does not permit \transitionsact{\cact}.

  \begin{figure}[t]
\begin{center}
\begin{tikzpicture}
  \renewcommand{\prod}{\,{\cdot}\,}
  \renewcommand{\stackprod}{\,\chocolate{\varstar}\,}
 
\matrix[anchor=center,row sep=0.8cm,column sep=4cm,
        every node/.style={draw=none}
        % every node/.style={draw,very thick,circle,minimum width=2.5pt,fill,inner sep=0pt,outer sep=2pt}
        ] {
    \node(C--1){};      &[0.75cm]      \node(nonsol--1){};        &[0.75cm]     \node(solfstarg--1){};      
    \\ 
    \node(C--root){};   &              \node(nonsol--root){};     &          \node(solfstarg--root){};         
    \\
  };

%--
% C
%--
\path (C--1) ++ (-2cm,0cm) node[above]{\Large $\onechartof{\alert{f^*}}$};
\path (C--1) ++ (0cm,0cm) node[above]{$ 1 \stackprod \alert{f^*} $};
\draw[->,thick,densely dotted,out=180,in=180,distance=1cm] ($(C--1) + (-0.55cm,0.25cm)$) to ($(C--root) + (-0.75cm,-0.25cm)$); 

\path (C--root) ++ (0cm,0cm) node[below]{$ \underbrace{(a + 1)^*}_{\alert{f^*}} $};
\draw[->,thick,darkcyan,shorten <= -2pt, shorten >= -2pt] (C--root) to node[left]{\small \black{$\aact$}} node[right]{\small $\loopsteplab{1}$} (C--1);

\draw[thick,chocolate,double] ($(C--root) + (0cm,-0.3cm)$) ellipse (0.675cm and 0.29cm);

%-------
% nonsol
%-------
\path (nonsol--1) ++ (0cm,0cm) node[above]{$ 1 \prod \overbrace{\chocolate{e}}^{(a + c)^*} $};
\draw[->,thick,densely dotted,out=180,in=180,distance=1.25cm] ($(nonsol--1) + (-0.85cm,0.25cm)$) to ($(nonsol--root) + (-2.175cm,-0.25cm)$);
\path (nonsol--root) ++ (0cm,0cm) node[below]
   {$ \underbrace{
         \underbrace{(a + 1)}_{\alert{f}} \prod \underbrace{(a + c)^*}_{\chocolate{e}} +  \underbrace{1}_{\forestgreen{g}}
                  }_{\:=\: \chocolate{e} \;\; \text{(by rule assumption)}}  $};
\draw[->,thick,darkcyan,shorten <= -2pt, shorten >= -2pt] (nonsol--root) to node[left]{\small \black{$\aact$}} node[right]{\small $\loopsteplab{1}$} (nonsol--1);
\path (nonsol--root) ++ (-0.55cm,0.15cm) node{\LARGE \alert{$\boldsymbol{\times}$}};

%------------
% sol f^* . g
%------------
\path (solfstarg--1) ++ (0cm,0cm) node[above]{$ (1 \prod \alert{f^*}) \prod \forestgreen{g} $};
\draw[->,thick,densely dotted,out=180,in=180,distance=1cm] ($(solfstarg--1) + (-0.8cm,0.25cm)$) to ($(solfstarg--root) + (-1.4cm,-0.25cm)$); 
\path (solfstarg--root) ++ (0cm,0cm) node[below]{$ \underbrace{(a + 1)^*}_{\alert{f^*}} \prod \underbrace{1}_{\forestgreen{g}} $};
\draw[->,thick,darkcyan,shorten <= -2pt, shorten >= -2pt] (solfstarg--root) to node[left]{\small \black{$\aact$}} node[right]{\small $\loopsteplab{1}$} (solfstarg--1);

\end{tikzpicture}
\end{center}  
  \vspace*{-1.5ex}
  \caption{\label{fig:nonex:RSPstar}
           Failure of the construction a \protect\provablein{\milnersysmin} solution
             for a coinductive proof that would mimic an instance of a fixed-point rule like \protect\RSPstar\ without \protect\sidecondition.
           }
\end{figure}  %
  Now by mechanically performing the same construction of a coinductive proof 
    as we illustrated it in Example~\ref{ex:1:RSPstar:2:LLEEcoindproof} and in Example~\ref{ex:fig:ex:1:RSPstar:to:coindproof}, %Example~\ref{nonex:RSPstar:2:LLEEcoindproof}, 
  we obtain the \LLEEonechart~$\onechartof{f^*}$ and the star expression assignments to it as in Figure~\ref{fig:nonex:RSPstar}.  
  There we recognize that, while 
    $\alert{f^*}\prod\forestgreen{g}$
    is the principal value of a \provablein{\milnersysmin} solution of $\onechartof{\alert{f^*}\prod\forestgreen{g}}$ (see on the right),
  we have not obtained a \provablein{\thplus{\milnersys}{\setexp{\chocolate{e} = \alert{f} \prod \chocolate{e} + \forestgreen{g}}}} 
                         solution of $\onechartof{\alert{f^*}\prod\forestgreen{g}}$
  with principal value $\alert{f} \prod \chocolate{e} + \forestgreen{g}$ (see in the middle).
  This is because the correctness condition is violated at the bottom vertex,
    because $(a + 1) \prod (a + c)^* + 1 \milnersysmineq 1 + a \prod (1 \prod (a + c)^*)$
      does not hold: otherwise it would have to be semantically valid by Proposition~\ref{prop:milnersys:sound},
        but it is not, because only the left-hand side permits an initial \transitionact{\cact}.
        
  Therefore the construction does not give rise to a \LLEEwitnessed\ coinductive proof of the (not semantically valid) formal equation
  $ (a + c)^*   
      \formeq
    (a + 1) \cdot (a + c)^* + 1 $.
\end{nonexa}

Based on the proof transformations that we have developed in this section and earlier in Section~\ref{coindmilnersys:2:milnersys},
  we now obtain our main result. %(see Theorem~\ref{thm:main}) 
                                It justifies that we called the proof system \coindmilnersys\
    a reformulation of Milner's system \milnersys.

\begin{thm}\label{thm:main}
  $\milnersys \thmequiv \coindmilnersysone \thmequiv \coindmilnersys \thmequiv \CLC$,
    i.e.\ these proof systems are \theoremequivalent.
\end{thm}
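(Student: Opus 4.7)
The plan is to combine the proof transformations and subsumption results already established throughout Sections~\ref{coindmilnersys}, \ref{coindmilnersys:2:milnersys}, and \ref{milnersys:2:coindmilnersys} into a single chain of theorem subsumptions that closes into a cycle of equivalences. Concretely, I would observe that all four systems can be shown theorem equivalent via three inclusions plus one already-proved equivalence, with no new proof work required.

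First I would set up the cycle through $\milnersys$, $\coindmilnersysone$, and $\coindmilnersys$. From Theorem~\ref{thm:milnersys:isthmsubsumedby:coindmilnersysone} we have $\milnersys \isthmsubsumedby \coindmilnersysone$, obtained by eliminating instances of $\RSPstar$ in favor of instances of $\LCoindProofi{1}$ using the derivability of $\RSPstar$ in $\coindmilnersysone$ (Lemma~\ref{lem:RSPstar:derivable:coindmilnersysone}). From Lemma~\ref{lem:thmsubsumption:rels:coindproofsystems}, item~\ref{it:2:lem:thmsubsumption:rels:coindproofsystems}, we already have $\coindmilnersysone \isthmsubsumedby \coindmilnersys$. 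Finally, Theorem~\ref{thm:coindmilnersys:isthmsubsumedby:milnersys} gives $\coindmilnersys \isthmsubsumedby \milnersys$, obtained by eliminating instances of $\LCoindProofi{n}$ using admissibility of these rules in $\milnersys$ (Lemma~\ref{lem:LCoindProof:admissible:milnersys}). Chaining these yields $\milnersys \isthmsubsumedby \coindmilnersysone \isthmsubsumedby \coindmilnersys \isthmsubsumedby \milnersys$, so by antisymmetry of $\sisthmsubsumedby$ (up to $\sthmequiv$) all three systems are pairwise theorem equivalent.

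To include $\CLC$ in the cycle, I would simply invoke Lemma~\ref{lem:thmsubsumption:rels:coindproofsystems}, item~\ref{it:3:lem:thmsubsumption:rels:coindproofsystems}, which states $\CLC \thmequiv \coindmilnersys$ directly. Together with the three equivalences above, this yields the full statement $\milnersys \thmequiv \coindmilnersysone \thmequiv \coindmilnersys \thmequiv \CLC$.

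Since every component has already been proved, there is no genuine obstacle left. The only minor point to be careful about is making explicit that the chain of theorem subsumptions closes into equivalences — that is, observing that $\sisthmsubsumedby$ is transitive and that its symmetric closure coincides with $\sthmequiv$ by Definition~\ref{def:subsystem:isthmsubsumedby:thmequiv}. The proof is therefore essentially a one-paragraph assembly of Theorem~\ref{thm:milnersys:isthmsubsumedby:coindmilnersysone}, Theorem~\ref{thm:coindmilnersys:isthmsubsumedby:milnersys}, and parts \ref{it:2:lem:thmsubsumption:rels:coindproofsystems} and \ref{it:3:lem:thmsubsumption:rels:coindproofsystems} of Lemma~\ref{lem:thmsubsumption:rels:coindproofsystems}.
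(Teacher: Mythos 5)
Your proposal is correct and follows essentially the same route as the paper: the paper's own proof likewise chains $\milnersys \isthmsubsumedby \coindmilnersysone \isthmsubsumedby \coindmilnersys \isthmsubsumedby \milnersys$ from Theorem~\ref{thm:milnersys:isthmsubsumedby:coindmilnersysone}, Lemma~\ref{lem:thmsubsumption:rels:coindproofsystems}, and Theorem~\ref{thm:coindmilnersys:isthmsubsumedby:milnersys}, and folds in $\CLC \thmequiv \coindmilnersys$ along the way. Nothing is missing.
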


\begin{proof}
  By combining the statements of 
    Theorem~\ref{thm:milnersys:isthmsubsumedby:coindmilnersysone}, 
    Lemma~\ref{lem:thmsubsumption:rels:coindproofsystems},
    and Theorem~\ref{thm:coindmilnersys:isthmsubsumedby:milnersys}
  we obtain the \theoremsubsumption\ statements
  $\milnersys \isthmsubsumedby \coindmilnersysone \isthmsubsumedby \coindmilnersys \: (\thmequiv \CLC) \isthmsubsumedby \milnersys$,
  which together justify the \theoremequivalence\ of \milnersys\
    with each of \coindmilnersysone, \coindmilnersys, and \CLC.
\end{proof}

%-------------------------------------------------------------
\section{Summary and Conclusion}%
  \label{conclusion}
%-------------------------------------------------------------

We set out on a \prooftheoretic\ investigation of the problem of whether Milner's system $\milnersys$ is complete
  for \processsemantics\ equality $\procsemeq$ on regular expressions 
    (which Milner calls `star expressions' for disambiguation when interpreted according to the process semantics).
Specifically we aimed at characterizing the derivational power that the fixed-point rule $\RSPstar$ in $\milnersys$
  adds to its purely equational part $\milnersysmin$.  
    
In order to define a substitute for the rule $\RSPstar$ we based ours on two
  results that we have obtained earlier (the first in joint work with Fokkink):
\begin{enumerate}[label={{\bf (S/U)}},align=right,leftmargin=*,itemsep=0.25ex]
  \item[{\crtcrossreflabel{{\bf (S/U)}}[SandU]}] 
    Linear specifications of the shape%
        \footnote{The two examples on page~\pageref{fig:milner-bosscher-expressible} illustrate the correspondence
                  between a recursive specifications and its associated transition graph.
                  But note that these graphs do not satisfy \LLEE\ (see Example~\ref{exa:LEE}).}
      of transition graphs that satisfy the layered loop existence and elimination property \LLEE\ 
      are:
      \begin{itemize}[label={{\bf (U)}},itemsep=0.15ex]
        \item[{\crtcrossreflabel{{\bf (S)}}[S]}] 
          solvable by star expressions modulo provability in $\milnersysmin$ (and modulo $\sbisim$).
        \item[{\crtcrossreflabel{{\bf (U)}}[U]}] 
          uniquely solvable by star expressions modulo provability in $\milnersys$ (and~mod.~$\sbisim$).
      \end{itemize}
      These statements were crucial steps
       in the completeness proof \cite{grab:fokk:2020:lics:arxiv,grab:fokk:2020:lics} 
         by Fokkink and myself 
       for a tailored restriction \BBP\ of~$\milnersys$ to `\onefree' star expressions. 
  \item[{\crtcrossreflabel{{\bf (IV)$_{\sone}$}}[IVone]}] 
    While the chart interpretation $\chartof{\astexp}$ of a star expression $\astexp$ 
      does not always satisfy \LLEE, there is a variant \onechart\ interpretation $\onechartof{\astexp}$ (typically with \onetransitions) 
        that is \mbox{($\sone$-)}bisimilar to $\astexp$, and satisfies \LLEE, for all star expressions $\astexp$.
          (See Definition~\ref{def:onechartof}, and \cite{grab:2020:TERMGRAPH2020-postproceedings:arxiv,grab:2021:TERMGRAPH2020-postproceedings}).
    Hereby \onetransitions\ are interpreted as empty-step processes.       
\end{enumerate}

Now in order to obtain results for \onecharts\ that are analogous to \ref{S} and \ref{U},
  we have generalized these statements to guarded \onecharts\ in Section~\ref{coindmilnersys:2:milnersys}.
In doing so, we obtained:
\begin{enumerate}[label={{\bf (S/U)$_{\sone}$}},align=right,leftmargin=*,itemsep=0.25ex]
  \item[{\crtcrossreflabel{{\bf (S/U)$_{\sone}$}}[SUone]}] 
    Guarded linear specifications of the shape of \onecharts\ that satisfy the loop existence and elimination property \LLEE\ 
      are:
      \begin{itemize}[label={{\bf (U)}},itemsep=0.15ex]
        \item[{\crtcrossreflabel{{\bf (S)$_{\sone}$}}[Sone]}] 
          solvable by star expressions modulo provability in $\milnersysmin$ (see \ref{SE}),
        \item[{\crtcrossreflabel{{\bf (U)$_{\sone}$}}[Uone]}] 
          uniquely solvable by star expressions modulo provability in $\milnersys$ (see \ref{SU}).
      \end{itemize}
    These statements correspond to the statements \ref{SE} and \ref{SU} in Section~\ref{coindmilnersys:2:milnersys}
      which we have proved as Lemma~\ref{lem:extrsol:is:sol}, and Lemma~\ref{lem:sols:provably:equal:LLEE}, respectively. 
\end{enumerate}        
\begin{figure}

\begin{center}
  \scalebox{1}{%
\begin{tikzpicture}
  \matrix[anchor=center,row sep=1cm,column sep=5.25cm,ampersand replacement=\&] {
    \node(milnersysvars-label){};
      \& \node(coindmilnersysvars-label){};
        \& \node(CCvars-label){};
    \\[2ex]
    \node(milnersys-label){};
      \& \node(coindmilnersys-label){}; 
        \& \node(CLC-label){};
    \\
    \node(milnersysacc-label){};
      \& \node(coindmilnersysone-label){};
    \\
    \\
    \node(milnersysaccbar-label){};
      \& \node(coindmilnersysbar-label){};
        \& \node(CC-label){}; 
    \\ 
    }; 
  %
  %--------------
  % system labels
  %--------------
  %
  \path(milnersysvars-label) ++ (0cm,0cm) node{\parbox{\widthof{Milner's system}}{\em Milner's system\\\hspace*{\fill}and variants\hspace*{\fill}\mbox{}}};
  \node (coindsyss-label) at ($(coindmilnersysvars-label)!0.5!(CCvars-label)$) 
          {\parbox{\widthof{coinductive reformulation of Milner's system}}{\em  Coinductive reformulation of Milner's system\\\hspace*{\fill}and variant systems\hspace*{\fill}\mbox{}}};

  \draw[-,densely dotted]
    ($($(milnersysvars-label)!0.46!(coindmilnersysvars-label)$) + (0cm,0.5cm)$) 
      to
    ($($(milnersysaccbar-label)!0.46!(coindmilnersysbar-label)$) + (0cm,-0.75cm)$);

  %--------------
  % system labels
  %--------------
  %
  \path (milnersys-label) ++ (0cm,0cm) node(milnersys){\tightfbox{$\milnersys = \thplus{\milnersysmin}{\RSPstar} $}}; 
  \path (coindmilnersys-label) ++ (0cm,0cm) node(coindmilnersys){\tightfbox{$\coindmilnersys = \thplus{\milnersysmin}{\family{\LCoindProofi{n}}{n}} $}};  
  \path (CLC-label) ++ (0cm,0cm) node(CLC){\tightfbox{$ \CLC = \family{\LCoindProofi{n}}{n} $}};   
  \path (milnersysacc-label) ++ (0cm,0cm) node(milnersysacc){\tightfbox{$\milnersysacc = \thplus{\milnersysmin}{\USPone} $}}; 
  \path (coindmilnersysone-label) ++ (0cm,0cm) node(coindmilnersysone){\tightfbox{$\coindmilnersysone = \thplus{\milnersysmin}{\LCoindProofi{1}} $}}; 
  \path (milnersysaccbar-label) ++ (0cm,0cm) node(milnersysaccbar){\tightfbox{$\milnersysaccbar = \thplus{\milnersysmin}{\USP} $}}; 
  \path (coindmilnersysbar-label) ++ (0cm,0cm) node(coindmilnersysbar){\tightfbox{$\coindmilnersysbar = \thplus{\milnersysmin}{\family{\CoindProofi{n}}{n}} $}};  
  \path (CC-label) ++ (0cm,0cm) node(CC){\tightfbox{$ \CC = \family{\CoindProofi{n}}{n} $}};

  %----------------------
  % proof transformations
  %----------------------
  %
  % from (milnersys)
  %
  %
  \draw[|->,thick,out=0,in=180,distance=1cm] 
    (milnersys) 
      to node[left]{{\small Thm.\hspace*{1pt}\ref{thm:milnersys:isthmsubsumedby:coindmilnersysone}}}
    (coindmilnersysone); 
  \draw[|->,thick] 
    (milnersys) 
      to node[above,yshift=0.05cm]{{\small Thm.\hspace*{1pt}\ref{thm:milnersys:isthmsubsumedby:coindmilnersysone}}}
    (coindmilnersys);  
  \draw[|->,thick,out=255,in=105] 
    (milnersys) 
      to node[left]{{\small Lem.\hspace*{1pt}\ref{lem:milnersys:thmequiv:milnersysacc:isthmsubsumedby:milnersysaccbar},%
                                             \ref{it:1:lem:milnersys:thmequiv:milnersysacc:isthmsubsumedby:milnersysaccbar}}} 
    (milnersysacc);
  %
  % from (coindmilnersys)
  %
  \draw[|->,thick,out=167.5,in=25,distance=1cm] 
    (coindmilnersys) 
      to node[above]{{\small Thm.\hspace*{1pt}\ref{thm:coindmilnersys:isthmsubsumedby:milnersys}}} 
    (milnersys);
  \draw[|->,thick,out=12.5,in=165,distance=0.5cm] 
    (coindmilnersys) 
      to node[above,yshift=-0.1cm]{{\small Lem.\hspace*{1pt}\ref{lem:thmsubsumption:rels:coindproofsystems},%
                                                \ref{it:3:lem:thmsubsumption:rels:coindproofsystems}}} 
    (CLC);  
  \draw[|->,thick,out=190,in=160,distance=1.25cm] 
    (coindmilnersys) 
      to node[left,pos=0.575]{{\small Lem.\hspace*{1pt}\ref{lem:thmsubsumption:rels:coindproofsystems},\ref{it:1:lem:thmsubsumption:rels:coindproofsystems}}} 
    (coindmilnersysbar);

  %
  % from (CLC)  
  %
  \draw[|->,thick] 
    (CLC) 
      to node[below,yshift=-0.25cm]{{\small Lem.\hspace*{1pt}\ref{lem:thmsubsumption:rels:coindproofsystems},%
                                                 \ref{it:3:lem:thmsubsumption:rels:coindproofsystems}}} 
    (coindmilnersys);    
  \draw[|->,thick] 
    (CLC) 
      to node[right]{{\small Lem.\hspace*{1pt}\ref{lem:thmsubsumption:rels:coindproofsystems},\ref{it:1:lem:thmsubsumption:rels:coindproofsystems}}} 
    (CC);

  % from (milnersysacc)
  %
  \draw[|->,thick,out=75,in=285] (milnersysacc) to (milnersys);
  \draw[|->,thick] (milnersysacc) to node[left]{{\small Lem.\hspace*{1pt}\ref{lem:milnersys:thmequiv:milnersysacc:isthmsubsumedby:milnersysaccbar},%
                                                             \ref{it:2:lem:milnersys:thmequiv:milnersysacc:isthmsubsumedby:milnersysaccbar}}} (milnersysaccbar);

  % from (coindmilnersysone)
  %
  \draw[|->,thick] 
    (coindmilnersysone) 
      to node[left]{{\small Lem.\hspace*{1pt}\ref{lem:thmsubsumption:rels:coindproofsystems},%
                                             \ref{it:2:lem:thmsubsumption:rels:coindproofsystems}}} 
    (coindmilnersys); 
  \draw[|->,thick] 
    (coindmilnersysone) 
      to node[right,pos=0.4]{{\small Lem.\hspace*{1pt}\ref{lem:thmsubsumption:rels:coindproofsystems},\ref{it:2:lem:thmsubsumption:rels:coindproofsystems}}} 
    (coindmilnersysbar);

  % from (coindmilnersysbar)
  %
  \draw[|->,thick,out=12.5,in=165,distance=0.5cm] 
    (coindmilnersysbar) 
      to node[above,yshift=-0.1cm]{{\small Lem.\hspace*{1pt}\ref{lem:thmsubsumption:rels:coindproofsystems},%
                                                \ref{it:4:lem:thmsubsumption:rels:coindproofsystems}}} 
    (CC);  
  \draw[|->,thick,out=167.5,in=25,distance=1cm] 
    (coindmilnersysbar) 
      to node[above]{{\small Rem.~\ref{rem:CC:coindmilnersysbar:milnersyaccbar}}} 
    (milnersysaccbar);
    
  %
  % from (CC)  
  %
  \draw[|->,thick] 
    (CC) 
      to node[below,yshift=-0.25cm]{{\small Lem.\hspace*{1pt}\ref{lem:thmsubsumption:rels:coindproofsystems},%
                                                 \ref{it:4:lem:thmsubsumption:rels:coindproofsystems}}} 
    (coindmilnersysbar);    
  
\end{tikzpicture} 
  }
\end{center} 
  \vspace*{-1.5ex}
  \caption{\label{fig:transformations}%
           Web of proof transformations between Milner's system \protect\milnersys\ and variants,
           and the coinductive reformulation \protect\coindmilnersys\ of \protect\milnersys\ with kernel system \CLC\ and variants.}
\end{figure}%
These statements motivated us to define `coinductive proofs' as pairs of solutions for guarded \LLEEonecharts,
  because every pair of \provablein{\milnersysmin}, or \provablein{\milnersys} solutions can be proved equal in \milnersys.
On the basis of this idea we developed the following concepts and results
  (we emphasize earlier introduced terminology again here below):
\begin{enumerate}[label={\protect\mycirc{\arabic{*}}},leftmargin=*,align=right,itemsep=0.25ex]
  \item
    \emph{\LLEEwitnessed\ coinductive proof over an equational proof system $\asys$} (Definition~\ref{def:coindproof}):
      We defined a coinductive proof as a \weaklyguarded\ \LLEEonechart\ $\aonechart$ whose vertices are labeled by equations 
        between the values of two \provablein{\asys} solutions of $\aonechart$.
  \item
    \emph{Coinductive reformulation $\coindmilnersys$} of Milner's system $\milnersys$ (Definition~\ref{def:coindmilnersys}): 
      We defined $\coindmilnersys$ as the result of replacing the fixed-point rule \RSPstar\ in \milnersys\
        with rules of the scheme $\family{\LCoindProofi{i}}{i\in\nat}$
          that formalize provability by \LLEEwitnessed\ coinductive proofs.\vspace*{0.25ex}
    \begin{itemize}[label=$\triangleright$]
      \item
        As the `kernel' of $\coindmilnersys$ we defined the system \CLC\ (Definition~\ref{def:CLC:CC})
        for combining \LLEEwitnessed\ coinductive proofs 
        with only the rules of the scheme $\family{\LCoindProofi{i}}{i\in\nat}$.
          % and without other rules.
    \end{itemize}
  \item
    \emph{Proof transformations} between $\coindmilnersys$ and $\milnersys$ 
      that show their \theoremequivalence.\vspace*{0.25ex}
    \begin{itemize}[label=$\triangleright$]
      \item
        We showed that the rules of $\family{\LCoindProofi{i}}{i\in\nat}$ are correct and admissible for \milnersys,
          and that this implies that every derivation in \coindmilnersys\ can be transformed into one in \milnersys\ with the same conclusion
            by eliminating occurrences of these rules. (See Section~\ref{coindmilnersys:2:milnersys}.)
      \item  
        We showed that the fixed-point rule \RSPstar\ of \milnersys\ is derivable in \coindmilnersys,
          because every instance $\ainst$ can be mimicked by an instance of $\LCoindProofi{1}$
            that uses the premise of~$\ainst$,
              and, as a \sidecondition, a \LLEEwitnessed\ coinductive proof over \milnersysmin.
        As a consequence we showed that every derivation in \milnersys\ can be transformed into one in \coindmilnersys\ with the same conclusion
            by eliminating occurrences of \RSPstar. (See Section~\ref{milnersys:2:coindmilnersys}.)  
    \end{itemize} 
    
  \item
    \emph{Coinductive reformulation \coindmilnersysbar} 
            of the variant $\milnersysaccbar$ (with the powerful rule \USP, see Def\-i\-ni\-tion~\ref{def:milnersys}) of $\milnersys$:
       We formulated systems $\coindmilnersysbar$ (Definition~\ref{def:coindmilnersys}) and $\CC$ (Definition~\ref{def:CLC:CC})
         based on coinductive proofs without \LLEEwitnesses.
       The systems \coindmilnersysbar\ and \CC\ can be recognized as complete variants of \coindmilnersys\ and \CLC.
         We have, however, only argued that in Remark~\ref{rem:CC:coindmilnersysbar:milnersyaccbar} (on the basis of earlier work \cite{grab:2006}),
         but not proved it in detail here.
         
  \item  
    \emph{Proof transformations} between \coindmilnersys\ and \CLC, and \coindmilnersysbar\ and \CC\
      that show their \theoremequivalence.
        Here the idea of the \nontrivial\ transformations from \coindmilnersys\ to \CLC\
          was to `hide' all derivation parts that consist of axioms or rules of the purely equational part \milnersysmin\ of \milnersys\
            into the correctness conditions of solutions in \LLEEwitnessed\ coinductive proofs,
              which occur as \sideconditions\ of instances of rules of $\family{\LCoindProofi{i}}{i\in\nat}$ in \CLC.
        The transformation from \coindmilnersysbar\ to \CC\ operates analogously.
        (See Lemma~\ref{lem:thmsubsumption:rels:coindproofsystems},~% 
             \ref{it:3:lem:thmsubsumption:rels:coindproofsystems},~%
             and~\ref{it:4:lem:thmsubsumption:rels:coindproofsystems}.)
\end{enumerate}
In Figure~\ref{fig:transformations} we have illustrated the web of proof transformations between,
  on the one hand,
    Milner's systems and its variants as defined in Definition~\ref{def:milnersys},
  and, on the other hand,
    the coinductive reformulations \coindmilnersys\ and \coindmilnersysone\ of \milnersys, and \coindmilnersysbar\ of \milnersysaccbar\ in Definition~\ref{def:coindmilnersys},
    as well as the coinductive kernel systems \CLC\ of \coindmilnersys, and \CC\ of \coindmilnersysbar\ in Definition~\ref{def:CLC:CC}.
   
\begin{figure}
\begin{center}
\begin{tikzpicture}
  \matrix[anchor=center,column sep=2cm,ampersand replacement=\&] {
    \node(milnersyseq){$ \astexpi{1} \eqin{\milnersys} \astexpi{2} $};
      \&\ \node(coindmilnersyseq){$ \astexpi{1} \eqin{\coindmilnersys} \astexpi{2} $};
        \&[-1cm] \node(CLCeq){$ \astexpi{1} \eqin{\CLC} \astexpi{2} $};
          \&\ \node(procsemeq){\parbox{\widthof{$\chartof{\astexpi{1}} \bisim \chartof{\astexpi{2}}$}}
                                      {$\chartof{\astexpi{1}} \bisim \chartof{\astexpi{2}}$%
                                       \\[0.25ex]
                                       $ \astexpi{1}  \procsemeq  \astexpi{2} $}};
    \\                                   
    };
  \draw[|->,thick,out=-30,in=210,distance=1cm] 
    (milnersyseq) 
      to node[below]{{\small Thm.\hspace*{1pt}\ref{thm:milnersys:isthmsubsumedby:coindmilnersysone}}} 
    (coindmilnersyseq);
  \draw[|->,thick,out=150,in=30,distance=1cm] 
    (coindmilnersyseq) 
      to node[above]{{\small Thm.\hspace*{1pt}\ref{thm:coindmilnersys:isthmsubsumedby:milnersys}}} 
    (milnersyseq);  
  \draw[|->,thick,out=-30,in=210,distance=0.75cm] 
    (coindmilnersyseq) 
      to node[below]{{\small Lem.\hspace*{1pt}\ref{lem:thmsubsumption:rels:coindproofsystems},%
                                              \ref{it:3:lem:thmsubsumption:rels:coindproofsystems}}}  
    (CLCeq);
  \draw[|->,thick,out=150,in=30,distance=0.75cm] 
    (CLCeq) 
      to node[above]{{\small Lem.\hspace*{1pt}\ref{lem:thmsubsumption:rels:coindproofsystems},%
                                              \ref{it:3:lem:thmsubsumption:rels:coindproofsystems}}}  
    (coindmilnersyseq);   
  \draw[|->,thick,out=-35,in=200,distance=1cm] 
    (CLCeq) 
      to node[below]{{\small due to Prop.\hspace*{1pt}\ref{prop:milnersys:sound}}} 
    (procsemeq);
  \draw[|->,thick,densely dashed,out=160,in=35,distance=1cm]  
    (procsemeq) 
      to node[above]{{\bf ?}} %{{\small (to show $\milnersys$ is complete)}} 
    (CLCeq);   
\end{tikzpicture}  
\end{center}  
  \vspace*{-1.5ex}
  \caption{\label{fig:beachhead}%
           The coinductive proof system \protect\CLC\ (next to \protect\coindmilnersys) as potential beachhead for a completeness proof of \protect\milnersys.
           We label with a question mark the transformation that is not defined here
             (it can be extracted from the completeness proof of \protect\milnersys).
           }
\end{figure}%
The coinductive reformulation \coindmilnersys\ of \milnersys, and its coinductive kernel \CLC,  
  can be looked upon as being situated roughly half-way in between \milnersys\ and bisimulations between chart interpretations of star expressions.
We illustrate this in Figure~\ref{fig:beachhead}.
This picture arises from the highest level in Figure~\ref{fig:transformations}, 
  when instantiated for specified conclusions $\astexpi{1} \milnersyseq \astexpi{2}$, $\astexpi{1} \eqin{\coindmilnersys} \astexpi{2}$,
                                               and $\astexpi{1} \eqin{\CLC} \astexpi{2}$, 
  and by extending it further to $\chartof{\astexpi{1}} \bisim \chartof{\astexpi{2}}$, and hence to $\astexpi{1} \procsemeq \astexpi{2}$.
For the last step we use soundness of \CLC, 
  which follows from soundness of \milnersys\ with respect to $\procsemeq$, see Proposition~\ref{prop:milnersys:sound},
    in view of the proof transformations that link \CLC\ via \coindmilnersys\ to \milnersys.
Now we note that derivations of \CLC\ represent proof-trees of coinductive proofs
  each of which defines a bisimulation up to provability
    by Proposition~\ref{prop:coindproofeq:2:onebisim}, 
       and Remark~\ref{rem:coindproof:defs:onebisim:up:to}.
Therefore we can argue that prooftrees in \CLC, and by \prooftheoretic\ association also prooftrees in \coindmilnersys,
  are situated roughly half-way in between prooftrees in \milnersys\ and bisimulations between chart interpretation of star expressions.      
We think that Figure~\ref{fig:beachhead} provides a reasonable suggestion of the \prooftheoretic\ closeness of these systems.
    
The \prooftheoretic\ connections of \CLC\ with \coindmilnersys\ and \milnersys\
  guarantee that completeness of \milnersys\ (with respect to $\procsemeq$)
    is equivalent to completeness of \coindmilnersys, % with respect to $\procsemeq$,
      and also to completeness of \CLC. % with respect to $\procsemeq$.
Stronger still, the proof transformations between \CLC, \coindmilnersys, and \milnersys\
  guarantee that every completeness proof of \milnersys\ can be `routed through' \CLC\ (and also through \coindmilnersys). 
Such a `rerouting' through \CLC\ of a completeness proof of \milnersys\
  does, however, not need to be equally natural as a `direct' completeness proof of \milnersys.
But since \CLC\ is intuitively much closer to $\procsemeq$ than \milnersys\ (as suggested by Figure~\ref{fig:beachhead}),
    much hope was warranted to obtain a completeness proof for \milnersys\
      by finding a completeness proof for \CLC\ first,
        or to at least to use concepts that we have introduced here.
        (The latter hope turned out to be justified.)

Indeed, since the proof systems \coindmilnersys\ and \CLC\
  are tied to process graphs via the circular deductions they permit and to bisimulations up to provability,
    and since \coindmilnersys\ and \CLC\ are \theoremequivalent\ with $\milnersys$,
      they were conceived as (and still can be) natural beachheads for a completeness proof of Milner's system.  
In Figure~\ref{fig:beachhead} we have indicated the step to \CLC\ that is missing \emph{here} for a completeness proof of \milnersys\ with a question mark:
  a completeness proof of \CLC, and that is, an argument
    that, for every bisimulation between $\chartof{\astexpi{1}}$ and $\chartof{\astexpi{2}}$ where $\astexpi{1}$ and $\astexpi{2}$ are given star expressions,
      yields a prooftree in \CLC\ with conclusion $\astexpi{1} \formeq \astexpi{2}$.
   
Due to their feature of permitting derivations that can be construed as combinations of \onebisimulations\ (up to provability),    
  we were confident that the proof system \coindmilnersys\ and its coinductive kernel \CLC\
    substantially increase the space for graph-based approaches to finding a completeness proof of \milnersys.
A concrete indication for this expectation was the following.
  By closely analyzing the completeness proof in \cite{grab:fokk:2020:lics:arxiv,grab:fokk:2020:lics} for the tailored restriction \BBP\ of \milnersys\ to `\onefree' star expressions 
  we find:
    \begin{equation}\label{cons:grab:fokk:2020:lics}
      \left.\;
      \parbox{0.85\textwidth}
             {Valid equations between \onefree\ star expressions
              admit derivations in \CLC\ of depth less than or equal to 2.}
      \;\:\right\}        
    \end{equation}  
    This fact suggests the following research question:
    \begin{equation}\label{research:question}
      \left.\;
      \parbox{0.85\textwidth}
             {Can derivations in \CLC\ (derivations in $\coindmilnersys$) always be simplified to some kind of normal form
              that is of bounded depth (respectively, of bounded nesting depth of \LLEEwitnessed\ coinductive proofs)?}
      \;\:\right\}        
    \end{equation}    
Despite of the fact that this question admits a trivial answer in view of the completeness proof of \milnersys\ in \cite{grab:2022:lics},
  see \ref{cons:2} and \ref{cons:3} below, 
    it can be desirable to also find an answer
      that is based on a proof-theoretic analysis,
        and that leads to a 
           workable concept of `normal form' for derivations in \CLC\ or in $\coindmilnersys$.
Intuitions for finding such a concept
  may be found in developing simplification steps of \onecharts\ with \LLEE\ under \onebisimilarity,
    as those are used in the completeness proofs for \BBP\ with respect to `\onefree' star expressions in \cite{grab:fokk:2020:lics:arxiv,grab:fokk:2020:lics},
      and for \milnersys\ with respect to all star expressions in \cite{grab:2022:lics,grab:2022:lics:arxiv}.
We close by listing consequences of the latter (one is direct, the other two are a bit technical).

\paragraph{Consequences of the completeness proof of \milnersys\ in \cite{grab:2022:lics}.}
  Below we list the most important consequences that the completeness proof of \milnersys\ as summarized in \cite{grab:2022:lics}
    has for the line of investigation on coinductive versions of \milnersys\ as reported here:
    \begin{enumerate}[label={(\text{C}\arabic{*})},itemsep=0.5ex]
      \item{}\label{cons:1}
        The coinductive versions \coindmilnersys\ and \CLC\ of Milner's system \milnersys\ are (just like \milnersys) 
          complete with respect to process semantics equality $\procsemeq$ of star expressions. 
          (This follows directly from Theorem~5.1 in \cite{grab:2022:lics}
             in view of Theorem~\ref{thm:main} here.)
      \item{}\label{cons:2}  
        Valid equations between star expressions 
          admit derivations in \CLC\ of depth less than or equal to $2$.
            (This generalization of \eqref{cons:grab:fokk:2020:lics}
             can be shown by a close analysis of the structure of the completeness proof of \milnersys\ in Section~5 of \cite{grab:2022:lics}.
             The proof is similar to that of Corollary~\ref{cor:milnersys:compl:expansion:minimization},~\ref{it:2:cor:milnersys:compl:expansion:minimization},
               and also similar to the argumentation for the analogous statement concerning `\onefree' star expressions and the system \BBP\
                 as reported above.) 
      \item{}\label{cons:3}
        As a consequence of \ref{cons:2}
          the research question \eqref{research:question} admits the trivial answer ``yes'',
            albeit one that avoids a close proof-theoretic analysis. 
            (Yet we still think that an answer that is based on a fine-grained proof-theoretic analysis would be more desirable).
    \end{enumerate}

\vspace{-0.25ex}
%----------------------------
\subsection*{Acknowledgement}
%----------------------------
%
First I want to thank the editors Alexandra Silva and Fabio Gadducci very much:
  without their leniency with the deadline 
    I would not have been able to finish this extended article version of my CALCO paper \cite{grab:2021:calco}.
Second, I am greatly indebted to the reviewers of this article for reading it carefully,
  spotting distracting problems, asking spot-on questions, and suggesting changes and additions.
To mention two important examples:
  an acute pointer to a shortcoming of the more informal definition of \LLEEwitness\ in terms of recordings of loop elimination steps 
    (now rectified by using multi-step loop elimination steps, see Definition~\ref{def:LLEEwitness})
  and a perceptive question of whether the proof system \milnersys\ is complete with respect to bisimulations that are witnessed by guarded \LLEEonecharts\    
    (it is indeed, and the question has led me to formulating and proving Corollary~\ref{cor:milnersys:compl:expansion:minimization}). 

I am very thankful to {Luca Aceto} for giving me the chance to continue my work on Milner's problems on the process interpretation
 of regular expressions in the PRIN project \emph{{\nf IT MATTERS} - Methods and Tools for Trustworthy Smart Systems}
  (ID: \text{2017FTXR7S\_005}).

\enlargethispage{2ex}
%-----------------------------
\bibliographystyle{alphaurl}
\bibliography{cMil-lmcs.bib}         
%-----------------------------

\end{document}